%
%
\documentclass[10pt,a4paper]{article}

\usepackage{amsmath,amsgen,latexsym}
\usepackage{amstext,amssymb,amsfonts,latexsym}
\usepackage{theorem}
\usepackage{pifont}
\usepackage[dvips]{graphics,epsfig}
\usepackage[dvips]{graphicx}

\setlength{\evensidemargin}{0cm}
\setlength{\oddsidemargin}{0cm}
\setlength{\topmargin}{-0.5cm}
\setlength{\textheight}{24.0cm}
\setlength{\textwidth}{16.5cm}

\setlength{\headsep}{0cm}
\setlength{\headheight}{0cm}
\setlength{\marginparwidth}{0cm}


 \newcommand{\bs}{\bigskip}
 
 \newcommand{\n}{\noindent}
 \newcommand{\s}{\smallskip}
 \newcommand{\hs}[1]{\hspace*{ #1 mm}}
 \newcommand{\vs}[1]{\vspace*{ #1 mm}}



 \newcommand{\setempty}{\mathrm{\O}}
 
 \newcommand{\nat}{\mathbb{N}}
 
 \newcommand{\integer}{\mathbb{Z}}


 \newcommand{\co}{\mathrm{co}\mbox{-}}


 \newcommand{\ie}{\textrm{i.e.},\hspace*{2mm}}
 \newcommand{\eg}{\textrm{e.g.},\hspace*{2mm}}
 
 \newcommand{\etalc}{\textrm{et al.}}


 \newcommand{\CC}{{\cal C}}

 \newcommand{\PP}{{\cal P}}
 
  \newcommand{\RR}{{\cal R}}


 \newcommand{\dl}{\mathrm{L}}
 
 \newcommand{\p}{\mathrm{P}}
 \newcommand{\np}{\mathrm{NP}}
 \newcommand{\bpp}{\mathrm{BPP}}
 \newcommand{\pp}{\mathrm{PP}}
 \newcommand{\pspace}{\mathrm{PSPACE}}


 \newcommand{\tally}{\mathrm{TALLY}}
 \newcommand{\sparse}{\mathrm{SPARSE}}



 \newcommand{\reg}{\mathrm{REG}}
 \newcommand{\cfl}{\mathrm{CFL}}
 
 \newcommand{\dcfl}{\mathrm{DCFL}}









\theoremstyle{plain}
\theoremheaderfont{\bfseries}
\setlength{\theorempreskipamount}{3mm}
\setlength{\theorempostskipamount}{3mm}

 \newtheorem{theorem}{Theorem}[section]
 \newtheorem{lemma}[theorem]{Lemma}
 \newtheorem{proposition}[theorem]{Proposition}
 \newtheorem{corollary}[theorem]{Corollary}

 {\theorembodyfont{\rmfamily}
  }
 {\theorembodyfont{\rmfamily} \newtheorem{example}[theorem]{Example}}
 {\theorembodyfont{\rmfamily} }

 \newtheorem{claim}{{\it Claim}}

 \newenvironment{proof}{\par \noindent
            {\bf Proof. \hs{2}}}{\hfill$\Box$ \vspace*{3mm}}

 \newenvironment{proofof}[1]{\vspace*{5mm} \par \noindent
         {\bf Proof of #1.\hs{1}}}{\hfill$\Box$ \vspace*{3mm}}



 \newcommand{\ceilings}[1]{\lceil #1 \rceil}
 \newcommand{\floors}[1]{\lfloor #1 \rfloor}


 \newcommand{\deltap}[1]{\Delta^{\mathrm{P}}_{#1}}
 \newcommand{\sigmap}[1]{\Sigma^{\mathrm{P}}_{#1}}
 \newcommand{\pip}[1]{\Pi^{\mathrm{P}}_{#1}}
 \newcommand{\thetap}[1]{\Theta^{\mathrm{P}}_{#1}}
 \newcommand{\ph}{\mathrm{PH}}

\newif\ifnotesw\noteswtrue
   {\ifnotesw\marginpar[\hfill\(\top\)]{\(\top\)}\fi}%
      {\ifnotesw\marginpar[\hfill\(\bot\)]{\(\bot\)}\fi}
      
\newcommand{\mnote}[1]%
   {\ifnotesw\marginpar%
	  [{\scriptsize\begin{minipage}[t]{\marginparwidth}
	  \raggedleft#1%
		  \end{minipage}}]%
	  {\scriptsize\begin{minipage}[t]{\marginparwidth}
	  \raggedright#1%
		  \end{minipage}}%
    \fi}


\newcommand{\ignore}[1]{}

\newcommand{\track}[2]{[\:\begin{subarray}{c} #1 \\%
      #2 \end{subarray} ]}

\newcommand{\cent}{|\!\! \mathrm{c}}
\newcommand{\dollar}{\$}

\newcommand{\bcfl}{\mathrm{BCFL}}

 \newcommand{\mlow}{\mathrm{low}_{m}}
 
 \newcommand{\bttlow}{\mathrm{low}_{btt}}
 \newcommand{\Tlow}{\mathrm{low}_{T}}

 \newcommand{\cflmvt}{\mathrm{CFLMV_t}}
 \newcommand{\cflmv}{\mathrm{CFLMV}}

 \newcommand{\sigmacfl}[1]{\Sigma^{\mathrm{CFL}}_{ #1 }}
 \newcommand{\picfl}[1]{\Pi^{\mathrm{CFL}}_{ #1 }}
 \newcommand{\deltacfl}[1]{\Delta^{\mathrm{CFL}}_{ #1 }}
 \newcommand{\relsigmacfl}[2]{\Sigma^{\mathrm{CFL}, #2 }_{ #1 }}
 \newcommand{\relpicfl}[2]{\Pi^{\mathrm{CFL}, #2 }_{ #1 }}

 \newcommand{\sigmacflm}[1]{\Sigma^{\mathrm{CFL}}_{m,  #1 }}
 \newcommand{\picflm}[1]{\Pi^{\mathrm{CFL}}_{m,  #1 }}

  \newcommand{\sigmacflt}[1]{\Sigma^{\mathrm{CFL}}_{ #1 }}
 \newcommand{\picflt}[1]{\Pi^{\mathrm{CFL}}_{ #1 }}
 \newcommand{\deltacflt}[1]{\Delta^{\mathrm{CFL}}_{ #1 }}
 \newcommand{\relsigmacflt}[2]{\Sigma^{\mathrm{CFL}, #2 }_{ #1 }}
 \newcommand{\relpicflt}[2]{\Pi^{\mathrm{CFL}, #2 }_{ #1 }}
 \newcommand{\reldeltacflt}[2]{\Delta^{\mathrm{CFL}, #2 }_{ #1 }}

 \newcommand{\sigmanfa}[1]{\Sigma^{\mathrm{NFA}}_{m,  #1 }}
 \newcommand{\pinfa}[1]{\Pi^{\mathrm{NFA}}_{m,  #1 }}
 
 \newcommand{\relsigmanfa}[2]{\Sigma^{\mathrm{NFA}, #2 }_{m,  #1 }}
 \newcommand{\relpinfa}[2]{\Pi^{\mathrm{NFA}, #2 }_{m,  #1 }}

 \newcommand{\nc}[1]{\mathrm{NC}^{ #1 }}
 \newcommand{\ac}[1]{\mathrm{AC}^{ #1 }}
 \newcommand{\sac}[1]{\mathrm{SAC}^{ #1 }}
 \newcommand{\tc}[1]{\mathrm{TC}^{ #1 }}
 
 \newcommand{\bhcfl}{\mathrm{BHCFL}}
 \newcommand{\cflh}{\mathrm{CFLH}}
 \newcommand{\bpcfl}{\mathrm{BPCFL}}
 \newcommand{\pcfl}{\mathrm{PCFL}}

 \newcommand{\paritynfa}{\oplus\mathrm{NFA}}

 \newcommand{\dfa}{\mathrm{DFA}}
 \newcommand{\nfa}{\mathrm{NFA}}


\begin{document}

\pagestyle{plain}
\setcounter{page}{1}

\begin{center}
{\Large {\bf Oracle Pushdown Automata, Nondeterministic \vs{-1}\s\\
Reducibilities,  and the CFL Hierarchy \s\\ Over the
Family of Context-Free Languages\footnote{A short extended abstract appeared under a slightly different title in the Proceedings of the 40th International Conference on Current Trends in Theory and Practice of Computer Science (SOFSEM 2014), High Tatras, Slovakia, January 25--30, 2014, Lecture Notes in Computer Science, Springer-Verlag, vol.8327, pp.514--525, 2014.}}} \bs\s\\

{\sc Tomoyuki Yamakami}\footnote{Present Affiliation: Department of Information Science, University of Fukui, 3-9-1 Bunkyo, Fukui 910-8507, Japan} \bs\\
\end{center}


\begin{abstract}
\n
To expand a fundamental theory of context-free languages, we equip nondeterministic one-way pushdown automata with additional oracle mechanisms, which naturally induce various nondeterministic reducibilities among formal languages.
As a natural restriction of NP-reducibility, we introduce a notion of many-one CFL reducibility and conduct a ground work to formulate a coherent framework for further  expositions. Two more powerful reducibilities---bounded truth-table and Turing CFL-reducibilities---are also discussed in comparison. The Turing CFL-reducibility, in particular, helps us introduce an exquisite hierarchy, called the CFL hierarchy, built over the family CFL of context-free languages. For each level of this new hierarchy, its basic structural properties are proven and three alternative characterizations are presented. The second level is not included in $\nc{2}$ unless NP equals $\nc{2}$. The first and second levels of the hierarchy are proven to be different. The rest of the hierarchy (more strongly, the Boolean hierarchy built over each level of the CFL hierarchy) is also infinite unless the polynomial (time) hierarchy over NP  collapses. This follows from a characterization of the Boolean hierarchy over the $k$th level of the polynomial hierarchy  in terms of the Boolean hierarchy over the $k+1$st level of the CFL hierarchy using logarithmic-space many-one reductions.
Similarly,  the $O(\log{n})$ query bounded complexity class $\thetap{k}$ is related to the closure of the $k$th level of the CFL hierarchy under logarithmic truth-table reductions.  We also argue that the CFL hierarchy coincides with a hierarchy over CFL built by application of many-one CFL-reductions.
We show that BPCFL---a bounded-error probabilistic version of CFL---is not included in CFL even in the presence of advice. Employing a known circuit lower bound and a switching lemma, we exhibit a relativized world where BPCFL is not located within the second level of the CFL hierarchy.

\s

\n{\bf Keywords:}
{regular language, context-free language, pushdown automaton, oracle, many-one reducibility, Turing reducibility, truth-table reducibility, CFL hierarchy, polynomial hierarchy, advice, Dyck language}
\end{abstract}

\sloppy
\section{Backgrounds and Main Themes}\label{sec:introduction}

For the development of a tidy theory of $\np$-completeness, a fundamental notion of {\em reducibility} has long played an essential role, where a  language is generally said to be ``reducible'' to another language when there is an appropriate algorithm that determines  any membership question to the former language by making certain membership queries to the latter  language. In the early 1970s, Cook \cite{Coo71} outlined the concept of  \emph{polynomial-time Turing reducibility} to demonstrate that the language SAT, composed of satisfiable propositional formulas, is
computationally hard.
His reducibility is founded on a model of \emph{oracle Turing machine} and it provides a useful tool to identify the most difficult languages in $\np$.
Karp \cite{Kar72}, in contrast, used \emph{polynomial-time  many-one reducibility} to present a number of $\np$-complete languages.
Since their works, various forms of polynomial-time reducibility have emerged \cite{BLS84,LLS75}.
Besides many-one and Turing reducibilities, for example, typical reducibilities in use today in computational complexity theory include conjunctive, disjunctive, truth-table, bounded truth-table, and query-bounded Turing reducibilities, which are obtained by imposing appropriate restrictions on the functionality of oracle mechanism of underlying Turing machines. Primarily, those reducibilities were defined by deterministic machines but they have been naturally expanded to nondeterministic reducibilities. Meyer and Stockmeyer \cite{MS72} concerned with nondeterministic polynomial-time reducibility and used it to build the so-called \emph{polynomial (time) hierarchy} over $\np$ (see also \cite{Sto77,Wra77}). A study on reducibilities have lead to promote the  understandings of the structure of $\p$, $\np$, and beyond, the polynomial hierarchy.

Various oracle mechanisms have provided us with a useful means to study {\em relativizations} of associated language families and such a relativization offers a ``relativized world'' in which  certain desirable properties are all met at once.
For issues not settled by the current knowledge of us, we often resort to a relativization, which
helps us discuss the existence of various relativized worlds in which a certain relationship among target language families either holds or fails.
Concerning the famous P=?NP problem, for instance, Baker, Gill, and Solovay \cite{BGS75}  constructed two conflicting relativized worlds where $\p=\np$ and $\p\neq\np$ indeed happen. These contradictory results suggest that a solution (that is, a proof) to the P=?NP problem
must be ``unrelativizable.'' Yao \cite{Yao85} (and later H{\aa}stad \cite{Has86}) presented a relativized world where the polynomial hierarchy is indeed an infinite hierarchy.

Away from standard complexity-theoretical subjects, we shift our attention to  a theory of formal languages and automata. By providing a solid foundation for various notions of reducibility and their associated relativizations, we intend to lay out a framework for a {\em structural complexity theory of formal languages and automata},  which enables us to conduct extensive studies on various structural complexity issues for formal languages.

Of many languages, we are particularly interested in {\em context-free languages}, which are characterized by context-free grammars or one-way nondeterministic pushdown automata (or npda's, hereafter).  The context-free languages are inherently nondeterministic. In light of the fact that the notion of nondeterminism appears naturally in real life, it has become a key to many fields of computer science.
The family $\cfl$ of context-free languages has proven to be a fascinating subject, simply because every language in $\cfl$ behaves quite differently from the corresponding nondeterministic polynomial-time class $\np$. Whereas $\np$ is closed under any Boolean operations except for complementation, $\cfl$ is not even closed under intersection. This non-closure property is caused by  the lack of flexibility in the use of  memory storage by an underlying model of npda. On the contrary,  a  restricted use of memory helps us prove a separation between the first and the second levels of the Boolean hierarchy $\{\cfl_{k}\mid k\geq1\}$ built over $\cfl$  by applying Boolean operations (intersection and union) alternatingly to $\cfl$ (whose variant was discussed in \cite{YK13}).
Moreover, we can prove that the family of languages
$\cfl(k)$ composed of intersections of $k$ context-free languages truly forms an infinite hierarchy \cite{LW73}.
Such an architectural restriction sometimes becomes a crucial issue in certain applications of, for example, one-way probabilistic pushdown automata (or ppda's). It is known in \cite{HS10} that bounded-error ppda's cannot, in general, amplify their success probabilities.

A most simple type of the aforementioned reducibilities is probably {\em many-one reducibility} and, by adopting the existing formulation of this reducibility, we intend to bring a notion of nondeterministic many-one reducibility into context-free languages under the name of {\em many-one CFL-reducibility}. Symbolically, we write $\cfl_{m}^{A}$ to express the family of languages that are many-one $\cfl$-reducible to a given oracle $A$. With a similar flavor,  Reinhardt \cite{Rei90}
considered  many-one reductions, which are induced by nondeterministic finite automata (or nfa's) with no memory device.
Notice that nondeterministic reducibility generally does not admit the  \emph{transitivity property}. (For this reason, such reducibility might have been called a ``quasi-reducibility'' if the transitive property is a prerequisite for a reducibility notion.)
Owing mostly to a unique architecture of npda's, our reducibility exhibits quite distinctive features; for instance,
the family $\cfl$ is not closed under the many-one $\cfl$-reducibility (that is, $\cfl_{m}^{\cfl}\neq \cfl$).
This non-closure property allures us to study the language family $\cfl_{m[k]}^{\cfl}$, whose elements are obtained by the $k$-fold application of many-one $\cfl$-reductions to context-free languages.  As is shown in Section \ref{sec:many-one-reduction}, the language family $\cfl_{m[k]}^{\cfl}$ coincides with $\cfl_{m}^{\cfl(k)}$.

We further discuss two more powerful reducibilities in popular use: \emph{bounded truth-table and Turing CFL-reducibilities}, which are based on appropriately defined oracle npda's.
In particular, the Turing $\cfl$-reducibility, which allows any underlying reduction npda to make adaptive queries, introduces a hierarchy $\{\deltacflt{k},\sigmacflt{k},\picflt{k}\mid k\geq1\}$, where the first level contains $\deltacfl{1}=\dcfl$, $\sigmacfl{1}=\cfl$, and $\picfl{1}=\co\cfl$, analogous to the polynomial hierarchy.
We succinctly call this hierarchy the {\em CFL hierarchy}, which turns out to be quite useful in classifying the computational complexity of a certain group of languages.  As a quick example, the languages $Dup_2 = \{xx\mid x\in\{0,1\}^*\}$ and $Dup_3 =\{xxx\mid x\in\{0,1\}^*\}$, which are known to be outside of $\cfl$, fall into the second level $\sigmacflt{2}$ of the CFL hierarchy. A simple matching language $Match=\{x\# w\mid \exists u,v\,[w=uxv]\,\}$ is also in $\sigmacflt{2}$.
Two more languages $Sq=\{0^n1^{n^2}\mid n\geq1\}$ and $Prim=\{0^n\mid \text{ $n$ is a prime number }\}$ respectively belong to $\sigmacflt{2}$ and $\picflt{2}$. A slightly more complex language $MulPrim=\{0^{mn}\mid \,\text{$m$ and $n$ are prime numbers}\,\}$ is a member of  $\sigmacflt{3}$.
The first and second levels of the $\cfl$ hierarchy are proven to be different; more strongly, we can prove that  $\sigmacflt{2}\nsubseteq\sigmacflt{1}/n$, where $\sigmacflt{1}/n$ is a non-uniform version of $\sigmacflt{1}$, defined in  \cite{TYL10} and further explored in \cite{Yam10}.
As is shown in Section \ref{sec:structure-hierarchy}, the whole hierarchy is included in $\mathrm{DSPACE}(O(n))$.
Regarding the aforementioned language families $\cfl(k)$ and $\cfl_{k}$, we can show in Section \ref{sec:Turing-reduction}  that the families $\cfl(\omega)= \bigcup_{k\geq1}\cfl(k)$ and $\bhcfl = \bigcup_{k\geq1}\cfl_{k}$ (called the Boolean hierarchy over $\cfl$) belong to the second level $\sigmacflt{2}\cap \picflt{2}$ of the CFL hierarchy, from a fact that $\cfl(\omega) \subseteq \bhcfl$. Notice that Wotschke \cite{Wot78} demonstrated the separation of $\cfl(\omega)\neq \bhcfl$. Moreover, we show that $\cfl_{m}^{\cfl(\omega)}$  is located within $\sigmacflt{3}$.
Despite obvious similarities between their definitions,
the $\cfl$ hierarchy and the polynomial hierarchy are quite different in nature.  Because of npda's architectural restrictions, ``standard'' techniques of simulating a two-way Turing machine, in general, do not apply; hence, we need to develop new simulation techniques for npda's.

Throughout this paper, we employ three simulation techniques to obtain some of the aforementioned results. The first technique is of guessing and verifying a {\em stack history} to eliminate a use of stack, where a stack history means a series of consecutive stack operations made by an underlying npda. The second technique is applied to the case of simulating two or more tape heads by a single tape head. To adjust the different head speeds, we intentionally insert extra dummy symbols to generate a single query word so that an oracle can ignore them when it accesses
the query word. The last technique is to generate a string that encodes a computation path generated by a nondeterministic machine. All the techniques are explained in details in Sections \ref{sec:many-one-reduction}--\ref{sec:tt-reduction}.  Those simulation techniques actually make it possible to obtain three alternative characterizations of the $\cfl$ hierarchy later in Section \ref{sec:structure-hierarchy}.

Reinhardt \cite{Rei90} related the aforementioned hierarchy of his to another hierarchy defined by alternating pushdown automata and he gave a characterization of the polynomial hierarchy in terms of this alternating hierarchy using logarithmic-space (or log-space) many-one reductions.
Using an argument similar to his, we can establish in Section \ref{sec:close-relation} an exact characterization of the $e$th level of the Boolean hierarchy over the $k$th level $\sigmap{k}$ of the polynomial hierarchy in terms of the corresponding $e$th level of the Boolean hierarchy over the $k+1$st level of the CFL hierarchy. Moreover, we give a new characterization of $\thetap{k}$ (\ie Wagner's \cite{Wag90} notation for $\p_{T}(\sigmap{k-1}[O(\log{n})])$, where ``$[O(\log{n})]$'' indicates at most $O(\log{n})$ adaptive oracle queries are allowed in an entire computation tree)
in terms of the $k$th level of the $\cfl$ hierarchy using log-space truth-table reductions. As an immediate consequence, all levels of the Boolean hierarchy over each level of the $\cfl$ hierarchy are different unless the polynomial hierarchy collapses. With respect to the circuit complexity class $\nc{2}$, we show that $\sigmacfl{2}\nsubseteq \nc{2}$ unless $\np=\nc{2}$.

Another relevant notion induced by reducibility is a {\em relativization} of language families. In Section \ref{sec:BPCFL},  we construct  a recursive oracle for which the family $\bpcfl$ of languages recognized by bounded-error ppds's is not included within the second level of the $\cfl$ hierarchy.
(Of course, there also exists an obvious oracle that makes this inclusion hold.) This separation result contrasts a well-known fact that $\bpp$ is included in $\sigmap{2}\cap\pip{2}$ in any relativized world. To deal with oracle-dependent languages in a relativized $\cfl$ hierarchy, we first characterize them using bounded-depth Boolean circuits of alternating ORs and ANDs.
Our proof relies on a special form of the well-known {\em switching lemma} \cite{Bea90}, in which a circuit of OR of ANDs can be transformed into another equivalent circuit of AND of ORs by assigning probabilistically $0$ and $1$ to input variables. In the unrelativized world, however, we prove that $\bpcfl\nsubseteq\cfl/n$. This separation extends a known result of \cite{HS10} that $\bpcfl\nsubseteq\cfl$.

\begin{figure}[t]
 \begin{center}
 \includegraphics[width=10.0cm]{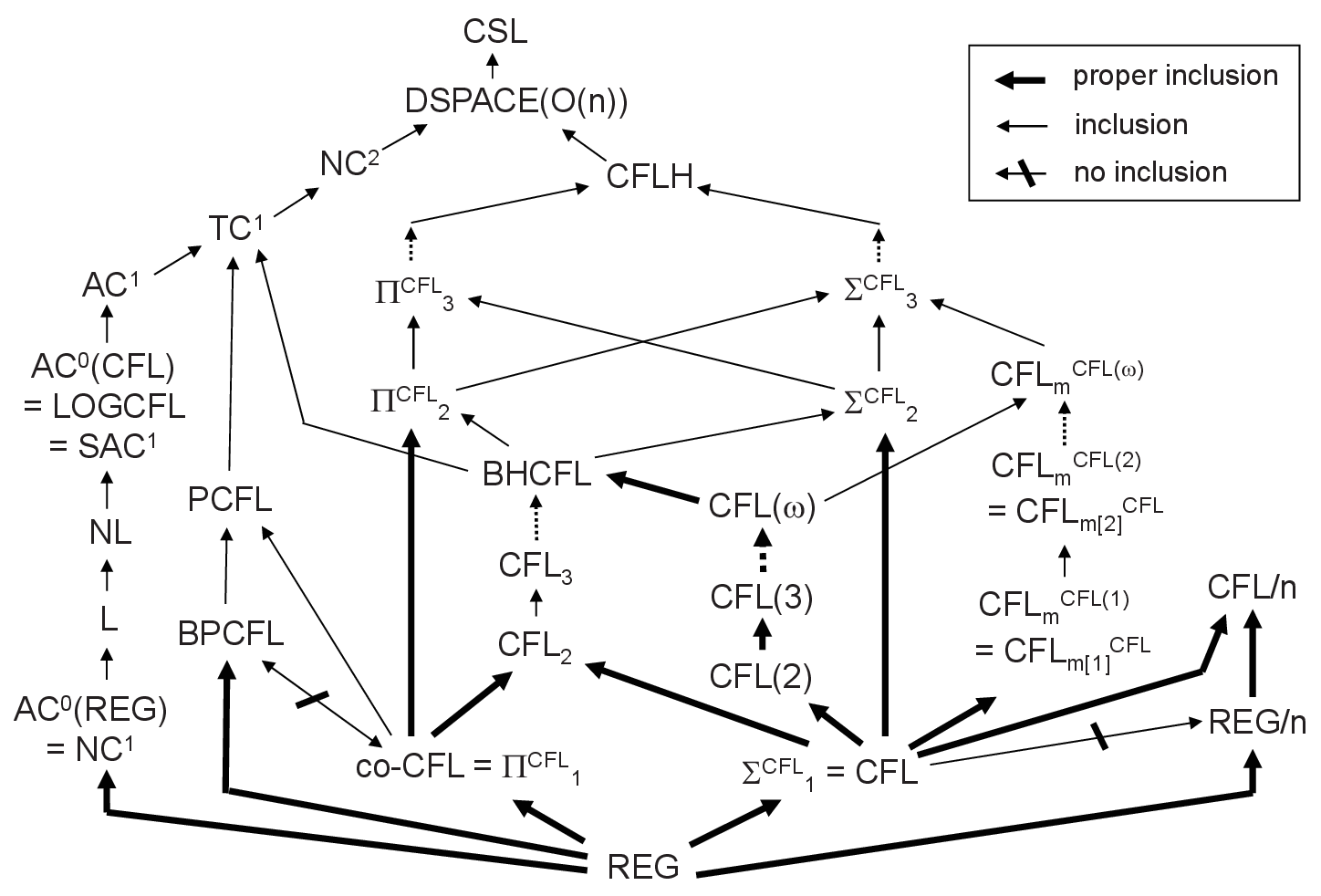}
 \end{center}
\caption{Hasse diagram of inclusion relations among language families}\label{fig:inclusion-map}
\end{figure}

A Hasse diagram in Fig.\ref{fig:inclusion-map} summarizes some of the inclusion relationships  among language families discussed so far. The notation $\cflh$ in the figure denotes the union $\bigcup_{k\geq1}(\sigmacfl{k}\cup\picfl{k})$.

Although most results in this paper are embryonic, we strongly believe that these results would pave a long but straight road to more exciting discoveries in a structural complexity theory of formal languages and automata. For the help to the avid reader, we provide in Section \ref{sec:open-problem} a short list of challenging problems that are left unsolved in its previous sections.

\section{A Preparation for Our Expositions}\label{sec:preliminaries}

We will briefly explain basic notions and notations that help the reader go  through the subsequent sections. Generally, we will follow the existing terminology in a field of formal languages and automata. However, the reader who is familiar with computational complexity theory needs extra attentions to  ceratin notations (for instance, $\cfl(k)$ and $\cfl_{k}$) that are used in quite different ways.

\subsection{Alphabets, Strings, and Languages}\label{sec:alphabet-string}

Given a finite set $A$, the notation $\parallel\! A\!\parallel$ expresses the number of elements in $A$. Let $\nat$ be the set of all {\em natural numbers} (\ie nonnegative integers) and set $\nat^{+}=\nat-\{0\}$. Given two integers $m$ and $n$ with $m\leq n$, $[m,n]_{\integer}$ denotes an integer interval $\{m,m+1,m+2,.\ldots,n\}$. In particular, we abbreviate $[1,n]_{\integer}$ for any number $n\in\nat^{+}$ as $[n]$.
The term ``polynomial'' always means a polynomial on $\nat$ with coefficients of non-negative integers.  In particular, a {\em linear polynomial} is of the form $ax+b$ with $a,b\in\nat$. The notation $A-B$ for two sets $A$ and $B$ indicates the {\em difference} $\{x\mid x\in A, x\not\in B\}$ and $\PP(A)$ denotes the {\em power set} of $A$; that is, the collection of all subsets of $A$.

An {\em alphabet} is a nonempty finite set $\Sigma$ and its elements are called  {\em symbols}. A {\em string}  $x$ over $\Sigma$ is a finite series of symbols chosen from $\Sigma$
and its {\em length}, denoted $|x|$, is the total  number of symbols in $x$. The {\em empty string} $\lambda$ is a special string whose length is zero.
Given a string $x=x_1x_2\cdots x_{n-1}x_{n}$ with $x_i\in\Sigma$, $x^R$ represents the {\em reverse} of $x$, defined by $x^R=x_{n}x_{n-1}\cdots x_2x_1$.
We set $\overline{0}=1$ and $\overline{1}=0$; moreover, for any string $x x=x_1x_2\cdots x_n$ with $x_i\in\Sigma$, $\overline{x}$ denotes $\overline{x_1}\,\overline{x_2}\cdots\overline{x_n}$.
To treat a pair of strings, we adopt a {\em track notation} $\track{x}{y}$ from \cite{TYL10}.  For two symbols $\sigma$ and $\tau$, the notation $\track{\sigma}{\tau}$ expresses a new symbol and, for two strings $x=x_1x_2\cdots x_n$ and $y=y_1y_2\cdots y_n$ of length $n$, $\track{x}{y}$ denotes a string $\track{x_1}{y_1}\track{x_2}{y_2}\cdots \track{x_n}{y_n}$ of length $n$. Since this notation can be seen as a column vector of dimension $2$, we can extend it to a {\em $k$-track notation}, denoted conveniently by $[x_1,x_2,\ldots,x_k]^{T}$, where ``$T$'' indicates a transposed vector.

A collection of strings over $\Sigma$ is  a {\em language} over $\Sigma$.
A set $\Sigma^k$, where $k\in\nat$, consists only of strings of length $k$. In particular, $\Sigma^0$  indicates  the set $\{\lambda\}$.  The {\em Kleene closure} $\Sigma^*$ of $\Sigma$ is the infinite union  $\bigcup_{k\in\nat}\Sigma^k$.  Similarly, the notation $\Sigma^{\leq k}$ is used to indicate the set $\bigcup_{i=1}^{k}\Sigma^{i}$. Given a language $A$ over $\Sigma$, its {\em complement} is $\Sigma^*- A$, which is also denoted by $\overline{A}$ as long as the underlying alphabet $\Sigma$ is clear from the context.
We use the following three class operations between two language families $\CC_1$ and $\CC_2$: $\CC_1\wedge \CC_2 = \{ A\cap B\mid A\in\CC_1, B\in \CC_2\}$, $\CC_1\vee \CC_2 = \{A\cup B\mid A\in\CC_1, B\in \CC_2\}$, and $\CC_1-\CC_2 = \{A-B\mid A\in \CC_1, B\in\CC_2\}$, where $A$ and $B$ must be defined  over the same alphabet.

\subsection{Nondeterministic Pushdown Automata}\label{sec:npda-and-TM}

As our basic computation models, we use the following types of finite-state machines: {\em one-way deterministic finite automaton} (or dfa, in short) with $\lambda$-moves,  {\em one-way nondeterministic pushdown automaton} (or npda) with $\lambda$-moves, and {\em one-way probabilistic pushdown automaton} (or ppda), where a {\em $\lambda$-move} (or a $\lambda$-transition) is a transition of the machine's configurations in which a target tape head stays still. Notice that, as remarked later,  allowing $\lambda$-moves in any computation of a one-way pushdown automaton is crucial when output tapes are particularly involved.

Formally, an npda $M$ is a tuple $(Q,\Sigma,\{\cent,\dollar\},\Gamma,\delta,q_0,Z_0, Q_{acc}, Q_{rej})$, where $Q$ is a finite set of \emph{inner states}, $\Sigma$ is an input alphabet, $\Gamma$ is a stack alphabet, $Z_0$ ($\in \Gamma$) is the bottom marker of a stack, $q_0$ ($\in Q$) is the initial state, $Q_{acc}$ ($\subseteq Q$) is a set of \emph{accepting states}, $Q_{rej}$ ($\subseteq Q$) is a set of \emph{rejecting states}, and $\delta$ is a transition function mapping $(Q-Q_{halt})\times (\check{\Sigma}\cup\{\lambda\}) \times\Gamma$ to $\PP(Q\times \Gamma^*)$ with  $\check{\Sigma}=\Sigma\cup\{\cent,\dollar\}$ and   $Q_{halt} = Q_{acc}\cup Q_{rej}$, where elements in $Q_{halt}$ are called \emph{halting states} and $\cent$ and $\dollar$ are two distinguished endmarkers.
Any step associated with an application of transition of the form $\delta(q,\lambda,a)$ is called a \emph{$\lambda$-move} (or a \emph{$\lambda$-transition}).
The machine $M$ is equipped with a read-only input tape and its tape head cannot move backward.  On such a read-only input tape, an input string is surrounded by the endmarkers as $\cent x \dollar$.
The machine $M$ follows the \emph{endmarker convention}: (i) $M$ begins with reading $\cent$ (namely, $\delta(q_0,\cent,Z_0)$ is the first move of $M$) and (ii) whenever $M$ reads $\dollar$, it must enter a halting state without modifying the stack content (namely,  for any $q\in Q-Q_{halt}$ and $a\in\Gamma$, $\delta_{M}(q,\dollar,a)\subseteq \{(p,a)\mid q\in Q_{halt}\}$). Remember that $M$ is allowed to enter a halting state at any step.
We express the content of the stack as $\tau_1\tau_2\cdots \tau_kZ_0$ ($\tau_i\in\Gamma$) from top to bottom, where $Z_0$ is located at the bottom of the stack and $\tau_1$ is at the top. Assume that $M$ is in inner state $p$, scanning symbol $\sigma$ on the input tape, with stack containing $au Z_0$. A transition $(q,w)\in\delta(p,\sigma,a)$ instructs $M$ to change $a$ to $w$ in the stack (hence the stack becomes $wuZ_0$), enter inner state $q$, and move its input tape head to the right unless
$\sigma=\lambda$.
In particular, when $w=\lambda$, $M$ is informally said to \emph{pop up} $a$ and then the stack content changes from $au Z_0$ to $uZ_0$. In contrast, if $w\neq \lambda$, we may say that $M$ \emph{pushes down} $w$
by replacing $a$.

The machine $M$ must halt instantly after entering a  halting state. More importantly, we may not be able to implement an internal clock inside an npda to measure its runtime.  Therefore, we need to demand that {\em all} computation paths of $M$ should terminate {\em eventually}; in other words, along any computation path, $M$ must enter an appropriate halting state to stop. An \emph{accepting} (resp., a \emph{rejecting}) \emph{computation path}  is a computation path that ends with an accepting (resp., a rejecting) state.
For any of the above machines $M$,  we write  $PATH_{M}(x)$ to express a collection of all computation paths produced by $M$ on input $x$ and we use $ACC_{M}(x)$ (resp., $REJ_{M}(x)$) to denote a set of all accepting (resp., rejecting)  computation paths of $M$ on input $x$. Similarly, let $ALL_{M}(x)$ denote a set of all (halting) computation paths of $M$ on $x$.
It is important to remember that even if we further require all computation paths of $M$ to terminate after $O(n)$ steps for any input of length $n$, the definition of context-free languages does not change. For this reason, we implicitly assume that \emph{all computation paths must terminate in linear time}.  In general, we say that $M$ \emph{recognizes} language $L$ over alphabet $\Sigma$ if, for every $x\in L$, $M$ \emph{accepts} $x$ (\ie $ACC_{M}(x)\neq\setempty$) and, for every $x\in\Sigma^*-L$, $M$ \emph{rejects} $x$ (\ie $ALL_{M}(x)=REJ_{M}(x)$).

In contrast, a dpda uses a transition function $\delta$ mapping $(Q-Q_{halt})\times (\check{\Sigma}\cup\{\cent,\dollar\})\times\Gamma^*$ to $(Q\times\Gamma^*)\cup\{\setempty\}$, which satisfies the following \emph{deterministic property}: for any $q\in Q$ and $a\in\Gamma$, if $\delta(q,\lambda,a)\neq\setempty$, then $\delta(q,\sigma,a)=\setempty$ for all $\sigma\in\check{\Sigma}$.

The notations $\reg$, $\cfl$, and $\dcfl$ stand for the families of all regular languages (recognized by dfa's), of all context-free languages (recognized by npda's), and of all deterministic context-free languages (recognized by deterministic pushdown automata), respectively.  An advised language family $\reg/n$ in \cite{TYL10} consists of languages $L$ such that there exist an advice alphabet $\Gamma$, a length-preserving (total) advice function $h:\nat\rightarrow\Gamma^*$, and a language $A\in\reg$ satisfying $L = \{x\mid \track{x}{h(|x|)}\in A\}$, where $h$ is {\em length preserving} if $|h(n)|=n$ for all numbers $n\in\nat$.  By replacing $\reg$ with $\cfl$ in $\reg/n$,
another advised family $\cfl/n$ in \cite{Yam08} is obtained from $\cfl$.
A language $L$ over $\Sigma$ is called {\em tally} if $L\subseteq\{a\}^*$ holds for a certain symbol $a\in\Sigma$, and the notation $\tally$ indicates the collection of all such tally languages.

To describe multi-valued partial functions, we need to make
an npda equipped with an output tape, which must be write-only.
Whenever we refer to a {\em write-only tape}, we always assume that (i) initially, all cells of the tape are blank, (ii) a tape head starts at the so-called {\em start cell}, (iii) the tape head steps forward whenever it writes down any non-blank symbol, and (iv) the tape head can stay still only in a blank cell.
Therefore, all cells through which the tape head passes during a computation must contain no blank symbols.
Such an npda $M$ is of the form $(Q,\Sigma,\{\cent,\dollar\}, \Theta,\Gamma,\delta,q_0,Z_0,Q_{acc},Q_{rej})$ with an extra output alphabet $\Theta$. A transition function $\delta$ thus maps $(Q-Q_{halt})\times (\check{\Sigma}\cup\{\lambda\})\times \Gamma$ to $\PP(Q\times \Gamma^* \times (\Theta\cup\{\lambda\}))$.
An {\em output} (outcome or output string) along a computation path is a string produced on the output tape after the  computation path is terminated. We call an output string {\em valid} (or \emph{legitimate}) if it is produced along a certain accepting computation path. When we refer to the machine's outputs, we normally disregard any strings left on the output tape on a rejecting computation path, and thus we consider only valid outcomes as ``outcomes.''

We say that $M$ is \emph{well-behaved at $\dollar$} if $M$ enters a halting state without writing any non-blank output symbol while reading $\dollar$ (\ie $\delta(q,\dollar,a)\subseteq \{(p,a,\lambda)\mid p\in Q_{halt}\}$) and that \emph{well-behaved at $\cent$} if $M$ writes no non-blank output symbol (\ie $\delta(q,\cent,Z_0) \subseteq \{(p,w,\lambda)\mid p\in Q,w\in\Gamma^*\}$). Throughout this paper, we demand implicitly that all machines having output tapes should be well-behaved at both $\cent$ and $\dollar$.

A {\em multi-valued partial function} $f$ is in $\cflmv$ if there exist a constant $c>0$ and an npda $M$ equipped with a one-way read-only input tape together with a write-only output tape such that, for every string $x$, (1) all computation paths of $M$ on $x$ terminate within $c|x|+c$ steps and (2)  $f(x)$ is a set composed of all outcomes of $N$ on the input $x$ along accepting computation paths \cite{Yam11}. Notice that, if $\Gamma$ is an output alphabet of $f$, the cardinality $\|f(x)\|$ is upper-bounded by $|\Gamma|^{cn+c}$. When we deal only with ``total'' functions in $\cflmv$,
we obtain $\cflmvt$ \cite{Yam11}.

For any dpda $M$ equipped with an output tape, we demand that (1) $M$ satisfies the deterministic property (described above), (2) before reading the right endmarker $\dollar$, $M$ must write a special symbol (called the \emph{termination symbol}\footnote{The use of this termination symbol is not necessary for npda's, because nondeterminism can eliminate it. The importance of this symbol will be clarified in the proof of Lemma \ref{basic-Turing}(2).}) $\dollar$ to mark the end of an output string, and (3) $M$ is well-behaved at both $\cent$ and $\dollar$.

A ppda is a variant of npda $M$, in which any transition of $M$ is dictated by a given probability distribution. For simplicity, we always assume that every next move of $M$ must be made with \emph{equal probability}. A language $L$ is in $\bpcfl$ if there exist a constant $\varepsilon\in[0,1/2)$ and a ppda $M$ such that, on any input $x$, if $x\in L$, then $M$ accepts $x$ with probability at least $1/2+\varepsilon$; otherwise, $M$ rejects $x$ with probability at least $1/2+\varepsilon$. In this case, we also say that $M$ makes \emph{bounded-error probability}. In contrast, $\pcfl$ is defined as the collection of languages $L$ that are recognized by ppda's $M$ with \emph{unbounded-error probability}; that is, if $x\in L$ then $M$ accepts with probability more than $1/2$; otherwise, $M$ rejects $x$ with probability at least $1/2$.

\subsection{Circuit Families and Higher Complexity Classes}

For higher complexity classes, we will review basic notions and notations. To handle time/spec-bounded computation, we use two models of
{\em two-way deterministic Turing machine} (or DTM) and {\em two-way nondeterministic Turing machine} (or NTM).  Each of those machines is conventionally equipped with a single read-only input tape as well as a single read/write work tape unless otherwise stated for clarity.
Let $\p$ (resp., $\np$) be composed of all languages recognized by DTMs (resp.,  NTMs) in polynomial time. Given each index $k\in\nat$, we define $\deltap{0}=\deltap{1}= \sigmap{0}=\pip{0}=\p$,  $\deltap{k+1} = \p^{\sigmap{k}}$, $\sigmap{k+1}=\np^{\sigmap{k}}$, and $\pip{k+1}=\co\sigmap{k+1}$, where $\p^{\CC} = \bigcup_{A\in\CC}\p^A$ (resp., $\np^{\CC} = \bigcup_{A\in\CC}\np^A$) and $\p^A$ (resp., $\np^A$) is the family of languages recognized by polynomial-time DTMs (resp., NTMs) with adaptive queries to a set $A$, which is given as an {\em oracle}. Those language families constitute the so-called {\em polynomial(-time) hierarchy} \cite{MS72,Sto77,Wra77}.  Let $\ph = \bigcup_{k\in\nat}\sigmap{k}$.  We denote by $\dl$  the family of all languages, each of which is recognized by a certain DTM with a two-way read-only input tape and a two-way read/write work tape using $O(\log{n})$ cells of the work tape.

Two probabilistic language families $\bpp$ and $\pp$ are defined using polynomial-time probabilistic Turing machines (or PTMs) allowing bounded-error and unbounded-error probabilities, respectively.

For each fixed constant $k\in\nat$,  $\nc{k}$ expresses a collection of languages recognized by log-space uniform Boolean circuits of polynomial-size and $O(\log^k{n})$-depth. It is known that $\nc{0}$ is properly included within $\nc{1}$; however, no other separations are known to date.  Similarly, $\ac{k}$ is defined except that all Boolean gates in a circuit may have unbounded fan-in.
Moreover, $\sac{1}$ demotes a class of languages recognized by log-space uniform families of polynomial-size Boolean circuits of $O(\log{n})$ depth and {\em semi-bounded} fan-in (that is, having $AND$ gates of bounded fan-in and $OR$ gates of unbounded fan-in), provided that the negations appear only at the input level. This class $\sac{1}$ is located between $\nc{1}$ and $\ac{1}$.  Venkateswaran \cite{Ven91} demonstrated that $\sac{1}$ coincides with the family $\mathrm{LOGCFL}$ of all languages that are log-space many-one reducible to context-free languages.
Moreover, $\tc{1}$ consists of all languages recognized by log-space uniform families of $O(\log{n})$-depth polynomial-size circuits whose gates compute {\em threshold functions}.

\section{Nondeterministic Reducibilities}
\label{sec:many-one-tt-reduction}

A typical tool in comparing the computational complexity of two formal languages is the form of {\em resource-bounded reducibility}. Such reducibility is also regarded as a {\em relativization} of its underlying language family. Hereafter, we intend to introduce an appropriate  notion of {\em nondeterministic many-one reducibility} to a theory of context-free languages using a specific computation model of  one-way nondeterministic pushdown automata (or npda's) described in Section \ref{sec:npda-and-TM}. This new reducibility catapults a basic architecture of a hierarchy built  in Section \ref{sec:CFL-hierarchy} over the family $\cfl$ of context-free languages.

\subsection{Many-One Reductions by Oracle Npda's}\label{sec:many-one-reduction}

Our exposition begins with an introduction of an appropriate form of nondeterministic many-one reducibility whose reductions are operated by  npda's. In the past literature, there were preceding ground works on many-one reducibilities within a framework of a theory of formal languages and automata.
Based on deterministic/nondeterministic finite automata (or dfa's/nfa's), for instance, Reinhardt \cite{Rei90} discussed two  many-one reducibilities between two languages. Tadaki, Yamakami, and Li \cite{TYL10} also studied the roles of various many-one reducibilities defined by one-tape linear-time Turing machines, which turn out to be closely related to finite automata. Notice that those computation models have no extra memory storage to use. In contrast, we attempt to use npda's as a basis of our reducibility.

Our reduction machine is essentially a restriction of ``pushdown transducer'' or ``algebraic transduction'' (see, \eg \cite{Ber79}); however,  we plan to  define such reduction machines in a style of ``oracle machines.'' An {\em $m$-reduction npda} is a variant of npda, which is additionally equipped with a {\em query tape} on which the machine writes a string surrounded by blank cells starting at the designated {\em start cell} for the purpose of making a query to a given {\em oracle}. This query tape is essentially a write-only  output tape, and thus the query-tape head must move to the next blank cell whenever it writes down a non-blank symbol. As noted for npda's in Section \ref{sec:npda-and-TM}, we also demand that the machine halts on \emph{all} computation paths within $O(n)$ steps, where $n$ is the input size.
In analogy with ``oracle Turing machine,''  we often use the term \emph{oracle npda} to mean an npda with an extra {\em query tape}.
When $M$ halts in an accepting state with a string $y$ written on its query tape, the string $y$ is automatically transmitted to a device called an \emph{oracle}. In this case, we informally say that $M$ \emph{makes a query} to oracle $A$ with query word $y$ (or $M$ \emph{queries} $y$ to $A$).

Formally, an $m$-reduction npda (or an oracle npda) is
a tuple $(Q,\Sigma,\{\cent,\dollar\},\Theta,\Gamma,\delta,q_0,Z_0,Q_{acc},Q_{rej})$, where $\Theta$ is a \emph{query alphabet} and $\delta$ is now of the form
\[
\delta:(Q-Q_{halt})\times (\check{\Sigma}\cup\{\lambda\})\times \Gamma \rightarrow \PP(Q \times \Gamma^* \times (\Theta\cup\{\lambda\})),
\]
where $Q_{halt}$ and $\check{\Sigma}$ are defined in Section \ref{sec:alphabet-string}.
There are two types of $\lambda$-moves to recognize. Assuming $(p,\tau,\xi)\in \delta(q,\sigma,\gamma)$, (i) when $\sigma=\lambda$, the input-tape head stays still (or makes a $\lambda$-move); on the contrary, (ii) when $\tau=\lambda$, the query-tape head stays still. Remember that \emph{all} tape heads must move only in one direction (from left to right) and \emph{all} computation paths halt in $O(n)$ time. Since the query tape is actually an output tape, we also use the terminology of ``well-behavedness''
given in Section \ref{sec:npda-and-TM} for an oracle npda $M$.

We say that a language $L$ over alphabet $\Sigma$ is  {\em many-one CFL-reducible} to another language $A$ over alphabet $\Theta$
if there exists an $m$-reduction npda $M$ using $\Sigma$ and $\Theta$ respectively as an input alphabet and a query alphabet such that, for every input $x\in\Sigma^*$, (1) $ACC_{M}(x)\neq\setempty$, (2) along each computation path $p\in ACC_{M}(x)$, $M$ produces a valid query  string $y_p\in\Theta^*$  on the query tape, and (3) $x\in L$ if and only if $y_p\in A$ for an appropriate computation path $p\in ACC_{M}(x)$.
In this case, we succinctly say that $M$ \emph{reduces} (or \emph{$m$-reduces}) $L$ to $A$. Occasionally, we also say that $M$ \emph{recognizes $L$ relative to} $A$. Given an oracle npda $M$ and an oracle $A$, the notation $L(M,A)$ (or $L(M^A)$) denotes the set of strings accepted by $M$ relative to $A$.
By substituting  a standard output tape for this query tape, $M$ can be seen as an npda outputting $y_p$ on its output tape; in other words, $M$ defines a multi-valued \emph{total} function in $\cflmvt$.
Based on this formulation, we can rephrase the above definition as follows: $L$ is many-one $\cfl$-reducible to $A$ if and only if there exists a function $f\in\cflmvt$ satisfying $L=\{x\mid f(x)\cap A\neq\setempty\}$. With the use of this $m$-reducibility, we make the notation $\cfl_{m}^{A}$ (or $\cfl_{m}(A)$)  denote the family of  all languages $L$ that are many-one $\cfl$-reducible to $A$.
Given a class of oracles, $\cfl_{m}^{\CC}$ (or $\cfl_{m}(\CC)$) denotes the union $\bigcup_{A\in\CC}\cfl_{m}^A$.

In a similar way, we can define \emph{many-one NFA-reducibility} using ``oracle nfa'' instead of ``oracle npda'';
an {\em $m$-reduction nfa} (or an \emph{oracle nfa})
$M$ is a tuple $(Q,\Sigma,\{\cent,\dollar\},\Theta,\delta,q_0,Q_{acc},Q_{rej})$, where $\delta$ is a map from $(Q-Q_{halt})\times (\check{\Sigma}\cup\{\lambda\})$ to $\PP(Q\times(\Theta\cup\{\lambda\}))$. Similarly to oracle npda's, we also impose an $O(n)$ time-bound on all computation paths of $M$.
Let $\nfa_{m}^A$ (or $\nfa_{m}(A)$) denote the family of all languages that are many-one NFA-reducible to $A$.
Obviously, $\nfa_{m}^{A}\subseteq \cfl_{m}^{A}$ holds for any oracle $A$. We use the notation $\nfa_{m}^{\CC}$ (or $\nfa_{m}(\CC)$) for the union $\bigcup_{A\in\CC}\nfa_{m}^A$.

Throughout this paper, we intend to use an informal term of ``guessing'' when we refer to a nondeterministic choice (or a series of nondeterministic choices). For example, when we say that an npda $M$ \emph{guesses} a string $z$, we actually mean that $M$ makes a series of nondeterministic choices that cause to produce
$z$.


Let us start with a quick example of languages that are many-one CFL-reducible to languages in $\cfl$.

\begin{example}\label{ex:DUP_2}
As the first concrete example, setting $\Sigma=\{0,1\}$, let us consider the language $Dup_2=\{xx\mid x\in\Sigma^*\}$. This language is known to be non-context-free; however, it can be many-one $\cfl$-reducible to $\cfl$ by the following $M$ and $A$. An $m$-reduction (or an oracle) npda $M$ nondeterministically produces a query word $x^R\natural y$ (with a special symbol $\natural$) from each input of the form $xy$ using a stack appropriately More formally, a transition function $\delta$ of this oracle npda $M$ is given as follows:
$\delta(q_0,\cent,Z_0) =\{(q_0,Z_0,\lambda)\}$,
$\delta(q_0,\dollar,Z_0) =\{(q_{acc},Z_0,\natural)\}$,
$\delta(q_0,\sigma,Z_0) =\{(q_1,\sigma Z_0,\lambda)\}$,
$\delta(q_1,\sigma,\tau) =\{(q_1,\sigma\tau,\lambda), (q_2,\sigma\tau,\lambda)\}$,
$\delta(q_2,\lambda,\tau) =\{(q_2,\lambda,\tau)\}$,
$\delta(q_2,\lambda,Z_0) =\{(q_3,Z_0,\natural)\}$,
$\delta(q_3,\sigma,Z_0) =\{(q_3,Z_0,\sigma)\}$, and
$\delta(q_3,\dollar,Z_0) =\{(q_{acc},Z_0,\lambda)\}$, where $\sigma,\tau\in\Sigma$.
Our $\cfl$-oracle $A$ is defined as $\{x^R\natural x\mid x\in\Sigma^*\}$; that is, the oracle $A$ checks whether $x=y$ from the input $x^R\natural y$ using its own stack. In other words, $Dup_{2}$ belongs to $\cfl_{m}^{A}$, which is included in $\cfl_{m}^{\cfl}$. Similarly, the non-context-free language $Dup_3=\{xxx\mid x\in\Sigma^*\}$ also falls into $\cfl_{m}^{\cfl}$. For this case, we design an $m$-reduction npda to produce $x^R\natural y \natural y^R \natural z$ from each input $xyz$ and make a CFL-oracle check whether $x=y=z$ by using its stack twice. Another language $Match=\{x\# w\mid \exists u,v\,[w =uxv]\,\}$, where $\#$ is a separator not in $x$ and $w$, also belongs to $\cfl_{m}^{\cfl}$.  These examples prove that $\cfl_{m}^{\cfl} \neq \cfl$.
\end{example}

A slightly more complicated example is given below.

\begin{example}\label{ex:CFL(k)}
The language $Sq =\{0^n1^{n^2}\mid n\geq1\}$ belongs to $\cfl_{m}^{\cfl}$. To see this fact, let us consider the following oracle npda $N$ and oracle $A$. Given any input $w$, $N$ checks if $w$ is of the form $0^i1^j$, it  nondeterministically selects $(j_1,j_2,\ldots,j_k)\in\nat^{k}$, and it produces on its query tape a string $w'$ of the form $0^i\natural 1^{j_1}\natural 1^{j_2}\natural \cdots \natural 1^{j_k}$. Simultaneously, $N$ checks if (i) $j=j_1+j_2+\cdots +j_k$,  (ii) $i=j_1$ by first pushing $0^i$ onto a stack, and (iii) $j_2=j_3$, $j_4=j_5$, $\ldots$ using the stack properly.
The desired oracle $A$ receives $w'$ and checks if the following two  conditions are all met: (i') $j_1=j_2$, $j_3=j_4$, $\ldots$ and (ii') $i=k$ by first pushing $0^i$ onto a stack and then counting the number of $\natural$s.
Clearly, $A$ belongs to $\cfl$. Therefore, $Sq$ is in $\cfl_{m}^{A}$, which is included in $\cfl_{m}^{\cfl}$. A similar idea proves that the language $Comp=\{0^n\mid \,\text{$n$ is a composite number}\,\}$  belongs to $\cfl_{m}^{\cfl}$. In symmetry, $Prim =\{0^n\mid \,\text{$n$ is a prime number}\,\}$ is a member of $\co(\cfl_{m}^{\cfl})$.
\end{example}


Notice that $\np$ is closed under many-one NP-reductions; nonetheless,
$\cfl$ cannot be closed under many-one $\cfl$-reductions. For this latter claim, we argue that, if
$\cfl$ is closed under this reducibility, then $\cfl_{m}^{\cfl}=\cfl$ follows; however, this contradicts what we have seen in Example \ref{ex:DUP_2}. This non-closure property  certainly marks a critical feature of the computational behaviors of context-free languages.
In what follows, we want to strengthen the separation between $\cfl_{m}^{\cfl}$ and $\cfl$ even in the presence of advice.

\begin{proposition}\label{CFL_m^CFL-separation}
$\cfl_{m}^{\cfl}  \nsubseteq \cfl/n$.
\end{proposition}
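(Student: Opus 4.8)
The plan is to exhibit a single witness that lies in $\cfl_{m}^{\cfl}$ yet provably escapes $\cfl/n$. The natural candidate is the copy language $Dup_2=\{xx\mid x\in\{0,1\}^*\}$ of Example \ref{ex:DUP_2}, which already belongs to $\cfl_{m}^{\cfl}$; it therefore suffices to prove $Dup_2\notin\cfl/n$. Toward a contradiction I would suppose $Dup_2\in\cfl/n$, witnessed by a length-preserving advice function $h$ and a context-free language $A$ over a two-track alphabet $\Gamma$ satisfying $w\in Dup_2\iff \track{w}{h(|w|)}\in A$ for every $w$.

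The main tool will be the Interchange Lemma for context-free languages (Ogden, Ross, and Winklmann): for every context-free $A$ there is a constant $c$ such that, for every length $N$, every finite set $Q\subseteq A\cap\Gamma^{N}$, and every parameter $t$ with $2\le t\le N$, one can extract $Z\subseteq Q$ containing at least $\|Q\|/(c\,N^{2})$ strings that share a common decomposition $z_i=w_ix_iy_i$ with $|w_i|$, $|x_i|$, $|y_i|$ constant across $i$ and $t/2<|x_i|\le t$, and with $w_ix_jy_i\in A$ for all pairs $i,j$. First I would fix $m$, set $a=h(2m)$, $N=2m$, and apply this lemma to $Q=R_m:=\{\track{xx}{a}\mid x\in\{0,1\}^m\}\subseteq A\cap\Gamma^{2m}$ with parameter $t=m$, obtaining an interchangeable family $Z_m$ of size at least $2^m/(c(2m)^2)$ whose middle blocks have a common length $\ell$ with $m/2<\ell\le m$.

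The key observation is that the advice track is inert under interchange: every string of $R_m$ carries the identical second track $a$, and since the decomposition lengths are common, every interchanged string $w_ix_jy_i$ again has second track exactly $a$. Hence $w_ix_jy_i=\track{z'}{a}$, where $z'$ is obtained by interchanging the corresponding first-track blocks, and $w_ix_jy_i\in A$ translates back into $z'\in Dup_2$, i.e.\ into the equality of the two halves of $z'$. I would then run a short case analysis on the position of the common middle block relative to the midpoint $m$ of the length-$2m$ strings: inside the first half, inside the second half, or straddling the midpoint. In each case, writing the half-equality constraint coordinatewise shows that $z'\in Dup_2$ for \emph{all} $i,j$ forces every first-track string of $Z_m$ to agree on a fixed set of at least $\ell$ of the $m$ coordinates, so $Z_m$ contains at most $2^{m-\ell}$ distinct strings. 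Since $\ell>m/2$ this is at most $2^{m/2}$, contradicting $\|Z_m\|\ge 2^m/(c(2m)^2)>2^{m/2}$ for all large $m$. This contradiction gives $Dup_2\notin\cfl/n$ and hence the proposition.

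The step I expect to demand the most care is the straddling case: there the interchanged halves mix first-track symbols of $x^{(i)}$ and $x^{(j)}$ on both sides of the midpoint, and one must verify that the resulting coordinatewise constraints still leave only $m-\ell$ free coordinates (the algebra reduces to $|p-q|=m-\ell$, where $p$ and $q$ are the lengths of the two portions of the block lying on either side of the midpoint). A secondary point worth stating carefully is the legitimacy of invoking an unrelativized structural lemma inside $\cfl/n$: this is precisely what the inert-advice-track observation secures, since fixing the second track to the constant $a$ keeps every interchanged string inside the same context-free set $A$.
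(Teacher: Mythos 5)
Your argument is correct, but it follows a genuinely different route from the paper's. The paper never touches $Dup_2$ in this proof: it establishes the general inclusion $\cfl(k+1)\subseteq\cfl_{m}^{\cfl(k)}$ (Claim \ref{CFL(2)-bound}), specializes it to $\cfl(2)\subseteq\cfl_{m}^{\cfl}$, and then quotes the separation $\cfl(2)\nsubseteq\cfl/n$ from \cite{Yam09}, so the combinatorial core is outsourced to that reference (whose tool, Yamakami's swapping lemma, plays precisely the role your interchange lemma plays). You instead take the explicit witness $Dup_2$, already placed in $\cfl_{m}^{\cfl}$ by Example \ref{ex:DUP_2}, and prove $Dup_2\notin\cfl/n$ directly from the Ogden--Ross--Winklmann interchange lemma. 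Your pivotal observation --- every string in your set $R_m$ carries the identical advice track $h(2m)$, and interchanging equal-length blocks at a common offset leaves that track unchanged, so membership of the interchanged string in the context-free set $A$ translates back into membership of its first track in $Dup_2$ --- is exactly what licenses applying an unrelativized structural lemma in the advised setting, and your three-way case analysis does force all members of $Z_m$ to agree on a fixed set of $\ell>m/2$ coordinates, giving $\|Z_m\|\leq 2^{m-\ell}<2^{m/2}$ against the lower bound $2^{m}/(c(2m)^{2})$; the contradiction is genuine. One algebraic slip is worth fixing: in the straddling case, with $p=m-p_{0}$ and $q=p_{0}+\ell-m$ (where $p_{0}$ is the common offset $|w_i|$), the constrained coordinates form $[1,q]\cup(p_{0},m]$, of size $q+p=\ell$, and the free segment is $(q,p_{0}]$, of size $p_{0}-q=m-\ell$; your parenthetical identity $|p-q|=m-\ell$ is false in general (take $m=10$, $\ell=8$, $p_{0}=4$: then $|p-q|=4$ but $m-\ell=2$), although the count $m-\ell$ of free coordinates that you conclude from it is correct. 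The trade-off between the two proofs: the paper's is two lines given the external citation and its Claim \ref{CFL(2)-bound} is reused later in the paper, while yours is self-contained modulo a classical published lemma and pins down a concrete, simple witness.
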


To show this separation, we will briefly review a notion of $k$-conjunctive closure over $\cfl$.
Given each number $k\in\nat^{+}$, the {\em $k$-conjunctive closure of CFL}, denoted $\cfl(k)$ in \cite{Yam11}, is defined recursively as follows: $\cfl(1) = \cfl$ and $\cfl(k+1) = \cfl(k)\wedge \cfl$. These language families truly form an infinite hierarchy \cite{LW73}.  For convenience, we set $\cfl(\omega) = \bigcup_{k\in\nat^{+}}\cfl(k)$.
For advised language families, it is known that $\cfl\nsubseteq\reg/n$ \cite{TYL10}, $\co\cfl\nsubseteq\cfl/n$ \cite{Yam08},  and $\cfl(2)\nsubseteq\cfl/n$ \cite{Yam09}.
In the following proof of Proposition \ref{CFL_m^CFL-separation}, we attempt to prove that $\cfl(2)\subseteq \cfl_{m}^{\cfl}$ and $\cfl(2)\nsubseteq\cfl/n$.

\vs{-2}
\begin{proofof}{Proposition \ref{CFL_m^CFL-separation}}
Toward a contradiction, we assume that  $\cfl_{m}^{\cfl}\subseteq\cfl/n$. In this proof, we need the following containment:  $\cfl(2)\subseteq\cfl_{m}^{\cfl}$. For the sake of a later reference, we prove a more general statement described below.

\begin{claim}\label{CFL(2)-bound}
For every index $k\geq1$, $\cfl(k+1)\subseteq \cfl_{m}^{\cfl(k)}$.
\end{claim}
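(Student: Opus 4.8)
The plan is to peel off a single context-free factor and delegate the remaining $\cfl(k)$ part to the oracle. Fix an arbitrary $L\in\cfl(k+1)$. By the recursive definition $\cfl(k+1)=\cfl(k)\wedge\cfl$, we may write $L=A\cap B$, where $A\in\cfl(k)$ (over some alphabet $\Sigma$) and $B\in\cfl$ (over the same $\Sigma$). The key observation is that an $m$-reduction machine is itself an npda, so it can use its own stack to decide membership in the single context-free language $B$, while it reserves the oracle---set equal to $A\in\cfl(k)$---for the harder, $k$-fold-intersection part. Since the oracle already checks membership in $A$, the machine need only hand the oracle a query word equal to the raw input $x$ and accept exactly on those runs that witness $x\in B$; this is cleaner than the transformation-style query words of Example \ref{ex:DUP_2}, because here we exploit the intersection structure directly.

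Concretely, let $N=(Q,\Sigma,\{\cent,\dollar\},\Gamma,\delta,q_0,Z_0,Q_{acc},Q_{rej})$ be an npda recognizing $B$. I would build an $m$-reduction machine $M$ with input alphabet $\Sigma$ and query alphabet $\Theta=\Sigma$ that simulates $N$ step by step and that, each time $N$ consumes a genuine input symbol $\sigma\in\Sigma$, writes the same symbol $\sigma$ onto the query tape; whenever $N$ performs a $\lambda$-move or scans an endmarker, $M$ performs a corresponding query-tape $\lambda$-move and writes nothing. The machine $M$ enters an accepting state precisely when the simulated $N$ enters a state in $Q_{acc}$. In this way, along every accepting computation path $p$ of $M$ the query word produced is exactly $y_p=x$, and $M$ has at least one accepting path on input $x$ if and only if $x\in B$.

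It then remains to check the two defining conditions of many-one $\cfl$-reducibility with oracle $A$. Condition (1) holds because every accepting path writes the valid query word $x$. For condition (2), note that $ACC_{M}(x)\neq\setempty$ exactly when $x\in B$, and that every such path yields $y_p=x$; hence there is an accepting path $p$ with $y_p\in A$ if and only if both $x\in B$ and $x\in A$ hold, that is, if and only if $x\in A\cap B=L$. This establishes $L\in\cfl_{m}^{A}\subseteq\cfl_{m}^{\cfl(k)}$, and since $L$ was arbitrary, $\cfl(k+1)\subseteq\cfl_{m}^{\cfl(k)}$. In particular, the base case $k=1$ yields the inclusion $\cfl(2)\subseteq\cfl_{m}^{\cfl}$ needed in the proof of Proposition \ref{CFL_m^CFL-separation}. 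The one point that genuinely requires care---rather than a deep obstacle---is the faithful synchronization of the two tape heads in the presence of $\lambda$-moves and endmarkers, so that the query word is kept identical to the input $x$, free of endmarkers, no matter which nondeterministic branch of $N$ is followed.
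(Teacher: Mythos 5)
Your proof is correct and follows essentially the same route as the paper's own argument: decompose $L$ as the intersection of a single context-free language and a $\cfl(k)$ language, simulate the npda for the context-free factor with the reduction machine's own stack while copying the input verbatim onto the query tape, and let the $\cfl(k)$ oracle test the copied input. The only cosmetic difference is which factor you name first and how you phrase the head-synchronization detail (the paper assumes WLOG that the simulated npda finalizes at the right endmarker, while you handle $\lambda$-moves explicitly), but the substance is identical.
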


\begin{proof}
Let $L$ be any language in $\cfl(k+1)$ and take two languages $L_1\in\cfl$ and $L_2\in\cfl(k)$ for which $L = L_1\cap L_2$. There exists an npda $M_1 = (Q_1,\Sigma,\{\cent,\dollar\},\delta_1,q_0,Q_{acc,1},Q_{rej,1})$ that recognizes $L_1$. Without loss of generality, we assume that $M_1$ enters a halting state  when it scans the right endmarker $\dollar$; namely, $(q,w)\in\delta_{1}(p,\dollar,a)$ implies $q\in Q_{halt}$. A new oracle npda $N$ (with a transition function $\delta$) is defined to behave as follows. On input $x$, $N$ starts  simulating $M_1$ on $x$.  While reading each symbol $\sigma$ from $x$, $N$ also copies it down to a write-only query tape; namely, if $(q,w)\in\delta_{1}(p,\sigma,a)$, then $(q,w,\sigma)\in\delta(p,\sigma,a)$.
When $M_1$ enters a halting state, $N$ enters the same halting state; namely, $(q,w)\in\delta_{1}(p,\dollar,a)$ implies $(q,w,\lambda)\in\delta(p,\dollar,a)$.
It follows that, given any input $x$, $x$ is  in $L$ if and only if $N$ on the input $x$ produces the query string $x$ in an accepting state and $x$ is actually in $L_2$. This equivalence relation implies that $L$ belongs to $\cfl_{m}^{L_2}$, which is obviously a subclass of $\cfl_{m}^{\cfl(k)}$.
\end{proof}

Since $\cfl(2)\subseteq \cfl_{m}^{\cfl}$ by Claim \ref{CFL(2)-bound},  our assumption implies  that $\cfl(2)\subseteq \cfl/n$. This contradicts the class separation $\cfl(2)\nsubseteq\cfl/n$, proven in \cite{Yam09}. Therefore, the proposition must hold.
\end{proofof}


A {\em Dyck language} $L$ over alphabet $\Sigma=\{\sigma_1,\sigma_2,\ldots,\sigma_d\}\cup \{\sigma'_1,\sigma'_2,\ldots,\sigma'_d\}$ is a language generated by a deterministic context-free grammar whose production set is $\{S\rightarrow \lambda | SS | \sigma_i S \sigma'_i: i\in[d]\}$, where $S$ is a start symbol. For convenience, denote by $DYCK$ the family of all Dyck languages.

\begin{lemma}\label{non-closure-many-one}
$\cfl_{m}^{\cfl} = \cfl_{m}^{\dcfl} = \cfl_{m}^{DYCK}$.
\end{lemma}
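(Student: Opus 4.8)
The plan is to dispatch the two easy containments by set inclusion and then devote all the work to the single nontrivial one. Every Dyck language is deterministic context-free and $\dcfl\subseteq\cfl$, so we have the class inclusions $DYCK\subseteq\dcfl\subseteq\cfl$. Since $\cfl_{m}^{\CC}=\bigcup_{A\in\CC}\cfl_{m}^{A}$ is monotone in the oracle class $\CC$ (enlarging the family of admissible oracles enlarges the union), this immediately gives $\cfl_{m}^{DYCK}\subseteq\cfl_{m}^{\dcfl}\subseteq\cfl_{m}^{\cfl}$. It therefore suffices to prove the reverse inclusion $\cfl_{m}^{\cfl}\subseteq\cfl_{m}^{DYCK}$, that is, to show that any context-free oracle may be replaced by a suitable Dyck oracle.

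The tool I would use is the Chomsky--Sch\"{u}tzenberger representation: for the given context-free oracle $A$ there are a Dyck language $D$, a regular language $R$, and a homomorphism $h$ with $A=h(D\cap R)$. Fix an $m$-reduction machine $M$ witnessing $L\in\cfl_{m}^{A}$, so that on input $x$ each path $p\in ACC_{M}(x)$ produces a query word $y_{p}$ and $x\in L$ iff $y_{p}\in A$ for some such $p$. Now $y_{p}\in A$ holds exactly when some $w\in D\cap R$ satisfies $h(w)=y_{p}$, so the idea is to build a new $m$-reduction machine $N$ that writes such a preimage $w$ in place of $y_{p}$ and queries the Dyck oracle $D$. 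Concretely, $N$ runs $M$ on its own stack, and whenever $M$ would emit a query symbol $a$, $N$ instead emits (nondeterministically) a short block of bracket symbols, one of which maps under $h$ to $a$ while the rest map to $\lambda$, freely guessing the bracket types. The regular constraint $w\in R$ is tracked in $N$'s finite control since $R\in\reg$, and $N$ accepts only when both the simulated path of $M$ accepts and the $R$-component is accepting. A path-by-path check then gives correctness: since $N$ enforces $w\in R$, any produced $w$ that also lies in $D$ belongs to $D\cap R$, whence $h(w)=y_{p}\in A$; conversely, for a good $y_{p}$ any witnessing $w\in D\cap R$ with $h(w)=y_{p}$ can be guessed by $N$.

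The step I expect to be the real obstacle is guaranteeing that $N$ is a legitimate npda whose every computation path halts, while still being able to output an arbitrary preimage $w$. Producing the $h$-erased brackets of $w$ costs extra $\lambda$-moves, and permitting unboundedly many of them would create nonterminating paths; yet a generic Chomsky--Sch\"{u}tzenberger word may pack many erased brackets between two consecutive non-erased ones, so a naive per-symbol budget need not reach a correct $w$. I would resolve this by taking the representation to come from a real-time pushdown automaton for $A$ (equivalently, a Greibach-normal-form grammar): there each move reads exactly one input symbol and performs only a bounded amount of stack work, so the associated bracket encoding erases at a bounded rate, with only a constant number of erased brackets occurring between consecutive non-erased ones. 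With this representation $N$ may insert a constant number of erased brackets per emitted symbol of $M$, which keeps $|w|$ linear in $|y_{p}|$, forces $N$ to halt exactly when $M$ does, and still lets $N$ generate every relevant $w\in D\cap R$. Combining this reverse inclusion with the easy containments yields $\cfl_{m}^{\cfl}=\cfl_{m}^{\dcfl}=\cfl_{m}^{DYCK}$.
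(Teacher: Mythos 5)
Your proof is correct, but it takes a genuinely different route from the paper's, so a comparison is worthwhile. The combinatorial engine is the same in both arguments: replace the oracle recognizer's stack by a guessed sequence of push/pop brackets that a Dyck oracle verifies, while finite-state consistency is checked in finite control. The decomposition, however, differs. The paper factors the hard inclusion $\cfl_{m}^{\cfl}\subseteq\cfl_{m}^{DYCK}$ into two reusable claims: first, $\cfl=\nfa_{m}^{DYCK}$, proved by letting an oracle \emph{nfa} simulate the npda's control and input head while writing its guessed stack history as a bracket word (in effect a hand-rolled Chomsky--Sch\"{u}tzenberger argument in which your regular language $R$ is absorbed into the nfa's state set); and second, the composition law $\cfl_{m}^{A}=\cfl_{m}(\nfa_{m}^{A})$ for every oracle $A$, valid because an nfa reducer has no stack and can be folded symbol-by-symbol into the outer npda. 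The lemma then reads $\cfl_{m}^{\cfl}=\cfl_{m}(\nfa_{m}^{DYCK})=\cfl_{m}^{DYCK}$. You instead cite Chomsky--Sch\"{u}tzenberger as a black box and build the composed reducer in one shot, carrying $R$ in the reducer's finite control. The two obstacles you flag are precisely the points the paper treats implicitly: your real-time/GNF refinement, which bounds the erased brackets per emitted symbol and hence guarantees halting and linear query length, plays the role of the paper's WLOG assumption that the oracle's npda makes no $\lambda$-moves before reaching the endmarker (and empties its stack at the end); and your in-control tracking of $R$ corresponds to the paper's nfa tracking the npda's state transitions. What the paper's factorization buys is reusability: the first claim is later generalized to $\cfl(d)=\nfa_{m}^{DYCK^{ext}_{d}}$ and to $\sigmacflt{k}=\sigmanfa{k}(DYCK)$, and the composition law recurs throughout Section 3. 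What your route buys is a more self-contained argument that makes the termination issue explicit rather than hidden in a WLOG. The only loose ends in your version are cosmetic (for instance, the empty query word when $\lambda$ may or may not lie in $A$), at the same level of detail that the paper itself glosses over.
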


In the following proof, we will employ a simple but useful technique of guessing and verifying a correct {\em stack history} (that is, a series of pushed and popped symbols). Whenever an npda tries to either push down  symbols into its stack or pop up a symbol from the stack, instead of actually using the stack, we guess symbols and write a series of those guessed symbols (as a stack history) down on a query tape and ask an oracle to verify that it is indeed a correct stack history. This technique will be frequently used in other sections.

\vs{-2}
\begin{proofof}{Lemma \ref{non-closure-many-one}}
Since $DYCK \subseteq \dcfl \subseteq \cfl$, we obtain $\cfl_{m}^{DYCK}\subseteq \cfl_{m}^{\dcfl}\subseteq \cfl_{m}^{\cfl}$. Thus,  we are  hereafter focused on proving that $\cfl_{m}^{\cfl} \subseteq \cfl_{m}^{DYCK}$.  As the first step toward this goal, we will prove the following characterization of $\cfl$ in terms of Dyck languages.

\begin{claim}\label{NFA-to-DYCK}
$\cfl = \nfa_{m}^{DYCK}$.
\end{claim}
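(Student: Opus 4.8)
The plan is to establish the two inclusions $\cfl \subseteq \nfa_{m}^{DYCK}$ and $\nfa_{m}^{DYCK} \subseteq \cfl$ separately, the first being the substantive direction and the natural place to deploy the stack-history technique announced just before the statement.

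For $\cfl \subseteq \nfa_{m}^{DYCK}$, I would fix $L\in\cfl$ recognized by an npda $M$ and first normalize $M$ so that every non-$\lambda$ step either pushes a single stack symbol or pops a single stack symbol (the usual one-operation-per-move normal form). I then build an nfa-based $m$-reduction machine $N$ that reads the input $x$ once from left to right while nondeterministically guessing an entire computation of $M$ on $x$, that is, the sequence of transitions together with, at each pop step, the symbol it claims sits on top of the stack. $N$ keeps $M$'s inner state in its own finite control, so it can check locally that each guessed transition is legal and that the symbols it consumes actually spell $x$; crucially, $N$ maintains \emph{no} stack. Instead, whenever $M$ would push a symbol $a$, $N$ writes an opening bracket $\sigma_a$ on the query tape, and whenever $M$ would pop $a$, it writes the matching closing bracket $\sigma'_a$. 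The query word $y$ produced along an accepting path of $N$ is a balanced Dyck word over the brackets $\sigma_a,\sigma'_a$ precisely when the guessed pops are globally LIFO-consistent with the pushes, i.e.\ precisely when the guessed run is an honest accepting computation of $M$. Taking $D$ to be the Dyck language over this bracket alphabet, we obtain $x\in L$ iff some accepting path of $N$ yields $y\in D$, which is exactly $L\in\nfa_{m}^{D}\subseteq\nfa_{m}^{DYCK}$. (This is essentially the reduction underlying the Chomsky--Sch\"utzenberger representation $L=h(D\cap R)$, and parallels Reinhardt's use of $L_{pp}$.)

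For the reverse inclusion $\nfa_{m}^{DYCK}\subseteq\cfl$, I would use a routine product construction. Let $L\in\nfa_{m}^{D}$ via an nfa-based $m$-reduction machine $N$ and a Dyck language $D\in DYCK$, and let $P$ be a one-way (in fact deterministic) pushdown automaton recognizing $D$. Since $N$ writes its query symbols onto a write-only query tape whose head only moves forward, the query word is emitted in left-to-right order; hence I can compose $N$ and $P$ into a single npda $M'$ that reads $x$, simulates $N$'s finite control (inheriting its nondeterminism), and feeds each symbol that $N$ would emit directly into $P$, using the single stack to run $P$ on the fly. $M'$ accepts exactly when $N$ reaches an accepting state and $P$ accepts the query word produced so far (balanced stack), so $x\in L(M')$ iff some accepting path of $N$ produces a query word in $D$; thus $L=L(M')\in\cfl$.

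The main obstacle is the first direction, specifically verifying that the bracket-matching test performed by the Dyck oracle faithfully enforces a genuine LIFO stack discipline: I must argue that a query word produced by $N$ lies in $D$ if and only if each guessed pop matches the corresponding push in the correct nested order, so that the query words in $D$ are in bijection with the honest accepting computations of $M$. Handling the bottom-of-stack marker $Z_0$ and the acceptance convention (final state versus empty stack), so that an accepting run of $M$ corresponds to a \emph{fully} balanced bracket word, is the one place that requires care; everything else reduces to bookkeeping inside the finite control.
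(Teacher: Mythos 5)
Your proposal is correct and takes essentially the same approach as the paper: for $\cfl\subseteq\nfa_{m}^{DYCK}$ you guess the stack history and emit pushes/pops as opening/closing brackets for the Dyck oracle to verify (the paper's stack-history technique, with the paper using a replace-top-by-a-string convention where yours uses a single-push/pop normal form), and for the reverse inclusion you run the nfa reduction machine and a dpda for the Dyck language in a product construction, feeding query symbols into the dpda on the fly, exactly as the paper does.
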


This claim can be seen as a different form of the Chomsky-Sch{\"u}tzenberger theorem (see, \eg \cite{Ber79}).
Notice that Reinhardt \cite{Rei90} proved a similar statement using his special language $L_{pp}$.
Our intended proof relies on the behaviors of underlying npda's and aims at constructing many-one reductions.

\vs{-2}
\begin{proofof}{Claim \ref{NFA-to-DYCK}}
($\subseteq$)
For any language $L$ in $\cfl$, consider an npda $M = (Q,\Sigma,\{\cent,\dollar\},\Gamma,\delta,q_0,Z_0,Q_{acc},Q_{rej})$ that recognizes $L$. Without loss of generality, we impose on $M$ the following acceptance criterion: $M$ enters an accepting state
exactly when its stack becomes ``empty'' by erasing the bottom marker $Z_0$. Let $\Gamma =\{\sigma_1,\sigma_2,\ldots,\sigma_d\}$ be a stack alphabet of $M$,  including  $Z_0$. Corresponding to each symbol $\sigma_i\in\Gamma$, we introduce another fresh symbol $\sigma'_i$ and then we set $\Gamma'=\{\sigma'_1,\sigma'_2,\ldots,\sigma'_d\}$.
Without loss of generality, it is possible to assume that $M$ makes no $\lambda$-move until its tape head scans the right endmarker $\dollar$ (see, \eg \cite{HMU01} for the proof). For convenience, we further assume that, when the tape head reaches $\dollar$, $M$ must make a series of $\lambda$-moves to empty the stack before entering a certain halting state. Even with such acceptance criteria, our npda can be assumed to halt on ``all'' computation paths in linear time.

Let us construct a new oracle nfa $N$.
In the following description of $N$, we will intentionally identify all symbols in $\Gamma$ with  ``pushed'' symbols, and all symbols in $\Gamma'$ with ``popped'' symbols. Given any input string $x$, $N$ basically  simulates each step of $M$'s computation made on the input $x$ with using no stack.
At the first step, if $M$ pushes down a string $wZ_0\in\Gamma^*$, then $N$ writes down $Z_0w^R$ on a query tape.
At any later step, $M$ in state $p$ scanning $\tau$ pushes down a string $w=\sigma_{i_1}\sigma_{i_2}\cdots \sigma_{i_k}\in\Gamma^*$ in place of the top symbol of the stack, $N$ first guesses this top symbol, say, $\sigma_j$ and, if $\sigma_j\neq Z_0$, then $N$ writes down $\sigma'_jw^R$ on its query tape.  This is because $\sigma_j$ is on the top of the stack, and thus it must be first popped up before pushing $w$ down. When $M$ pops up a symbol, $N$ instead guesses it, say, $\sigma_i$ and writes down $\sigma'_i$ (not $\sigma_i$) on the query tape.
For example, if the stack content changes as $Z_0\stackrel{push}{\rightarrow} \sigma_1\sigma_2Z_0\stackrel{push}{\rightarrow} \sigma_3\sigma_1\sigma_2Z_0\stackrel{pop}{\rightarrow} \sigma_1\sigma_2Z_0\stackrel{pop}{\rightarrow} \sigma_2Z_0\stackrel{pop}{\rightarrow}  Z_0\stackrel{pop}{\rightarrow} \lambda$, then the corresponding string that $N$ has produced is $Z_0\sigma_2\sigma_1\sigma'_1\sigma_1 \sigma_3\sigma'_3\sigma'_1\sigma'_2Z'_0$. Note that $N$ halts in $O(n)$ steps because $M$ always moves its input tape head until reaching $\dollar$.
More formally, let $N= (Q,\Sigma,\{\cent,\dollar\},\Theta,\delta_N,q_0,Q_{acc},Q_{rej})$ with $\Theta=\Gamma\cup\Gamma'$. The transition function $\delta_N$ is given as follows: $\delta_N(q_0,\cent)= \{(p,Z_0w^R)\mid p\in\Theta^*, w\in\Gamma^*, (p,wZ_0)\in\delta(q_0,\cent,Z_0)\}$ and  $\delta_N(q,\tau) = \{(p,\sigma'w^R)\mid p\in Q, \sigma\in\Gamma, w\in\Gamma^*, (p,w)\in\delta(q,\tau,\sigma)\}$ for any $q\in Q-Q_{halt}$ and $\tau\in\check{\Sigma}\cup\{\lambda\}$.

Finally, $B$ is chosen to be a Dyck language over the alphabet $\Gamma\cup\Gamma'$.
Note that, if a query word $u$ encodes a stack history (with the stack becoming empty when the machine halts), then $u$ obviously belongs to $B$. Therefore, $L$ must be in $\cfl_{m}^{B}$, which is a subclass of $\cfl_{m}^{DYCK}$.

($\supseteq$) Assume that $L\in\nfa_{m}^{B}$ for an appropriate oracle $B$ in $DYCK$. Let us take an $m$-reduction nfa $M_1= (Q_1,\Sigma,\{\cent,\dollar\},\Theta,\delta_1,q_0,Z_0,Q_{1,acc},Q_{1,rej})$ that reduces $L$ to $B$. Since $B$ is in $\dcfl$, take a dpda $M_2= (Q_2,\Theta,\{\cent,\dollar\},\Gamma_2,\delta_2,q_0,Z_0,Q_{2,acc},Q_{2,rej})$ that recognizes $B$. For our convenience, let us assume that \emph{$M_2$ makes no $\lambda$-move}. We will simulate both $M_1$ and $M_2$ on a special npda $N = (Q,\Sigma,\{\cent,\dollar\},\Gamma,\delta_N,q_0,Z_0,Q_{acc},Q_{rej})$ in the following fashion. Given any input $x$, $N$ starts the simulation of $M_1$ on $x$ without using any stack. If $M_1$ tries to write a symbol, say, $\tau$ on a query tape, then $N$ instead simulates one step of $M_2$'s computation that corresponds to the scanning of $\tau$  together with a certain stack symbol, say, $a$ on the top of the stack.

To be more formal, let $a\in\Gamma_2$, $w\in\Gamma_2^*$, $p\in Q_1$, $p\in Q_1-Q_{1,halt}$, $q\in Q_1$, $\tau\in\Theta$, $p'\in Q_2-Q_{2,halt}$, $q'\in Q_2$ and $\sigma\in\check{\Sigma}\cup\{\lambda\}$. Define $Q_{acc} = Q_{1,acc}\times Q_{2,acc}$ and $Q_{rej} = (Q_{1,rej}\times Q_2)\cup (Q_1\times Q_{2,rej})$. Moreover, the desired transition function $\delta_N$ of $N$ is defined as follows. For simplicity, we assume that, at the time when $M_1$ is entering a halting state, it never write any non-blank output symbol. Let $((q,q'),w)\in\delta_N((q_0,q_0),\cent,Z_0)$ if $(q,\lambda)\in\delta_1(q_0,\cent)$ and $(q',w)\in\delta_2(q_0,\cent,Z_0)$.
When $(q,\lambda)\in\delta_1(p,\sigma)$ with $q\notin Q_{1,halt}$, we set  $((q,p'),a)\in\delta_{N}((p,p'),\sigma,a)$. When $(q,\tau)\in \delta_1(p,\sigma)$ and $(q',w)\in\delta_2(p',\tau,a)$ with $q\notin Q_{1,halt}$ and $q'\in Q_{2,halt}$, we define   $((q,q'),w)\in\delta_{N}((p,p'),\sigma,a)$. When $(q,\lambda)\in\delta_1(p,\sigma)$ and $(q',w)\in\delta_2(p',\dollar,a)$ with $q\in Q_{1,halt}$ and $q'\in Q_{2,halt}$, we set $((q,q'),w)\in\delta_{N}((p,p'),\sigma,a)$. For all other cases, let $N$ enter appropriate rejecting states.

It is not difficult to show that $N$ accepts $x$ if and only if $x$ is in $L$. Therefore, it follows that $L\in\cfl$.
\end{proofof}

We note that another way to show ($\supseteq$) in the above proof is to use Greibach's  \cite{Gre73} hard language $L_0$, which satisfies the useful property that every context-free language is an inverse  homomorphic image of $L_0$ or $L_0-\{\lambda\}$.

Next, we claim the following equality.

\begin{claim}\label{CFL(NFA)-equal-CFL}
$\cfl_{m}^A = \cfl_{m}(\nfa_{m}^A)$ for any oracle $A$.
\end{claim}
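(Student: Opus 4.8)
The plan is to prove the two inclusions separately, where the left-to-right containment is essentially immediate and the right-to-left containment is the real content, obtained by composing (``chaining'') two reduction machines into a single one. Throughout I read $\cfl_{m}(\nfa_{m}^{A})$ as $\bigcup_{B\in\nfa_{m}^{A}}\cfl_{m}^{B}$, in agreement with the earlier use of reductions to a class such as $\cfl_{m}^{\cfl(k)}$.

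For the inclusion $\cfl_{m}^{A}\subseteq\cfl_{m}(\nfa_{m}^{A})$, I would simply observe that $A$ itself lies in $\nfa_{m}^{A}$ via the trivial $m$-reduction machine that copies its input symbol by symbol onto the query tape (a deterministic, hence nondeterministic, finite automaton that uses no stack) and thus reduces $A$ to $A$. Consequently every language that is $m$-reducible to $A$ by an npda is $m$-reducible to a member of $\nfa_{m}^{A}$, which yields the desired containment.

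The substantive direction is $\cfl_{m}(\nfa_{m}^{A})\subseteq\cfl_{m}^{A}$. Fix $L\in\cfl_{m}^{B}$ for some $B\in\nfa_{m}^{A}$, let $M_1$ be an npda $m$-reduction machine reducing $L$ to $B$, and let $M_2$ be an nfa $m$-reduction machine reducing $B$ to $A$. I would build a single npda $m$-reduction machine $N$ that reduces $L$ directly to $A$ by pipelining the two: on input $x$, $N$ simulates $M_1$ (devoting the one available stack to $M_1$'s stack), but whenever $M_1$ would emit a symbol $\tau$ onto its query tape, $N$ instead feeds $\tau$ as the next input symbol to a simulated run of $M_2$ and writes whatever $M_2$ outputs onto $N$'s own query tape. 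The left endmarker $\cent$ is supplied to $M_2$ before $M_1$ produces any symbol, and the right endmarker $\dollar$ is supplied at the moment $M_1$ enters an accepting state; $N$ then accepts once the ensuing simulation of $M_2$ also reaches an accepting state. The decisive point that keeps $N$ a legitimate npda is that $M_2$, being an nfa, uses \emph{no} stack, so its entire configuration is a finite-state datum that $N$ can carry in its finite control alongside $M_1$'s stack-based computation. The two sources of $\lambda$-moves (namely, $M_1$ pausing on its input tape or emitting nothing, and $M_2$ pausing on its input stream or emitting an output symbol) are absorbed into $N$'s own $\lambda$-moves, and termination of all paths of $N$ follows from the termination of all computation paths of $M_1$ and $M_2$.

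For correctness I would verify that the accepting computation paths of $N$ on $x$ correspond exactly to pairs $(p,q)$, where $p$ is an accepting path of $M_1$ on $x$ producing an intermediate query word $y_p$ and $q$ is an accepting path of $M_2$ on $y_p$ producing a query word $z_q$, and that $N$ outputs precisely $z_q$ along that path. Unfolding the existential query conditions then gives: $x\in L$ if and only if $\exists p\,[y_p\in B]$, if and only if $\exists p,q\,[z_q\in A]$; that is, $x\in L$ exactly when some accepting path of $N$ yields a query word in $A$. Equivalently, the multi-valued function computed by $N$ lies in $\cflmv$ and witnesses $L=\{x\mid f(x)\cap A\neq\setempty\}$, so $L\in\cfl_{m}^{A}$. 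I expect the main obstacle to be bookkeeping rather than conceptual: synchronizing the interleaved $\lambda$-moves and the timing of the endmarkers, and above all justifying that a single stack suffices. This last point is precisely where the hypothesis that $M_2$ is an nfa is indispensable, since an inner npda reduction would demand two independent stacks and could not be realized by one npda; this is exactly the phenomenon underlying the failure of transitivity discussed earlier, and it is what makes this claim the workhorse behind $\cfl_{m}^{\cfl}=\cfl_{m}^{DYCK}$ once combined with the characterization $\cfl=\nfa_{m}^{DYCK}$ of Claim \ref{NFA-to-DYCK}.
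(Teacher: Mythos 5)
Your proposal is correct and follows essentially the same route as the paper: the forward inclusion via the trivial observation that $A\in\nfa_{m}^{A}$, and the reverse inclusion by composing $M_1$ and $M_2$ into a single oracle npda that carries the stackless configuration of the nfa $M_2$ in its finite control while devoting the stack to $M_1$. The paper's proof is a terser version of exactly this pipelining construction, likewise resting on the fact that $M_2$'s lack of a stack is what makes the one-stack composition possible.
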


\begin{proof}
($\subseteq$)  This is rather trivial, because $A\in\nfa_{m}^A$ holds by choosing an oracle nfa that takes input $x$ and then queries $x$ itself to an oracle.

($\supseteq$)  Assume that $L\in\cfl_{m}^{B}$ for a certain language $B$ in $\nfa_{m}^{A}$. Let us take an oracle npda $M_1 = (Q_1,\Sigma,\{\cent,\dollar\},\Theta,\Gamma, \delta_1,q_0,Z_0,Q_{1,acc},Q_{1,rej})$ recognizing $L$ relative to $B$ and also an oracle nfa $M_2 = (Q_2,\Theta,\{\cent,\dollar\},\Theta_2, \delta_2,q_0,Q_{2,acc},Q_{2,rej})$ recognizing $B$ relative to $A$.  A new machine $N$ is defined to behave as follows. On input $x$, $N$ simulates $M_1$ on $x$. Whenever $M_1$ tries to write a symbol, say, $\tau$, on its own query tape,  since $M_2$ has no stack usage, $N$ can simulate one step of $M$ while reading $\tau$ and a certain number of $\lambda$-moves made by $M_2$.

To be more precise, let $N = (Q,\Sigma,\{\cent,\dollar\},\Theta_2,\Gamma, \delta_N,(q_0,q_0),Z_0,Q_{acc},Q_{rej})$ with $Q=[2]\times Q'_1\times Q_2$, where   $Q'_1=\{q_0\}\cup \{\track{p}{c}\mid p\in Q_1,c\in C\}$ with $C=\{1,2,\cent,\dollar,\lambda\}$. For simplicity, assume that $Q_{a,acc}=Q_{2,acc}=\{q_{acc}\}$ and $Q_{1,rej}=Q_{2,rej}=\{q_{rej}\}$.
Let $Q_{acc} = \{(b,\track{q_{acc}}{c},q_{acc})\mid b\in[2],c\in C \}$ and $Q_{rej} = \{(b,\track{q_{rej}}{c},p_2),(b,\track{p_1}{c},q_{rej})\mid b\in[2], c\in C, p_1\in Q_{1}, p_2\in Q_{2}\}$.
Recall that $M_1$ is well-behaved at $\cent$ (\ie $(p,w,\tau)\in\delta_1(q_0,\cent,Z_0)$ implies $\tau=\lambda$).
To simplify the following description, we assume, without loss of generality,  that  $(p,w,\tau)\in\delta_1(q,\sigma,a)$ with $p\in \{q_{acc},q_{rej}\}$ implies $\tau=\lambda$.
Let $\delta_N((1,q_0,q_0),\cent,Z_0)$ contain $((2,\track{p_1}{\cent},q_0),w,\lambda)$ if $(p_1,w,\lambda)\in\delta_1(q_0,\cent,Z_0)$. For any $\sigma\in\Sigma\cup\{\lambda,\dollar\}$, let $\delta_N((1,\track{q_1}{\lambda},q_2),\sigma,a)$ contain $((1,\track{p_1}{\lambda},q_2),w,\lambda)$ if $(p_1,w,\lambda)\in\delta_1(q_1,\sigma,a)$ and $p_1\notin Q_{1,acc}$, and  $((2,\track{p_1}{\tau},q_2),w,\lambda)$ if $(p_1,w,\tau)\in\delta_1(q_1,\sigma,a)$, $\tau\in\Theta$, and $p_1\notin Q_{1,acc}$. In these cases, if $p_1\in Q_{1,acc}$, then we set $((2,\track{p_1}{\dollar},q_2),w,\lambda) \in \delta_N((1,\track{q_1}{\lambda},q_2),\sigma,a)$. Moreover, let $\delta_{N}((2,\track{q_1}{\tau},q_2),\lambda,a)$ contain $((b,\track{q_1}{\lambda},p_2),a,\xi)$ for any $b\in[2]$ if $(p_2,\xi)\in\delta_2(q_2,\tau)$, where  $\tau\in\Theta\cup\{\cent,\dollar,\lambda\}$.

In the end of its computation, $N$ produces query words exactly as $M_2$ does. Moreover, all computation paths of $M_1$ as well as $M_2$ terminate in linear time, $N$ also halts in linear time. Therefore, it is possible to verify that $N$ is indeed an $m$-reduction npda reducing $L$ to $A$. Thus, we obtain the desired membership $L\in\cfl_{m}^{A}$.
\end{proof}

By combining Claims \ref{NFA-to-DYCK} and \ref{CFL(NFA)-equal-CFL}, it follows that $\cfl_{m}^{\cfl} = \cfl_{m}(\nfa_{m}^{DYCK}) \subseteq \cfl_{m}^{DYCK}$.
\end{proofof}

We will introduce another technique of  simulating  two or more tape heads moving at (possibly) different speeds by a single tape head. Let us consider an npda $M$ with a write-only output tape. Since the tape heads of $M$ may make $\lambda$-moves and stay still at any moments on both input and output tapes, it seems difficult to synchronize the moves of those two tape heads for the purpose of splitting the output tape into two tracks and produce a string $\track{x}{y}$ from input string $x$ and output string $y$ of $M$. The best we can do is to insert a fresh symbol, say, $\natural$ between input symbols as well as output symbols to adjust the speeds of two tape heads.
To implement this idea, it is useful to introduce a terminology to describe strings obtained by inserting $\natural$. Assuming that $\natural\not\in\Sigma$, a {\em $\natural$-extension} of a given string $x$ over $\Sigma$ is a string $\tilde{x}$ over $\Sigma\cup\{\natural\}$ for which $x$ is obtained directly from $\tilde{x}$ simply by removing all occurrences of $\natural$ in $\tilde{x}$.  For instance, if $x=01101$, then $\tilde{x}$ may be $01\natural 1\natural01$, $011\natural\natural 01\natural$, or $\natural0\natural11\natural00\natural1\natural$.

Associated with such $\natural$-extensions, we can extend Dyck languages by adding the symbol $\natural$ as a part of its underlying alphabet, assuming that $\natural$ does not appear in the original alphabet, and also by considering $d$-tuples of strings over this extended alphabet. To be more formally, we first generalize our track notation $\track{x_1}{x_2}$ to $[x_1,x_2,\ldots,x_k]^{T}$, where subscript ``$T$'' refers to ``transposed'' as in the case of matrices. In particular, $[x_1,x_2]^{T}$ coincides with $\track{x_1}{x_2}$.
Given each index $d\in\nat^{+}$, $DYCK^{ext}_d$ consists of  all languages $L$  such that there exist $d$ \emph{extended Dyck languages} $A_1,A_2,\ldots,A_d$ for which $L$ consists of elements of the form $[x_1,x_2,\ldots,x_d]^{T}$  satisfying the following:  for every index $i\in[d]$, $x_i$ belongs to $A_i$. In particular, when $d=2$, any language $L$ in $DYCK^{ext}_2$ has the form $\{\track{x}{y}\mid x\in A,y\in B\}$ for certain extended Dyck languages $A$ and $B$.
It is worth noting that $DYCK^{ext}_d$ is  a subclass of $\dcfl(d)$ ($=\bigwedge_{i\in[d]}\dcfl$).
The following corollary generalizes Claim \ref{NFA-to-DYCK}.

\begin{corollary}\label{NFA-Dyck-equal-CFL}
For each fixed index $d\in\nat^{+}$, $\cfl(d) = \nfa_{m}(DYCK^{ext}_d)$.
\end{corollary}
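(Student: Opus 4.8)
The plan is to prove the two inclusions separately, lifting each half of Claim~\ref{NFA-to-DYCK} from one track to $d$ tracks and using the fresh symbol $\natural$ to reconcile the different head speeds of the $d$ machines involved.

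For $\cfl(d)\subseteq\nfa_{m}^{DYCK^{ext}_d}$, I would begin with a representation $L=\bigcap_{i=1}^{d}L_i$ where each $L_i\in\cfl$ is recognized by an npda $M_i$. The goal is a single oracle nfa $N$ that reads the input $x$ once while guessing, in parallel, one computation of every $M_i$. Exactly as in the ``$\subseteq$'' part of Claim~\ref{NFA-to-DYCK}, $N$ records on track $i$ the guessed stack history of $M_i$ (writing $\sigma'_j$ for each pop of $\sigma_j$ and $\sigma'_j w$ whenever $w$ replaces a top symbol $\sigma_j$). Because the $d$ machines advance their input heads at different moments on account of $\lambda$-moves, $N$ pads the momentarily idle tracks with $\natural$; this is precisely what the extension of Dyck languages by $\natural$ is for. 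Choosing $A_i$ to be the extended Dyck language that verifies the stack discipline of $M_i$, the produced word $[y_1,\ldots,y_d]^{T}$ lands in the product oracle $A_1\times\cdots\times A_d\in DYCK^{ext}_d$ iff every guessed computation is a genuine accepting one, i.e.\ iff $x\in L$. The step that makes this direction sound is that the $d$ computations are mutually independent, so one nondeterministic branch of $N$ may realize an arbitrary admissible combination of them, which lets the $d$ separate existential guesses be bundled into a single guess.

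For the reverse inclusion $\nfa_{m}^{DYCK^{ext}_d}\subseteq\cfl(d)$, suppose $L$ is reduced by an oracle nfa $M$ to a product oracle $A_1\times\cdots\times A_d$ with each $A_i$ extended Dyck, hence in $\dcfl$ and recognized by a dpda $R_i$. The natural attempt is to define, for each $i$, an npda $P_i$ that simulates $M$ on $x$ and feeds the $i$-th track of the produced query to $R_i$ using its single stack, hoping that $L=\bigcap_{i=1}^{d}L(P_i)$. I expect this to be the main obstacle, and it is a real one: since each $P_i$ resolves the nondeterminism of $M$ on its own, the $d$ machines may commit to \emph{different} accepting computations, so $\bigcap_i L(P_i)$ can properly contain $L$ (it would accept any $x$ for which each track $i$ is made valid along some path $p_i$, even when no single path validates all tracks at once). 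To remove this overshoot I would invoke the third simulation technique and let $M$ also emit, on every track, an encoding of its nondeterministic choice sequence, and then have each $P_i$ verify both that the guessed word is the output of one accepting computation of $M$ on $x$ \emph{and} that its own track lies in $A_i$. The crux of the argument---the part deserving the most care---is to arrange this encoding together with the languages $A_i$ so that simultaneous membership in all $d$ component languages can only witness one and the same accepting computation, thereby collapsing $\bigcap_i L(P_i)$ back to $L$ and placing $L$ in $\cfl(d)$. Combining the two inclusions then yields the claimed equality.
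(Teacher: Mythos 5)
Your first inclusion, $\cfl(d)\subseteq\nfa_{m}^{DYCK^{ext}_d}$, is exactly the paper's argument (parallel simulation of the $d$ npda's, with each stack history written on its own track and $\natural$-padding reconciling the head speeds), and in fact this is \emph{all} that the paper's printed proof of Corollary~\ref{NFA-Dyck-equal-CFL} establishes: the reverse inclusion is neither argued there nor used later (Lemma~\ref{extendible-inclusion} only reuses this one construction). So the half you share with the paper is fine.

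The gap is in your second half, located precisely at the step you yourself flag as the crux, and it cannot be repaired. Emitting an encoding of $M$'s choice sequence does not create a shared witness: each $P_i$ guesses that encoding inside its own nondeterministic run, and the $d$ component npda's of an intersection share nothing but the input $x$, so no choice of component languages can force the $d$ runs to certify one and the same accepting computation of $M$. That this obstacle is intrinsic, and not an artifact of your particular scheme, follows from a counterexample to the inclusion $\nfa_{m}^{DYCK^{ext}_2}\subseteq\cfl(2)$ itself. Consider Liu--Weiner's witness $L_k=\{a_1^{n_1}\cdots a_k^{n_k}b_1^{n_1}\cdots b_k^{n_k}\mid n_1,\ldots,n_k\geq1\}$, which lies in $\cfl(k)$ but not in $\cfl(k-1)$ \cite{LW73}. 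For \emph{every} $k$ one has $L_k\in\nfa_{m}^{DYCK^{ext}_2}$: write a format-correct input (the regular format is checked in the finite control) as $x=uv$ with $u=u_1\cdots u_m\in a_1^{+}\cdots a_k^{+}$ and $v\in b_1^{+}\cdots b_k^{+}$, and let $\phi$ be the relabeling $a_i\mapsto b_i$, so that $x\in L_k$ iff $v=\phi(u)$. A realtime oracle nfa writes one two-track column per input symbol: while reading $u$ it guesses a symbol $g_j$ at each position $j$ and guesses the midpoint of $u$; on track~1 it writes, in the first half of $u$, the open symbol indexed by the pair $(g_j,u_j)$ and, in the second half, the close symbol indexed by $(u_j,g_j)$, padding track~1 with $\natural$ while $v$ is read (the odd-midpoint position is checked in the finite control). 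Since the LIFO matching pairs positions $j$ and $m+1-j$, track~1 is an extended Dyck word iff $g=u^{R}$. On track~2 it writes the open symbol $\phi(g_j)$ at each position of $u$ and then the close symbol $v_{j'}$ at each position of $v$, so track~2 is a Dyck word iff $v=\phi(g)^{R}$. Hence some branch produces a query word in the oracle iff $v=\phi(u^{R})^{R}=\phi(u)$, i.e.\ iff $x\in L_k$; the machine needs no $\lambda$-moves, so all paths halt. Thus $\nfa_{m}^{DYCK^{ext}_2}$ contains, for every $k$, a language outside $\cfl(k-1)$; the reverse inclusion is false for every $d\geq2$, only the inclusion $\cfl(d)\subseteq\nfa_{m}^{DYCK^{ext}_d}$ (the direction both you and the paper prove) is correct, and no completion of your second half is possible. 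Your diagnosis of why the naive product construction overshoots is exactly right --- but the overshoot is inherent to intersections of context-free languages, not something a cleverer encoding can remove.
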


In the proof of Corollary \ref{NFA-Dyck-equal-CFL} that follows shortly, to simplify the description of simulations of given oracle npda's, we need to introduce a special terminology.
Let $M=(Q,\Sigma,\{\cent,\dollar\},\Theta,\Gamma,\delta,q_0,Z_0,Q_{acc},Q_{rej})$ denote any oracle npda and $A$ be any oracle.
We say that a string $w$ of the form $\track{\tilde{x}}{\tilde{y}}$ {\em encodes input $x$ and query word $y$ along a computation path of $M$} if (i) along a ``{certain}'' \emph{accepting} computation path $\gamma$ of $M$ on $x$, $M$ starts with the input $x$ and produces this string $y$ on its query tape, (ii) $\tilde{x}$ and $\tilde{y}$ are respectively $\natural$-extensions of those $x$ and $y$, and (iii) there is another oracle npda $N$ that takes $w$ and, by scanning each symbol in $w$, $N$ traverses the computation path $\gamma$ as follows. Assume that $N$ scans a symbol of the form $\track{\sigma}{\tau}$. If $\sigma\neq\natural$, then $N$ simulates exactly one step of $M$ made by scanning $\sigma$ on its input tape and writing $\tau$ on its query tape.
On the contrary, if $\sigma=\natural$, then $N$ simulates exactly one $\lambda$-move of $M$ (without moving its input-tape head).
To be more precise, for any $\sigma\in\Sigma\cup\{\natural\}$ and $\tau\in\Theta\cup\{\natural\}$,
let $\delta_{N}(q,\track{\sigma}{\tau},a) = \{(p,w)\mid (p,w,t(\tau))\in\delta_{M}(q,t(\sigma),a)\}$, where $t(\xi)=\xi$ if $\xi\neq\natural$ and $t(\xi)=\lambda$ otherwise.  Let $\delta_{N}(q_0,\cent,Z_0) = \{(p,\hat{\tau}w)\mid p\in Q, w\in\Gamma^*, \tau\in\Theta\cup\{\lambda\}, (p,w,\tau)\in\delta_{M}(q_0,\cent,Z_0)\}$,
$\delta_{N}(q,\track{\natural}{\tau},\hat{\tau}) = \{(q,\lambda)\}$, and $\delta_{N}(q,\dollar,a) = \{(p,w)\mid p\in Q, w\in\Gamma^*, (p,w,\lambda)\in\delta_{M}(q,\dollar,a)\}$.

In a similar fashion, we can define the concept of ``$\track{\tilde{y}}{\tilde{z}}$ encodes stack history $y$ and query word $z$ along a computation path.''

\vs{-2}
\begin{proofof}{Corollary \ref{NFA-Dyck-equal-CFL}}
Let $L$ be any language defined as $L = \bigcap_{i\in[d]} A_i$ for  $k$ context-free languages $A_1,A_2,\ldots,A_k$. For each index $i\in[d]$, an npda $M_i$ is assumed to recognize $A_i$.  Consider an oracle nfa $N$ that, on input $x$, simulates several steps of $M_1,M_2,\ldots,M_d$ {\em in parallel} while they scan each input symbol.  During this simulation, instead of using a stack, $N$ writes a stack history of $M_i$ onto the $i$th track of its query tape. However, to adjust the speeds of $d$ tape heads, we appropriately insert the  symbol $\natural$. If  a query word $w$ correctly represents $d$ stack histories of $d$ machines, then $w$ must be in $DYCK^{ext}_{d}$ as discussed in the proof of Claim \ref{NFA-to-DYCK}.
\end{proofof}


For later use, we will generalize an argument used in the proof of Claim \ref{NFA-to-DYCK}.   We say that a language family $\CC$ is {\em  $\natural$-extendible} if, for every language $A$ in $\CC$, two special languages $A^{\natural}_1  =\{\track{\tilde{y}}{\tilde{z}}\mid z\in A\}$ and  $A^{\natural}_2  =\{\track{\tilde{y}}{\tilde{z}}\mid  y\in A\}$ also belong to $\CC$, where $\tilde{y}$ and $\tilde{z}$ are arbitrary  $\natural$-extensions of $y$ and $z$, respectively.

\begin{lemma}\label{extendible-inclusion}
Let $\CC$ be any nonempty language family. If $\CC$ is $\natural$-extendible, then $\cfl_{m}^{\CC}\subseteq \nfa_{m}^{\dcfl\wedge \CC}$ holds.
\end{lemma}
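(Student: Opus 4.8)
The plan is to combine the stack-history elimination of Claim \ref{NFA-to-DYCK} with the two-track speed-adjustment technique, thereby splitting the oracle's task into a deterministic Dyck check (the $\dcfl$ part) and a membership check against the original oracle (the $\CC$ part). Fix $L\in\cfl_{m}^{\CC}$, witnessed by an oracle npda $M$ and an oracle $A\in\CC$ such that $M$ $m$-reduces $L$ to $A$. Let $\Gamma$ be the stack alphabet of $M$ and let $\Gamma'$ be a set of fresh primed copies, so that, exactly as in Claim \ref{NFA-to-DYCK}, any correct stack history of $M$ is encoded as a word over $\Gamma\cup\Gamma'$ lying in the Dyck language $D$ over $\Gamma\cup\Gamma'$; we assume without loss of generality that $M$ empties its stack by a final run of $\lambda$-moves before halting, so that a valid computation of $M$ corresponds to a \emph{balanced} word in $D$. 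First I would build an nfa-based $m$-reduction machine $N$ that, on input $x$, simulates $M$ step by step but \emph{uses no stack}: whenever a transition of $M$ consults the top stack symbol, $N$ guesses it; whenever $M$ pushes or pops, $N$ records the corresponding symbol of $\Gamma$ or $\Gamma'$ on a first query track; and whenever $M$ writes a symbol on its query tape, $N$ copies that symbol onto a second query track. Since the stack operations and the query writes of $M$ occur at possibly different moments, $N$ inserts the filler symbol $\natural$ on whichever track is idle so as to keep the two tracks synchronized cell-by-cell. Thus, along a computation path, $N$ emits a two-track word $\track{\tilde{s}}{\tilde{y}}$ where $\tilde{s}$ is a $\natural$-extension of the guessed stack history $s$ and $\tilde{y}$ is a $\natural$-extension of the query word $y$ produced by $M$; in the terminology introduced before Corollary \ref{NFA-Dyck-equal-CFL}, $\track{\tilde{s}}{\tilde{y}}$ encodes stack history $s$ and query word $y$ along a computation path of $M$. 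Because $N$ manipulates no stack, it is a genuine nfa $m$-reduction machine.

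Next I would take the oracle to be $B\cap C$, where $B=\{\track{\tilde{s}}{\tilde{y}}\mid s\in D\}$ verifies that the first track, after deleting all $\natural$, is a balanced Dyck word (equivalently, that the guessed stack operations are consistent and empty the stack), and $C=\{\track{\tilde{s}}{\tilde{y}}\mid y\in A\}$ verifies that the second track, after deleting all $\natural$, lies in $A$. A dpda recognizes $B$ by scanning the two-track word left to right, ignoring the second component and any $\natural$ in the first, and running the standard deterministic Dyck recognizer on the remaining first-track symbols; hence $B\in\dcfl$. The membership $C\in\CC$ is exactly where the hypothesis is used: $C$ is precisely the language $A^{\natural}_{1}$ from the definition of $\natural$-extendibility, with the first track playing the role of $\tilde{y}$ and the second that of $\tilde{z}$, so $A\in\CC$ together with $\natural$-extendibility yields $C\in\CC$. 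Therefore $B\cap C\in\dcfl\wedge\CC$.

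Finally I would verify correctness. For any input $x$, an accepting path of $M$ producing a query word $y\in A$ corresponds to an accepting path of $N$ whose query word $\track{\tilde{s}}{\tilde{y}}$ has a consistent stack history (so $s\in D$, giving membership in $B$) and satisfies $y\in A$ (membership in $C$); conversely, any accepting path of $N$ whose query word lies in $B\cap C$ has a consistent stack history, hence corresponds to a genuine accepting path of $M$ producing some $y\in A$. This gives $x\in L$ if and only if some accepting path of $N$ emits a query word in $B\cap C$, i.e., $L\in\nfa_{m}^{\dcfl\wedge\CC}$, as desired. The hard part will be the bookkeeping of the $\natural$-synchronization: one must insert the $\natural$'s so that both tracks are legitimate $\natural$-extensions and the two-track string remains a single well-formed query word that $N$ can emit without any stack, while confirming that deleting $\natural$ from each track recovers exactly $s$ and $y$, so that the Dyck check $B$ and the $\CC$-check $C$ remain faithful to $M$'s computation.
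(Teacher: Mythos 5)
Your proposal is correct and matches the paper's own proof essentially step for step: both eliminate the stack via the guess-and-record technique of Claim \ref{NFA-to-DYCK}, emit a two-track $\natural$-synchronized query word encoding the stack history and $M$'s query word, and split the oracle into a Dyck check on one track (the $\dcfl$ component) and the language $A^{\natural}_{1}$ on the other (where $\natural$-extendibility of $\CC$ is invoked). Your write-up is in fact somewhat more explicit than the paper's about why the reducing machine needs no stack and about the correctness equivalence, but the underlying argument is the same.
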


\begin{proof}
Take any oracle $A$ from $\CC$ and consider any language $L$ in $\cfl_{m}^{A}$. Moreover, take any $m$-reduction npda $M$ that reduces $L$ to $A$. With a similar construction as in the proof of Corollary \ref{NFA-Dyck-equal-CFL}, we intend to construct an oracle nfa $N$, which  guarantees the lemma. On input $x$, let $N$ simulate $M$ on $x$ and produce,  on its own query tape using no stack, strings of the form $\track{\tilde{y}}{\tilde{z}}$ that encode stack history $y$ and query word $z$ along a computation path of  $M$. Choose a Dyck language $D$ that correctly represents all stack histories of $M$ and then define $B$ as the set $ \{\track{\tilde{y}}{\tilde{z}}\mid y\in D, z\in A\}$. It is clear by its definition that $N$ $m$-reduces $L$ to $B$ and that $B$ belongs to $\dcfl\wedge \CC$. Therefore, $L$ is in $\nfa_{m}^{\dcfl\wedge\CC}$.
\end{proof}


Hereafter, let us explore basic properties of many-one $\cfl$-reducibility.
Unlike many-one NP-reducibility, the lack of the \emph{transitivity property} of many-one $\cfl$-reducibility necessitates an introduction of a helpful abbreviation of a {\em $k$-fold application of the reductions}. For any given oracle $A$,  we recursively set $\cfl_{m[1]}^{A} = \cfl_{m}^{A}$ and $\cfl_{m[k+1]}^{A} = \cfl_{m}(\cfl_{m[k]}^{A})$ for each index  $k\in\nat^{+}$.  Given each  language family $\CC$, the notation $\cfl_{m[k]}^{\CC}$ denotes the union $\bigcup_{A\in\CC}\cfl_{m[k]}^{A}$.
A close relationship between $\cfl(k)$'s and $\cfl_{m[k]}^{\cfl}$'s is exemplified below.

\begin{theorem}\label{CFL(k)-equal-m[k]}
For every index $k\in\nat^{+}$, $\cfl_{m}^{\cfl(k)} = \cfl_{m[k]}^{\cfl}$.
\end{theorem}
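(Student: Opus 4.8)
The plan is to prove the equality $\cfl_{m}^{\cfl(k)} = \cfl_{m[k]}^{\cfl}$ by induction on $k$, establishing the two inclusions separately at each stage. The base case $k=1$ is immediate from the definitions, since $\cfl_{m[1]}^{\cfl} = \cfl_{m}^{\cfl}$ and $\cfl(1) = \cfl$. For the inductive step, I would leverage the two structural facts already available in the excerpt: Claim~\ref{CFL(2)-bound}, which gives $\cfl(k+1)\subseteq\cfl_{m}^{\cfl(k)}$, and Claim~\ref{CFL(NFA)-equal-CFL}, which provides the ``collapsing'' identity $\cfl_{m}^A = \cfl_{m}(\nfa_{m}^A)$ that lets one absorb an extra layer of nfa-reduction.

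\emph{The direction $\cfl_{m}^{\cfl(k)} \subseteq \cfl_{m[k]}^{\cfl}$.} First I would unfold the oracle class: by Claim~\ref{CFL(2)-bound} applied repeatedly (or rather its content for general indices), a language in $\cfl(k)$ is itself reducible to $\cfl(k-1)$, so $\cfl_{m}^{\cfl(k)}\subseteq\cfl_{m}(\cfl_{m}^{\cfl(k-1)})$. The nesting on the right is exactly one application of $\cfl_{m}$ around the class handled by the induction hypothesis. Concretely, if $L\in\cfl_{m}^{A}$ with $A\in\cfl(k)$, then since $A\in\cfl_{m}^{\cfl(k-1)}$ we obtain $L\in\cfl_{m}(\cfl_{m}^{\cfl(k-1)}) = \cfl_{m[2]}^{\cfl(k-1)}$, and the induction hypothesis $\cfl_{m}^{\cfl(k-1)} = \cfl_{m[k-1]}^{\cfl}$ folds this into $\cfl_{m[k]}^{\cfl}$. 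The one subtlety here is that $\cfl_{m}^{\cfl(k)}$ is a reduction to a \emph{single} oracle in $\cfl(k)$ rather than a $k$-fold composition, so the first reduction step really does have to \emph{expose} the conjunctive structure of the oracle as a reduction, which is precisely what Claim~\ref{CFL(2)-bound} guarantees.

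\emph{The direction $\cfl_{m[k]}^{\cfl} \subseteq \cfl_{m}^{\cfl(k)}$.} This is the harder inclusion, and I expect it to be the main obstacle. Here I need to \emph{compress} a $k$-fold chain of many-one $\cfl$-reductions into a single reduction to a $k$-fold conjunctive oracle. The key device is the $\natural$-extension / stack-history machinery developed for Corollary~\ref{NFA-Dyck-equal-CFL} and Lemma~\ref{extendible-inclusion}: rather than executing the intermediate $\cfl$-reductions one after another, the outermost reduction machine should simulate the entire chain \emph{in parallel}, writing the successive stack histories and query words onto separate tracks of one query tape (inserting $\natural$ to synchronize the heads), and then hand the whole multi-track word to a single oracle that checks, track by track, that each stack history is a legitimate Dyck word \emph{and} that the innermost query word lands in the base $\cfl$ language. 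Because a conjunction of $k$ such (deterministic-context-free) stack-history conditions with one $\cfl$ condition lives in $\dcfl(k-1)\wedge\cfl\subseteq\cfl(k)$, the composite oracle sits exactly in $\cfl(k)$, and the outer machine is an ordinary npda, giving membership in $\cfl_{m}^{\cfl(k)}$. The delicate point is \emph{nondeterminism propagation}: each reduction in the chain is nondeterministic, so the accepting computation path of the composite must correspond to a coherent choice of accepting paths at every level; I would handle this by having the outer npda guess all intermediate query words simultaneously and record them on the tracks, with the oracle verifying global consistency, so that an accepting path exists if and only if some consistent tuple of accepting paths exists at all levels. The principal technical burden is bookkeeping the head-synchronization across $k$ tracks and confirming that the verification condition genuinely factors as a $k$-fold conjunction rather than collapsing the layers incorrectly.
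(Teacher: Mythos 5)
Your proposal is correct, and in substance it is the paper's proof. The easy direction $\cfl_{m}^{\cfl(k)} \subseteq \cfl_{m[k]}^{\cfl}$ is argued exactly as the paper does (its Lemma~\ref{CFL(k)-included-CFL-m[k]}): Claim~\ref{CFL(2)-bound} exposes the oracle in $\cfl(k)$ as one more layer of reduction, and the induction hypothesis folds the rest. For the hard direction the paper uses precisely the machinery you invoke---virtualizing inner stacks as guessed stack histories, synchronizing tape heads via $\natural$-extensions, and delegating verification to a conjunctive oracle---but it organizes the argument as an induction whose engine is a two-machine collapse step, Claim~\ref{CFL_m[2]-property} ($\cfl_{m[2]}^{\cfl(r)}\subseteq\cfl_{m}^{\cfl(r)\wedge\cfl}$), which adds one Dyck-type conjunct per inductive round; you instead unwind that induction into a single parallel cascade of all $k$ reduction machines. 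The two formulations are equivalent: the paper's step-at-a-time version buys lighter bookkeeping (only two machines and two tracks per step, correctness checked locally), while your one-shot version makes the final oracle structure explicit and lines up with the $d$-track apparatus of Corollary~\ref{NFA-Dyck-equal-CFL}. Two points of precision for your write-up. First, only the $k-1$ \emph{inner} machines' stacks are virtualized; the outermost machine, being itself an npda, keeps its genuine stack. Hence the oracle comprises $k-1$ stack-history conditions plus one $\cfl$ condition, living in $\dcfl(k-1)\wedge\cfl\subseteq\cfl(k)$ as your displayed formula correctly says---your prose phrase ``a conjunction of $k$ such stack-history conditions'' is off by one, and virtualizing all $k$ stacks would put the oracle in $\cfl(k+1)$, breaking the count. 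Second, the intermediate query words require no recording or global-consistency check by the oracle: each symbol written by $M_i$ is fed directly into the simulation of $M_{i+1}$, so coherence between levels is automatic, and the oracle's only tasks are the Dyck checks on the guessed histories and membership of the last query word in the base $\cfl$ language.
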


As an immediate consequence of Theorem \ref{CFL(k)-equal-m[k]}, the infinite union $\bigcup_{k\in\nat^{+}}\cfl_{m[k]}^{\cfl}$ has a succinct expression of $\cfl_{m}^{\cfl(\omega)}$.

\begin{corollary}
$\cfl_{m}^{\cfl(\omega)} = \bigcup_{k\in\nat^{+}}\cfl_{m[k]}^{\cfl} $.
\end{corollary}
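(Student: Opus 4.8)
The plan is to derive this directly from Theorem~\ref{CFL(k)-equal-m[k]} by unfolding the superscript-family notation and commuting a double union. First I would recall the convention, already in force for the finite levels (it is used in Claim~\ref{CFL(2)-bound}, and it parallels the stipulation $\p^{\CC}=\bigcup_{A\in\CC}\p^{A}$ for the polynomial hierarchy), that for any language family $\CC$ the family $\cfl_{m}^{\CC}$ abbreviates $\bigcup_{A\in\CC}\cfl_{m}^{A}$. Applying this to the oracle family $\cfl(\omega)$ gives $\cfl_{m}^{\cfl(\omega)}=\bigcup_{A\in\cfl(\omega)}\cfl_{m}^{A}$.

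Next I would use the definition $\cfl(\omega)=\bigcup_{k\in\nat^{+}}\cfl(k)$ to split the index set of this union according to which finite level each oracle inhabits. Since every oracle $A\in\cfl(\omega)$ lies in $\cfl(k)$ for some $k\in\nat^{+}$, the set-theoretic identity $\bigcup_{A\in\bigcup_{k}\cfl(k)}\cfl_{m}^{A}=\bigcup_{k\in\nat^{+}}\bigl(\bigcup_{A\in\cfl(k)}\cfl_{m}^{A}\bigr)$ holds, and the inner union is exactly $\cfl_{m}^{\cfl(k)}$. Hence $\cfl_{m}^{\cfl(\omega)}=\bigcup_{k\in\nat^{+}}\cfl_{m}^{\cfl(k)}$. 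Finally, I would invoke Theorem~\ref{CFL(k)-equal-m[k]}, which asserts $\cfl_{m}^{\cfl(k)}=\cfl_{m[k]}^{\cfl}$ for each fixed $k$, and substitute it term by term inside the union to obtain $\cfl_{m}^{\cfl(\omega)}=\bigcup_{k\in\nat^{+}}\cfl_{m[k]}^{\cfl}$, as desired.

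The argument is essentially bookkeeping, so there is no substantial obstacle; the only point that demands care is the legitimacy of commuting the two unions, which rests on the fact that the equality in Theorem~\ref{CFL(k)-equal-m[k]} is uniform, in the sense of holding at every individual level $k$ rather than merely in the limit. As a sanity check it is worth recording that the finite levels form an increasing chain $\cfl(k)\subseteq\cfl(k+1)$ --- since $\Sigma^{*}\in\cfl$ and $A=A\cap\Sigma^{*}$ --- so $\cfl(\omega)$ is a genuine monotone union and the reindexing above introduces no gaps or overcounting. I emphasize that no transitivity of the many-one $\cfl$-reducibility is invoked anywhere here, which is reassuring given that this reducibility fails to be transitive in general.
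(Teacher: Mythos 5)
Your proposal is correct and follows essentially the same route as the paper: both invoke Theorem~\ref{CFL(k)-equal-m[k]} levelwise and then reduce the statement to the identity $\bigcup_{k\in\nat^{+}}\cfl_{m}^{\cfl(k)} = \cfl_{m}^{\cfl(\omega)}$, which in both cases rests on the observation that every oracle in $\cfl(\omega)$ already lies in some finite level $\cfl(k)$. Your phrasing as a commutation of a double union is just a more compact packaging of the paper's two-inclusion argument, so there is nothing to add.
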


\begin{proof}
Since $\cfl_{m[k]}^{\cfl} = \cfl_{m}^{\cfl(k)}$ by Theorem \ref{CFL(k)-equal-m[k]}, it follows that  $\bigcup_{k\in\nat^{+}}\cfl_{m[k]}^{\cfl} = \bigcup_{k\in\nat^{+}}\cfl_{m}^{\cfl(k)}$. It thus suffices to show that
$\bigcup_{k\in\nat^{+}}\cfl_{m}^{\cfl(k)} = \cfl_{m}^{\cfl(\omega)}$.  Since $\cfl(k)\subseteq \cfl(\omega)$ for any $k\in\nat^{+}$, we obtain $\bigcup_{k\in\nat^{+}}\cfl_{m}^{\cfl(k)} \subseteq \bigcup_{k\in\nat^{+}}\cfl_{m}^{\cfl(\omega)}$.  The last term coincides with $\cfl_{m}^{\cfl(\omega)}$ since $\cfl_{m}^{\cfl(\omega)}$ is independent of the value of $k$.
Conversely, let $L$ be any language in $\cfl_{m}^{\cfl(\omega)}$ and take an appropriate oracle $B\in \cfl(\omega)$ for which $L\in \cfl_{m}^{B}$. Since $\cfl(\omega) = \bigcup_{k\in\nat^{+}}\cfl(k)$,  $B$ must belong to  $\cfl(k)$ for a certain index $k\in\nat^{+}$.  This fact implies that $L\in\cfl_{m}^{B}\subseteq \cfl_{m}^{\cfl(k)}$. Therefore, we conclude
that $\cfl_{m}^{\cfl(\omega)}\subseteq \bigcup_{k\in\nat^{+}}\cfl_{m}^{\cfl(k)}$. This completes the proof of the corollary.
\end{proof}

Subsequently, we are focused on the proof of Theorem \ref{CFL(k)-equal-m[k]}. Notice that, when $k=1$, $\cfl_{m[1]}^{\cfl} = \cfl_{m}^{\cfl(1)} = \cfl_{m}^{\cfl}$.  The proof of Theorem \ref{CFL(k)-equal-m[k]} for $k\geq2$ is made up of two lemmas, Lemmas \ref{CFL(k)-in-many-one} and \ref{CFL(k)-included-CFL-m[k]}.

\begin{lemma}\label{CFL(k)-in-many-one}
For every number $k\geq2$, $\cfl_{m[k]}^{\cfl}\subseteq \cfl_{m}^{\cfl(k)}$ holds.
\end{lemma}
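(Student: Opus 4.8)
The plan is to induct on $k$, reducing everything to a single ``collapse'' step that fuses two consecutively applied many-one $\cfl$-reductions into one. First I would record the recursive identity $\cfl_{m[k]}^{\cfl}=\cfl_{m}(\cfl_{m[k-1]}^{\cfl})$, valid because $\cfl_{m}(\cdot)$ distributes over the union $\bigcup_{A\in\cfl}\cfl_{m[k-1]}^{A}$. With the induction hypothesis $\cfl_{m[k-1]}^{\cfl}\subseteq\cfl_{m}^{\cfl(k-1)}$ (whose base case $k-1=1$ is the trivial equality $\cfl_{m}^{\cfl}=\cfl_{m}^{\cfl(1)}$) and the monotonicity of $\cfl_{m}(\cdot)$, it then suffices to prove the one-step inclusion $\cfl_{m}(\cfl_{m}^{\cfl(k-1)})\subseteq\cfl_{m}^{\cfl(k)}$; and since this class is the union over $C\in\cfl(k-1)$ of $\cfl_{m}(\cfl_{m}^{C})$, I may fix a single such $C$.

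For the collapse, suppose $L\in\cfl_{m}^{B}$ with $B\in\cfl_{m}^{C}$, and let $M_1$ be the outer $m$-reduction machine reducing $L$ to $B$ and $M_2$ the inner one reducing $B$ to $C$. The obstacle is that $M_1$ and $M_2$ are both npda's carrying stacks, whereas the single combined machine $N$ I want to build has only one stack. I would resolve this exactly by the mechanism underlying Claim \ref{NFA-to-DYCK} and Lemma \ref{extendible-inclusion}: de-stack the \emph{outer} machine. Concretely, $N$ simulates the finite control and input head of $M_1$ while \emph{guessing} each of its stack operations and writing the guess onto one track of its query tape, never using its own stack for $M_1$; each query symbol that the simulated $M_1$ emits is piped on the fly into a simulation of $M_2$, for which $N$ does use its single stack, and each symbol that $M_2$ emits is written onto a second track. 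To reconcile the two head speeds I pad both tracks with the dummy symbol $\natural$, so the query word has the form $\track{\tilde{s}}{\tilde{z}}$, where $s$ is the guessed stack history of $M_1$ and $z$ is the query word produced by $M_2$. I then reduce $L$ to the oracle $D=\{\track{\tilde{s}}{\tilde{z}}\mid s\in E,\ z\in C\}$, where $E$ is a fixed Dyck language encoding legitimate stack histories of $M_1$. Correctness hinges on the Dyck check: it forces the guessed pushes and pops to match, so any accepted query word certifies a genuine accepting computation of $M_1$ feeding a genuine accepting computation of $M_2$, whence $x\in L$ iff $N$ produces some word of $D$.

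Finally I would check that $D$ lives in $\cfl(k)$. Splitting $D$ as the intersection of $\{\track{\tilde{s}}{\tilde{z}}\mid s\in E\}$ and $\{\track{\tilde{s}}{\tilde{z}}\mid z\in C\}$, the first factor is a $\natural$-extension of a Dyck language, hence in $\dcfl\subseteq\cfl$, and the second is the track-wise $\natural$-extension of $C$. Invoking $\natural$-extendibility (and writing $C=C_1\cap\cdots\cap C_{k-1}$ with each $C_i\in\cfl$), this second factor is the intersection of the $\natural$-extensions of the $C_i$, each in $\cfl$, so it lies in $\cfl(k-1)$. Therefore $D\in\cfl\wedge\cfl(k-1)=\cfl(k)$, giving $L\in\cfl_{m}^{D}\subseteq\cfl_{m}^{\cfl(k)}$, which closes the one-step inclusion and hence the induction. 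I expect the real work to sit entirely in the collapse step: verifying that a single stack truly suffices once the outer machine is de-stacked and its output pipelined into the inner machine, and that the $\natural$-padding can always be scheduled so the two tracks remain synchronized while the target oracle stays a two-track language of the form $E\wedge C$ inside $\cfl(k)$.
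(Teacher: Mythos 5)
Your proposal is correct and takes essentially the same route as the paper: the identical induction that reduces everything to the two-fold collapse $\cfl_{m[2]}^{\cfl(k-1)}\subseteq \cfl_{m}^{\cfl(k-1)\wedge\cfl}$ (the paper's Claim \ref{CFL_m[2]-property}), which is then handled by exactly the stack-history-guessing and $\natural$-padding techniques, with the oracle a two-track language of the form (Dyck condition) $\wedge$ (old oracle) lying in $\cfl(k)$. The only difference is a harmless mirror image: you virtualize the \emph{outer} machine $M_1$ and run the inner machine $M_2$ on the real stack, whereas the paper keeps $M_1$'s stack real and guesses $M_2$'s stack history; either choice frees the single stack and costs one $\dcfl$ conjunct, so both land the oracle in $\dcfl\wedge\cfl(k-1)\subseteq\cfl(k)$.
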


\begin{proof}
Let us consider the base case of $k=2$. Here, we claim the following inclusion relationship.

\begin{claim}\label{CFL_m[2]-property}
For every index $r\in\nat^{+}$, $\cfl_{m[2]}^{\cfl(r)} \subseteq \cfl_{m}^{\cfl(r+1)}$.
\end{claim}

\begin{proof}
With a certain language $A\in\cfl(r)$, let us assume that $L\in\cfl_{m[2]}^{A}$.  Furthermore, choose an appropriate set $B$ and let  two $m$-reduction npda's $M_1 = (Q_1,\Sigma,\{\cent,\dollar\},\Theta,\Gamma_1, \delta_1,q_0,Z_0,Q_{1,acc},Q_{2,rej})$ and $M_2 =( Q_2,\Theta,\{\cent,\dollar\},\Theta_2,\Gamma_2, \delta_2,q_0,Z_0,Q_{2,acc},Q_{2,rej})$ respectively witness the membership relations  $L\in\cfl_{m}^{B}$ and $B\in\cfl_{m}^{A}$.  We will define a new oracle npda $N$  in part using  a stack-history technique shown in the proof of Claim \ref{NFA-to-DYCK}.  Corresponding to $\Gamma_1$, we prepare an associated alphabet $\Gamma'_1=\{\tau'\mid \tau\in \Gamma_1\}$ and we set  $\Gamma_1\cup\Gamma'_1$  to be a stack alphabet $\Gamma$ for $N$. On input $x$, $N$ simulates $M_1$ on $x$ using $M_2$ as a subroutine in the following way.  Whenever $M_1$ tries to write a symbol, say, $b$ on a query tape, $N$ instead simulates,  without using any actual stack, one or more steps  (including a certain number of $\lambda$-moves) of $M_2$ that can be  made after scanning $b$.  When $M_2$ tries to push down a string $w$ after removing the top symbol of its stack content, $N$ guesses this top symbol, say, $\tau$ and then produces $\tau' w^R$ on the {\em upper track} of its query tape. When $M_2$ pops up a symbol, $N$ guesses this  popped symbol, say, $\tau$ and writes $\tau'$ on the {\em upper track} of the query tape. At the same time during this  simulation, $N$ produces $M_2$'s query word on the {\em lower track} of the query tape. To fill the idling time of certain tape heads, we need to insert an appropriate number of symbols $\natural$ so that $\track{\tilde{y}}{\tilde{z}}$ encodes stack history $y$ and query word $z$ along a computation path of $M_2$.

Here, we give a formal definition of $N$ only for the case of  $r=1$.
Choose a constant $k\in\nat^{+}$ for which $(p,w,c)\in\delta_2(q,b,\tau)$ implies $|w|\leq k$ for all possible tuples $(p,w,c,a,b,\tau)$.
Let $N = (Q,\Sigma,\{\cent,\dollar\},\Theta,\Gamma, \delta_N,\overline{q}_0,Z_0,Q_{acc},Q_{rej})$ with $Q=[2]\times Q'_1\times Q'_2$, where $Q'_1=\{q_0\}\cup \{\track{q}{b}\mid q\in Q,b\in\check{\Theta}\cup\{\lambda\}$ and $Q'_2=\{q_0\}\cup\{\track{q}{w}\mid q\in Q_2,w\in\Gamma_2^{\leq k}\}$.

For simplicity, assume that $Q_{1,acc}= Q_{2,acc}=\{q_{acc}\}$. Since $M$ is well-behaved at $\cent$,  $(p,s,b)\in\delta_1(q_0,\cent,Z_0)$ implies $b=\lambda$. Let $\overline{q}_0 = (1,q_0,q_0)$ and  $\delta_N(\overline{q}_0,\cent,Z_0) = \{((2,\track{p_1}{\cent},q_0),s,\lambda) \mid p_1\in Q_1, s\in\Gamma_1^*, (p_1,s,\lambda)\in\delta_1(q_0,\cent,Z_0)\}$. For any $\sigma\in\Sigma\cup\{\lambda,\dollar\}$, let $\delta_N((1,\track{q_1}{\lambda},\track{q_2}{\lambda}),\sigma,a)$ contain $((1,\track{p_1}{\lambda},\track{q_2}{\lambda}),s,\lambda)$ if $(p_1,s,\lambda)\in\delta_1(q_1,\sigma,a)$, and $((2,\track{p_1}{b},\track{q_2}{\lambda}),s,\lambda)$ if $(p_1,s,b)\in\delta_1(q_1,\sigma,a)$ with $b\in\Theta$.
We set $((2,\track{q_{1}}{\dollar},\track{q_2}{\lambda}),a,\lambda)\in \delta_N((1,\track{q_{1}}{\lambda},\track{q_2}{\lambda}),\sigma,a)$ if $q_1\notin Q_{1,halt}$.
Moreover, for $b\in\check{\Theta}$ and $q_1\in Q_1$ with
$\sigma\neq\dollar$,
let $\delta_N((2,\track{q_1}{b},\track{q_2}{\lambda}),\sigma,a)$ contain
$((2,\track{q_1}{\lambda},\track{p_2}{w}),a,\track{\tau'}{t(c)})$ if $(p_2,w,c)\in\delta_2(q_2,b,\tau)$ for all $\tau\in\Gamma_2$, where $t(\lambda)=\natural$ and $t(c)=c$ if $c\neq\lambda$. In this case, if $w=w_1w_2\cdots w_m\in\Gamma^{\leq k}-\{\lambda\}$, then we set  $((2,\track{q_1}{\lambda},\track{q_2}{w_1\cdots w_{i-1}}),a,\track{w_i}{\natural}) \in \delta_N((2,\track{q_1}{\lambda},\track{q_2}{w_1\cdots w_i}),\lambda,a)$ for each $i\in[|w|]$. Let  $\delta_N((2,\track{q_1}{\lambda},\track{q_2}{\lambda}),\lambda,a)$ contain $(1,\track{q_1}{\lambda},\track{q_2}{\lambda}),a,\lambda)$ if $q_1\notin Q_{1,halt}$, and $(2,\track{q_1}{\lambda},\track{p_2}{w}),a,\track{\tau'}{t(c)})$ if $(p_2,w,c)\in\delta_2(q_2,\lambda,\tau)$ for all $\tau\in\Gamma_2$.

If $(p_1,s,\lambda)\in\delta_1(q_1,\dollar,a)$ and $(p_2,\tau,\lambda)\in \delta_2(q_2,\dollar,\tau)$ for all $\tau\in\Gamma_2$, then $((2,\track{p_1}{\lambda},\track{p_2}{\lambda}),a,\lambda)\in \delta_N((1,\track{q_1}{\lambda},\track{q_2}{\lambda}),\dollar,a)$, since $M$ is well-behaved at $\dollar$.
Let all other transitions go to appropriate rejecting states.
Let $Q_{acc} = \{(2,\track{q_{acc}}{\lambda},\track{q_{acc}}{\lambda})\}$ and $Q_{rej} = \{(2,\track{q_{rej}}{\lambda},\track{q_2}{\lambda}), (2,\track{q_1}{\lambda},\track{q_{rej}}{\lambda})  \mid q_1\in Q_{1}, q_2\in Q_2\}$.

Finally, we define oracle $C$ as a collection of strings of the form $\track{\tilde{y}}{\tilde{z}}$ for which $\tilde{y}$ (resp., $\tilde{z}$) is a $\natural$-extension of a correct stack history $y$ (resp., of a valid query string $z$ in $A$).
Since the above simulation of $M_1$ and $M_2$ ensures the correctness of $y$ and $z$,  $C$  belongs to $\cfl\wedge \cfl(r) = \cfl(r+1)$. Moreover, for every string $x$, $x$ is in $L$ if and only if there exists a computation path in $ACC_{N}(x)$, along which $N$ produces $\track{\tilde{y}}{\tilde{z}}$ in $C$. This equivalence relation implies that $L$ belongs to $\cfl_{m}^{C}$, which is a subclass of $\cfl_{m}^{\cfl(r+1)}$.
\end{proof}

For the case of $k\geq3$, it holds that $\cfl_{m[k]}^{\cfl} = \cfl_{m}(\cfl_{m[k-1]}^{\cfl}) \subseteq \cfl_{m}(\cfl_{m}^{\cfl(k-1)})$, where the last inclusion comes from our induction hypothesis: $\cfl_{m[k-1]}^{\cfl}\subseteq \cfl_{m}^{\cfl(k-1)}$. Since  $\cfl_{m}(\cfl_{m}^{\cfl(k-1)}) = \cfl_{m[2]}^{\cfl(k-1)}$, we obtain $\cfl_{m[k]}^{\cfl} \subseteq \cfl_{m[2]}^{\cfl(k-1)}$.
Claim \ref{CFL_m[2]-property} then yields the containment  $\cfl_{m[2]}^{\cfl(k-1)} \subseteq  \cfl_{m}^{\cfl(k)}$, from which the claim follows.
\end{proof}

\begin{lemma}\label{CFL(k)-included-CFL-m[k]}
For every index $k\geq1$, $\cfl_{m}^{\cfl(k)}\subseteq \cfl_{m[k]}^{\cfl}$.
\end{lemma}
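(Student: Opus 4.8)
The plan is to prove the inclusion by induction on $k$, using the already-established Claim~\ref{CFL(2)-bound} (namely $\cfl(k+1)\subseteq \cfl_{m}^{\cfl(k)}$) as the engine that peels off one level of conjunction at a time and trades it for one extra round of many-one $\cfl$-reduction. Unlike Lemma~\ref{CFL(k)-in-many-one}, this direction requires no new machine simulation; its entire content is the bookkeeping of the recursive definitions of $\cfl_{m[k]}^{\cdot}$ together with the commuting of the class-level unions.

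The base case $k=1$ is immediate, since $\cfl_{m}^{\cfl(1)} = \cfl_{m}^{\cfl} = \cfl_{m[1]}^{\cfl}$ by definition. For the inductive step I assume the induction hypothesis $\cfl_{m}^{\cfl(k-1)}\subseteq \cfl_{m[k-1]}^{\cfl}$ and take an arbitrary $L\in\cfl_{m}^{\cfl(k)}$. By definition there is an oracle $A\in\cfl(k)$ with $L\in\cfl_{m}^{A}$. Applying Claim~\ref{CFL(2)-bound} at index $k-1$ gives $\cfl(k)\subseteq \cfl_{m}^{\cfl(k-1)}$, so $A\in\cfl_{m}^{\cfl(k-1)}$; the induction hypothesis then upgrades this to $A\in\cfl_{m[k-1]}^{\cfl}$. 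Hence there is a genuine context-free oracle $A'\in\cfl$ with $A\in\cfl_{m[k-1]}^{A'}$.

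It remains to reassemble these memberships. Since $L\in\cfl_{m}^{A}$ and $A\in\cfl_{m[k-1]}^{A'}$, the language $L$ lies in $\cfl_{m}(\cfl_{m[k-1]}^{A'})$, which is exactly $\cfl_{m[k]}^{A'}$ by the recursive definition $\cfl_{m[k]}^{A'} = \cfl_{m}(\cfl_{m[k-1]}^{A'})$; and since $A'\in\cfl$ this is contained in $\cfl_{m[k]}^{\cfl} = \bigcup_{B\in\cfl}\cfl_{m[k]}^{B}$. Thus $L\in\cfl_{m[k]}^{\cfl}$, completing the induction.

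The only point requiring a moment of care---and which I regard as the single (mild) obstacle---is confirming that the class-level union and the reduction operator commute, i.e. that $\cfl_{m}(\cfl_{m[k-1]}^{\cfl}) = \cfl_{m[k]}^{\cfl}$, so that a reduction landing on an oracle in $\cfl_{m[k-1]}^{\cfl}$ can be legitimately absorbed into the $\cfl_{m[k]}^{\cfl}$ notation. This holds because $\cfl_{m}(\CC) = \bigcup_{B\in\CC}\cfl_{m}^{B}$ distributes over the union $\cfl_{m[k-1]}^{\cfl} = \bigcup_{A'\in\cfl}\cfl_{m[k-1]}^{A'}$; spelling out both sides reduces each to $\bigcup_{A'\in\cfl}\cfl_{m}(\cfl_{m[k-1]}^{A'})$. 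Everything else is a direct unwinding of definitions, so no simulation argument is needed in this direction.
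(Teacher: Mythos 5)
Your proof is correct and follows essentially the same route as the paper's: induction on $k$, with Claim~\ref{CFL(2)-bound} giving $\cfl(k)\subseteq\cfl_{m}^{\cfl(k-1)}$ and the induction hypothesis upgrading the oracle to $\cfl_{m[k-1]}^{\cfl}$. Your element-wise unwinding and the explicit check that $\cfl_{m}(\cfl_{m[k-1]}^{\cfl})=\cfl_{m[k]}^{\cfl}$ merely spell out steps the paper treats as immediate from the definitions.
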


\begin{proof}
Using induction on $k\geq1$, we intend to prove the lemma.  Since  the lemma is trivially true for $k=1$,  let  us assume that $k\geq2$. Since $\cfl(k)\subseteq \cfl_{m}^{\cfl(k-1)}$ by Claim \ref{CFL(2)-bound}, it instantly follows that $\cfl_{m}^{\cfl(k)}\subseteq \cfl_{m}(\cfl_{m}^{\cfl(k-1)})$. Moreover, because $\cfl_{m}^{\cfl(k-1)}\subseteq \cfl_{m[k-1]}^{\cfl}$ holds by our induction hypothesis,  we conclude that  $\cfl_{m}^{\cfl(k)} \subseteq \cfl_{m}(\cfl_{m[k-1]}^{\cfl})=\cfl_{m[k]}^{\cfl}$.
\end{proof}


Toward the end of this section, we will make a brief discussion on a relationship between  two  language families $\cfl/n$ and $\cfl_{m}^{\tally}$. In comparison, it is known that $\p/poly = \p^{\tally}$ (see, \eg \cite{DK00}). Given a language $A$ over alphabet $\Sigma$, the notation $dense(A)(n)$ indicates  the cardinality $\|A\cap\Sigma^n\|$ for every length $n\in\nat$.  Let $DENSE(f(n))$ be the collection of all languages $A$ such that  $dense(A)(n)\leq f(n)$ holds for all lengths $n\in\nat$. Note that $\tally \subseteq DENSE(O(1)) \subseteq \sparse$, where $\sparse = DENSE(n^{O(1)})$.
For the proof of Proposition \ref{TALLY-advice-impossible}, recall the notation of $[x_1,x_2,\ldots,x_k]^T$.

\begin{proposition}\label{TALLY-advice-impossible}
\begin{enumerate}
  \setlength{\topsep}{-2mm}%
  \setlength{\itemsep}{1mm}
  \setlength{\parskip}{0cm}

\item $\cfl/n\subseteq \cfl_{m}^{DENSE(O(1))}$.
\item $\cfl_{m}^{\tally}\subseteq \cfl_{m}^{\cfl(2)}/n$.
\end{enumerate}
\end{proposition}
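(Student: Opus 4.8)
For the first inclusion the plan is to absorb the length-dependent advice of a $\cfl/n$-language into a single sparse oracle by having the reduction machine \emph{guess} the advice and then \emph{certify} its guess against the oracle. Let $L\in\cfl/n$ be witnessed by a language $A\in\cfl$, recognized by an npda $M_A$, together with a length-preserving advice function $h:\nat\to\Gamma^*$ satisfying $L=\{x\mid \track{x}{h(|x|)}\in A\}$. I would take the oracle to be $B=\{h(n)\mid n\in\nat\}$, which contains at most one string of each length and hence lies in $DENSE(1)\subseteq DENSE(O(1))$. The $m$-reduction machine $N$, on input $x$ with $|x|=n$, reads $x$ left to right while nondeterministically guessing a candidate advice symbol $a_i\in\Gamma$ at each position $i$; it feeds $\track{x_i}{a_i}$ into a simulation of $M_A$ (using its own stack for $M_A$'s stack and simulating $M_A$'s $\lambda$-moves internally, advancing neither head) and simultaneously copies $a_i$ onto the query tape. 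Along an accepting path $N$ outputs the guessed string $a=a_1\cdots a_n$ and has verified $\track{x}{a}\in A$; since the only length-$n$ member of $B$ is $h(n)$, the test $a\in B$ forces $a=h(n)$. Thus $x\in L$ iff some accepting path of $N$ produces a query word in $B$, giving $L\in\cfl_{m}^{B}\subseteq\cfl_{m}^{DENSE(O(1))}$. The only delicate point is the bookkeeping that synchronizes the guessing of $a_i$, the consumption of $x_i$, and the emission of $a_i$ across $M_A$'s $\lambda$-moves; because $|x|=|a|=n$ this alignment is exact.

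For the second inclusion the roles are reversed: a tally oracle must be turned into length advice, with a $\cfl(2)$ oracle compensating for the loss of the stack. Let $L\in\cfl_{m}^{T}$ be witnessed by an $m$-reduction machine $M$ and a tally language $T$, and write $S=\{j\mid 1^j\in T\}$; only queries of the form $1^j$ can lie in $T$. The first step---and the main obstacle---is to control the query length: I would put $M$ into a normal form guaranteeing that, on inputs of length $n$, every query word produced along an accepting path has length at most $cn$ for a constant $c$ depending only on $M$. This linear bound simultaneously bounds the length of $M$'s stack histories, and it is exactly what makes the required advice finite; the architectural restrictions of npda's are what must be exploited here, and establishing that the queries relevant to a tally oracle need not exceed linear length is the technically delicate part of the argument.

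Granting the linear bound, I would let the advice $h(n)$ encode the characteristic bits $\chi_S(0),\chi_S(1),\ldots,\chi_S(cn)$, packed a constant number of bits into each symbol over a fixed finite alphabet and padded to length exactly $n$. The new reduction $N$ reads $\track{x}{h(n)}$ and simulates $M$ on the first track \emph{without} using an actual stack: following the stack-history technique of Claim~\ref{NFA-to-DYCK}, it guesses each stack operation of $M$ and records it on one track of its query tape, while on a second track it records, in unary, the count $j$ of output symbols that $M$ emits together with a copy of the advice; $\natural$-extensions are inserted to keep the two tracks aligned under the differing head speeds. The oracle is then the intersection of two context-free conditions: a Dyck language verifying that the recorded stack history of $M$ is correct, and a second context-free language that, by pushing $0^{j}$ and then unpacking the advice bits one at a time, checks that $\chi_S(j)=1$. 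This conjunction lies in $\dcfl\wedge\cfl\subseteq\cfl(2)$, and an accepting path of $N$ whose query is accepted by this oracle corresponds exactly to an accepting path of $M$ producing some $1^j$ with $j\in S$; hence $L\in\cfl_{m}^{\cfl(2)}/n$. The appearance of the second conjunct is precisely the price of eliminating $M$'s stack via a guessed history, which is why $\cfl(2)$ rather than $\cfl$ is needed.
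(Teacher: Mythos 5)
Your proof of part~1 is correct and coincides with the paper's (very terse) argument: the oracle is $B=\{h(n)\mid n\in\nat\}\in DENSE(1)$, and the reduction guesses the advice, verifies the guess by simulating the npda for $A$ on $\track{x}{a}$, and outputs the guess as its query.

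For part~2 your high-level plan (a linear bound on query length, advice packing $O(1)$ characteristic bits of $S$ per symbol, then one reduction plus a $\cfl(2)$ oracle) matches the paper's, but the step where your oracle checks $\chi_S(j)=1$ fails as described, and this is exactly the point where the paper has to work harder. The advice sits on the one-way input tape, so $N$ must copy each advice symbol onto the query tape at the moment it reads it, while the unary count marks are written whenever $M$ emits a query symbol. Since $M$ may interleave input reads and query emissions arbitrarily (it can emit after reading the whole input, or before reading any of it), the count $0^{j}$ and the advice copy appear on the query word interleaved in an order you cannot control. A pushdown scanning this word left to right cannot ``push $0^{j}$ and then unpack the advice bits'': it may meet advice bits long before $j$ is determined, and if it instead buffers the advice on its stack, popping reverses it and the pop-per-count strategy lands on bit $|s|-j+1$ rather than bit $j$. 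So the second conjunct of your oracle has not been shown to be context-free, and the asserted equivalence with accepting paths of $M$ breaks down there. (A smaller omission: recording only the \emph{count} of emitted symbols loses whether the query is actually tally; $N$ must explicitly reject when $M$ emits a symbol other than $1$, as the paper's first machine does. Also, the linear bound you call the main obstacle is not where the difficulty lies---the paper simply asserts it.)

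The paper resolves the interleaving problem by spending an extra reduction level rather than by a cleverer oracle: its first machine $N_1$ simulates $M$ with a real stack and outputs the interleaved pair $\track{\tilde{y}}{\tilde{s}}$; a second machine $N_2$ uses \emph{its} stack purely to serialize this into $y\# s$; only then can a third machine $N_3$ perform the sequential lookup (push a marker per symbol of $y$, then scan $s$ popping to locate the bit). This places the advised language in $\cfl_{m}(\cfl_{m}^{\cfl})=\cfl_{m[2]}^{\cfl}$, and Theorem~\ref{CFL(k)-equal-m[k]} is invoked to collapse that class to $\cfl_{m}^{\cfl(2)}$. Your construction is repairable without the extra level, but a repair must be stated: either exploit the fact that the stack-history trick of Claim~\ref{NFA-to-DYCK} frees $N$'s own stack, so $N$ can buffer the advice there and dump it (reversed) after the simulation ends, making all count marks precede the advice on the query word; or keep your interleaved format and recognize the lookup condition by a nondeterministic one-counter machine that guesses the target bit position and verifies, via a difference counter, that it equals the total number of count marks. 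As written, the step fails.
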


\begin{proof}
(1)  This is rather obvious by taking any length-preserving advice function $h$ and define $A=\{h(n)\mid n\in\nat\}$, which belongs to $DENSE(1)$ by the definition.

(2) Let $L\in\cfl_{m}^{A}$ for a certain $A$ in $\tally$ over alphabet $\Sigma$. Let $M$ be an $m$-reduction npda that reduces $L$ to $A$. Without loss of generality, we assume that $1\in\Sigma$ and $A\subseteq \{1\}^*$.  To simplify the following description,  we impose on $M$ a restriction that $\lambda\not\in A$. Next, let us define a language $B$ as follows. Let $p(n)=an$ be a linear polynomial that bounds the running time of $M$ on inputs of length $n\geq1$, where $a\in\nat^{+}$.
We define our advice function $h$ as $h(n) = h_1h_2\cdots h_n$  for any number $n\in\nat^{+}$, where each symbol $h_i$ equals $[\chi^A(1^{(i-1)a+1}),\chi^A(1^{(i-1)a+2})\ldots,\chi^A(1^{ia})]^T$.  Note that $|h(n)|=n$ for any $n\in\nat^{+}$.
The desired language $B$ must satisfy $L = \{x\mid \track{x}{h(|x|)}\in B\}$.

Next, we aim at showing that $B\in\cfl_{m}(\cfl_{m}^{\cfl}) = \cfl_{m[2]}^{\cfl}$ by constructing three appropriate machines $N_1$, $N_2$, and $N_3$. On input $\track{x}{s}$ with $|s|=|x|$, $N_1$ simulates $M$ on $x$ and generates a query string $\track{\tilde{y}}{\tilde{s}}$ if $M$ makes a query $y$, where $\tilde{y}$ and $\tilde{s}$ are respectively $\natural$-extensions of $y$ and $s$. Moreover, in this simulation process, if $y\not\in\{1\}^*$, then $N_1$ immediately enters a rejecting state.  Another oracle npda $N_2$ works as follows. On input of the form  $\track{\tilde{y}}{\tilde{s}}$, using a stack appropriately, $N_2$ removes all $\natural$s and produces $y\# s$ on its query tape, where $\#$ is a fresh symbol. The third machine $N_3$,  taking  $y\# s$ as input, finds the $i$th block $h_i = [b_1,b_2,\ldots,b_a]^T$ of $s$, where $i=\ceilings{|y|/a}$,  and reads a symbol $b_j$, where $j=|y|-(i-1)a$.  If $b_j=1$, then $N_3$ enters an accepting state and, otherwise, it enters a rejecting state. This whole process puts $B$ to $\cfl_{m}(\cfl_{m}^{\cfl})$, which is $\cfl_{m[2]}^{\cfl}$. Theorem \ref{CFL(k)-equal-m[k]} implies that $B$ belongs to $\cfl_{m}^{\cfl(2)}$. It is not difficult to show by the definition of $h$ and $B$ that, for any string $x$, $x\in L$ if and only if $\track{x}{h(|x|)}\in B$. Therefore, $L$ belongs to $\cfl_{m}^{\cfl(2)}/n$.
\end{proof}

\subsection{Turing and Truth-Table Reducibilities by Oracle Npda's}\label{sec:tt-reduction}

In the previous sections, our primary interest has been rested on the many-one $\cfl$-reducibility. It is also possible to introduce two more powerful reducibilities, known as truth-table reducibility and Turing reducibility, into a theory of context-free languages.

Firstly, we define a notion of {\em Turing CFL-reducibility} using a variant of npda, equipped with a write-only query tape and three extra inner states $q_{query}$,  $q_{no}$, and $q_{yes}$ that represent a query signal and two different oracle answers, respectively.  More specifically, when such a machine enters $q_{query}$, it triggers a \emph{query}, by which (1) a query word produced on a query tape is automatically transmitted to a given oracle, (2) the query tape instantly becomes blank, and (3) its query tape head also  returns to the start cell. The machine then waits for a reply from the oracle.
We informally say that the oracle \emph{returns} (or \emph{replies}) \emph{an answer}, either $0$ (no) or $1$ (yes), if the oracle automatically resets the machine's inner state to $q_{no}$ or $q_{yes}$, accordingly. After the oracle replies, the machine resumes its computation starting with inner state $q\in\{q_{yes},q_{no}\}$ reset by the oracle.  This machine is also called an  {\em oracle npda} as before, and it
is used to reduce a target language to another language.
To be more precise, an oracle npda $M$ is
a tuple $(Q,\Sigma,\{\cent,\dollar\},\Theta,\Gamma,\delta,q_0,Z_0,Q_{oracle}, Q_{acc},Q_{rej})$, where $Q_{oracle} = \{q_{query},q_{yes},q_{no}\}$, $\Theta$ is a query alphabet, and $\delta$ has the form:
\[
\delta:(Q-Q_{halt}\cup\{q_{query}\})\times (\check{\Sigma}\cup\{\lambda\})\times \Gamma \rightarrow \PP((Q-\{q_{yes},q_{no}\})\times \Gamma^* \times (\Theta\cup\{\lambda\})).
\]

Unlike $m$-reduction npda's, each computation of the above oracle npda's  depends on a series of replies (called \emph{oracle answers}) from the given oracle. Since such oracle npda's, in general, cannot physically implement an internal clock to control their running time, certain oracle answers may lead to an extremely long computation, and thus the machine may result in recognizing even ``infeasible'' languages. To avoid such a pitfall, we need to demand that, {\em no matter what oracle is provided}, its underlying oracle npda $M$ must halt on {\em all} computation paths within $O(n)$ time, where $n$ refers to input size. As noted in Section \ref{sec:npda-and-TM}, we always demand that $M$ should be well-behaved at
both $\cent$ and $\dollar$.
We say that a language $L$ is \emph{Turing CFL-reducible to} $A$ (or an oracle npda $M$ \emph{T-reduces} $L$ to $A$) if, for every string $x$, $x\in L$ iff $M$ accepts $x$ using $A$ as its oracle (or $M$ accepts $x$ relative to $A$).

Similarly to $\cfl_{m}^{A}$ and $\cfl_{m}^{\CC}$, we introduce two new notations $\cfl_{T}^{A}$ and $\cfl_{T}^{\CC}$ as the families of all languages that are Turing $\cfl$-reducible to $A$ and languages in $\CC$, respectively. Notice that the requirement of $O(n)$ time-bound for our oracle npda's naturally implies that $\cfl_{T}^{A}\subseteq\np^{A}$ for any oracle $A$.

An associated $T$-reduction dpda is defined using its transition function that maps $(Q-Q_{halt}\cup\{q_{query}\})\times (\check{\Sigma}\cup\{\lambda\})\times \Gamma$ to $(Q-\{q_{yes},q_{no}\})\times \Gamma^* \times (\Theta\cup\{\lambda\})$. It is important to note that, unlike $m$-reduction dpda's, this machine $M$ does not need to write the termination symbol on a query tape, because $M$ is well-behaved at $\dollar$ and therefore it must enter a query state $q_{query}$ before reading the right endmarker $\dollar$. A languages family defined by $T$-reduction dpda's relative to oracle $A$ is denoted  $\dcfl_{T}^A$ (or $\dcfl_{T}(A)$). For any class $\CC$ of oracles, we set $\dcfl_{T}^{\CC} = \dcfl_{T}(\CC) = \bigcup_{A\in\CC}\dcfl_{T}^{A}$.

\begin{lemma}\label{basic-Turing}
For any oracle $A$, (1) $\cfl_{m}^{A}\subseteq \cfl_{T}^{A} = \cfl_{T}^{\overline{A}}$ and (2)  $\dcfl_{m}^{A}\subseteq \dcfl_{T}^{A} = \dcfl_{T}^{\overline{A}}$ .
\end{lemma}

\begin{proof}
(1) The first containment is rather obvious because  Turing $\cfl$-reducibility can naturally ``simulate''  many-one $\cfl$-reducibility as follows.
Assume that an $m$-reduction npda $M = (Q_M,\Sigma,\{\cent,\dollar\},\theta,\Gamma, \delta_M,q_0,Z_0,Q_{acc},Q_{rej})$  writes a query string on its query tape.  Another $T$-reduction npda $N = (Q_N,\Sigma,\{\cent,\dollar\},\Theta,\Gamma, \delta_N,q_0,Z_0,Q_{oracle},Q_{acc},Q_{rej})$ rejects an input if $q$ is a rejecting state, where $Q_N= Q_M\cup\{\overline{p}_0\}$. Otherwise, $N$ makes a query $w$ and decides to accept and reject the input if its oracle answer is ``yes'' and ``no,'' respectively.
When $(p,w,\tau)\in\delta_M(q,\sigma,a)$, let $(p,w,\tau)\in\delta_N(q,\sigma,a)$ for $\sigma\in\Sigma\cup\{\cent,\lambda\}$. When  $(p,a,\lambda)\in\delta_M(q,\dollar,a)$, let $(\overline{p}_0,a,\lambda)\in\delta_N(q,\lambda,a)$ and $(q_{query},a,\lambda)\in\delta_N(\overline{p}_0,\lambda,a)$. Moreover, let $(q_{acc},a,\lambda)\in\delta_N(q_{yes},\dollar,a)$ and $(q_{rej},a,\lambda)\in\delta_N(q_{no},\dollar,a)$. For all other transitions, let $N$ enter rejecting states.

To see the last equality, let $L$ be any language in $\cfl_{T}^{A}$, witnessed by a $T$-reduction npda $M$.  Let us show that $L\in \cfl_{T}^{\overline{A}}$ via another $T$-reduction npda $N$. This  machine $N$ is defined to behave as follows. Given any input $x$, $N$ simulates $M$ on $x$ and, whenever $M$ receives an oracle answer, say, $b\in\{0,1\}$, $N$ treats it as if $\overline{b}$ ($=1-b$) and continues the simulation of $M$. This definition clearly implies that $N$ accepts $x$ relative to $\overline{A}$ if and only if $M$ accepts $x$ relative to $A$. Thus, $L$ is in $\cfl_{T}^{\overline{A}}$.  We therefore conclude that $\cfl_{T}^{A}\subseteq \cfl_{T}^{\overline{A}}$. By symmetry, we also
obtain $\cfl_{T}^{\overline{A}}\subseteq\cfl_{T}^{A}$, implying that $\cfl_{T}^{A} = \cfl_{T}^{\overline{A}}$.

(2) A basic proof idea is the same as (1). Nonetheless, we briefly comment on $\dcfl_{m}^{A}\subseteq \dcfl_{T}^{A}$ for any oracle $A$. The difference from (1) is exemplified below. Given an $m$-reduction dpda $M$, here we want to construct a new $T$-reduction dpda that simulates $M$. By the definition of $m$-reduction dpda's, $M$ must write the termination symbol, $\dollar$, before reading the right endmarker $\dollar$. Without this particular requirement, we do not know whether $\dcfl_{m}^A\subseteq \dcfl_T^A$. Let $\delta_M$ and $\delta_{N}$ denote transition functions of $M$ and of $N$, respectively. If $\delta_{M}(q,\sigma,a) = (p,w,\dollar)$, then we set $\delta_{N}(q,\sigma,a) = (\overline{p}_0,w,\lambda)$ and $\delta_{N}(\overline{p}_0,\lambda,a) = (q_{query},a,\lambda)$. Moreover, let $\delta_N(q_{yes},\sigma,a)= (q_{acc},a,\lambda)$ and $\delta_N(q_{no},\sigma,a) = (q_{rej},a,\lambda)$. It is not difficult to see that $N$ simulates $M$.
\end{proof}

A simple relationship between Turing and many-one $\cfl$-reducibilities is exemplified in Proposition \ref{first-level-equal}. To describe this  proposition, we need a notion of the {\em Boolean hierarchy over CFL}. Earlier, Yamakami and Kato \cite{YK13} introduced the Boolean hierarchy over the family of bounded context-free languages. In a similar fashion, we set  $\cfl_{1}=\cfl$, $\cfl_{2k}=\cfl_{2k-1}\wedge \co\cfl$, and $\cfl_{2k+1} = \cfl_{2k}\vee\cfl$.
Finally, we denote by $\bhcfl$ the infinite union $\bigcup_{k\in\nat^+}\cfl_{k}$.
In particular, $\cfl_{2}=\cfl\wedge \co\cfl$ holds, and thus $\cfl\neq \cfl_{2}$ follows because $\co\cfl\subseteq\cfl_{2}$ and $\co\cfl\nsubseteq\cfl$.
\begin{proposition}\label{first-level-equal}
$\cfl_{T}^{\cfl} = \cfl_{m}^{\cfl_{2}} = \nfa_{m}^{\cfl_{2}}$.
\end{proposition}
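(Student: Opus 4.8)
The plan is to prove both equalities through the chain $\nfa_{m}^{\cfl_{2}} \subseteq \cfl_{m}^{\cfl_{2}} \subseteq \cfl_{T}^{\cfl} \subseteq \nfa_{m}^{\cfl_{2}}$, after first peeling off the purely many-one equality $\cfl_{m}^{\cfl_{2}} = \nfa_{m}^{\cfl_{2}}$, which I will then use to liberate the stack of the reduction machine. The inclusion $\nfa_{m}^{\cfl_{2}} \subseteq \cfl_{m}^{\cfl_{2}}$ is immediate, since an nfa is merely an npda that never touches its stack, so every nfa $m$-reduction is already an npda $m$-reduction. The remaining content splits into Steps A, B, and C below, of which the real work is Step C.

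Step A ($\cfl_{m}^{\cfl_{2}} = \nfa_{m}^{\cfl_{2}}$). First I would verify that $\cfl_{2} = \cfl \wedge \co\cfl$ is $\natural$-extendible: for $C = A_{1} \cap \overline{A_{2}}$ the two $\natural$-padded languages split along their tracks into a $\cfl$-condition intersected with a $\co\cfl$-condition, hence lie again in $\cfl_{2}$. Lemma \ref{extendible-inclusion} then gives $\cfl_{m}^{\cfl_{2}} \subseteq \nfa_{m}^{\dcfl \wedge \cfl_{2}}$. The structural observation that makes everything fit is $\dcfl \wedge \cfl_{2} \subseteq \cfl_{2}$: given $D \cap A_{1} \cap \overline{A_{2}}$ with $D \in \dcfl$, I regroup it as $A_{1} \cap (D \cap \overline{A_{2}}) = A_{1} \cap \overline{(\overline{D} \cup A_{2})}$, and since $\dcfl$ is closed under complementation, $\overline{D} \in \dcfl \subseteq \cfl$, so $\overline{D} \cup A_{2} \in \cfl$ and the parenthesised factor is a $\co\cfl$ language. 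Thus the extra Dyck bookkeeping produced by the stack-history technique is absorbed into the $\co\cfl$ coordinate, and $\nfa_{m}^{\dcfl \wedge \cfl_{2}} \subseteq \nfa_{m}^{\cfl_{2}}$ follows by monotonicity in the oracle class.

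Step B ($\cfl_{m}^{\cfl_{2}} \subseteq \cfl_{T}^{\cfl}$). By Step A I may assume $L$ is $m$-reduced to $C = A_{1} \cap \overline{A_{2}}$ by a \emph{stackless} nfa $N$. I would then build a $T$-reduction npda $M'$ whose now-free stack does real work: while $M'$ reads its input and runs $N$ to generate the query word $y$ symbol by symbol, it simultaneously feeds $y$ into a hardwired npda for $A_{1}$, using its stack to decide $y \in A_{1}$ internally, and also copies $y$ onto its oracle query tape. If the internal $A_{1}$-test fails, $M'$ rejects; otherwise it issues a single query with word $y$ to the $\cfl$ oracle $A_{2}$ and accepts exactly on the answer ``no,'' which is legitimate by Lemma \ref{basic-Turing}. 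Hence $L \in \cfl_{T}^{A_{2}} \subseteq \cfl_{T}^{\cfl}$.

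Step C ($\cfl_{T}^{\cfl} \subseteq \nfa_{m}^{\cfl_{2}}$), the main obstacle. A $T$-reduction npda $M$ relative to $A \in \cfl$ may fire unboundedly many \emph{adaptive} queries $q_{1},\dots,q_{t}$, and I must collapse the whole adaptive computation into one non-adaptive query to a single $\cfl_{2}$ oracle. I would let a stackless machine $N'$ guess an entire halting computation of $M$ on $x$ (its nondeterministic moves, its stack operations, and the answer bits $b_{1},\dots,b_{t}$ it pretends to receive) and write a transcript that records $M$'s stack operations on a Dyck-history track and lists each query word tagged by its guessed answer, say as $\langle b_{i}\rangle q_{i}\#$; local transition-consistency and acceptance are enforced by the finite control of $N'$, so $N'$ needs no stack. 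The $\cfl_{2}$ oracle then certifies the guesses: the positive obligation ``every block tagged $1$ lies in $A$'' is a single $\cfl$ language, because one npda checks the blocks sequentially, emptying its stack at each $\#$; the negative obligation ``no block tagged $0$ lies in $A$'' is the complement of the $\cfl$ language ``some $0$-tagged block lies in $A$,'' hence $\co\cfl$; and the Dyck-history check is $\dcfl$. Exactly as in Step A, the Dyck track and the negative obligation merge into one $\co\cfl$ coordinate, while the positive obligation together with the regular consistency conditions forms one $\cfl$ coordinate, placing the oracle in $\cfl \wedge \co\cfl = \cfl_{2}$. The delicate point throughout Step C is not the guess-and-verify bookkeeping but ensuring that all obligations fit into the single level $\cfl_{2}$ rather than a higher Boolean or $\cfl(k)$ level; the leverage that makes this possible is precisely the determinism of the Dyck checks (the absorption trick of Step A) together with the fact that ``all positive queries accepted'' is recognized by reusing one stack sequentially. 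Without these two observations the construction would only yield an oracle in $\cfl(2) \wedge \co\cfl$.
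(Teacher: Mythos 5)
Your proof is correct, but it is organized around a different cycle of inclusions than the paper's, and one of its steps is a genuinely different construction. The paper proves $\cfl_{T}^{\cfl}\subseteq \cfl_{m}^{\cfl_{2}}\subseteq \nfa_{m}^{\cfl_{2}}\subseteq \cfl_{T}^{\cfl}$, whereas you prove $\nfa_{m}^{\cfl_{2}}\subseteq\cfl_{m}^{\cfl_{2}}\subseteq\cfl_{T}^{\cfl}\subseteq\nfa_{m}^{\cfl_{2}}$; your Step A coincides with the paper's middle inclusion (the same use of Lemma \ref{extendible-inclusion} together with the absorption $\dcfl\wedge\cfl_{2}\subseteq\cfl_{2}$). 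The real divergences are Steps B and C. For the inclusion into $\cfl_{T}^{\cfl}$, the paper's machine copies the query word $w$ of the stackless reduction onto its stack, queries $0w$, then pops the stack and queries $1w^{R}$, against the single $\cfl$ oracle $\{0w\mid w\in A_{1}\}\cup\{1w^{R}\mid w\in A_{2}\}$ --- an argument that needs closure of $\cfl$ under reversal and two oracle calls; your Step B instead verifies the $\cfl$-half $A_{1}$ internally with the freed-up stack and issues a single query to $A_{2}$, accepting on ``no,'' which is simpler and avoids reversal entirely (and the nondeterministic internal check of $A_{1}$ is sound here, since the oracle answer on $A_{2}$ is path-independent). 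Your Step C compresses the paper's two steps --- first build an npda $m$-reduction to a $\cfl_{2}$ oracle by guessing and tagging oracle answers, then eliminate the reduction machine's stack via the Dyck/stack-history technique --- into one: you guess the answer bits and the stack history simultaneously, so the reduction machine is an nfa from the start, and the same $\dcfl$-into-$\co\cfl$ absorption keeps the oracle at level $\cfl_{2}$. What the paper's decomposition buys is reusability: its inclusion $\cfl_{T}^{\cfl}\subseteq\cfl_{m}(\cfl\wedge\co\cfl)$ is explicitly recycled, in relativized form, in the proof of Lemma \ref{T-reduction-recursive}. What your route buys is a cleaner, more economical passage into $\cfl_{T}^{\cfl}$ and a one-shot collapse of adaptive queries that makes it transparent why everything lands in $\cfl_{2}$ rather than at a higher Boolean level.
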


For the proof of this proposition, the following notation is required. If $M$ is an (oracle) npda, then $\overline{M}$ denotes an (oracle) npda obtained from $M$ simply by exchanging between accepting states and rejecting states.

\vs{-2}
\begin{proofof}{Proposition \ref{first-level-equal}}
In this proof, we will demonstrate that (1)  $\cfl_{T}^{\cfl}\subseteq \cfl_{m}^{\cfl_2}$, (2)  $\cfl_{m}^{\cfl_2}\subseteq \nfa_{m}^{\cfl_2}$ , and (3) $\nfa_{m}^{\cfl_2}\subseteq \cfl_{T}^{\cfl}$.  If all are proven, then the proposition immediately follows.

(1)  We start with an arbitrary language $L$ in $\cfl_{T}^{A}$ relative to a certain language $A$ in $\cfl$. Take an oracle npda $M = (Q_M,\Sigma,\{\cent,\dollar\},\Theta,\Gamma_M, \delta_M,q_0,Z_0,Q_{oracle},Q_{M,acc},Q_{M,rej})$ that $T$-reduces $L$ to $A$, and let $M_A =(Q_A,\Theta,\{cent,\dollar\},\Gamma_A, \delta_A,q_0,Z_0,\{q_{acc}\},\{q_{rej}\})$ be an npda recognizing $A$.  Hereafter, we will build three new machines $N_1$, $N_2$, and $N_3$ to show that $L\in\cfl_{m}^{\cfl_2}$.
The first machine $N_1$ is an $m$-reduction npda trying to  simulate $M$ on input $x$ by running the following procedure. Along each computation path in $ACC_{M}(x)$, before $M$ begins producing the $i$th query word on a query tape, $N_1$ guesses its oracle answer $b_i$   (either $0$ or $1$)   and writes it down onto its query tape. While $M$ writes the $i$th query word $y_i$, $N_1$  does the same and then appends $\natural$ to $y_i$.
Recall that $M$ halts in linear time, no matter what answers it receives from oracles; thus, $N_1$ also halts in linear time. When $M$ halts, $N_1$ produces query words $w$ of the form $b_1y_1\natural b_2y_2\natural\cdots \natural b_ky_k\natural$ for certain numbers $k\in\nat$.

More formally, let $N_1 = (Q_1,\Sigma,\{\cent,\dollar\},\Theta_1,\Gamma_1,\delta_1,q_0,Z_0)$, where $Q_1 = \{q_0\}\cup (Q_{M}\times \{0,1\}\times (\Theta\cup\{\lambda,*\}))$.  Let $((p,b,\tau),w,b)\in\delta_1(q_0,\cent,Z_0)$ for all $b\in\{0,1\}$ if $(p,w,\tau)\in\delta_{M}(q_0,\cent,Z_0)$ with $\tau\in\Theta$.
Similarly, let $((p,b,\lambda),w,\lambda)\in\delta_1(q_0,\cent,Z_0)$ if $(p,w,\lambda)\in\delta_{M}(q_0,\cent,Z_0)$. Moreover, let $((p,b,\tau),w,b)\in\delta_1((q,b,\lambda),\sigma,a)$ if $(p,w,\tau)\in\delta_{M}(q,\sigma,a)$ with $\tau\in\Theta\cup\{\lambda\}$ and $\sigma\in\Sigma$. Let $((p,b,*),a,\tau)\in\delta_1((q,b,\tau),\lambda,a)$. Let $((p,b,*),w,\tau)\in\delta_1((q,b,*),\sigma,a)$ if $(p,w,\tau)\in\delta_{M}(q,\sigma,a)$.
When $(p,w,\lambda)\in\delta_M(q,\dollar,a)$ with $p\in Q_{M,halt}$, let $((p,b,*),w,\natural)\in \delta_1((q,b,*),\dollar,a)$ and $((p,b,*),w,\lambda)\in \delta_1((q,b,\lambda),\dollar,a)$.
Assume that $(p,w,\tau)\in\delta_{M}(q_{e},\sigma,a)$ with
$e\in\{yes,no\}$.
Let $\delta_{1}((q_{query},0,\lambda),\lambda,a) = \{ ((q_{no},b,\lambda),a,\natural)\mid b\in\{0,1\}\}$ and $\delta_{1}((q_{query},1,\lambda),\lambda,a)= \{((q_{yes},b,\lambda),a,\natural)\mid b\in\{0,1\}\}$.
For all other cases, $N_1$ enters its own rejecting states. Let $Q_{acc} = \{(q,b,*)\mid q\in Q_{M,acc},b\in\{0,1\}\}$ and $Q_{rej} = \{(q,b,*)\mid q\in Q_{M,rej},b\in\{0,1\}\}$.

The second machine $N_2$ is an npda that works as follows. On input $w$ of the form $b_1y_1\natural b_2y_2\natural\cdots \natural b_ky_k\natural$, $N_2$ executes the following procedure. Choosing each index $i\in[k]$ sequentially, $N_2$ simulates $M_A$ on $y_i$ if $b_i=1$, and $N_2$ skips string $y_i$  otherwise, and move to the next index. During this simulation process, whenever $M_A$ enters a rejecting state, $N_2$ also enters a rejecting state and halts.  The third machine $N_3$ takes input $w$ and, for each $i\in[k]$, $N_3$ simulates $\overline{M_A}$ on $y_i$ if $b_i=0$, and $N_3$ skips $y_i$ otherwise. If $\overline{M}_{A}$ halts in a rejecting state, then $N_3$ also rejects $w$.
It is not difficult to verify that $N_1$ $m$-reduces $L$ to $L(N_2)\wedge \overline{L(N_3)}$. This leads to a conclusion that $L$ is included in $\cfl_{m}(\cfl\wedge \co\cfl) = \cfl_{m}^{\cfl_{2}}$.

(2) Note that $\cfl_2$ is $\natural$-extendible. Proposition \ref{extendible-inclusion} implies that $\cfl_{m}^{\cfl_2}\subseteq \nfa_{m}^{\dcfl\wedge \cfl_2}$. Since $\dcfl\subseteq \co\cfl$, it follows that $\dcfl\wedge \cfl_{2} \subseteq  \co\cfl\wedge (\cfl\wedge \co\cfl) = (\co\cfl\wedge\co\cfl)\wedge \cfl$. The last term clearly equals $\co\cfl\wedge \cfl = \cfl_2$, and  thus  we conclude that $\cfl_{m}^{\cfl_2}\subseteq \nfa_{m}^{\cfl_2}$.

(3) Choose an oracle $A$ in $\cfl_{2}$ and consider any language $L$ in $\nfa_{m}^{A}$.  Because of $A\in \cfl_{2}$,  we can take two languages $A_1,A_2\in\cfl$ over the same alphabet, say, $\theta$ for which $A=A_1\cap \overline{A}_2$. Let $M =(Q,\Sigma,\{\cent,\dollar\},\Theta,\delta_M,q_0,Q_{M,acc},Q_{M,rej})$ be an oracle nfa that $m$-reduces $L$ to $A$. Remember that $M$ has no stack.
Let us define a new $T$-reduction npda $N$, which makes two queries to an oracle. On input $x$, $N$  first marks $0$ on its query tape and starts simulating $M$ on $x$. Whenever $M$ tries to write a symbol $\tau$ on its query tape, $N$  writes it down on a query tape and simultaneously copies it into a stack. After $M$ halts with a query word, say, $w$, $N$ makes the first query with the query word $0w$.  If its oracle answer
is $0$ (no), then $N$ rejects the input. Otherwise, $N$ writes $1$ on the query tape (after the tape automatically becomes blank), pops up the stored string $w$ from the stack, and copies it (in a reverse form) onto the query tape. After making the second query with $1w^R$,  if its oracle answer equals $1$ (yes),  then $N$ rejects the input. When $N$ has not entered any rejecting state, then $N$ must accept the input.

Formally, let $N = (Q_N,\Sigma,\{\cent,\dollar\},\Theta_N,\Gamma, \delta_N,\overline{q}_0,Z_0,Q_{oracle},Q_{N,acc},Q_{N,rej})$.
For simplicity, we assume that $M$ does not query the empty word $\lambda$.
Let $Q_N= Q\cup\{\overline{q}_0,\overline{p}_0,\overline{p}_1\}$ and let $\Theta_N= \Theta\cup\{0,1\}$. Consider $T_M=\{q\in Q_M\mid \exists p\in Q_{M,acc}\,[(p,\lambda)\in \delta_M(q,\dollar)]\}$.
Let $\delta_{N}(q_0,\cent,Z_0) = \{(\overline{q}_0,Z_0,0)\}$ and $\delta_{N}(\overline{q}_0,\lambda,Z_0) =\{(p,\tau Z_0,\tau)\mid \tau\in\Theta\cup\{\lambda\},  (p,\tau)\in\delta_{M}(q_0,\cent)\}$.  Next, assume that  $\sigma\in\Sigma\cup\{\lambda\}$,  $\tau\in\Theta\cup\{\lambda\}$, and $a\in\Sigma$.
Let $\delta_{N}(q,\sigma,a) = \{(p,\tau a,\tau)\mid  (p,\tau)\in\delta_{M}(q,\sigma)\}$. For any $q\in T_M$, let $(q_{query},a,\lambda)\in\delta_{N}(q,\lambda,a)$. Let $\delta_{N}(q_{no},\lambda,a) =\{(q_{rej},a,\lambda)\}$. Moreover, for $a\neq Z_0$, let  $\delta_{N}(q_{yes},\lambda,a)=\{(\overline{p}_0,a,1)\}$, $\delta_{N}(\overline{p}_0,\lambda,a) = \{(\overline{p}_0,\lambda,a)\}$, $\delta_{N}(\overline{p}_0,\lambda,Z_0) =\{(q_{query},Z_0,\lambda)\}$, $\delta_{N}(q_{yes},\lambda,Z_0) = \{(q_{rej},Z_0,\lambda)\}$, and $\delta_{N}(q_{no},\lambda,Z_0) =\{(\overline{p}_1,Z_0,\lambda)\}$,  $(q_{acc},Z_0,\lambda)\in \delta_N(\overline{p}_1,\dollar,Z_0)$, and $(q_{rej},a,\lambda)\in \delta_N(\overline{p}_1,\sigma,a)$ with $\sigma\neq\dollar$, where $\overline{p}_0\notin Q$. Let $Q_{N,acc} = \{q_{acc}\}$ and $Q_{N,rej} = \{q_{rej}\}$.

The corresponding oracle $B$ is defined as $\{0w\mid w\in A_1\}\cup \{1w^R\mid w\in A_2\}$.   It is easy to see that $x\in L$ if and only if $N$ accepts $x$ relative to $B$. Note that the language  $\{0w\mid w\in A_1\}$ is context-free. Since $\cfl$ is closed under \emph{reversal} (see, \eg \cite{HMU01}), the language  $\{1w^R\mid w\in A_2\}$ is also context-free; therefore,  $B$ belongs to  $\cfl$. We thus conclude that $L\in\cfl_{T}^{B}\subseteq \cfl_{T}^{\cfl}$.
\end{proofof}


As another useful reducibility, we are focused on {\em truth-table CFL-reducibility}. Notice that an introduction of nondeterministic truth-table reducibility to context-free languages does not seem to be as obvious as that of the aforementioned Turing $\cfl$-reducibility.  Ladner, Lynch, and Selman \cite{LLS75} first offered a notion of polynomial-time nondeterministic truth-table  reducibility. Another definition, which is apparently  weaker than that of Ladner \etalc, was later proposed by Book, Long, and Selman \cite{BLS84}  as well as  Book and Ko \cite{BK88}. The next definition follows a spirit of Ladner \etalc~\cite{LLS75} with a slight twist for its evaluator in order to accommodate our oracle npda's.

Letting $k\in\nat^{+}$, a language $L$ is in $\cfl_{ktt}^{A}$ (or $\cfl_{ktt}(A)$) if  there are a regular language $B$ and an npda $N = (Q,\Sigma,\{\cent,\dollar\},\Theta,\Gamma,\delta,q_0,Z_0,Q_{acc},Q_{rej})$  having  $k$ write-only output tapes besides a single read-only input tape such that, for any input string $x$,  (1) $ACC_{M}(x)\neq\setempty$,  (2) along every computation path $p\in ACC_{N}(x)$, for each index $i\in[k]$, $N$ produces a string  $y^{(i)}_{p}\in\Theta^*$ on the $i$th write-only query tape,   (3)  to a vector  $(y^{(1)}_p,y^{(2)}_p,\ldots,y^{(k)}_p)$ of such query words,  we assign a $k$-bit string $z_p = \chi_k^{A}(y^{(1)}_p,y^{(2)}_p,\ldots,y^{(k)}_p)\in\{0,1\}^k$, and (4) $x$ is in $L$ if and only if
$\track{x}{z_p}$ is in $B$ for an appropriate computation path $p\in ACC_{N}(x)$. Remember that, by our convention for the track notation, $\track{x}{z_p}$ is a shorthand for $\track{x}{z_p\#^{m}}$ if $|x|\geq k$ and for $\track{x\#^m}{z_p}$ if $|x|<k$, where with $m=||x|-k|$.
A transition function $\delta$ of $N$ has the form:
\[
\delta: (Q-Q_{halt})\times (\check{\Sigma}\cup\{\lambda\})\times \Gamma \to \PP( Q\times \Gamma^*\times (\Theta\cup\{\lambda\})^k ).
\]
The set $B$ is called a {\em truth table} for $A$. For the sake of convenience, we often treat $B$ as a characteristic function defined as $B(x,z)=1$ if $\track{x}{z}\in B$ and $B(x,z)=0$ otherwise.
For readability, we often write $B(x,z_p)$ instead of $B(x,z_p\#^n)$.
The machine $N$ is in general called a \emph{bounded-truth-table (btt) CFL-reduction} from $L$ to $A$.
Since $B\in\reg$, we also treat as a truth table a dfa that recognizes (or computes) $B$, instead of $B$ itself.
In the end,  we set $\cfl_{btt}^{A}$ (or $\cfl_{btt}(A)$) to be the union $\bigcup_{k\in\nat^{+}}\cfl_{ktt}^{A}$. Similarly, we can define $\nfa_{btt}^{A}$.

It is not difficult to show $\cfl_{m}^A\cup\cfl_{m}^{\overline{A}}\subseteq \cfl_{1tt}^A$.  By flipping the outcome (\ie an accepting or a rejecting state) of a computation generated by each dfa that computes its truth-table $B$, we obtain $\cfl_{ktt}^{A}\subseteq \cfl_{ktt}^{\overline{A}}$.  In symmetry, $\cfl_{ktt}^{\overline{A}}\subseteq \cfl_{ktt}^{A}$ also holds.  Therefore, the statement given below follows immediately.

\begin{lemma}\label{m-vs-ktt-compare}
For every language $A$ and index $k\geq1$, $\cfl_{m}^{A}\cup \cfl_{m}^{\overline{A}} \subseteq \cfl_{ktt}^{A} = \cfl_{ktt}^{\overline{A}}$.
\end{lemma}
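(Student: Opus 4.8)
The plan is to split the claim into the single inclusion $\cfl_{m}^{A}\cup\cfl_{m}^{\overline{A}}\subseteq\cfl_{ktt}^{A}$ and the complementation equality $\cfl_{ktt}^{A}=\cfl_{ktt}^{\overline{A}}$, and to drive both with one observation: in a truth-table reduction the evaluator $B$ is an \emph{arbitrary} regular set, so it may freely post-process the bit vector $\chi_{k}^{A}(\cdots)$ the oracle returns. I would first record the trivial padding $\cfl_{1tt}^{A}\subseteq\cfl_{ktt}^{A}$ for every $k\geq1$: given a $1$-tt machine, keep it but let it additionally emit a fixed dummy word on the extra $k-1$ query tapes, and let the new evaluator depend only on $x$ and the first answer bit, ignoring coordinates $2,\dots,k$. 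Discarding coordinates of the answer component is a regular operation, so the padded evaluator stays in $\reg$. Hence it suffices to place both many-one classes inside $\cfl_{1tt}^{A}$.

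For $\cfl_{m}^{A}\subseteq\cfl_{1tt}^{A}$ I would reuse the very $m$-reduction machine, treating the query words $y_{p}$ it produces as the single output tape of a $1$-tt machine, and take the evaluator $B=\{\track{x}{1}\mid x\in\Sigma^{*}\}$, so that $\track{x}{z_{p}}\in B$ exactly when the single answer bit $z_{p}=\chi_{1}^{A}(y_{p})$ equals $1$, i.e. when $y_{p}\in A$; this reproduces the many-one acceptance criterion $L=\{x\mid f(x)\cap A\neq\setempty\}$ verbatim. For $\cfl_{m}^{\overline{A}}\subseteq\cfl_{1tt}^{A}$ I would instead use the mirror evaluator $B=\{\track{x}{0}\mid x\in\Sigma^{*}\}$, which accepts precisely when $z_{p}=0$, i.e. $y_{p}\notin A$, i.e. $y_{p}\in\overline{A}$. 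Both evaluators are plainly regular, and this is exactly where the truth-table formalism buys something a bare many-one reduction cannot: the freedom to accept on the $0$-answer lets the one oracle $A$ absorb reductions to $\overline{A}$.

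For the equality $\cfl_{ktt}^{A}=\cfl_{ktt}^{\overline{A}}$ I would keep the reducing npda $N$ untouched and rewrite only the evaluator. Passing from oracle $A$ to oracle $\overline{A}$ replaces each returned vector $z_{p}=\chi_{k}^{A}(y^{(1)}_{p},\dots,y^{(k)}_{p})$ by its bitwise complement $\overline{z_{p}}$, so I would set the new truth table to be the image of $B$ under the substitution complementing the answer component, i.e. $B'=\{\track{x}{\overline{z}}\mid\track{x}{z}\in B\}$. Then $\track{x}{\overline{z_{p}}}\in B'$ iff $\track{x}{z_{p}}\in B$, so $N$ with $B'$ relative to $\overline{A}$ recognizes the same language, giving $\cfl_{ktt}^{A}\subseteq\cfl_{ktt}^{\overline{A}}$. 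The substitution is a length-preserving bijective relabeling (it merely flips $0\leftrightarrow1$ on the answer component), under which $\reg$ is manifestly closed, so $B'\in\reg$. Applying the identical argument with $A$ replaced by $\overline{A}$ and using $\overline{\overline{A}}=A$ yields the reverse inclusion, hence equality.

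The one genuinely delicate point, and the step I expect to cost the most care, is honouring clause (1) of the $ktt$ definition, which demands $ACC_{N}(x)\neq\setempty$ for \emph{every} $x$. A many-one reduction is allowed to reject on all paths for some $x\notin L$ (its $\cflmv$ image may be empty), whereas the $ktt$ machine must always exhibit an accepting path. The trouble is that a spuriously added accepting branch emits an \emph{uncontrollable} answer bit $\chi_{k}^{A}(w)$, so one cannot simply route failures into a neutral accepting path; the totality normalization must be arranged jointly with the evaluator so that such branches never place an $x\notin L$ into the accepted set. I would reconcile this by reading the acceptance criterion existentially (an empty accepting set then simply witnesses $x\notin L$) or, if clause (1) is taken strictly, by first normalizing the witnessing reduction to a total one; settling this reconciliation is the crux the author elides with ``clear,'' while every remaining step—including the closure of $\reg$ under the answer-component bit-flip—is a direct unwinding of the definitions.
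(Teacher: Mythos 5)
Your proposal follows the same route as the paper, whose entire argument for this lemma consists of the three sentences preceding it: the inclusion is declared ``clear'' (via exactly the evaluators $\{\track{x}{1}\mid x\in\Sigma^*\}$ and $\{\track{x}{0}\mid x\in\Sigma^*\}$ you write down, plus padding $1$tt into $k$tt), and the equality is obtained ``by flipping the outcome of each dfa that computes a truth-table.'' Your formalization of the latter is in fact the more careful one: literally exchanging accepting and rejecting states of the dfa for $B$ would yield $\overline{B}$, which in general recognizes the wrong language, whereas the answer-track relabeling $B'=\{\track{x}{\overline{z}}\mid\track{x}{z}\in B\}$ that you use is the correct operation and manifestly preserves regularity. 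The padding step and both evaluator constructions are correct, and the equality argument is complete as you give it (the reducing machine is untouched, so clause (1) is inherited).

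The point you leave open---clause (1), demanding $ACC_{N}(x)\neq\setempty$ on every input---is a genuine gap, one the paper elides entirely, but it can be closed, and the closure is precisely the ``joint arrangement with the evaluator'' you call for without carrying out. Since your evaluator for $\cfl_{m}^{A}\subseteq\cfl_{ktt}^{A}$ accepts only on answer bit $1$, add to the $m$-reduction machine a single dummy accepting branch that writes one \emph{fixed} word $w_{0}\notin A$: its answer bit is then not uncontrollable at all---it is always $0$---so the branch secures totality and can never put an $x\notin L$ into the accepted set. Symmetrically, for $\cfl_{m}^{\overline{A}}\subseteq\cfl_{ktt}^{A}$ take a fixed $w_{0}\in A$ against the evaluator that accepts only on bit $0$. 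This needs $A\neq\Gamma^{*}$ in the first case and $A\neq\setempty$ in the second, and the proviso cannot be dropped: if $A=\Gamma^{*}$, every answer bit is forced to $1$, so under the strict reading of clause (1) membership is decided by the regular condition $\track{x}{1^{k}}\in B$, giving $\cfl_{ktt}^{\Gamma^{*}}\subseteq\reg$, while $\cfl\subseteq\cfl_{m}^{\Gamma^{*}}$ (copy the input to the query tape, as in the proof of Claim~\ref{CFL(2)-bound}); dually for $A=\setempty$ and the $\overline{A}$ part. So under the strict reading the lemma actually fails for the two trivial oracles, and your preferred resolution---the existential reading of clause (1), under which your proof is already complete---is the reading under which the statement is true for every $A$, and evidently the one the paper tacitly intends.
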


Unlike $\np$, we do not know whether  $\cfl_{btt}^{\cfl}\subseteq \cfl_{T}^{\cfl}$ holds. This is mainly because of a restriction on the usage of npda's memory device.
It may be counterintuitive that Turing reducibility cannot be  powerful enough to simulate truth-table reducibility.
Nonetheless, the next theorem characterizes $\cfl_{btt}^{\cfl}$.

\begin{theorem}\label{manyone-btt-equiv}
$\cfl_{m}^{\bhcfl} = \cfl_{btt}^{\cfl} = \nfa_{btt}^{\cfl}$.
\end{theorem}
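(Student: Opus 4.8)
The plan is to prove the three-way identity by a cycle of inclusions, namely $\nfa_{btt}^{\cfl} \subseteq \cfl_{btt}^{\cfl} \subseteq \cfl_{m}^{\bhcfl} \subseteq \nfa_{btt}^{\cfl}$. The first inclusion is immediate, because every nfa is an npda that never alters its stack, so any $\nfa$-based $btt$-reduction is a fortiori a $\cfl$-based one. All the content therefore lies in the second and third inclusions, and I would prove both using the two devices already employed in Proposition \ref{first-level-equal}: guessing-and-verifying oracle answers in order to fold several queries into a single one, and the $\natural$-extension / stack-history simulation furnished by Lemma \ref{extendible-inclusion}.

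For $\cfl_{btt}^{\cfl} \subseteq \cfl_{m}^{\bhcfl}$, fix $L \in \cfl_{ktt}^{A}$ with $A \in \cfl$, witnessed by a $btt$-reduction npda $N$ with $k$ query tapes and a regular truth table $B$. I would build an $m$-reduction machine $N'$ that simulates $N$ and, before $N$ begins writing its $i$th query word $y^{(i)}$, nondeterministically guesses the bit $b_i\in\{0,1\}$ meant to be $N$'s $i$th oracle answer. While $N$ writes its $k$ words in parallel, $N'$ records them on its single query tape via the $k$-track, $\natural$-padded synchronization of Corollary \ref{NFA-Dyck-equal-CFL}, prefixed by the guessed vector $b = b_1\cdots b_k$; at the same time $N'$ checks the regular condition $\track{x}{b}\in B$ on the fly and keeps only those paths on which $N$ accepts and this test succeeds. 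The oracle $C$ then verifies, for every $i\in[k]$, that $y^{(i)}\in A$ exactly when $b_i=1$. For each fixed $b$ this is an intersection of context-free conditions (``the $i$th track lies in $A$'') and co-context-free conditions (``the $i$th track avoids $A$''), so $C$ is a finite Boolean combination of context-free languages and hence $C\in\bhcfl$ by the closure properties of the Boolean hierarchy over $\cfl$ (the same fact underlying $\cfl(\omega)\subseteq\bhcfl$, cf. \cite{YK12}). By construction $x\in L$ iff some accepting path of $N'$ emits a word in $C$, giving $L\in\cfl_{m}^{C}\subseteq\cfl_{m}^{\bhcfl}$.

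For $\cfl_{m}^{\bhcfl} \subseteq \nfa_{btt}^{\cfl}$, I would first strip the stack off the reduction machine. Each level $\cfl_{k}$ is $\natural$-extendible, since it is a Boolean combination of context-free languages and $\natural$-extension commutes with union, intersection and complement, exactly as $\cfl_2$ was handled in Proposition \ref{first-level-equal}(2); hence Lemma \ref{extendible-inclusion} yields $\cfl_{m}^{\cfl_{k}} \subseteq \nfa_{m}^{\dcfl\wedge\cfl_{k}} \subseteq \nfa_{m}^{\bhcfl}$ (a context-free set intersected with a level-$k$ Boolean combination is again a Boolean combination of context-free languages), and taking the union over $k$ gives $\cfl_{m}^{\bhcfl}\subseteq\nfa_{m}^{\bhcfl}$. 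It then remains to turn an $\nfa$ many-one reduction to a Boolean-hierarchy oracle into an $\nfa$-based $btt$-reduction to a single context-free oracle. Writing the target $C\in\cfl_{k}$ as a fixed Boolean function of context-free languages $A_1,\ldots,A_m$, I would let the reduction nfa produce its single query word $q$ and copy it, with distinct markers, onto $m$ query tapes as $\sharp_1 q,\ldots,\sharp_m q$; the single oracle $A = \bigcup_{j=1}^{m}\sharp_j A_j$ is context-free (closure under union and left concatenation by a symbol), and the $j$th answer bit is $1$ iff $q\in A_j$. Choosing the regular truth table $B = \{\track{x}{z}\mid f(z)=1\}$ makes $\track{x}{z_p}\in B$ equivalent to $q\in C$, so $L$ is computed by an $\nfa$-based $m$-query truth-table reduction to $A$, i.e. $L\in\nfa_{btt}^{\cfl}$, closing the cycle.

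The main obstacle I expect is the second inclusion, and specifically the claim that the verification oracle $C$ lands inside $\bhcfl$. Assembling one well-formed query word from the $k$ parallel, $\lambda$-move-prone query-tape heads is a routine use of $\natural$-padding, but certifying that the resulting Boolean combination of context-free and co-context-free track conditions genuinely belongs to the Boolean hierarchy over $\cfl$ rests on the closure behaviour of $\bhcfl$ (in particular on $\cfl(\omega)\subseteq\bhcfl$), which is exactly where the non-closure of $\cfl$ under intersection would otherwise bite. The third inclusion is comparatively safe once the $\natural$-extendibility of each $\cfl_{k}$ is recorded, since the marked-union encoding of $A_1,\ldots,A_m$ and the constant-size truth table for the fixed Boolean function are standard.
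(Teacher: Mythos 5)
Your proposal is correct in substance, but it takes a genuinely different route from the paper's. The paper proves the cycle $\cfl_{btt}^{\cfl}\subseteq\nfa_{btt}^{\cfl}\subseteq\cfl_{m}(\dfa_{btt}^{\cfl})\subseteq\cfl_{btt}^{\cfl}$ --- using the stack-history (Dyck-oracle) technique for the first inclusion and the computation-path-encoding technique for the second --- and then identifies $\cfl_{m}(\dfa_{btt}^{\cfl})$ with $\cfl_{m}^{\bhcfl}$ via the separately proven characterization $\bhcfl=\dfa_{btt}^{\cfl}$ (Lemma \ref{BHCFL-DFA}), whose harder direction rests on the explicit bound $\dfa_{ktt}^{\cfl}\subseteq\cfl_{k2^{k+1}}$. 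You run the cycle in the opposite direction and never touch $\dfa_{btt}^{\cfl}$ at all: your inclusion $\cfl_{btt}^{\cfl}\subseteq\cfl_{m}^{\bhcfl}$ extends the guess-and-verify-answers device of Proposition \ref{first-level-equal} to $k$ parallel queries and then places the consistency-checking oracle $C$ inside $\bhcfl$ by observing that $C$ is a finite Boolean combination of inverse-homomorphic images of the oracle, which lands in the hierarchy via $\cfl(k)\subseteq\cfl_{2k+1}$ (Claim \ref{CFL(k)-in-bool-CFL}) and the union-closure of the levels (Claim \ref{CFL-decomposition}); your inclusion $\cfl_{m}^{\bhcfl}\subseteq\nfa_{btt}^{\cfl}$ combines Lemma \ref{extendible-inclusion} (legitimately, since each $\cfl_{k}$ is indeed $\natural$-extendible) with the marked-union oracle $A=\bigcup_{j}\sharp_{j}A_{j}$ and a constant-size truth table. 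The paper's route buys the stand-alone characterization $\bhcfl=\dfa_{btt}^{\cfl}$, of independent interest; your route dispenses with the computation-path-encoding technique entirely and makes more transparent \emph{why} the Boolean hierarchy, rather than $\cfl$ itself, is the right oracle class (the multi-track consistency checks force intersections, and it is exactly $\bhcfl$ that absorbs them). Neither route is circular: the Section \ref{sec:CFL-hierarchy} facts you invoke are the same ones the paper's own proof of Lemma \ref{BHCFL-DFA} forward-references.

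Two routine repairs should be recorded. First, the definition of $\cfl_{ktt}^{A}$ (and its $\nfa$ analogue) requires $ACC_{M}(x)\neq\setempty$ on every input, while the many-one reduction machine you start from in the last inclusion may have no accepting path; adjoin a dummy query --- say a marker $\sharp_{0}$ with $\{\sharp_{0}\}$ added to the oracle --- so that a default accepting branch yields an answer pattern the truth table always rejects. Second, in your second inclusion the test $\track{x}{b}\in B$ concerns the entire input, so after the simulated machine halts your $m$-reduction machine must continue scanning $x$ to the right endmarker to finish the finite-state check; this is harmless for a one-way npda.
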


Before proving this theorem, we  will present a new characterization of $\bhcfl$ in terms of btt $\cfl$-reductions. For this purpose, we need to introduce a $btt$-relativization of $\dfa$. Given a language $A$ over alphabet $\theta$, a language $L$ over alphabet $\Sigma$ is in $\dfa_{ktt}^{A}$ if there exists an oracle dfa $M= (Q,\Sigma,\{\cent,\dollar\},\Theta,\delta,q_0,Q_{acc},Q_{rej})$ with $\delta: (Q-Q_{halt})\times (\check{\Sigma}\cup\{\lambda\})\to Q\times (\Theta\cup\{\lambda,\dollar\})^k$ such that $M$ produces $k$ query words $y_1\dollar,y_2\dollar,\ldots,y_k\dollar$ (where $\dollar$ is the termination symbol) with $y_i\in\Theta^*$ for all $i\in[k]$ on $k$ tracks of a single query tape satisfying the following: for every string $x$, $x\in L^A$ if and only if $B(x,\chi^A_k(y_1,y_2,\ldots,y_k))=1$. Note that, as done before, $M$ must be well-behaved at both $\cent$ and $\dollar$. We set $\dfa_{btt}^{A} = \bigcup_{k\in\nat^{+}}\dfa_{ktt}^{A}$ and $\dfa_{btt}^{\CC} = \bigcup_{A\in\CC} \dfa_{btt}^{A}$ for any language family $\CC$.

\begin{lemma}\label{BHCFL-DFA}
$\bhcfl = \dfa_{btt}^{\cfl}$.
\end{lemma}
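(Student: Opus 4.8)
The plan is to prove the two inclusions $\bhcfl \subseteq \dfa_{btt}^{\cfl}$ and $\dfa_{btt}^{\cfl} \subseteq \bhcfl$ by unwinding the recursive definition of the Boolean hierarchy $\cfl_k$ and matching it against the bounded-truth-table evaluator. The key observation is that a level $\cfl_k$ of the Boolean hierarchy is built by alternately intersecting with $\co\cfl$ and unioning with $\cfl$, so a language $L \in \cfl_k$ can be written using $k$ context-free ``ingredient'' languages $A_1,\dots,A_k \in \cfl$ combined by a fixed Boolean function (the characteristic Boolean combination of the Hausdorff/Boolean hierarchy). For the forward inclusion, I would fix $L \in \cfl_k$ and expose the $k$ witnessing context-free sets $A_1,\dots,A_k$. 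I then build a single oracle dfa that, on input $x$, copies $x$ verbatim onto each of its $k$ query tapes (this is the crucial point: a dfa has no stack, but it does not need one, since it only needs to \emph{transmit} $x$ to a context-free oracle that does the real work). Using Lemma~\ref{non-closure-many-one}'s spirit, a single $\cfl$ oracle $A$ can encode all of $A_1,\dots,A_k$ via a tagging scheme $A = \bigcup_{i\in[k]} \{ i\#w \mid w \in A_i\}$, and the query tape head writes $i\#x$ on tape $i$. The regular truth table $B$ then evaluates the fixed Boolean combination defining $\cfl_k$ on the answer bits $\chi^A_k(\cdot)$; since this Boolean combination is a fixed finite function, $B$ is trivially regular.

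For the reverse inclusion $\dfa_{btt}^{\cfl} \subseteq \bhcfl$, I would start from an oracle dfa $D$ with $k$ query tapes, a context-free oracle $A$, and a regular truth table $B$, witnessing $L \in \dfa_{ktt}^{A}$. The idea is to ``fold'' the dfa computation and the truth-table evaluation into a Boolean combination of context-free languages. Because $D$ is deterministic and has no stack, on each input $x$ it produces a \emph{unique} tuple of query words $(y_1,\dots,y_k)$, and these query words are regular transductions of $x$ (each $y_i$ is the image of $x$ under the deterministic sequential machine that $D$ induces on tape $i$). The membership $x \in L$ depends only on the bit vector $(\,[\![y_1 \in A]\!], \dots, [\![y_k \in A]\!]\,)$ through the regular truth table $B$. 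For each fixed bit vector $b \in \{0,1\}^k$ on which $B(x,b)=1$, the set of inputs $x$ with answer vector exactly $b$ is a Boolean combination (over all $i$) of the languages $\{x \mid y_i(x) \in A\}$ and their complements; since $y_i(\cdot)$ is a regular transduction and $A \in \cfl$, each set $\{x \mid y_i(x) \in A\}$ is context-free (closure of $\cfl$ under regular/rational transduction, or inverse finite-state transduction). Taking the union over all accepting bit vectors $b$ yields a finite Boolean combination of $\cfl$ and $\co\cfl$ languages, which lands in some level $\cfl_{k'}$ of the Boolean hierarchy, hence in $\bhcfl$.

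The main obstacle I anticipate is the reverse direction, specifically justifying that each $\{x \mid y_i(x)\in A\}$ is context-free and that the resulting Boolean combination genuinely sits at a \emph{finite} level of $\{\cfl_k\}$ rather than merely in some ad hoc Boolean closure. The transduction $y_i$ is length-nonincreasing in a controlled way but may insert or delete symbols (the dfa writes with $\lambda$-moves allowed), so I would need to invoke the closure of $\cfl$ under inverse homomorphism and under intersection with regular sets, or more cleanly under rational transductions, to see that the preimage is context-free. The second subtlety is that the Boolean hierarchy $\{\cfl_k\}$ is defined by a \emph{specific} alternating pattern ($\wedge\,\co\cfl$ then $\vee\,\cfl$), so an arbitrary finite Boolean combination of $\cfl$-languages must be shown to collapse into this normal form; this is the standard fact that the Boolean closure of any class closed under union equals the union of its Hausdorff levels, and I would cite the structure of $\bhcfl$ from \cite{YK12} rather than reprove it. Once these two closure facts are in hand, both inclusions follow by routine bookkeeping.
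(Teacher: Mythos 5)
Your forward inclusion is correct and is essentially the paper's argument with the induction stripped out: the paper proves $\cfl_k\subseteq\dfa_{ktt}^{\cfl}$ level by level, at each step adding one tagged query word $\natural x$ and absorbing the new context-free check into a tagged union oracle, whereas you write all $k$ tagged copies $i\#x$ at once; both rest on the same two observations, namely that a dfa need only transmit (marked copies of) the input to a $\cfl$ oracle that does the real work, and that the fixed Boolean combination defining $\cfl_k$ is a regular truth table.

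The reverse inclusion, however, has a genuine gap, located exactly at the second ``subtlety'' you flagged and then dismissed. Your decomposition reads $L=\bigcup_{b}\,[\,R_b\cap\bigcap_{i:b_i=1}\{x\mid y_i(x)\in A\}\cap\bigcap_{i:b_i=0}\overline{\{x\mid y_i(x)\in A\}}\,]$, where $R_b=\{x\mid B(x,b)=1\}$. The negative part is harmless, since an intersection of $\co\cfl$ sets is again in $\co\cfl$; but whenever $b$ contains two or more $1$'s, the positive part $\bigcap_{i:b_i=1}\{x\mid y_i(x)\in A\}$ is an intersection of context-free languages, which lives in $\cfl(k)$, not in $\cfl$. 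The ``standard fact'' you invoke --- that the Boolean closure of a class equals the union of its Hausdorff/difference levels --- is a theorem only for classes closed under both union \emph{and} intersection (such as $\np$); it is not available for $\cfl$, and the hierarchy $\{\cfl_k\}_{k}$ of this paper intersects with a single $\co\cfl$ set or unions with a single $\cfl$ set per level, so membership of your Boolean combination in some level $\cfl_{k'}$ is precisely the statement needing proof, not a citable fact (within the paper it would require Claim~\ref{CFL(k)-in-bool-CFL}, i.e., $\cfl(k)\subseteq\cfl_{2k+1}$, which appears only later and is proved by a separate argument). This is exactly the obstacle the paper's own proof is engineered to avoid: its machines never verify more than one positive membership $y_i\in A$ on any single nondeterministic branch (the machine $N_{b,1}$ guesses \emph{one} index $i$ with $b_i=1$ and runs the dfa transducer composed with the npda for $A$), while all negative checks are bundled into a single co-nondeterministic machine, i.e., one $\co\cfl$ condition; the price of never forming an intersection of two $\cfl$ sets is the coarse bound $\dfa_{ktt}^{\cfl}\subseteq\cfl_{k2^{k+1}}$ rather than placement at level $k$. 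Your argument is fine for $k=1$, where no intersection arises, but for $k\geq2$ it does not go through as written.
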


\begin{proof}
We split the lemma into two separate claims:
$\bhcfl\subseteq \dfa_{btt}^{\cfl}$ and
$\dfa_{btt}^{\cfl}\subseteq \bhcfl$. Let us prove the first claim.

\begin{claim}\label{CFLk-include-DFA}
Given any index $k\in\nat^{+}$, $\cfl_{k} \subseteq \dfa_{ktt}^{\cfl}$ holds.
\end{claim}

\begin{proof}
Our goal is to prove this claim by induction on $k\in\nat^{+}$.
For the base case $k=1$, it is not difficult to show that $\cfl\subseteq \dfa_{m}^{\cfl}\subseteq \dfa_{1tt}^{\cfl}$. Now, let us concentrate on  induction step $k\geq2$. Meanwhile, we intend to prove only the case where $k$ is even, because the case of odd $k$ can be proven analogously.
Since $\cfl_{k} = \cfl_{k-1}\wedge \co\cfl$, take two languages $L_1\in\cfl_{k-1}$ and $L_2\in\co\cfl$ and assume that $L = L_1\cap L_2$.
Assume also that an npda $M_2$ computes $\overline{L}_2$. Since $L_1\in \dfa_{(k-1)tt}^{\cfl}$ by our induction hypothesis, there is an oracle dfa, say, $M_1$ that  recognizes $L_1$ relative to a certain oracle $A$ in $\cfl$ with truth-table $B\in\reg$. Assume that
$M_1$ on input $x$ produces $(k-1)$-tuple $(y_1\dollar,y_2\dollar,\ldots,y_{k-1}\dollar)$ on its  query tape before entering any accepting state and that
$M_1$ satisfies $B(x,\chi^{A}_{k-1}(y_1,y_2,\ldots,y_{k-1}))=1$
if and only if $x$ is in $L_1$.

In what follows, we want to define a new machine $N$.  By simulating $M_1$, $N$ generates $k-1$ query words $(y_1,y_2,\ldots,y_{k-1})$ with no termination symbols, as well as a new query word $\natural x$, where $\natural$ is a fresh symbol.  Moreover, we define $A' = A\cup \{\natural x\mid \,\text{$M_2$ accepts $x$}\,\}$, which is  obviously in $\cfl$.
Now, we  define $B'$  as $\{(x,b_1b_2\cdots b_{k-1} b_k) \mid  (x,b_1b_2\cdots b_{k-1})\in B \wedge b_k=0\}$.  Clearly, $B'$ is regular since so is $B$.  It also follows that   $B'(x,\chi_{k}^{A'}(y_1,y_2,\ldots,y_{k-1},\natural x))=1$ if and only if  $B(x,\chi_{k-1}^{A}(y_1,y_2,\ldots,y_{k-1}))=1$ and $x\in \overline{L_2}$.  Therefore, $L$ belongs to $\cfl_{ktt}^{A'}\subseteq \cfl_{ktt}^{\cfl}$.
\end{proof}

{}From Claim \ref{CFLk-include-DFA}, it follows that $\bhcfl = \bigcup_{k\in\nat^{+}}\cfl_{k}\subseteq \bigcup_{k\in\nat^{+}}\dfa_{ktt}^{\cfl} = \dfa_{btt}^{\cfl}$.
Next, we wish to prove the following.

\begin{claim}\label{FDAktt-include-CFL}
For any index $k\in\nat^{+}$,  $\dfa_{ktt}^{\cfl}\subseteq \cfl_{k2^{k+1}}$.
\end{claim}

\begin{proof}
We prove the claim by induction on $k\in\nat^{+}$. We begin with the base case of $k=1$. Let $L$ be any language in $\dfa_{1tt}^{A}$ for a certain language $A$ in $\cfl$.  Take an oracle dfa $M_1 = (Q_1,\Sigma,\{\cent,\dollar\},\Theta,\delta_1,q_0,Q_{1,acc},Q_{1,rej})$ that recognizes $L$ relative to $A$ and take a truth-table $B$ used for $M_1$ and $A$.
Moreover, let $M_2 = (Q_2,\Theta,\{\cent,\dollar\},\Gamma_2, \delta_2,q_0,Z_0,Q_{2,acc},Q_{2,rej})$ be an npda that recognizes $A$. For the truth-table $B$, denote by $M_3$ a dfa $(Q_3,\Sigma',\{\cent,\dollar\},\delta_3,q_0,Q_{3,acc},Q_{3,rej})$ that recognizes $B$, where $\Sigma'= \{\track{\sigma}{\alpha}\mid \sigma\in\Sigma,\alpha\in\{0,1,\#\}\}$.
Without loss of generality, we assume that both $M_2$ and $M_3$ \emph{make no $\lambda$-move}.  A new machine $N_1$ works in the following way. Given any input $x$, $N_1$ first checks if $B(x,1)=1$ (\ie $\track{x}{1\#^{|x|-1}}\in B$) by simulating $M_3$. Simultaneously, $N_1$ simulates $M_1$ on $x$. This simulation is possible because $M_3$ uses no stack. When $M_1$ tries to write a symbol, say, $b$, on the $i$th query tape, $N_1$ simulates one step of $M_2$'s computation during the scanning of $b$.

A transition function $\delta_{N_1}$ of $N_1$ is formally given as follows.
Assume that $\delta_1(q_0,\cent)= (p_1,b)$, $(p',\tau w)\in\delta_2(q_0,\cent,Z_0)$, and $\delta_3(q_0,\cent) = p_3$. In the case of $b\in\Theta$, if $(p_2,w')\in\delta_2(p',b,\tau)$, then we set  $((p_1,p_2,p_3),w'w)\in \delta_{N_1}((q_0,q_0,q_0),\cent,Z_0)$. If $b=\lambda$, then we define
$((p_1,p',p_3),\tau w)\in \delta_{N_1}((q_0,q_0,q_0),\cent,Z_0)$.
Given $\sigma\in\Sigma$ and $b\in\Theta\cup\{\lambda,\dollar\}$, assume that $\delta_1(q_1,\sigma)=(p_1,b)$, $(p_2,w)\in\delta_2(q_2,b,a)$, and $\delta_3(q_3,\track{\sigma}{\alpha})=p_3$. When $\alpha\in\{0,1\}$, let $\delta_{N_1}((q_1,q_2,q_3),\sigma,a)$ contain $((p_1,p_2,p_3,\alpha),w)$ if $b\neq\lambda$; $((p_1,q_2,p_3,\alpha),a)$ if $b=\lambda$. When $\alpha=\#$, let $\delta_{N_1}((q_1,q_2,q_3,\alpha),\sigma,a)$ contain $((p_1,p_2,p_3,\alpha),w)$ if $b\neq\lambda$; $((p_1,q_2,p_3,\alpha)$ if $b=\lambda$.
If $\delta_1(q_1,\dollar)=(p_1,\lambda)$, $(p_2,w)\in \delta_2(q_2,\dollar,a)$, and $\delta_3(q_3,\dollar)=p_3$, then we set $((p_1,p_2,p_3,\alpha),w)\in \delta_{N_1}((q_1,q_2,q_3),\dollar,a)$.
Let $Q_{N_1,rej}=\{(p_1,p_2,p_3,\alpha)\mid \alpha=0 \text{ or } \exists i\in[3][p_i\in Q_{i,rej}]\}$ and $Q_{N_1,acc}=\{(p_1,p_2,p_3,1)\mid \forall i\in[3][p_i\in Q_{i,acc}]\}$.

In a similar fashion, we define $N_0$, except that (i) it checks if $B(x,0)=1$ and (ii) if $M_2$ enters an accepting state (resp., a rejecting state), then $N_0$ enters a rejecting state (resp., an accepting state). It is important to note that this machine $N_0$ is {\em co-nondeterministic}, and thus   $L(N_0)$ is in $\co\cfl$, whereas $L(N_1)$ belongs to $\cfl$.
The definitions of $N_0$ and $N_1$ yield $L= L(N_0)\cup L(N_1)$.  Hence, $L$ belongs to $\cfl\vee \co\cfl$, which is included in $\cfl_{3}\subseteq \cfl_{4}$, as requested.

For the case $k\geq2$, we need to generalize the above argument. Assume that  $L\in\dfa_{ktt}^{A}$ for a certain language $A$ in $\cfl$. Let $M_1$ be a $ktt$-reduction dfa that reduces $L$ to $A$ and let $M_2$ be an npda recognizing $A$.  As before, let us assume that \emph{$M_2$ makes no $\lambda$-move}. In the following argument, we fix a string $b=b_1b_2\cdots b_k\in\{0,1\}^k$. Letting $\cfl^{(0)}_{2}=\co\cfl_{2}$ and $\cfl^{(1)}_{2}=\cfl_{2}$, we define $\cfl^{(b_1b_2\cdots b_k)}_{2}$ as an abbreviation of $\cfl^{(b_1)}_{2}\vee \cfl^{(b_2)}_{2}\vee \cdots \vee\cfl^{(b_k)}_{2}$.

Next, we will introduce two types of machines $N_{b,0}$ and $N_{b,1}$. The machine $N_{b,1}$  takes input $x$ and checks if $B(x,b)=1$. At the same time, $N_{b,1}$  guesses a number $i\in[k]$ and simulates $M_1$ on $x$ if $b_i=1$ (and, otherwise, it enters an accepting state instantly).  Whenever $M_1$ tries to write a symbol, say, $\sigma$ on its own query tape, $N_{b,1}$ simulates one step of $M_2$'s computation corresponding to the scanning of $\sigma$.  As for the other machine $N_{b,0}$, it simulates $M_1$ on $x$ if $b_i=0$ (and accepts instantly otherwise).
Note that $N_{b,0}$ is a co-nondeterministic machine.
For  each index $e\in\{0,1\}$, let $A_{b,e}$ be composed of strings accepted by $N_{b,e}$ and
define $A_{b} = A_{b,0}\cup A_{b,1}$, which obviously belongs to $\cfl_{2}$. It is not difficult to show that $L = \bigcup_{b\in\{0,1\}^k} A_{b}$; thus, $L$ is in $\bigvee_{b\in\{0,1\}^k}\cfl^{(b)}_{2}$, which equals $\bigvee_{b\in\{0,1\}^k}[ (\bigvee_{i:b_i=1}\cfl_{2}) \vee (\bigvee_{i:b_i=0}\co\cfl_{2})]$.
By a simple calculation, the last term coincides with $(\bigvee_{k=1}^{k2^{k-1}}\cfl_{2})\vee (\bigvee_{k=1}^{k2^{k-1}}\co\cfl_{2})$. Since $\co\cfl_{2}=\cfl\vee\co\cfl\subseteq \cfl\vee \cfl_{2}$, it follows that $\bigvee_{i=1}^{k2^{k-1}}\co\cfl_{2}\subseteq \bigvee_{i=1}^{k2^{k-1}}(\cfl\vee \cfl_{2}) = (\bigvee_{i=1}^{k2^{k-1}}\cfl) \vee (\bigvee_{i=1}^{k2^{k-1}}\cfl_{2}) = \bigvee_{i=1}^{k2^{k-1}}\cfl_{2}$. Therefore, $L$ belongs to $\bigvee_{i=1}^{k2^{k}}\cfl_{2}$. Note that $\cfl_{k2^{k+1}} = \bigvee_{k=1}^{k2^{k}}\cfl_{2}$ by Claim \ref{CFL-decomposition} in Section \ref{sec:Turing-reduction}. We thus conclude that  $L$ is in $\cfl_{k2^{k+1}}$.
\end{proof}

Claim \ref{FDAktt-include-CFL} implies that $\dfa_{btt}^{\cfl} = \bigcup_{k\geq1}\dfa_{ktt}^{\cfl} \subseteq \bigcup_{k\geq1}\cfl_{k2^{k+1}} \subseteq \bhcfl$, as requested.
\end{proof}

Finally, we will present the proof of Theorem \ref{manyone-btt-equiv}.
For this proof, we need to introduce the third simulation technique of
encoding a computation path of an npda into a string. Notice that a series of nondeterministic choices made by an npda $M$ uniquely specifies which computation path the npda $M$ has followed. We encode such a series into a unique string. Let $\delta$ be a transition function of $M$.  First, we rewrite $\delta$ in the following manner.
If $\delta$ has an entry of the form $\delta(q,\sigma,\tau) =\{(p_1,\xi_1,\zeta_1),(p_2,\xi_2,\zeta_2)\}$, then we split it into two distinguished transitions: $\delta(q,\sigma,\tau)=(p_1,\xi_1,\zeta_1)$ and $\delta(q,\sigma,\tau)=(p_2,\xi_2,\zeta_2)$. Let $D$ indicate a  collection of all such new transitions. Next, we index all such new transitions using numbers in the integer interval $[\parallel\! D \!\parallel] = \{1,2,\ldots,\parallel\! D \!\parallel\}$.
For simplicity, the notation $\ceilings{\delta(q,\sigma,\tau) = (p,\xi,\zeta)}$ denotes the number assigned to the corresponding transition. A series of transitions can be expressed as a series of those indices, which is regarded as a string over the alphabet $\Sigma= [ \parallel\! D \!\parallel ]$. We call such a string an {\em encoding of a computation path} of $M$.

\vs{-2}
\begin{proofof}{Theorem \ref{manyone-btt-equiv}}
In this proof, we will prove three inclusions: $\cfl_{btt}^{\cfl} \subseteq \nfa_{btt}^{\cfl}  \subseteq  \cfl_{m}(\dfa_{btt}^{\cfl}) \subseteq \cfl_{btt}^{\cfl}$.  Obviously, these inclusions together ensure the desired equations in the theorem. We begin with the first inclusion relation.

\begin{claim}\label{first-assertion}
$\cfl_{btt}^{\cfl} \subseteq \nfa_{btt}^{\cfl}$.
\end{claim}

\begin{proof}
Fix $k\in\nat^{+}$ and let $L$ be any language in $\cfl_{ktt}^{A}$ for a certain oracle $A\in\cfl$.  For this $L$, there is a $ktt$-reduction npda, say, $M_1$ that recognizes $L$ relative to $A$.
We want to define a new oracle nfa $N$ having $k+1$ query tapes
by modifying $M_1$ in a way similar to the proof of Claim \ref{NFA-to-DYCK}. On input $x$, $N$ simulates $M_1$ on $x$ using $k+1$ query tapes as follows.  
We use the first $k$ query tapes of $N$ to produce a tuple of query words made by $M$. When $M_1$ tries to push down $w$ in place of a certain symbol on the top of a stack, $N$ guesses this symbol, say, $\sigma$ and then writes down $\sigma' w^R$ on the $k+1$st query tape. If $M_1$ pops up a certain symbol,   then $N$ first guesses this symbol, say, $\sigma$ and writes down $\sigma'$ on the $k+1$st query tape. Finally, we define $B'$ as the set $\{[x,y_1,y_2,\cdots, y_k,1]^{T}\mid B(x,y_1y_2\cdots y_k)=1\}$.   Associated with $N$'s $k+1$st query, we choose an appropriate language $C$ in $DYCK$.  Define $A'=A\cup C$, assuming that $C$ is based on a different alphabet.
Note that $C\in\cfl$. As in the proof of Claim \ref{NFA-to-DYCK}, it is possible to prove that, for any $x$,
$x$ is in $L$ if and only if $B'(x,\chi^{A'}_{k+1}(y_1,y_2,\ldots,y_k,b))=1$ for a certain valid outcome $(y_1,y_2,\ldots,y_k,b)$ of $N$ on $x$. Hence, $L$ belongs to $\nfa_{(k+1)tt}^{A'}$, which is a subclass of $\nfa_{btt}^{\cfl}$.
\end{proof}

\begin{claim}
$\nfa_{btt}^{\cfl} \subseteq \cfl_{m}(\dfa_{btt}^{\cfl})$.
\end{claim}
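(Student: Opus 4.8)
The plan is to relocate the nondeterminism of the nfa truth-table reduction into a many-one $\cfl$-reduction, leaving behind a purely \emph{deterministic} truth-table reduction to $\cfl$. Fix $k\in\nat^{+}$ and a language $L\in\nfa_{ktt}^{A}$ with $A\in\cfl$, witnessed by an oracle nfa $N$ (having $k$ write-only query tapes) together with a regular truth table $B$. First I would build an $m$-reduction machine $M$ (an npda that in fact never touches its stack, since it only simulates the stackless $N$). On input $x$, the machine $M$ guesses a computation path $p$ of $N$ and simulates $N$ along $p$, accepting exactly when $p\in ACC_{N}(x)$. While doing so, $M$ records the whole run on a single query tape as a $(k+1)$-track string $w_p=[\tilde{x},\tilde{y}^{(1)}_{p},\ldots,\tilde{y}^{(k)}_{p}]^{T}$: at each step of $N$ it emits one composite symbol whose $0$th component is the input symbol just scanned (or $\natural$ on an input $\lambda$-move) and whose $i$th component is the symbol just written on $N$'s $i$th query tape (or $\natural$ if that head idled). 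This is precisely the head-synchronization-by-$\natural$ technique used earlier, now applied to $k+1$ heads at once; deleting the $\natural$'s from track $0$ returns $x$, and from track $i$ returns the $i$th query word $y^{(i)}_{p}$ of $N$ along $p$.

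Next I would define the oracle $D$ and show $D\in\dfa_{btt}^{\cfl}$. On an input of the form $w=[\tilde{x},\tilde{y}_1,\ldots,\tilde{y}_k]^{T}$, a \emph{deterministic} oracle finite automaton scans $w$ once and, for each $i\in[k]$, copies the $i$th track to its $i$th query tape while suppressing every $\natural$, thereby emitting precisely $y_i$; these $k$ words are queried against the single oracle $A\in\cfl$, producing the answer vector $\vec{b}=(\chi_{A}(y_1),\ldots,\chi_{A}(y_k))$. The associated truth table $B'$ is chosen so that $B'(w,\vec{b})=1$ iff $\track{x}{\vec{b}}\in B$, where $x$ is obtained from track $0$ of $w$ by deleting $\natural$'s. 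Since track projection and $\natural$-deletion are rational operations and $B$ is regular, $B'$ is regular; hence $D\in\dfa_{ktt}^{A}\subseteq\dfa_{btt}^{\cfl}$.

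Finally I would check the equivalence tying the two pieces together. For every accepting path $p\in ACC_{M}(x)=ACC_{N}(x)$, by construction $w_p\in D$ iff $B'(w_p,\vec{b})=1$ iff $\track{x}{z_p}\in B$, where $z_p=\chi_k^{A}(y^{(1)}_{p},\ldots,y^{(k)}_{p})$ is exactly the answer vector of $N$ along $p$. Therefore $x\in L$ iff some $p\in ACC_{N}(x)$ satisfies $\track{x}{z_p}\in B$ iff $M$ produces along some accepting path a query word lying in $D$, which is precisely the assertion $L\in\cfl_{m}^{D}$. Since $D\in\dfa_{btt}^{\cfl}$, taking the union over all $k$ and all $A\in\cfl$ yields $\nfa_{btt}^{\cfl}\subseteq\cfl_{m}(\dfa_{btt}^{\cfl})$, as desired.

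The main obstacle is the interface between the two machines: all of $x$ and the $k$ query words must be packed into the \emph{one} string $w_p$ produced by the many-one reduction, yet the oracle $D$ must remain a \emph{deterministic} finite-state device that simultaneously reconstructs each $y_i$ (to query $A$) and reconstructs $x$ (to evaluate the truth table). The $\natural$-padding is what makes the tracks equal length, so that a single left-to-right deterministic pass suffices; the delicate points to verify are that stripping the padding keeps each query inside $\cfl$ and keeps the rewritten truth table $B'$ regular. Granting these, the nondeterminism genuinely migrates from the truth-table stage to the many-one stage.
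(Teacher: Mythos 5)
Your proposal is correct and follows essentially the same strategy as the paper: an npda (never using its stack) guesses an accepting run of the oracle nfa and serializes that run, via $\natural$-padded synchronized tracks, into a single query word, thereby migrating all nondeterminism into the $\cfl_{m}$ stage and leaving a deterministic finite-state truth-table reduction to the original oracle $A\in\cfl$. The only difference is cosmetic: the paper's reduction machine records the sequence of nondeterministic choices and has the oracle dfa deterministically re-simulate the nfa to regenerate the $k$ query words, whereas yours records the $k$ query words themselves on extra tracks and has the dfa merely project them out while stripping $\natural$'s --- both interfaces carry the same information, and in both cases regularity of the rewritten truth table rests on the same closure-under-inverse-homomorphism observation.
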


\begin{proof}
Let $k\geq1$. Assume that  $L\in\nfa_{ktt}^{A}$ with a certain oracle $A\in\cfl$. Let $M = (Q,\Sigma,\{\cent,\dollar\},\Theta,\Gamma,\delta,q_0,Z_0,Q_{acc},Q_{rej})$ denote a $k$tt-reduction npda that reduces $L$ to $A$.  To show that $L\in \cfl_{m}(\dfa_{btt}^{A})$, we first define an oracle npda $N_1 =(Q_1,\Sigma,\{\cent,\dollar\},\Theta,\Gamma_1, \delta_1,q_0,Z_0,Q_{acc},Q_{rej})$ as follows. Given any input $x$, $N_1$ simulates $M$ by guessing a string $y$ that may encodes a computation path of $M$.
At any time when $M$ tries to write any symbol on its own query tapes, $N_1$ simply ignores this symbol and continues the simulation because $y$ already embodies the information on this symbol.  The machine $N_1$ eventually produces $\track{\tilde{x}}{y}$ on its query tape and then enters the same halting state as $M$'s, where $\tilde{x}$ is an appropriate  $\natural$-extension of $x$.

To be more precise, $\delta_{1}$ satisfies the following. Let $(p,Z_0,\track{\natural}{d})\in\delta_1(q_0,\cent,Z_0)$ if $d = \ceilings{\delta_M(q_0,\cent) = (p,\tau_1,\ldots,\tau_k)}$.
For any $q\in Q-Q_{halt}$ and $\sigma\in \Sigma$, let $(p,a,\track{\sigma'}{d})\in \delta_{1}(q,\sigma,a)$  if $d$ equals $\ceilings{\delta_{M}(q,\sigma)=(p,\tau_1,\tau_2,\ldots,\tau_k)}$ for certain $\tau_1,\tau_2,\ldots,\tau_k\in \Theta\cup\{\lambda\}$, where $\sigma'=\natural$ if $\sigma=\lambda$, and $\sigma'=\sigma$ otherwise.
Finally, let $(p_{\dollar},a,\track{\natural}{d})\in \delta_1(q,\lambda,a)$ and $(p,a,\lambda)\in \delta_1(p_{\dollar},\dollar,a)$  if $d=\ceilings{\delta_M(q,\dollar)=(p,\lambda,\lambda,\ldots,\lambda)}$. All other transitions enter rejecting states.

Next, we define an oracle dfa $N_2$ having $k$ query tapes. On input of the form $\track{\tilde{x}}{y}$, $N_2$ deterministically simulates $M$ on $x$ by following a series of nondeterministic choices specified by $y$, and $N_2$ produces $k$ query words as $M$ does. In the case where $y$ is not any valid accepting computation path of $M$, $N_2$ rejects $\track{\tilde{x}}{y}$ in order to invalidate the produced query words.

Formally, let $N_2= (Q_2,\Theta,\{\cent,\dollar\},\Theta_2,
\delta_2,\overline{q}_0,Q_{acc},Q_{2,rej})$ with $Q_{2,rej}\supseteq Q_{rej}$, whose transition function is defined as follows.
Let $\delta_2(\overline{q}_0,\cent) = (\overline{q}_0,\lambda,\ldots,\lambda)$ and $\delta_2(\overline{q}_0,\track{\natural}{d}) = (p,\tau_1,\ldots,\tau_k)$ if $d$ equals $\ceilings{\delta_M(q_0,\cent) = (p,\tau_1,\ldots,\tau_k)}$.
For $q\in Q_2-Q_{2,halt}\cup\{\overline{q}_0\}$, let   $\delta_{2}(q,\track{\natural}{d}) = (p,\tau_1,\ldots,\tau_k)$ if $d$ equals $\ceilings{\delta_{M}(q,\lambda)=(p,\tau_1,\ldots,\tau_k)}$.
For any $\sigma\neq\natural$, let $\delta_{2}(q,\track{\sigma}{d}) = (p,\tau_1,\ldots,\tau_k)$ if $d$ is $\ceilings{\delta_{M}(q,\sigma)=(p,\tau_1,\ldots,\tau_k)}$ and $p\notin Q_{acc}$; by contrast, when $p\in Q_{acc}$, let $\delta_2(q,\track{\sigma}{d}) = (p_{\dollar},\dollar)$ and $\delta_2(p_{\dollar},\dollar) = (p,\lambda)$.
For other pairs $(q,\track{\sigma}{d})$, $\delta_{2}$ maps them to  appropriate rejecting states.

It is not difficult to show
that $N_1$ $m$-reduces $L$ to $L(N_2,A)$ and that $L(N_2,A)$ is in $\dfa_{btt}^{A}$.  Therefore, $L$ belongs to $\cfl_{m}^{L(N_2,A)} \subseteq \cfl_{m}(\dfa_{btt}^{\cfl})$.
\end{proof}

\begin{claim}\label{third-assertion}
$\cfl_{m}(\dfa_{btt}^{\cfl}) \subseteq \cfl_{btt}^{\cfl}$.
\end{claim}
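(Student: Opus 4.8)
The plan is to compose the outer many-one npda reduction with the inner \emph{stackless} $\dfa$-bounded-truth-table reduction into a single npda bounded-truth-table reduction. Start from $L\in\cfl_{m}^{D}$ with $D\in\dfa_{ktt}^{A}$ for some $A\in\cfl$ and a constant $k\in\nat^{+}$, and fix an $m$-reduction machine $M_L$ reducing $L$ to $D$ (so $x\in L$ iff $w_p\in D$ for some accepting path $p$ of $M_L$, where $w_p$ is the query word emitted along $p$), together with an oracle dfa $M_D$ that produces query words $u_1,\dots,u_k$ from its input and a regular truth-table $B_D$ with $w\in D$ iff $B_D(w,\chi^{A}_{k}(u_1,\dots,u_k))=1$. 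The governing idea is that $M_D$ and the dfa's evaluating $B_D$ use no stack, so they can be folded into the finite control of a new npda $N$ whose single stack is devoted entirely to simulating $M_L$; this is the same ``fold a stackless machine into the finite control'' device used in the proof of Claim~\ref{CFL(NFA)-equal-CFL} and Lemma~\ref{extendible-inclusion}.

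First I would describe $N$'s simulation. On input $x$, $N$ runs $M_L$ on $x$, using its stack for $M_L$'s stack and its nondeterminism for $M_L$'s. Each time $M_L$ would emit a symbol $\sigma$ of its query word $w$, $N$ feeds $\sigma$ into a copy of the deterministic machine $M_D$ held in the finite control; since $M_D$ reads $w$ left to right, $N$ thereby produces $u_1,\dots,u_k$ on $k$ of its write-only query tapes, with no $\natural$-alignment needed because $w$ itself is never written out but only consumed internally. Simultaneously, $N$ runs, for each $b\in\{0,1\}^{k}$, the dfa recognizing the regular language $\{w: B_D(w,b)=1\}$; once $w$ is complete, the joint state of these finitely many dfa's determines, in $N$'s finite control, the Boolean function $g_w\colon\{0,1\}^{k}\to\{0,1\}$ given by $g_w(b)=B_D(w,b)$. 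On accepting paths of $M_L$ (the only relevant ones) $N$ thus holds both $u_1,\dots,u_k$ and $g_w$.

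The hard part will be that the $\cfl_{ktt}$ evaluator, \ie the regular truth-table of $N$, is permitted to depend only on $x$ and on the vector of oracle answers, whereas the decision $N$ must realize is $g_{w}\!\big(\chi^{A}_{k}(u_1,\dots,u_k)\big)$, and $g_w$ is determined by the internally generated word $w$, which the evaluator never sees. To bridge this I would \emph{transmit} $g_w$ through extra ``probe'' queries: besides $u_1,\dots,u_k$, let $N$ emit $2^{k}$ further query words, where the $j$th one is a fixed string lying in the oracle when the $j$th entry of the truth-table of $g_w$ is $1$ and a fixed string lying outside the oracle otherwise. Taking the augmented oracle $A'=\{0u: u\in A\}\cup P$, where $P$ is a fixed regular set over a disjoint tag $1$ that contains the designated ``yes'' probe string and excludes the ``no'' probe string (so that $A'\in\cfl$), the answers to the probe queries reproduce the entire table of $g_w$, while the answers to $u_1,\dots,u_k$ give the bits $z_i=\chi^{A}_{1}(u_i)$. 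Then $N$ is a $\cfl_{k'tt}^{A'}$-machine with the constant $k'=k+2^{k}$, and its evaluator $\hat B$, being a fixed Boolean function of the $k'$ answer bits (hence regular), first reconstructs $g_w$ from the last $2^{k}$ bits and then accepts iff $g_w(z_1,\dots,z_k)=1$.

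Finally I would verify correctness. An accepting path of $N$ corresponds to an accepting path $p$ of $M_L$ with query word $w_p$, and along it $\hat B$ evaluates to $g_{w_p}\!\big(\chi^{A}_{k}(u_1,\dots,u_k)\big)$, which equals $1$ exactly when $\liverson w_p\in D\riverson$; hence some accepting path of $N$ satisfies $\hat B=1$ precisely when some accepting path of $M_L$ has $w_p\in D$, \ie precisely when $x\in L$. To meet the requirement $ACC_{N}(x)\neq\setempty$ I would add a single dummy accepting path emitting probes that encode the constant-$0$ function, which forces $\hat B=0$ and so does not affect membership. Since $A'\in\cfl$ and $k'$ is a constant, this yields $L\in\cfl_{k'tt}^{A'}\subseteq\cfl_{btt}^{\cfl}$, as required. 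The sole genuine obstacle is the evaluator-visibility issue resolved by the probe-query encoding; everything else is the routine stackless-simulation bookkeeping.
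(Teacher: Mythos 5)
Your proposal is correct, and its skeleton is exactly the paper's: devote the npda's stack and nondeterminism to the outer $m$-reduction machine, and fold the stackless oracle dfa (together with the finitely many dfa's obtained from the regular truth table by fixing each answer vector $b\in\{0,1\}^k$) into the finite control, so that the composed machine emits the inner reduction's $k$ query words directly. Where you genuinely diverge is after that point. The paper's own proof stops there and simply asserts that the composed machine ``$k$tt-reduces $L$ to $B$'' with the same $k$ queries and, implicitly, the same truth table; read literally against the definitions, this skips precisely the issue you isolate: the inner evaluator is a function of the intermediate word $w_p$ produced by the outer machine, different accepting paths yield different residual functions $g_{w_p}$, and the evaluator of the composed reduction is allowed to see only $x$ and the answer bits, never $w_p$. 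Your probe-query device closes this gap: the $2^{k}$ extra queries to fixed strings inside/outside the augmented oracle $A'=\{0u\mid u\in A\}\cup P$ (still context-free) transmit the truth table of $g_{w_p}$ through the answer vector, so one fixed regular evaluator --- decode the table from the last $2^{k}$ bits, apply it to the first $k$ --- works uniformly over all paths, at the harmless cost of $k+2^{k}$ (still constantly many) queries. Your dummy accepting path likewise discharges the requirement $ACC_{N}(x)\neq\setempty$, which the paper does not address. In short: same decomposition, but your argument supplies the mechanism that makes the final step actually type-check against the definition of $\cfl_{btt}$, where the paper's proof only gestures; the paper's version buys brevity, yours buys completeness.
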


\begin{proof}
To prove this claim, take any  oracle $A\in\dfa_{btt}^{\cfl}$ and assume that $L\in\cfl_{m}^{A}$ via an $m$-reduction npda $M_1$. Fixing $k\in\nat^{+}$, let  $M_2$ be a $k$tt-reduction dfa that reduces $A$ to a certain language $B$ in $\cfl$. In what follows, we want to define an oracle npda $N$ having $k$ query tapes.
Our construction of $N$ is quite similar to the one in the proof of Claim \ref{CFL(NFA)-equal-CFL}. On input $x$, $N$ simulates $M_1$ on $x$. When $M_1$ writes a symbol, say, $b$, $N$ simulates one or more steps (including a certain number of $\lambda$-moves) of $M_2$'s computation after its tape head scans $b$. Finally, $N$ outputs $M_2$'s $k$ query strings.


The above definition shows that $N$ $k$tt-reduces $L$ to $B$, and thus $L$ is in $\cfl_{ktt}^{B}\subseteq \cfl_{btt}^{\cfl}$.
\end{proof}

Combining Claims \ref{first-assertion}--\ref{third-assertion} proves that $\cfl_{btt}^{\cfl} = \nfa_{btt}^{\cfl} = \cfl_{m}(\dfa_{btt}^{\cfl})$. This  last term is further equal to $\cfl_{m}^{\bhcfl}$ by Lemma \ref{BHCFL-DFA}.  This completes the proof.
\end{proofof}

\subsection{Languages That are Low for CFL}\label{sec:low-set}

We will briefly discuss oracles that contain little information to help underlying oracle machines improve their recognition power. In Section \ref{sec:many-one-reduction}, we have introduced a many-one relativized family $\cfl_{m}^{A}$ relative to oracle $A$.
For a more general treatment, we  consider any language family $\CC$ whose  many-one relativization $\CC_{m}^{A}$ is properly defined. To specify such a family $\CC$, we succinctly call $\CC$ {\em many-one relativizable}. Analogously, we define the notion of \emph{Turing relativizability} and \emph{truth-table relativizability}. For example, $\cfl$ is many-one, truth-table, and Turing  relativizable.

We first assert that regular languages, when playing as oracles, have no power to increase the computational complexity of the relativizable family  $\cfl$.

\begin{lemma}\label{oracle-REG}
$\cfl = \cfl_{m}^{\reg} = \cfl_{btt}^{\reg} = \cfl_{T}^{\reg}$.
\end{lemma}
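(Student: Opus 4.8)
The plan is to sandwich all four families between $\cfl$ and itself. The containments $\cfl\subseteq \cfl_{m}^{\reg}\subseteq \cfl_{btt}^{\reg}$ and $\cfl_{m}^{\reg}\subseteq \cfl_{T}^{\reg}$ come essentially for free: the inclusion $\cfl_{m}^{\reg}\subseteq\cfl_{btt}^{\reg}$ is a special case of Lemma \ref{m-vs-ktt-compare}, and $\cfl_{m}^{\reg}\subseteq\cfl_{T}^{\reg}$ follows by taking unions over $A\in\reg$ in the first inclusion of Lemma \ref{basic-Turing}. For $\cfl\subseteq\cfl_{m}^{\reg}$, given $L\in\cfl$ recognized by an npda $M$, I would build an $m$-reduction machine that simulates $M$ and, exactly when $M$ enters an accepting state, writes the single symbol $1$ on its query tape and accepts; querying the fixed regular oracle $\{1\}$ then realizes $L$. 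Hence it remains only to prove the two reverse containments $\cfl_{T}^{\reg}\subseteq\cfl$ and $\cfl_{btt}^{\reg}\subseteq\cfl$, after which $\cfl\subseteq\cfl_{m}^{\reg}\subseteq\cfl_{btt}^{\reg}\subseteq\cfl$ and $\cfl\subseteq\cfl_{m}^{\reg}\subseteq\cfl_{T}^{\reg}\subseteq\cfl$ force all four families to coincide with $\cfl$.

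The engine behind both reverse containments is that a regular oracle is recognized by a finite automaton that uses \emph{no} stack, so its computation can be folded into the finite-state control of a single npda. For $\cfl_{T}^{\reg}\subseteq\cfl$, fix $A\in\reg$ recognized by a DFA $D_A$ and a $T$-reduction machine $M$ witnessing $L\in\cfl_{T}^{A}$. I would construct an ordinary npda $M'$ that simulates $M$ step for step, devoting its own stack solely to $M$'s stack. While $M$ writes the current query word symbol by symbol, $M'$ feeds each written symbol into a simulated copy of $D_A$ kept in its finite control (reset to $D_A$'s start state at the beginning of each query). The instant $M$ enters $q_{query}$, the simulated $D_A$ sits in a state that reveals whether the query word lies in $A$, so $M'$ drives $M$ into $q_{yes}$ or $q_{no}$ accordingly and continues. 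Since $D_A$ needs no auxiliary memory, this bookkeeping costs only finitely many control states, so $M'$ is a genuine npda with $L(M')=L(M^A)$, giving $L\in\cfl$.

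For $\cfl_{btt}^{\reg}\subseteq\cfl$ the same folding idea applies, but the evaluator must be handled with care, and this is the step I expect to be the main obstacle. Fix $L\in\cfl_{ktt}^{A}$ with $A\in\reg$, truth table $B\in\reg$, and $k$-tape npda $N$. The difficulty is that the answer string $z\in\{0,1\}^{k}$, needed to test $\track{x}{z}\in B$, becomes available only after the one-way input head has already consumed $x$, so $M'$ cannot re-read $x$ to evaluate $B$. I would resolve this by having $M'$ \emph{guess} the value $z\in\{0,1\}^{k}$ at the outset (only $2^{k}$ possibilities, hence finitely many control branches); for the fixed guessed $z$, the section $B_{z}=\{w\mid B(w,z)=1\}$ is a fixed regular language that $M'$ can check on the fly with a DFA run in its control while reading $x$. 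Simultaneously $M'$ simulates $N$ (using its stack for $N$'s stack) and runs $k$ copies of $D_A$, one per query tape, to compute the \emph{actual} answer bits along the chosen accepting path; $M'$ accepts only when that path is accepting, the computed answer bits equal the guessed $z$, and $x\in B_{z}$. This is precisely the condition that $\track{x}{z_p}\in B$ for some $p\in ACC_{N}(x)$, so $L(M')=L$ and $L\in\cfl$, which completes the proof.
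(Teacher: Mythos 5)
Your proof is correct and rests on the same engine as the paper's: a regular oracle needs no stack, so its recognizing dfa can be folded into the finite control of a single npda, and the easy inclusions are dispatched via Lemmas \ref{basic-Turing} and \ref{m-vs-ktt-compare} exactly as in the paper (the paper realizes $\cfl\subseteq\cfl_{m}^{\reg}$ with the oracle $\Sigma^*$ and a copy-the-input reduction rather than your $\{1\}$ oracle, an immaterial difference). Where you go beyond the paper is the $btt$ case. The paper's proof simply simulates the reduction machine, feeds the written query symbols to dfa copies kept in the inner states, and ``accepts if both machines enter accepting states''; it never explains how the regular truth table $B$ is evaluated, even though $B$ needs the pair $(x,z_p)$ and the one-way input head has already consumed $x$ by the time the answer string $z_p$ is determined. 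Your guess-and-verify device --- nondeterministically fix $z\in\{0,1\}^{k}$ up front, run a dfa for the regular section $B_{z}$ on $x$ while reading it, then check at the end that the guessed $z$ agrees with the answer bits computed by the $k$ copies of $D_A$ along an accepting path --- is precisely the missing bookkeeping (alternatively one could run all $2^{k}$ section-dfas deterministically in parallel in the finite control and select the relevant one once $z_p$ is known). So your writeup follows the paper's route but is, at this step, more complete than the paper's own proof.
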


\begin{proof}
Since $\cfl\subseteq \cfl_{m}^{\Sigma^*}$ and $\Sigma^*\in\reg$ for $\Sigma=\{0,1\}$, it follows that $\cfl\subseteq \cfl_{m}^{\reg}$.  Moreover, by Lemmas \ref{basic-Turing} and \ref{m-vs-ktt-compare},  it holds that $\cfl_{m}^{\reg}\subseteq \cfl_{btt}^{\reg}$ and $\cfl_{m}^{\reg}\subseteq \cfl_{T}^{\reg}$.
To show that $\cfl_{btt}^{\reg} \subseteq\cfl$,  take any language $L$ in $\cfl_{ktt}^{A}$ for a certain index $k\in\nat^{+}$ and a certain language $A\in\reg$. Let $M$ be an oracle npda equipped with two write-only query tapes that $ktt$-reduces $L$ to $A$.  In addition, let $N$ denote a dfa recognizing $A$.
We aim at proving that $L\in\cfl$.   Let us consider the following algorithm.  We  start simulating  $M$ on each input without using any write-only tapes.  When $M$  tries to write down a $k$-tuple of symbols $(s_1,s_2,\ldots,s_k)$, we instead simulate $N$ using only inner states.  Note that we do not need to keep on the query tapes any information on $(s_1,s_2,\ldots,s_k)$. Along each computation path, we accept the input if both $M$ and $N$ enter accepting states.  Since the above algorithm requires no query tapes, it can be implemented by a certain npda. Moreover,  since the algorithm correctly recognizes $L$, we conclude that $L$ is in $\cfl$.

Based on a similar idea, it is possible to prove that $\cfl_{T}^{\reg}\subseteq \cfl$.
\end{proof}

Assuming that $\CC$ is many-one relativizable, a language $A$ is called {\em many-one low} for  $\CC$ if $\CC_{m}^{A}\subseteq \CC$ holds. We define $\mlow\CC$ to be the set of all languages that are low for $\CC$; that is, $\mlow\CC = \{A\mid \CC_{m}^{A}\subseteq \CC\}$.  Similarly, we define $\bttlow\CC$  and $\Tlow\CC$ as the collections of all languages that are ``btt low for $\CC$'' and ``Turing low for $\CC$,'' respectively, provided that $\CC$ is Turing and truth-table relativizable.

\begin{lemma}\label{lowness-result}
\begin{enumerate}
  \setlength{\topsep}{-2mm}%
  \setlength{\itemsep}{1mm}
  \setlength{\parskip}{0cm}

\item $\reg\subseteq \Tlow\cfl \cap \bttlow\cfl \subseteq \Tlow\cfl \cup \bttlow\cfl  \subseteq \mlow\cfl \subsetneq \cfl$.
\item $\bttlow\cfl \subsetneq \cfl\cap\co\cfl$ and $\Tlow\cfl \subsetneq \cfl\cap\co\cfl$.
\end{enumerate}
\end{lemma}

\begin{proof}
(1)  {}From Lemma \ref{oracle-REG}, it holds that  $\cfl_{T}^{\reg} = \cfl_{btt}^{\reg} =\cfl$.  It thus follows that $\reg\subseteq \Tlow\cfl \cap \bttlow\cfl$. The third inclusion comes from the fact that $\cfl_{m}^{A}\subseteq \cfl_{btt}^{A} \cap\cfl_{T}^{A}$ for any oracle $A$. The last inclusion is shown as follows. Take any language $A$ in $\mlow\cfl$. This means that $\cfl_{m}^{A}\subseteq \cfl$. Since $A$ belongs to $\cfl_{m}^{A}$,we conclude that  $A\in\cfl_{m}^{A}\subseteq \cfl$.
Next, let us show that $\cfl\neq \mlow\cfl$.
If $\cfl=\mlow\cfl$ holds, then we derive
$\cfl_{m}^{\cfl} \subseteq \cfl$. Since $\cfl(2)\subseteq\cfl_{m}^{\cfl}$ by Claim \ref{CFL(2)-bound}, we immediately conclude that $\cfl(2)=\cfl$. This is indeed a contradiction against the well-known result that $\cfl(2)\neq \cfl$. Therefore, $\mlow\cfl\neq \cfl$ must hold.

(2) Consider the case of $\bttlow\cfl$. For the containment between $\bttlow\cfl$ and $\cfl\cap\co\cfl$, let us consider any language $A$ in $\bttlow\cfl$; namely, $\cfl_{btt}^{A} \subseteq \cfl$. Since $A,\overline{A}\in\cfl_{btt}^{A}$, we obtain $A,\overline{A}\in\cfl$. Thus, $A$ must belong to $\cfl\cap\co\cfl$.

For the separation between $\bttlow{\cfl}$ and $\cfl\cap\co\cfl$,
we contrarily assume that
$\bttlow\cfl =\cfl\cap\co\cfl$. Thus, $\cfl_{btt}^{\cfl\cap\co\cfl}=\cfl$. This implies $\cfl_{m}^{\dcfl}\subseteq \cfl$, because $\dcfl\subseteq \cfl\cap\co\cfl$ and $\cfl_{m}^B\subseteq \cfl_{btt}^B$ for any oracle $B$. However, this is a contradiction because $\cfl_{m}^{\dcfl}\nsubseteq \cfl/n$ by Proposition \ref{CFL_m^CFL-separation} and Lemma \ref{non-closure-many-one}.

The case of $\Tlow\cfl$ can be similarly handled.
\end{proof}

At this moment, it is not clear whether all inclusion relations in Lemma \ref{lowness-result}(1), except for the last one, are \emph{proper inclusions} although they are expected to be proper.

\section{The CFL Hierarchy}\label{sec:CFL-hierarchy}

Nondeterministic polynomial-time Turing reductions have been used to build the polynomial  hierarchy, each level of which is generated from its lower level by applying such reductions. With use of  our Turing $\cfl$-reducibility defined in Section \ref{sec:tt-reduction} instead, a similar construction can be applied to $\cfl$, introducing a unique hierarchy, which we fondly call the \emph{CFL hierarchy}.
Throughout this section, we intend to explore fundamental properties of this new intriguing hierarchy.

\subsection{Turing CFL-Reducibility and a Hierarchy over CFL}\label{sec:Turing-reduction}

In Section \ref{sec:tt-reduction}, we have seen the usefulness of Turing CFL-reducibility. We apply Turing $\cfl$-reductions to $\cfl$, level by level, and we build a meaningful hierarchy, called succinctly the {\em CFL hierarchy}, whose $k$th level consists of three language families denoted by $\deltacflt{k}$, $\sigmacflt{k}$, and $\picflt{k}$. To be more precise,
for each level $k\geq1$, we set $\deltacflt{1}=\dcfl$, $\sigmacfl{1} = \cfl$, $\picflt{k}=\co\sigmacflt{k}$, $\deltacflt{k+1}=\dcfl_{T}(\sigmacflt{k})$, and $\sigmacflt{k+1}=\cfl_{T}(\sigmacflt{k})$. Collectively, we set $\cflh = \bigcup_{k\in\nat^{+}}\sigmacflt{k}$. As done for the polynomial hierarchy, the term ``CFL hierarchy'' in this paper refers to not only the collection $\{\deltacfl{k},\sigmacfl{k},\picfl{k}\mid k\in\nat^{+}\}$ but also the language family $\cflh$.

The $\cfl$ hierarchy can be used to categorize the complexity of typical non-context-free languages discussed in most introductory textbooks, \eg \cite{HMU01,Lin06}.  We will review such languages that naturally fall into the $\cfl$ hierarchy.

\begin{example}\label{ex:Sq-Prim}
We have seen in Example \ref{ex:DUP_2} the languages $Dup_2=\{xx\mid x\in\{0,1\}^*\}$ and $Dup_3=\{xxx\mid x\in\{0,1\}\}$, which are both in $\cfl_{m}^{\cfl}$.
Note that, since $\cfl_{m}^{A}\subseteq \cfl_{T}^{A}$ for any oracle $A$ by Lemma \ref{basic-Turing}, every language in $\cfl_{m}^{\cfl}$ belongs to $\cfl_{T}^{\cfl} = \sigmacflt{2}$. Therefore, $Dup_2$ and $Dup_3$ are members of $\sigmacflt{2}$. In addition, as shown in Example \ref{ex:CFL(k)},
the language $Sq=\{0^n1^{n^2}\mid n\geq1\}$ is in $\cfl_{m}^{\cfl}$ while $Prim=\{0^n\mid \text{ $n$ is a prime number }\}$ is in $\co(\cfl_{m}^{\cfl})$.
Since $\cfl_{m}^{\cfl}\subseteq\cfl_{T}^{\cfl}\subseteq \sigmacfl{2}$, we conclude that $Sq$ is in $\sigmacflt{2}$ and $Prim$ is in $\picflt{2}$.
A similar but more involved example is the language $MulPrim=\{0^{mn}\mid \,\text{$m$ and $n$ are prime numbers}\,\}$.
Consider the following three npda's.
The first machine $M_1$ guesses $n\in\nat^{+}$ and  nondeterministically partitions a given input $0^k$  into $(y_1,y_2,\ldots,y_n)$ and produces $w=y_1\natural y_2\natural \cdots \natural y_n$ on a query tape by inserting a new symbol $\natural$. During this process, $M_1$ pushes $u=y_1\#1^n$ to a stack, where $\#$ is a fresh symbol. In the end, $M_1$ appends $\# u$ to $w$ on the query tape. In receiving $w\# u$ as an input, the second machine $M_2$ checks whether $y_{2i-1}=y_{2i}$ for each $i\in[1,\floors{n/2}]_{\integer}$.  At any moment when the checking process fails, $M_2$ enters an appropriate  rejecting state and halts. Next, $M_2$ nondeterministically partitions $y_1$ into $(z_1,z_2,\ldots,z_e)$ and $1^n$ into $(x_1,x_2,\ldots,x_d)$ and then it produces $w\# u'\# v'$ on its query tape, where $u' = z_1\natural z_2\natural \cdots \natural z_e$ and $v'=x_1\natural x_2\natural \cdots \natural x_d$.  In receiving  $w\#u'\# v'$, the third machine $M_3$ checks if $y_{2i}=y_{2i+1}$ for all $i$ with $1\leq i\leq \floors{(n-1)/2}$.
Whenever  this process fails, $M_3$ instantly halts in a ceratin rejecting state. Next, $M_3$ nondeterministically chooses a bit $b$. If $b=0$, then $M_3$ checks if $z_{2i-1}=z_{2i}$ and also $x_{2i-1}=x_{2i}$ for $i\in[1,\floors{n/2}]_{\integer}$; on the contrary, if $b=1$, then $M_3$ checks that both $z_{2i}=z_{2i+1}$ and $x_{2i}=x_{2i+1}$ for any $i$ satisfying $1\leq i\leq \floors{(n-1)/2}$. If this checking process is successful, then $M_3$ enters an appropriate rejecting state; otherwise, it enters an accepting state. By combining those three machines,  $MulPrim$ can be shown to belong to $\cfl_{m}(\co(\cfl_{m}^{\co\cfl}))$, which is contained  in $\sigmacflt{3}$.
\end{example}


Several basic relationships among the components of the $\cfl$ hierarchy are exhibited in the next lemma. More structural properties will be discussed later in Section \ref{sec:structure-hierarchy}.

\begin{lemma}
Let $k$ be any integer satisfying $k\geq1$.
\begin{enumerate}\vs{-1}
  \setlength{\topsep}{-2mm}%
  \setlength{\itemsep}{1mm}
  \setlength{\parskip}{0cm}

\item $\cfl_{T}(\sigmacflt{k}) = \cfl_{T}(\picflt{k})$ and $\dcfl_{T}(\sigmacflt{k}) = \dcfl_{T}(\picflt{k})$.
\item $\sigmacflt{k}\cup \picflt{k}\subseteq \deltacflt{k+1}\subseteq \sigmacflt{k+1}\cap \picflt{k+1}$.
\end{enumerate}
\end{lemma}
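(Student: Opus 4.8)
The plan is to handle the three assertions separately, obtaining the first two almost entirely from Lemma~\ref{basic-Turing} and a deterministic counterpart of it. First I would record that $\dcfl_{T}^{A}=\dcfl_{T}^{\overline{A}}$ for every language $A$; this is proven by exactly the construction in Lemma~\ref{basic-Turing} (simulate the reduction machine but relabel each oracle answer $b$ as $\overline{b}$), and since that construction alters only the handling of answers it preserves determinism. Writing $\cfl_{T}(\CC)=\bigcup_{A\in\CC}\cfl_{T}^{A}$ and recalling that $\picflt{k}=\{\overline{A}\mid A\in\sigmacflt{k}\}$, assertion~1 then follows from the single chain $\cfl_{T}(\sigmacflt{k})=\bigcup_{A\in\sigmacflt{k}}\cfl_{T}^{A}=\bigcup_{A\in\sigmacflt{k}}\cfl_{T}^{\overline{A}}=\cfl_{T}(\picflt{k})$, where the last step uses that $\overline{A}$ ranges over $\picflt{k}$ as $A$ ranges over $\sigmacflt{k}$; the identical computation with $\dcfl_{T}$ in place of $\cfl_{T}$ gives the deterministic equality.

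For assertion~2 I would establish four containments. The inclusion $\sigmacflt{k}\subseteq\deltacflt{k+1}=\dcfl_{T}(\sigmacflt{k})$ is witnessed by the trivial self-reduction in which a dpda copies its input verbatim onto the query tape, issues one query, and accepts iff the answer is yes. For $\picflt{k}\subseteq\deltacflt{k+1}$ I would combine this self-reduction with the $\dcfl_{T}$-form of assertion~1, yielding $\picflt{k}\subseteq\dcfl_{T}(\picflt{k})=\dcfl_{T}(\sigmacflt{k})$. The inclusion $\deltacflt{k+1}\subseteq\sigmacflt{k+1}$ holds because every dpda is a special npda, so $\dcfl_{T}^{A}\subseteq\cfl_{T}^{A}$ for each $A$. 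Finally $\deltacflt{k+1}\subseteq\picflt{k+1}$ reduces to closure of $\deltacflt{k+1}$ under complementation: since a $\dcfl$ $T$-reduction machine is required to halt on all inputs under every oracle, exchanging its accepting and rejecting states computes the complement, so $L\in\deltacflt{k+1}$ gives $\overline{L}\in\dcfl_{T}(\sigmacflt{k})\subseteq\sigmacflt{k+1}$, i.e.\ $L\in\co\sigmacflt{k+1}=\picflt{k+1}$.

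Assertion~3 is the substantive one, and I would prove by induction on $k$ that $\sigmacflt{k}\subseteq\mathrm{DSPACE}(O(n))$, whence $\cflh=\bigcup_{k}\sigmacflt{k}\subseteq\mathrm{DSPACE}(O(n))$. The base case $\sigmacflt{0}=\dcfl\subseteq\cfl$ is covered by the classical inclusion $\cfl\subseteq\mathrm{DSPACE}(O(\log^{2}n))\subseteq\mathrm{DSPACE}(O(n))$. For the inductive step, given $L\in\sigmacflt{k+1}=\cfl_{T}(\sigmacflt{k})$ via an oracle npda $M$ and an oracle $A\in\sigmacflt{k}\subseteq\mathrm{DSPACE}(O(n))$, I would build a deterministic machine that depth-first searches the tree of $M$'s nondeterministic choices on input $x$, maintaining at each point the current configuration (control state, input-head position, and stack) together with the query word currently being written; whenever $M$ enters the query state it evaluates $\chi^{A}$ on that pending word by running the $\mathrm{DSPACE}(O(n))$ machine for $A$, feeds back $q_{yes}$ or $q_{no}$, and clears the word. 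Backtracking is done by storing only the sequence of nondeterministic choices made along the current path and re-executing $M$ from the start.

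The main obstacle, and the point I expect to be delicate, is the space accounting, which hinges on a resource bound that I would isolate as a separate lemma: on every input of length $n$, each computation path of a halting (oracle) npda has length $O(n)$, and hence its stack height and every query word it produces also have length $O(n)$. This is exactly where the standing requirement that all computation paths terminate is indispensable: by K\"onig's lemma the finitely-branching computation tree is finite, and a pumping argument on surface configurations (state paired with top-of-stack symbol) shows that between two consecutive input reads the stack can rise by at most a constant, since otherwise the offending $\lambda$-segment could be iterated to manufacture an infinite, non-halting path, contradicting termination. Granting this bound, each configuration occupies $O(n)$ cells, the pending query word occupies $O(n)$ cells and is answerable in $O(n)$ space because $A\in\mathrm{DSPACE}(O(n))$, and the stored choice sequence along a current path of length $O(n)$ also fits in $O(n)$ cells, so the whole deterministic simulation runs in space $O(n)$. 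I expect the genuinely technical work to be a rigorous proof of this linear resource bound (equivalently, putting $M$ into a $\lambda$-loop-free normal form that makes only $O(1)$ idle moves between input reads while preserving the query words it writes), rather than the simulation itself.
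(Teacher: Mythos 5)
Your proposal is correct, and for assertions 1 and 2 it coincides with the paper's own argument: part 1 is exactly the answer-flipping construction of Lemma~\ref{basic-Turing} together with its deterministic analogue, and part 2 consists of the same four containments the paper derives from the facts $A\in\dcfl_{T}^{A}$, $\dcfl_{T}^{A}\subseteq\cfl_{T}^{A}$, and $\dcfl_{T}^{A}=\co\dcfl_{T}^{A}$ (the last being legitimate only because the model forces halting on all inputs under all oracles, which you correctly invoke). The genuine divergence is assertion 3, where the paper's proof is essentially two sentences: the base case $\cfl\subseteq\mathrm{DSPACE}(O(\log n))$ (a misquotation, incidentally---the classical bound is $O(\log^{2}n)$, which your version states correctly) and the unproved assertion that $A\in\mathrm{DSPACE}(O(n))$ implies $\cfl_{T}^{A}\subseteq\mathrm{DSPACE}(O(n))$. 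You follow the same induction but actually supply the missing simulation: depth-first search of the computation tree with on-demand evaluation of $\chi^{A}$, made space-feasible by a linear bound on the length of any computation path (hence on stack height and query words) of a halting oracle npda. That resource bound is true and is indeed the crux the paper glosses over. One caution: the pumping argument you sketch bounds only the net \emph{rise} of the stack per $\lambda$-segment (two ``persistent'' positions with the same state and top symbol would yield an iterable, non-halting loop), which gives $O(n)$ stack height but not yet $O(n)$ path length, since the stack could in principle oscillate inside the resulting constant-width band above its running minimum; to bound segment \emph{length} one needs a second repetition argument, e.g.\ that within such a band a recurrence of the full configuration (state, band content, untouched stack below) is again an iterable loop, after which a telescoping count over segments gives total length $O(n)$. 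You explicitly flag this rigorization as the remaining technical work, so this is a refinement of your sketch rather than a flaw; with it, your proof is complete and strictly more detailed than what the paper prints.
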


\begin{proof}
(1) The first equality is a direct consequence of Lemma \ref{basic-Turing} since $\picflt{k}=\co\sigmacflt{k}$. The case of Turing $\dcfl$-reduction is similar in essence.

(2)  Let $k\geq1$.  Since $A\in \dcfl_{T}^{A}$ holds for any oracle $A$ by making a simple query on $x$,  it thus follows that $\sigmacflt{k}\subseteq \dcfl_{T}(\sigmacflt{k}) = \deltacflt{k+1}$. Similarly, we obtain $\picflt{k}\subseteq \dcfl_{T}(\picflt{k}) = \dcfl_{T}(\sigmacflt{k})=\deltacflt{k+1}$, where the  first  equality comes from (1).  Moreover, since $\dcfl_{T}^A\subseteq \cfl_{T}^A$ for all oracles $A$, we conclude that $\deltacflt{k+1} = \dcfl_{T}(\sigmacflt{k}) \subseteq \cfl_{T}(\sigmacflt{k}) = \sigmacflt{k+1}$. Finally, using the fact that $\dcfl_{T}^A=\co\dcfl_{T}^A$ for any oracle $A$, we easily obtain $\deltacflt{k+1} = \co\deltacflt{k+1} \subseteq \co\sigmacflt{k+1}=\picflt{k+1}$.
\end{proof}

As is shown in Example \ref{ex:Sq-Prim}, $Dup_{2}$ is in $\sigmacfl{2}$. By contrast, it is well-known that $Dup_{2}$ is not context-free; thus, $Dup_{2}\notin\sigmacfl{1}$. This fact yields the following class separation.

\begin{proposition}\label{first-second-gap}
$\sigmacfl{1}\neq\sigmacfl{2}$.
\end{proposition}


Hereafter, we will explore fundamental properties of our new hierarchy.  Our starting point is a closure property under length-nondecreasing substitution. A {\em substitution} on alphabet $\Sigma$ is actually a function $s:\Sigma\rightarrow\PP(\Theta^*)$ for a certain alphabet $\Theta$. This substitution $s$ is called \emph{length nondecreasing} if $s(\sigma)\neq\setempty$ and $\lambda\notin s(\sigma)$ for every symbol $\sigma\in\Sigma$.  We further extend this function from its finite domain $\Sigma$ to the infinite domain $\Sigma^*$. Given any string $y=\sigma_1\sigma_2\cdots \sigma_n$, where each $\sigma_i$ is a symbol in $\Sigma$, we set $s(y)$ to be the language $\{x_1x_2\cdots x_n \mid i\in[n], x_i\in s(\sigma_i)\}$. We conveniently set $s(\lambda)=\lambda$.
Moreover, for any language $L\subseteq \Sigma^*$, we define $s(L) = \bigcup_{y\in L} s(y)$.
Each language family $\sigmacflt{k}$ is closed under length-nondecreasing substitution in the following sense.

\begin{lemma}\label{substitution-close}
(length-nondecreasing substitution property) Let $k\in\nat^{+}$ and let $s$ be any length-nondecreasing substitution on alphabet $\Sigma$ satisfying $s(\sigma)\in\sigmacflt{k}$ for each symbol $\sigma\in\Sigma$. For any context-free language $L$ over $\Sigma$, $s(L)$ belongs to  $\sigmacflt{k}$.
\end{lemma}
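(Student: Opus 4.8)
The plan is to give a single uniform construction, valid for every $k\in\nat^{+}$, that realizes $s(L)$ as a Turing $\cfl$-reduction to an oracle in $\sigmacflt{k-1}$; since $\sigmacflt{k}=\cfl_{T}(\sigmacflt{k-1})$ for all $k\geq1$, this places $s(L)$ in $\sigmacflt{k}$. I would fix an oracle npda $M$ and an oracle $O\in\sigmacflt{k-1}$ with $L=L(M^{O})$, and for each symbol $\sigma\in\Sigma$ an oracle npda $M_{\sigma}$ and an oracle $O_{\sigma}\in\sigmacflt{k-1}$ with $s(\sigma)=L(M_{\sigma}^{O_{\sigma}})$. Because $\Sigma$ is finite, the first step is to combine the finitely many oracles $O$ and $\{O_{\sigma}\}_{\sigma\in\Sigma}$ into a single oracle $O^{*}$ by a marked (prefix-tagged) union, and to check that $\sigmacflt{k-1}$ is closed under such finite marked unions: a $\cfl_{T}$-machine can branch on the leading tag and then invoke the appropriate component, while the component oracles are themselves unified one level lower by the same argument, bottoming out at $\dcfl=\sigmacflt{0}$, which is closed under marked union. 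Hence $O^{*}\in\sigmacflt{k-1}$.

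The core of the construction is the classical single-stack substitution simulation for pushdown automata, lifted to the oracle setting. I would build an oracle npda $N$ that, on input $w$, simulates $M$ on the \emph{abstract} string $\sigma_{1}\sigma_{2}\cdots\sigma_{n}$ while reading $w=y_{1}y_{2}\cdots y_{n}$ from left to right. Each time $M$ is about to consume its next abstract symbol, $N$ nondeterministically guesses a symbol $\sigma\in\Sigma$ and runs $M_{\sigma}$ on the remaining suffix of $w$, letting $M_{\sigma}$ nondeterministically decide where the factor $y_{i}$ ends by entering an accepting configuration. The two stacks are merged into $N$'s single stack by the usual nesting trick: $N$ pushes a fresh separator (which plays the role of $M_{\sigma}$'s bottom marker), lets $M_{\sigma}$ operate \emph{above} it using a disjoint copy of its stack alphabet, and pops the separator only when $M_{\sigma}$ accepts with its local stack emptied; since this inner activity is a balanced excursion, $M$'s stack content is exactly restored at the factor boundary, whereupon $N$ performs $M$'s single transition on $\sigma$. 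Oracle calls are routed adaptively to $O^{*}$: a query of $M$ is tagged for the $O$-component and a query of $M_{\sigma}$ for the $O_{\sigma}$-component, which is legitimate because the Turing $\cfl$-reducibility permits many adaptive queries and the query tape is reused between them and is independent of the stack. Finally $N$ reads $\dollar$, simulates $M$'s halting transition, and accepts iff $M$ accepts, so that $w\in s(L)$ iff $N^{O^{*}}$ has an accepting path on $w$; thus $s(L)=L(N^{O^{*}})\in\cfl_{T}(\sigmacflt{k-1})=\sigmacflt{k}$.

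The main obstacle I expect is not the oracle bookkeeping, which is routine once $O^{*}$ is in place, but the faithful single-stack simulation together with the termination requirement. I must argue that the inner computations really are balanced stack excursions so that outer and inner stack symbols never interfere (the separator and disjoint stack-alphabet copies make this precise), and I must ensure $N$ halts on \emph{every} computation path for \emph{every} oracle, as the definition of a $T$-reduction machine demands. The delicate point is the treatment of $\lambda$-moves and of empty factors (when $\lambda\in s(\sigma)$): $N$ must not be able to insert unboundedly many empty factors in a nonterminating loop. I would dispose of this using the standard normal-form assumptions that $M$ and each $M_{\sigma}$ have no $\lambda$-cycles and halt on all paths, together with a bound on the number of consecutive empty-factor guesses made between successive advances of the input head; these are exactly the conventions already underlying the plain-$\cfl$ substitution theorem and introduce nothing new beyond careful accounting. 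With these points settled, correctness and the membership $s(L)\in\sigmacflt{k}$ follow directly.
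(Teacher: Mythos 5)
Your core construction for $k\geq2$ is essentially the paper's: nondeterministically guess the factorization of the input, interleave a symbol-by-symbol simulation of $M$ with simulations of the $M_{\sigma}$'s on the factors, run each inner simulation as a balanced excursion above a fresh separator serving as a local bottom marker on the shared stack, and route all queries, suitably tagged, to a single marked-union oracle one level down (you are in fact more careful than the paper, which never says how $M$'s own queries are absorbed into the oracle $B$). But your justification of the base level is wrong. You appeal to ``$\sigmacflt{k}=\cfl_{T}(\sigmacflt{k-1})$ for all $k\geq1$,'' and this identity fails at $k=1$: the paper sets $\sigmacflt{0}=\dcfl$ and $\sigmacflt{1}=\cfl$, yet $\cfl_{T}(\dcfl)\supsetneq\cfl$, since for instance $Dup_2=\{xx\mid x\in\{0,1\}^*\}$ lies in $\cfl_{m}^{\dcfl}\subseteq\cfl_{T}(\dcfl)$ (reduce $xy$ to $x^R\natural y$ and query the deterministic context-free oracle $\{z^R\natural z\mid z\in\{0,1\}^*\}$) while $Dup_2\notin\cfl$. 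So for $k=1$ your conclusion ``$s(L)=L(N^{O^*})\in\cfl_{T}(\sigmacflt{0})=\sigmacflt{1}$'' does not follow. The paper avoids this by invoking the classical (grammar-based) substitution theorem for $\cfl$ as the base case and running the machine construction only for $k\geq2$; you would need to do the same, or at least observe that at $k=1$ all machines may be taken query-free so that $N$ is a plain npda.

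The second, deeper problem is your treatment of empty factors. Your remedy---a bound on the number of consecutive empty-factor guesses---destroys correctness: take $L=\{a^mc^m\mid m\geq1\}$, $s(a)=\{\lambda\}$, $s(c)=\{b\}$; then $s(L)=\{b^m\mid m\geq1\}$, but the only abstract witness for $b^m$ forces $m$ consecutive empty factors, so any machine with a constant bound rejects almost all of $s(L)$. Worse, no device can repair this if $\lambda\in s(\sigma)$ is permitted, because then the lemma itself is inconsistent with the rest of the paper: a homomorphism (possibly erasing) is a substitution with singleton images, every recursively enumerable language is a homomorphic image of an intersection of two deterministic context-free languages (valid-computation-history argument), the paper shows $\cfl(2)\subseteq\cfl_{m}^{\cfl}\subseteq\sigmacflt{2}$ (Claim \ref{CFL(2)-bound}) and $\cflh\subseteq\mathrm{DSPACE}(O(n))$, so $\lambda$-admitting substitution closure of $\sigmacflt{2}$ would place undecidable languages inside a class of decidable ones. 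Hence the statement is tenable only for $\lambda$-free substitutions; the paper's own proof silently carries the same restriction (its split of $w$ into factors must also exclude empty pieces to meet the all-paths termination demand on oracle npda's). You correctly identified the delicate point, but your claim that normal forms plus bounded guessing settle it is false: one must assume $\lambda\notin s(\sigma)$ for every $\sigma$, and once that is assumed your bounding device becomes unnecessary.
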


\begin{proof}
Since the basis case $k=1$ is well-known to hold (see, \eg \cite{HMU01}),
it suffices to assume that $k\geq2$.

A major deviation from a standard proof for CFL's closure property under substitution is the presence of a query tape and a process of both querying a word and receiving its oracle answer.

Let $s:\Sigma\to\PP(\Theta^*)$ be a given length-nondecreasing substitution and let $L$ be any given context-free language over alphabet $\Sigma$. To simplify the proof, we consider only the case where $\lambda\notin L$.
For each symbol $\sigma$ in $\Sigma$, take an oracle npda $M_{\sigma} = (Q_{\sigma},\Theta,\{\cent,\dollar\},\Theta_{\sigma},\Gamma_{\sigma}, \delta_{\sigma},q_0,Z_0,Q_{\sigma,acc},Q_{\sigma,rej})$ that recognizes $s(\sigma)$ relative to a certain language $A_{\sigma}$ in $\picflt{k-1}$.
For convenience, we assume that all $A_{\sigma}$'s have different alphabets and that, when $M_{\sigma}$ halts in an accepting state, its query tape must be all blank.
In addition, let $M = (Q_{M},\Sigma,\{\cent,\dollar\},\Gamma_{M}, \delta_{M},q_0,Z_0,Q_{M,acc},Q_{M,rej})$ denote an npda recognizing $L$. To help simplify our argument, we  assume that (1) $M$ makes no $\lambda$-move and (2) by the time each oracle npda $M_{\sigma}$ enters any halting state, it must empty its own stack (except for its bottom marker) and also makes its query tape blank by making an extra query and disregarding its oracle answer, if necessary.

Consider a new oracle npda $N = (Q_{N},\Theta,\{\cent,\dollar\},\Theta_N,\Gamma_N, \delta_N,q_0,Z_0,Q_{query},Q_{acc},Q_{rej})$ that behaves in the following manner.
On input $x\in\Theta^*$,  $N$ nondeterministically splits $x$ into $(x_1,x_2,\ldots,x_n)$ for which $x=x_1x_2\cdots x_n$, where $1\leq n\leq|x|$ and $x_i\in\Theta^*$ for each $i\in[n]$. Initially, $N$ simulates exactly one step of $M$ while scanning $\cent$. Sequentially, at stage $i\in[n]$, $N$ guesses a symbol, say, $\sigma_i\in\Sigma$ and simulates, using a stack, one step of $M$ whose tape head scanning $\sigma_i$. In the end of this simulation stage, $N$ places a special separator $\#$ on the top of the stack in order to share the same stack with $M_{\sigma_i}$. Next, $N$ simulates $M_{\sigma_i}$ on the input $\cent x_i \dollar$ using an empty portion of the stack, by regarding $\#$ as a new bottom marker for $M_{\sigma_i}$.
To make intact the saved data in the stack during this simulation of $M_{\sigma_i}$, whenever $M_{\sigma_i}$ tries to remove $\#$, $N$ instantly aborts the simulation and halts in a rejecting state. When $M_{\sigma_i}$ tries to query a string to an oracle, $N$ also produces the same string on its query tape, remembers the last inner state as well as $\sigma_i$ using the stack, and then enters $q_{query}$.
If all npda's $M_{\sigma_i}$ enter certain accepting states, then $N$ accepts $x$; otherwise, it rejects $x$.

Here, we briefly describe the transition function $\delta_{N}$. Remember
that all machines are well-behaved at both $\cent$ and $\dollar$.
As a starter, when $(p,w)\in\delta_{M}(q_0,\cent,Z_0)$, let $((p,\track{\lambda}{\lambda}),w,\lambda) \in\delta_{N}(q_0,\cent,Z_0)$.  Let $((p,\track{\sigma}{q_0\cent}),\# w,\lambda)\in \delta_{N}((q,\track{\lambda}{\lambda}),\lambda,a)$ for all $\sigma\in\Sigma$ if $q\notin Q_{M,halt}$.
Let $((q',\track{\sigma}{p}),w\#,\xi)\in \delta_{N}((q',\track{\sigma}{q_0\cent},\lambda,\#)$ if $(p,wZ_0,\xi)\in\delta_{\sigma}(q_0,\cent,Z_0)$. When $(p,w,\xi)\in\delta_{\sigma}(q,\tau,a)$ for $\tau\in\Theta\cup\{\lambda\}$, let $\delta_{N}((q',\track{\sigma}{q}),\tau,a)$ contain  $((q',\track{\sigma}{p}),w,\xi)$ and  $((q',\track{\sigma}{p\dollar}),w,\xi)$. Let $((q',\track{\sigma}{p\lambda}),w,\xi) \in\delta_{N}((q',\track{\sigma}{q\dollar}),\lambda,a)$ if $(p,w,\xi)\in\delta_{\sigma}(q,\dollar,a)$ with $p\in Q_{\sigma,halt}$.
Let $((q',\track{\sigma}{p\lambda}),w,\xi) \in\delta_{N}((q',\track{\sigma}{q\lambda}),\lambda,a)$ if $(p,w,\xi)\in\delta_{\sigma}(q,\lambda,a)$.
Let $((q',\track{\lambda}{\lambda}),\lambda,\lambda)  \in\delta_{N}((q',\track{\sigma}{q\lambda}),\lambda,\#)$ for all $q\in Q_{\sigma,acc}$.
When $(q_{query},w,\xi)\in\delta_{\sigma}(q,\tau,a)$, let  $(q_{query},\track{q'}{\sigma}w,\xi) \in\delta_{N}((q',\track{\sigma}{q}),\tau,a)$.
When $(p,w,\xi)\in\delta_{\sigma}(q_e,\tau,a)$ for each $e\in\{yes,no\}$,
let  $((q',\track{\sigma}{q_e}),\lambda,\lambda) \in\delta_{N}(q_e,\lambda,\track{q'}{\sigma})$ and  $((q',\track{\sigma}{p}),w,\xi)
\in\delta_{N}((q',\track{\sigma}{q_e},\tau,a)$. For all other cases, $\delta_N$ maps them to appropriate rejecting states. In the end, we set $Q_{acc} = \{(q,\track{\lambda}{\lambda})\mid q\in Q_{M,acc}\}$ and $Q_{rej}\supseteq \{(q,\track{\lambda}{\lambda})\mid q\in Q_{M,rej}\}$.

Finally, an oracle $B$ is defined as the finite union $\bigcup_{\sigma\in \Sigma} A_{\sigma}$. It can be observed that $x$ is in $s(L)$ if and only if
$N$  accepts $x$ relative to $B$.
Notice that our induction hypothesis states that $\sigmacfl{k-1}$ is closed under length-nondecreasing substitution. From this assumption, we can prove that, by following the proof of Lemma \ref{closure-operators}, $\Sigma_{k-1}$ is closed under union. Since all $A_{\sigma}$'s are in $\sigmacflt{k-1}$, the set $B$ must be in $\sigmacflt{k-1}$.  Therefore, $s(L)$ belongs to $\cfl_{m}^{B}\subseteq \sigmacflt{k}$.
\end{proof}

Once the closure property under length-nondecreasing substitution is established for $\sigmacflt{k}$, other well-known closure properties (except for reversal and inverse homomorphism) follow directly. A \emph{homomorphism} is a function $h:\Sigma\to\Theta^*$ for alphabets $\Sigma$ and $\Theta$. Such a homomorphism is called \emph{$\lambda$-free} if $h(\sigma)\neq\lambda$ for every $\sigma\in\Sigma$.

\begin{lemma}\label{closure-operators}
For each index $k\in\nat^{+}$, the family $\sigmacflt{k}$ is closed under the following operations: concatenation, union, reversal, Kleene closure, $\lambda$-free homomorphism, and inverse homomorphism.
\end{lemma}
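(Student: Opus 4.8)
The plan is to derive most of the listed closures from the substitution property (Lemma~\ref{substitution-close}) and to handle reversal and inverse homomorphism by separate arguments, exactly as the remark preceding the lemma suggests. Throughout I use that the families are nested, $\reg\subseteq\cfl=\sigmacflt{1}\subseteq\sigmacflt{k}$, so that every finite or regular language lies in $\sigmacflt{k}$ and can serve as an ``outer'' language for a substitution whose parts are the given members of $\sigmacflt{k}$.

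For union, concatenation, Kleene closure, and homomorphism I would invoke Lemma~\ref{substitution-close} with a fixed regular outer language. Given $L_1,L_2\in\sigmacflt{k}$ over $\Sigma$, introduce a two-letter alphabet $\{a,b\}$ and the substitution $s$ with $s(a)=L_1$ and $s(b)=L_2$; then $s(\{a,b\})=L_1\cup L_2$ and $s(\{ab\})=L_1L_2$, and since $\{a,b\},\{ab\}\in\reg\subseteq\sigmacflt{k}$ the substitution lemma places both results in $\sigmacflt{k}$. Taking the outer language $a^{*}$ with $s(a)=L$ yields $s(a^{*})=L^{*}$. For a homomorphism $h$, set $s(\sigma)=\{h(\sigma)\}$, a singleton in $\sigmacflt{k}$, and apply $s$ to $L$ itself to obtain $s(L)=h(L)$.

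Closure under inverse homomorphism I would prove directly, without induction, by a buffering simulation. Let $h\colon\Delta^{*}\to\Sigma^{*}$ and let an oracle npda $M$ recognize $L$ relative to $A\in\sigmacflt{k-1}$. Build an oracle npda $N$ that, on input $w\in\Delta^{*}$, reads one symbol $c\in\Delta$ at a time, stores the block $h(c)\in\Sigma^{*}$ in its finite control (the buffer length is bounded because $h$ is fixed), and feeds this block symbol by symbol to a simulation of $M$, reading the next input symbol only when the buffer empties. The stack moves of $M$ are carried out on $N$'s stack, and every query word that $M$ would write while processing $h(w)$ is reproduced verbatim by $N$. Hence $N$ with the \emph{same} oracle $A$ recognizes $h^{-1}(L)$, so $h^{-1}(L)\in\cfl_{T}(\sigmacflt{k-1})=\sigmacflt{k}$.

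Closure under reversal is the delicate case, and I would argue it by induction on $k$. The base case $k=1$ is the classical fact that $\cfl=\sigmacflt{1}$ is closed under reversal. For the step, take $L\in\sigmacflt{k}$ witnessed by an oracle npda $M$ relative to $A\in\sigmacflt{k-1}$, and construct a machine $N$ that, reading $w$ left to right, simulates $M$ on $w^{R}$ by guessing $M$'s computation in reverse: an accepting configuration first, working back to the initial one, with each forward push reinterpreted as a guessed pop and each pop as a push, as in the grammar-free reversal of pushdown computations. Whenever the reverse simulation passes a query step of $M$, $N$ reconstructs the query word---necessarily in reversed order on its one-way write-only tape, hence as $y^{R}$---and confirms the oracle answer that it had already guessed at the (temporally later, in $M$'s time) configuration. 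Feeding the reversed query words to the reversed oracle, $N$ uses $A^{R}$, which lies in $\sigmacflt{k-1}$ by the induction hypothesis, so $L^{R}\in\cfl_{T}(\sigmacflt{k-1})=\sigmacflt{k}$. The main obstacle is precisely this step: one must reverse a one-way pushdown computation (inverting pushes and pops as well as the order of symbols written to the query tape) while simultaneously coping with the adaptive oracle interaction---guessing each answer in reverse time and later verifying it by an actual query to $A^{R}$---and checking that exactly the surviving computation paths correspond to accepting computations of $M$ on $w^{R}$, with all paths halting. Showing that the nondeterministic guessing of answers and the stack reversal can be realized inside a single oracle npda is the technical heart of the argument.
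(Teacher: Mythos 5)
Your proposal is correct and follows essentially the same route as the paper: union, concatenation, Kleene closure, and homomorphism via the substitution property (Lemma~\ref{substitution-close}) with regular outer languages, inverse homomorphism by a direct symbol-by-symbol simulation reusing the same oracle, and reversal by induction on $k$, reversing the pushdown computation while guessing each oracle answer in advance, verifying it by an actual query, and replacing the oracle by its reversal $A^{R}\in\sigmacflt{k-1}$. The guess-and-verify handling of the adaptive oracle interaction that you identify as the technical heart is exactly the device the paper uses.
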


\begin{proof}
When $k=1$, $\sigmacflt{1}$ ($=\cfl$) satisfies all the listed closure properties (see, \eg \cite{HMU01,Lin06}).  Hereafter, we assume that $k\geq2$. All  the closure properties except for reversal and inverse homomorphism follow directly from Lemma  \ref{substitution-close}. For completeness, however, we will include the proofs of those closure properties. The remaining closure properties require different arguments.

\s

\n{\bf [union]} Given two languages $A_1$ and $A_2$ in $\sigmacflt{k}$, if at least one of $A_1$ and $A_2$ is empty, the union $A_1\cup A_2$ obviously belongs to $\sigmacfl{k}$. Henceforth, we assume that both $A_1$ and $A_2$ are nonempty. Let us define $L=\{1,2\}$ and take a length-nondecreasing substitution $s$ satisfying $s(i)=A_i$ for each $i\in\{1,2\}$. Since $L\in\reg$ and  $s(L)=A_1\cup A_2$, Lemma \ref{substitution-close} implies  that $s(L)$  belongs to $\sigmacflt{k}$.

\s

\n{\bf [concatenation]} Take any two languages $A_1,A_2\in\sigmacflt{k}$. It suffices to consider the case where they are both nonempty (as in the proof  for ``union''). In this case, we set $L=\{12\}$ and define $s(1)=A_1$ and $s(2)=A_2$. Since $L\in\reg$ and $s(L)=\{xy\mid x\in A_1,y\in A_2\}$, we apply Lemma \ref{substitution-close} to $s(L)$ and obtain the desired containment $s(L)\in\sigmacflt{k}$.

\s

\n{\bf [Kleene closure]}  Given any nonempty language $A$  in $\sigmacflt{k}$, we define $s(1)=A$ and $L=\{1\}^*$, which obviously imply $s(L)=A^*$. Note that $L$ is a regular language. Next, we apply Lemma \ref{substitution-close} and then obtain the desired membership $s(L)\in\sigmacflt{k}$.

\s

\n{\bf [$\lambda$-free homomorphism]} This is trivial since a $\lambda$-free homomorphism is a special case of a length-nondecreasing substitution.

\s

\n{\bf [inverse homomorphism]}
Let $\Sigma$ and $\Gamma$ be two alphabets and take any language $A$ in $\sigmacflt{k}$ over $\Gamma$ and any homomorphism $h$ from $\Sigma$ to $\Gamma^*$. Our goal is to show that $h^{-1}(A)$ is in $\sigmacflt{k}$. Let $M$ be an oracle npda that recognizes $A$ relative to an oracle, say,  $B$ in $\sigmacflt{k-1}$.
Notice that  our oracle npda must halt in linear time on all computation paths for any choice of oracles. Let us construct another oracle npda $N$ for $h^{-1}(A)$. Given any input $x=x_1x_2\cdots x_n$ of length $n$, $N$ applies $h$ symbol by symbol. On reading $x_i$, $N$ simulates several steps (including a certain number of $\lambda$-moves) of $M$'s computation conducted during the handling of $h(x_i)$ if $h(x_i)\neq\lambda$.
Since $h$ has a finite domain, we can embed its information into $N$'s inner states. If $N$ accepts $x$  using $B$ as an oracle, then the string $h(x)=h(x_1)\cdots h(x_n)$ is in $A$; otherwise, $h(x)$ is not in $A$. Thus, $h^{-1}(A)$ belongs to $\cfl_{T}^{B}\subseteq \sigmacflt{k}$.

\s

\n{\bf [reversal]}
This proof requires Corollary \ref{wedge-oracle} and proceeds by induction on $k\in\nat^{+}$.  As noted before, it suffices to show an induction step $k\geq2$. Assume that $A\in\sigmacflt{k}$. Corollary \ref{wedge-oracle}  implies that $A\in\nfa_{m}^B$ for a certain oracle $B\in\sigmacfl{k-1}\wedge \picfl{k-1}$. Let $M = (Q,\Sigma,\{\cent,\dollar\},\Theta,\delta,q_0,Q_{acc},Q_{rej})$  be an oracle nfa that $m$-reduces $A$ to $B$.
We aim at proving  that the reversal $A^R =\{x^R\mid x\in A\}$ also belongs to $\sigmacflt{k}$.   Here, we will construct the desired reversing machine $M_{R}$ with another oracle $B^R$. First, we conveniently set our new input instance is of the form $\dollar x^R\cent$.  Intuitively, we need to ``reverse'' the entire computation of $M$, starting at an accepting configuration with the head staying at $\dollar$ and ending at an initial configuration. To meet the runtime requirement for $M^R$, we assume,  without loss of generality, that the number of consecutive $\lambda$-moves made by $M$ is at most a constant, say, $c$.

To make the following description simple, we further assume that (1) $M$ has only one accepting state, say, $q_{acc}$ and one rejecting state, say, $q_{rej}$ and (2) $M$ should enter a halting state after reading $\dollar$, and (3) when $M$ enters an accepting state just after reading $\dollar$ (and possibly making $\lambda$-moves).

Here, let us consider a situation that $M$ produces a query word $y$ and receives its oracle answer $b$. Since we try to reverse the entire computation of $M$, conceptually, we need to design a reversing machine $M_R$ to produce the reversed word $y^R$ on the query tape. To make this strategy work, we also need the reversed oracle $B^R = \{y^R\mid y\in B\}$ in lieu of $B$.

The formal description of the transition function $\delta_R$ of the above $M_R$ is given as follows. For ease of description, we first modify $M$'s transition function $\delta$ to meet the following condition: after the input-tape head moves to a new tape cell, $M$ reads a tape symbol and then makes exactly $c$ $\lambda$-moves. Given $(p,\sigma)$, define $\delta^{(\lambda)}(p,\sigma)$ to be a set of all pairs $(p,w)$ such that there are $c+1$ triplets $(r_0,\sigma,\tau_0), (r_1,\lambda,\tau_1), (r_2,\lambda,\tau_2),\ldots,(r_{c},\lambda,\tau_{c})$ satisfying
$(r_{1},\tau_{1})\in \delta(r_0,\sigma)$ and
$(r_{i+1},\tau_{i+1})\in \delta(r_i,\lambda)$ for all $i\in[c-1]$, where  $r_0=p$ and $w=\tau_0\tau_1\cdots\tau_c$. When  $(q_{acc},w)\in\delta^{(\lambda)}(p,\dollar)$ with $w=\tau_0\tau_1\cdots\tau_c$ (where $\tau_i\in\Theta\cup\{\lambda\}$), let $(\track{p\dollar}{c-1},\tau_c)\in \delta_{R}(\track{q_0}{0},\cent)$. In contrast, when $(q,w)\in\delta^{(\lambda)}(q_0,\cent)$, we set  $(\track{q\cent}{c-1},\tau_1)\in \delta_{R}(\track{q}{0},\dollar)$. Moreover, in general, for each $i\in[c-1]$, we define  $(\track{p\sigma}{m-1},\tau_m)\in\delta_{R}(\track{p\sigma}{m},\lambda)$ and $(\track{p}{0},\tau_1)\in\delta_{R}(\track{p\sigma}{1},\lambda)$. Finally, let $(\track{p\sigma}{c-1},\tau_c)\in\delta_{R}(\track{q}{0},\sigma)$ if $(q,w)\in\delta^{(\lambda)}(p,\sigma)$.

Since $B$ is  in $\sigmacfl{k-1}\wedge \picfl{k-1}$, we take two languages $B_1,B_2\in \sigmacfl{k-1}$ for which $B=B_1\cap\overline{B}_2$. Our induction hypothesis then ensures that  $B_1^R$ and $B_2^R$ belong to  $\sigmacflt{k-1}$; thus, $B^R = B_1^R \cap \overline{B}_2^R$ holds, because $\overline{B}_2^R = \overline{B_2^R}$. We then conclude that $B^R$ is in $\sigmacfl{k-1}\wedge \picfl{k-1}$.  By the definition of $M_R$ and $B^R$,  $L^R$ can be recognized by $M_R$ relative to $B^R$.
\end{proof}


In Example \ref{ex:DUP_2}, we have seen that the two languages $Dup_{2}$ and $Dup_{3}$ are in $\sigmacflt{2}$. Since they are not context-free, these examples actually prove that $\sigmacflt{1}\neq\sigmacflt{2}$; that is, $\sigmacfl{2}\nsubseteq\cfl$.  Since $\co\cfl\nsubseteq\cfl/n$ \cite{Yam08} and $\co\cfl\subseteq \sigmacflt{2}$, we obtain a slightly improved separation  as shown in Proposition \ref{CFLH-space-n}.

\begin{proposition}\label{CFLH-space-n}
$\sigmacflt{2}\nsubseteq\cfl/n$.
\end{proposition}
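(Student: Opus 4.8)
The plan is to obtain Proposition~\ref{CFLH-space-n} as an immediate consequence of two facts: the containment $\co\cfl\subseteq\sigmacflt{2}$ and the advice lower bound $\co\cfl\nsubseteq\cfl/n$ of \cite{Yam08}. The overall shape is a containment-plus-separation argument by contradiction: if $\sigmacflt{2}$ were contained in $\cfl/n$, then any language witnessing the separation of $\co\cfl$ from $\cfl/n$ would also witness the separation of $\sigmacflt{2}$ from $\cfl/n$, precisely because that witness already lies inside $\sigmacflt{2}$.

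First I would establish $\co\cfl\subseteq\sigmacflt{2}$ directly from the definitions. Take any $A\in\co\cfl$, so that $\overline{A}\in\cfl=\sigmacflt{1}$. Design a trivial $T$-reduction machine $N$ that, on input $x$, copies $x$ onto its query tape symbol by symbol, enters the query state once, and then accepts exactly when the oracle answer is $0$ and rejects when it is $1$. Relative to the oracle $\overline{A}$ this machine recognizes $A$, so $A\in\cfl_{T}^{\overline{A}}\subseteq\cfl_{T}(\sigmacflt{1})=\sigmacflt{2}$. Alternatively, this is exactly the inclusion $\picflt{1}\subseteq\sigmacflt{2}$ recorded earlier, and one may also invoke Lemma~\ref{basic-Turing}, by which $\cfl_{T}^{\overline{A}}=\cfl_{T}^{A}$. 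Either way, every co-context-free language sits in $\sigmacflt{2}$.

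Next I would combine this containment with the known separation. Suppose toward a contradiction that $\sigmacflt{2}\subseteq\cfl/n$. Then $\co\cfl\subseteq\sigmacflt{2}\subseteq\cfl/n$, contradicting the result $\co\cfl\nsubseteq\cfl/n$ proven in \cite{Yam08}. Hence $\sigmacflt{2}\nsubseteq\cfl/n$, as claimed. I expect the combination step to carry no real difficulty once both ingredients are in hand: the entire content of the statement is inherited from the cited advice lower bound, and membership of the separating witness in $\sigmacflt{2}$ is automatic from the containment above. The only genuine obstacle therefore lives in \cite{Yam08}, namely the claim that no length-preserving $n$-symbol advice can help a single npda decide a suitable co-context-free language; that argument rests on a swapping-lemma--style combinatorial bound for context-free languages and is not re-proved here.
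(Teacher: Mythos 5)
Your proposal is correct and follows exactly the paper's own argument: the paper derives Proposition~\ref{CFLH-space-n} in the sentence immediately preceding it, from the same two ingredients $\co\cfl\subseteq\sigmacflt{2}$ and $\co\cfl\nsubseteq\cfl/n$ (citing \cite{Yam08}). Your explicit construction of the trivial $T$-reduction machine just fills in the inclusion $\picflt{1}\subseteq\sigmacflt{2}$ that the paper takes as already established.
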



Let us recall the language family $\bhcfl$, the Boolean hierarchy over $\cfl$. Here, we will show that the second level of the CFL hierarchy  contains $\bhcfl$.

\begin{proposition}\label{BHCFL-in-second-level}
$\bhcfl\subseteq \sigmacflt{2}\cap\picflt{2}$.
\end{proposition}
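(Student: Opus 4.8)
The plan is to establish the two inclusions $\bhcfl \subseteq \sigmacflt{2}$ and $\bhcfl \subseteq \picflt{2}$ separately; since the entire argument is symmetric under complementation, I would concentrate on the first and recover the second by De~Morgan duality. Unfolding the definition $\bhcfl = \bigcup_{k}\cfl_{k}$, it suffices to prove $\cfl_{k} \subseteq \sigmacflt{2}$ for every $k \in \nat^{+}$, which I would do by induction on $k$. Throughout I would use the characterization $\sigmacflt{2} = \cfl_{m}^{\cfl_{2}}$ from Proposition \ref{first-level-equal}, together with the base containments $\cfl = \sigmacflt{1} \subseteq \deltacflt{2} \subseteq \sigmacflt{2}$ and, dually, $\co\cfl = \picflt{1} \subseteq \deltacflt{2} \subseteq \sigmacflt{2}$, which come from the basic relationships among the levels of the hierarchy.

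The induction follows the two generating rules $\cfl_{2j+1} = \cfl_{2j} \vee \cfl$ and $\cfl_{2j} = \cfl_{2j-1} \wedge \co\cfl$. The union rule is immediate, because $\sigmacflt{2}$ is closed under union (part of the closure lemma for $\sigmacflt{k}$) and $\cfl \subseteq \sigmacflt{2}$; hence $\cfl_{2j} \subseteq \sigmacflt{2}$ yields $\cfl_{2j+1} \subseteq \sigmacflt{2}$. The real content is the intersection rule, for which I would isolate the following closure property as the crux: \emph{if $L \in \sigmacflt{2}$ and $D \in \co\cfl$, then $L \cap D \in \sigmacflt{2}$.} Granting this, $\cfl_{2j-1} \subseteq \sigmacflt{2}$ gives $\cfl_{2j} = \cfl_{2j-1} \wedge \co\cfl \subseteq \sigmacflt{2}$, closing the induction.

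To prove the crux, write $L = \{x : f(x) \cap E \neq \setempty\}$ with $f \in \cflmv$ computed by an $m$-reduction npda $M$ and oracle $E = P \cap \overline{Q} \in \cfl_{2}$, where $P, Q \in \cfl$; let $M_{D}$ be an npda for $\overline{D} \in \cfl$. I would build a new $m$-reduction npda $N$ that, on input $x$, simulates $M$ to produce its query word $y$ on one track while copying the consumed input $x$ onto a second track, padding each track with the dummy symbol $\natural$ to synchronize the two possibly different head speeds, exactly as in the two-head technique of Corollary \ref{NFA-Dyck-equal-CFL} and Lemma \ref{extendible-inclusion}; thus $N$ outputs $\track{\tilde{y}}{\tilde{x}}$ along each path. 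The new oracle is $E' = \{\track{\tilde{y}}{\tilde{x}} : y \in P,\ y \notin Q,\ x \in D\}$, decomposed as $E' = E'_{1} \cap \overline{E'_{2}}$ with $E'_{1} = \{\track{\tilde{y}}{\tilde{x}} : y \in P\}$ inspecting only the first track and $E'_{2} = \{\track{\tilde{y}}{\tilde{x}} : y \in Q\} \cup \{\track{\tilde{y}}{\tilde{x}} : x \in \overline{D}\}$. Each of $E'_{1}, E'_{2}$ is context-free (a one-track projection of a CFL, obtained by inverse homomorphism that erases $\natural$ and ignores the other track, using closure of $\cfl$ under inverse homomorphism and union), so $E' \in \cfl \wedge \co\cfl = \cfl_{2}$. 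Because the condition $x \in D$ is independent of the computation path, $N$ has an accepting path outputting a word of $E'$ iff $M$ has an accepting path with $y \in P \cap \overline{Q}$ and $x \in D$, i.e. iff $x \in L \cap D$; hence $L \cap D \in \cfl_{m}^{E'} \subseteq \cfl_{m}^{\cfl_{2}} = \sigmacflt{2}$.

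Finally, $\bhcfl \subseteq \picflt{2}$ would follow by complementing the scheme: complementation turns each $\vee \cfl$ step into a $\wedge \co\cfl$ step and vice versa, so the dual induction needs only that $\sigmacflt{2}$ is closed under union and under intersection with $\co\cfl$ --- precisely the two facts already in hand --- yielding $\cfl_{k} \subseteq \co\sigmacflt{2} = \picflt{2}$ for all $k$. (Equivalently, one observes that $\bhcfl$ is closed under complementation and reads $\bhcfl \subseteq \picflt{2}$ off from $\bhcfl \subseteq \sigmacflt{2}$.) I expect the intersection step to be the main obstacle: an npda cannot re-scan its input and so cannot test a CFL condition and a $\co\cfl$ condition independently with its single stack. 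The device that resolves this is folding the universal ($\co\cfl$) condition $x \in D$ into the $\overline{Q}$-component of one $\cfl_{2}$ oracle via the two-track, $\natural$-padded query word, so that the whole test becomes a single many-one query.
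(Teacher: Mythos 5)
Your proof is correct, but it takes a genuinely different route from the paper's. The paper never inducts on the generating rules of the Boolean hierarchy: it first flattens the hierarchy via Claim \ref{CFL-decomposition} (imported from \cite{YK12}), namely $\cfl_{2k}=\bigvee_{i\in[k]}\cfl_{2}$ and $\cfl_{2k+1}=(\bigvee_{i\in[k]}\cfl_{2})\vee\cfl$, so that the only nontrivial containment left is $\cfl_{2}\subseteq\sigmacflt{2}$, and that one is disposed of by a one-line reduction (copy the input $x$ itself to the query tape while running an npda for $A$, then query $x$ to $\overline{B}$); closure of $\sigmacflt{2}$ under union then finishes the $\sigmacflt{2}$ half. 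You instead follow the inductive definition $\cfl_{2j}=\cfl_{2j-1}\wedge\co\cfl$, $\cfl_{2j+1}=\cfl_{2j}\vee\cfl$ directly, which forces you to prove the stronger closure property $\sigmacflt{2}\wedge\co\cfl\subseteq\sigmacflt{2}$. Your proof of that crux is sound: the appeal to $\sigmacflt{2}=\cfl_{m}^{\cfl_{2}}$ is justified by Proposition \ref{first-level-equal} together with $\cfl_{T}^{A}=\cfl_{T}^{\overline{A}}$ (Lemma \ref{basic-Turing}), the one-track projections $E'_{1},E'_{2}$ of your padded oracle are context-free by inverse homomorphism and union, and since the side condition $x\in D$ is path-independent, the many-one reduction via $\track{\tilde{y}}{\tilde{x}}$ is faithful, giving $E'=E'_{1}\cap\overline{E'_{2}}\in\cfl_{2}$ as claimed. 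Comparing the two: the paper's route is shorter because the intersection with $\co\cfl$ is confronted only at the bottom level, where the query word can simply be the input, but it leans on the external decomposition lemma of \cite{YK12}; your route is self-contained, and your crux is precisely (a special case of) the closure $\sigmacflt{2}\wedge\sigmacflt{2}=\sigmacflt{2}$ that the paper can only reach much later as Proposition \ref{wedge-closure}, via the heavy machinery of Theorem \ref{Sigma-m-vs-T} --- you obtain the weaker form needed here by an elementary direct construction. For the $\picflt{2}$ half, your dual induction (equivalently, the remark that $\bhcfl$ is self-complementary) matches the paper, which proves $\co\cfl_{k}\subseteq\cfl_{k+1}$ and concludes $\bhcfl=\co\bhcfl$.
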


\begin{proof}
Obviously, the containment $\cfl_{1} \subseteq  \sigmacflt{2}$ holds. It is therefore enough  to show that $\cfl_{k}\subseteq \sigmacflt{2}$ for every index $k\geq2$. For this purpose, we wish to present a simple characterization of the $k$th level of the Boolean hierarchy over $\cfl$,  despite the fact that $\cfl\wedge \cfl\neq \cfl$.

\begin{claim}\label{CFL-decomposition}
For every index $k\geq1$, $\cfl_{2k} = \bigvee_{i\in[k]}\cfl_{2}$ and $\cfl_{2k+1} = (\bigvee_{i\in[k]}\cfl_{2}) \vee \cfl$.
\end{claim}

\begin{proof}
It is shown in \cite[Claim 4]{YK13} that,  for the family $\bcfl$ of {\em  bounded context-free languages}, $\bcfl_{2k} = \bcfl_{2k-2}\vee \bcfl_{2}$ (and thus $\bcfl_{2k} = \bigvee_{i\in[k]}\bcfl_{2}$ follows). The essentially same proof works to verify that $\cfl_{2k}= \bigvee_{i\in[k]}\cfl_{2}$.  Moreover, since $\cfl_{2k+1} = \cfl_{2k}\vee\cfl$ by the definition, we obtain $\cfl_{2k+1} = (\bigvee_{i\in[k]}\cfl_{2}) \vee \cfl$.
\end{proof}

Next, we want to show that $\cfl_{2k},\cfl_{2k+1}\subseteq \sigmacflt{2}$ for all indices $k\geq1$.
The proof proceeds by induction on $k\geq1$.
Our starting point is the following claim.

\begin{claim}\label{CFL_2-contained-Sigma_2}
$\cfl_{2}\subseteq \sigmacflt{2}$.
\end{claim}

\begin{proof}
Let $L$ be any language in $\cfl_{2}$ and take two context-free languages $A$ and $B$ satisfying  $L = A\cap\overline{B}$. Let $M$ be an appropriate npda recognizing $A$. Consider the following procedure: on input $x$, copy $x$ to the query tape and, at the same time, simulate $M$ on $x$. When $M$ enters an accepting state along a certain computation path, make a query $x$ on the word $x$ to $\overline{B}$ and wait for its oracle answer. This procedure demonstrates  that $L$ is in $\cfl_{m}^{\overline{B}}$, which is included in $\cfl_{m}^{\co\cfl} \subseteq \cfl_{T}^{\co\cfl} = \sigmacflt{2}$ by Lemma \ref{basic-Turing}.
\end{proof}

Assuming $k\geq2$, let us consider the language family $\cfl_{2k}$.
Claim \ref{CFL-decomposition} implies  that $\cfl_{2k}  = \bigvee_{i\in[k]}\cfl_{2}$.  Since $\cfl_{2}\subseteq\sigmacflt{2}$ by Claim \ref{CFL_2-contained-Sigma_2}, we obtain $\cfl_{2k} \subseteq  \bigvee_{i\in[k]}\sigmacflt{2}$.  As is shown in  Lemma \ref{basic-Boolean-op},  $\sigmacflt{2}$ is closed under union, and thus this fact implies that $\cfl_{2k}\subseteq \sigmacflt{2}$.
Next,  let us consider $\cfl_{2k+1}$ for $k\geq1$. Since $\cfl_{2k+1} = \cfl_{2k}\vee \cfl$ by the definition, the above argument implies that $\cfl_{2k+1}\subseteq \sigmacflt{2}\vee \cfl$. Since $\cfl\subseteq \sigmacflt{2}$ and the closure property of $\sigmacflt{2}$ under union, it follows that  $\cfl_{2k+1} \subseteq \sigmacflt{2}\vee \sigmacflt{2} =\sigmacflt{2}$.  As a consequence, we conclude that $\cfl_{2k},\cfl_{2k+1}\subseteq \sigmacflt{2}$. Therefore, $\bhcfl\subseteq \sigmacflt{2}$ holds.

Furthermore,  we will prove that $\bhcfl\subseteq \picflt{2}$. It is possible to prove by induction on $k\in\nat^{+}$ that  $\co\cfl_{k}\subseteq \cfl_{k+1}$. {}From this inclusion, we obtain $\co\bhcfl \subseteq \bhcfl$. By symmetry, $\bhcfl\subseteq \co\bhcfl$ holds. Thus, we conclude that $\bhcfl=\co\bhcfl$. Therefore, the earlier assertion $\bhcfl\subseteq \sigmacflt{2}$ implies $\bhcfl\subseteq \picflt{2}$ as well.
\end{proof}


Let us turn our attention to the complexity of $\cfl(\omega)$. Wotschke \cite{Wot78} proved that $\cfl(\omega)\subsetneqq \bhcfl$. We will show the  inclusion $\cfl(\omega) \subseteq \bhcfl$ by conducting a direct estimation of each language family $\cfl(k)$ in $\cfl(\omega)$.

\begin{proposition}\label{CFL(omega)-in-Pi2} {\em \cite{Wot78}}
$\cfl(\omega)\subseteq \bhcfl$ (thus, $\cfl(\omega)\subseteq \sigmacflt{2}\cap\picflt{2}$).
\end{proposition}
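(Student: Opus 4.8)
The plan is to prove the sharper quantitative statement $\cfl(k) \subseteq \cfl_{2k+1}$ for every index $k \in \nat^{+}$. Since $\cfl(\omega) = \bigcup_{k \in \nat^{+}} \cfl(k)$ and each level $\cfl_{2k+1}$ sits inside $\bhcfl$ by definition, this at once yields $\cfl(\omega) \subseteq \bhcfl$; the parenthetical inclusion $\cfl(\omega) \subseteq \sigmacflt{2} \cap \picflt{2}$ then follows immediately from Proposition \ref{BHCFL-in-second-level}.

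The heart of the argument is a De Morgan dualization that trades the $k$-fold intersection defining $\cfl(k)$ for a complemented $k$-fold union, a form the Boolean hierarchy handles cleanly despite the non-closure of $\cfl$ under intersection. Concretely, I would fix $L \in \cfl(k)$ and write $L = \bigcap_{i=1}^{k} A_i$ with each $A_i \in \cfl$, so that $L = \overline{\bigcup_{i=1}^{k} \overline{A_i}}$. Each complement $\overline{A_i}$ lies in $\co\cfl$, and the key elementary observation is that $\co\cfl \subseteq \cfl_2$: taking $\Sigma^* \in \cfl$ as the left conjunct, any $\overline{B} \in \co\cfl$ equals $\Sigma^* \cap \overline{B} \in \cfl \wedge \co\cfl = \cfl_2$. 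Consequently $\bigcup_{i=1}^{k} \overline{A_i}$ belongs to $\bigvee_{i \in [k]} \cfl_2$, which by Claim \ref{CFL-decomposition} is exactly $\cfl_{2k}$.

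It then remains only to re-complement. Here I would invoke the containment $\co\cfl_{m} \subseteq \cfl_{m+1}$, which is established by induction on $m$ inside the proof of Proposition \ref{BHCFL-in-second-level}; applied with $m = 2k$ it gives $L \in \co\cfl_{2k} \subseteq \cfl_{2k+1} \subseteq \bhcfl$. Taking the union over all $k \in \nat^{+}$ finishes the proof.

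I do not anticipate a genuine obstacle, since every ingredient is already available; the only point needing care is choosing the right dualization at the outset. A naive attempt to realize $A_1 \cap A_2 \cap \cdots$ directly inside the nested-difference hierarchy runs straight into the failure of $\cfl$ to be closed under intersection, whereas passing to complements converts the intersections into unions—an operation under which $\cfl$ is closed—and lets Claim \ref{CFL-decomposition} collapse the $k$-fold union into the single level $\cfl_{2k}$. The step $\co\cfl \subseteq \cfl_2$ is precisely what makes this collapse available, and it is the hinge of the whole argument.
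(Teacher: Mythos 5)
Your proposal is correct, and it proves exactly the paper's key inequality---Claim \ref{CFL(k)-in-bool-CFL}, that $\cfl(k)\subseteq\cfl_{2k+1}$---but by a genuinely different route. The paper argues by induction on $k$: from $\cfl(k)=\cfl(k-1)\wedge\cfl$ and the induction hypothesis it gets $\cfl(k)\subseteq\cfl_{2k-1}\wedge\cfl$, then unfolds $\cfl_{2k+1}=(\cfl_{2k-1}\wedge\co\cfl)\vee\cfl$ and appeals to the distributive-law identity $(\cfl_{2k-1}\wedge\co\cfl)\vee\cfl=(\cfl_{2k-1}\vee\cfl)\wedge(\cfl\vee\co\cfl)$, whose right-hand side visibly contains $\cfl_{2k-1}\wedge\cfl$. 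You instead dualize all $k$ conjuncts at once: $L=\overline{\bigcup_{i\in[k]}\overline{A_i}}$ with each $\overline{A_i}\in\co\cfl\subseteq\cfl_2$, the union lands in $\bigvee_{i\in[k]}\cfl_2=\cfl_{2k}$ by Claim \ref{CFL-decomposition}, and one final complementation via $\co\cfl_{m}\subseteq\cfl_{m+1}$ gives $L\in\cfl_{2k+1}$, hence $\cfl(\omega)\subseteq\bhcfl$ and, by Proposition \ref{BHCFL-in-second-level}, the parenthetical inclusion. What your route buys is that intersections of context-free languages are never manipulated inside the hierarchy at all---De Morgan turns them into unions, under which $\cfl$ behaves well---and it rests only on facts the paper explicitly records; this is arguably firmer ground than the paper's step, since plain set distributivity $(A\cap B)\cup C=(A\cup C)\cap(B\cup C)$ only gives the inclusion of $(\cfl_{2k-1}\wedge\co\cfl)\vee\cfl$ \emph{into} $(\cfl_{2k-1}\vee\cfl)\wedge(\cfl\vee\co\cfl)$ (with the same $\cfl$-member in both factors), whereas the paper needs the reverse containment, asserted as an equality without proof. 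The trade-off is that your argument is less self-contained: the induction has merely migrated into your two auxiliary facts, and in particular $\co\cfl_m\subseteq\cfl_{m+1}$ is only asserted, not proved, in the paper, so a complete write-up should include its short induction, namely $\co\cfl_1=\co\cfl\subseteq\cfl_2$, $\co\cfl_{2m}=\co\cfl_{2m-1}\vee\cfl\subseteq\cfl_{2m}\vee\cfl=\cfl_{2m+1}$, and $\co\cfl_{2m+1}=\co\cfl_{2m}\wedge\co\cfl\subseteq\cfl_{2m+1}\wedge\co\cfl=\cfl_{2m+2}$.
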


\begin{proof}
A key to the proof of  the first part of this proposition is the following claim.

\begin{claim}\label{CFL(k)-in-bool-CFL}
For every index $k\geq1$, $\cfl(k)\subseteq \cfl_{2k+1}$ holds.
\end{claim}

\begin{proof}
We will prove the claim by induction on $k\geq1$. When $k=1$, the claim is obviously true since $\cfl(1)=\cfl_{1}\subseteq \cfl_{3}$. For induction step, assume that $k\geq2$.  Our induction hypothesis states that $\cfl(k-1)\subseteq \cfl_{2k-1}$. Since $\cfl(k)=\cfl(k-1)\wedge \cfl$,  we obtain $\cfl(k)\subseteq \cfl_{2k-1}\wedge \cfl$. In contrast, it follows by the definition that $\cfl_{2k+1} =\cfl_{2k}\vee \cfl =  (\cfl_{2k-1}\wedge \co\cfl)\vee \cfl$.  The last term equals $(\cfl_{2k-1}\vee \cfl)\wedge (\cfl\vee \co\cfl)$. Clearly, this language family  includes $\cfl_{2k-1}\wedge \cfl$ as a subclass. Therefore, we conclude that $\cfl(k)\subseteq \cfl_{2k+1}$.
\end{proof}

By Claim \ref{CFL(k)-in-bool-CFL}, it follows that $\cfl(\omega) = \bigcup_{k\in\nat^{+}}\cfl(k) \subseteq \bigcup_{k\in\nat^{+}}\cfl_{2k+1} \subseteq \bhcfl$.  The second part of the proposition follows from Proposition \ref{BHCFL-in-second-level}
\end{proof}

Let us argue that the language family $\cfl_{m}^{\cfl(\omega)}$ is located within the third level of the $\cfl$ hierarchy.

\begin{proposition}\label{upper-bound-nine}
$\cfl_{m}^{\cfl(\omega)}\subseteq \sigmacflt{3}$.
\end{proposition}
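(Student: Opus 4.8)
The plan is to observe that the substance of this statement has already been established in the earlier propositions, so that the proposition falls out simply by unwinding definitions. The crucial point is that, although $\cfl(\omega)$ is an infinite union of the conjunctive-closure families $\cfl(k)$, it collapses \emph{as a whole} into a single level of the CFL hierarchy; hence a single outer reduction suffices and no iterated composition is needed. First I would recall Proposition \ref{CFL(omega)-in-Pi2}, which gives $\cfl(\omega)\subseteq \bhcfl\subseteq \sigmacflt{2}\cap\picflt{2}$, so that every oracle drawn from $\cfl(\omega)$ already belongs to $\sigmacflt{2}$.

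Concretely, I would fix an arbitrary $L\in\cfl_{m}^{\cfl(\omega)}$ and choose a witnessing oracle $B\in\cfl(\omega)$ with $L\in\cfl_{m}^{B}$; by the previous remark $B\in\sigmacflt{2}$. Next I would invoke Lemma \ref{basic-Turing}, namely $\cfl_{m}^{B}\subseteq \cfl_{T}^{B}$, to upgrade the many-one reduction to a Turing reduction over the same oracle, giving $L\in\cfl_{T}^{B}$. Finally, unfolding the definition $\sigmacflt{3}=\cfl_{T}(\sigmacflt{2})=\bigcup_{A\in\sigmacflt{2}}\cfl_{T}^{A}$ and using $B\in\sigmacflt{2}$ yields $L\in\cfl_{T}^{B}\subseteq \cfl_{T}(\sigmacflt{2})=\sigmacflt{3}$. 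Since $L$ was arbitrary, this establishes $\cfl_{m}^{\cfl(\omega)}\subseteq \sigmacflt{3}$.

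I do not expect a genuine computational obstacle here: the real work is carried out in Proposition \ref{CFL(omega)-in-Pi2}, whose proof (through Claim \ref{CFL(k)-in-bool-CFL} together with the collapse $\bhcfl\subseteq \sigmacflt{2}\cap\picflt{2}$) is precisely what absorbs the unboundedly many conjunctions into the fixed second level. The only point demanding care is conceptual rather than technical. Because many-one $\cfl$-reducibility is \emph{not} transitive, one is tempted to view $\cfl_{m}^{\cfl(\omega)}$ as demanding a deep stack of composed reductions, which would threaten to push the class far above the third level. The correct viewpoint, and the one I would stress, is to treat $\cfl(\omega)$ as a single oracle class that is already contained in $\sigmacflt{2}$, and then to apply only \emph{one} further (many-one, hence Turing) reduction on top of it; this is exactly what pins the result to $\sigmacflt{3}$.
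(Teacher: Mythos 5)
Your proof is correct and follows essentially the same route as the paper: both reduce to the containment $\cfl(\omega)\subseteq\bhcfl\subseteq\sigmacflt{2}\cap\picflt{2}$ (Proposition \ref{CFL(omega)-in-Pi2} together with Proposition \ref{BHCFL-in-second-level}) and then apply one many-one-to-Turing upgrade via Lemma \ref{basic-Turing}. The only cosmetic difference is that you route the final step through $\cfl_{T}(\sigmacflt{2})$ while the paper writes $\cfl_{m}(\picflt{2})\subseteq\cfl_{T}(\picflt{2})$, which are the same class by Lemma \ref{basic-Turing}.
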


\begin{proof}
Proposition \ref{CFL(omega)-in-Pi2} implies that $\cfl_{m}^{\cfl(\omega)}$ is included in $\cfl_{m}^{\bhcfl}$. By Proposition \ref{BHCFL-in-second-level}, it follows that  $\cfl_{m}^{\bhcfl}$ is included in $\cfl_{m}(\picflt{2})$,  which is obviously a subclass of $\cfl_{T}(\picflt{2})=\sigmacflt{3}$ by Lemma \ref{basic-Turing}.
\end{proof}

\subsection{Structural Properties of the CFL Hierarchy}\label{sec:structure-hierarchy}

After having established fundamental properties of languages in the $\cfl$ hierarchy in Section \ref{sec:Turing-reduction}, we wish to explore more  structural properties that characterize the $\cfl$ hierarchy.  Moreover, we intend to present three alternative characterizations (Theorem \ref{Sigma-m-vs-T} and Proposition \ref{logical-charact}) of the hierarchy.

Let us consider a situation in which Boolean operations ($\wedge$ and $\vee$) are applied to languages in the $\cfl$ hierarchy. We begin with general results.

\begin{lemma}\label{basic-Boolean-op}
\begin{enumerate}
  \setlength{\topsep}{-2mm}%
  \setlength{\itemsep}{1mm}
  \setlength{\parskip}{0cm}

\item $\sigmacflt{k}\vee \sigmacflt{k} = \sigmacflt{k}$ and $\picflt{k}\wedge\picflt{k}=\picflt{k}$ for any $k\geq1$.
\item $\sigmacflt{k}\wedge \picflt{k} \subseteq \sigmacflt{k+1}\cap \picflt{k+1}$ and $\sigmacflt{k}\vee \picflt{k} \subseteq \sigmacflt{k+1}\cap \picflt{k+1}$ for any $k\geq1$.
\end{enumerate}\vs{-1}
\end{lemma}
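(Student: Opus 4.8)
The plan is to derive the first part directly from the closure properties of $\sigmacflt{k}$ established in the preceding lemma, and to prove the second part by an explicit deterministic oracle-pushdown construction that places each combined language already inside $\deltacflt{k+1}$.

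For Part (1), the inclusion $\sigmacflt{k}\subseteq\sigmacflt{k}\vee\sigmacflt{k}$ is immediate, since any $A\in\sigmacflt{k}$ can be written as $A=A\cup A$. The reverse inclusion $\sigmacflt{k}\vee\sigmacflt{k}\subseteq\sigmacflt{k}$ is exactly the statement that $\sigmacflt{k}$ is closed under union, which was just proved; hence $\sigmacflt{k}\vee\sigmacflt{k}=\sigmacflt{k}$. The dual identity $\picflt{k}\wedge\picflt{k}=\picflt{k}$ then follows by complementation: writing $\picflt{k}=\co\sigmacflt{k}$ and using $A\cap B=\overline{\overline{A}\cup\overline{B}}$, as $\overline{A},\overline{B}$ range over $\sigmacflt{k}$ their union ranges over $\sigmacflt{k}\vee\sigmacflt{k}=\sigmacflt{k}$, so $A\cap B$ ranges over $\co\sigmacflt{k}=\picflt{k}$.

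For Part (2), the key observation is that membership in $\sigmacflt{k}$, respectively in $\picflt{k}$, can be tested by a single oracle query rather than by a full reduction. Given $A\in\sigmacflt{k}$ and $B\in\picflt{k}$, so that $\overline{B}\in\sigmacflt{k}$, I would fold both sets into a single oracle by tagging: set $C=(\{0\}\cdot A)\cup(\{1\}\cdot(\overline{B})^{R})$. Using closure of $\sigmacflt{k}$ under concatenation, reversal, and union (all from the preceding lemma), $C\in\sigmacflt{k}$. I then design a deterministic oracle pushdown machine $M$ that, on input $x$, copies $x$ onto its query tape (prefixed by $0$) while simultaneously pushing $x$ onto its stack, and queries $0x$ against $C$ to learn whether $x\in A$, recording the answer in its finite control. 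The hard part is that the input is read only once, so the second test of $x$ cannot reread the input tape; I resolve this by popping the stored copy of $x$ off the stack, which yields $x^{R}$, and querying $1x^{R}$ against $C$, whose answer is yes exactly when $x^{R}\in(\overline{B})^{R}$, i.e. when $x\notin B$.

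Finally, I would obtain both closures at once by choosing the acceptance condition appropriately: accepting when the first answer is yes and the second is no computes $A\cap B$, while accepting when the first answer is yes or the second is no computes $A\cup B$. Since $M$ is deterministic and halts on all paths against the fixed oracle $C\in\sigmacflt{k}$, this places $A\cap B$ and $A\cup B$ in $\dcfl_{T}(\sigmacflt{k})=\deltacflt{k+1}$, and the containment in $\sigmacflt{k+1}\cap\picflt{k+1}$ follows from the already-established sandwich $\sigmacflt{k}\cup\picflt{k}\subseteq\deltacflt{k+1}\subseteq\sigmacflt{k+1}\cap\picflt{k+1}$. The one point needing care is the bookkeeping that the stack survives the first query, the query tape being a separate write-only tape that blanks after each query, so that the stored copy of $x$ is still available for the reversed second query; this is precisely what makes the one-pass, single-stack model sufficient here.
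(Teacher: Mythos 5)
Your proof is correct, but it takes a genuinely different route from the paper's, in both halves. For part (1), the paper does not cite the substitution-based closure lemma; it reproves union closure from scratch by induction on $k$: guess a bit $b$, simulate one of the two oracle npda's $M_0,M_1$, prefix the query word by $b$, and observe that the tagged oracle $\{0y\mid y\in A\}\cup\{1y\mid y\in B\}$ lies in $\picflt{k-1}$ because its complement is a union of two $\sigmacflt{k-1}$ sets. Your shortcut is legitimate, since the closure lemma precedes this one and its proof (via the substitution property) is independent, so you simply replace a redundant construction by a citation; the De Morgan step for $\picflt{k}\wedge\picflt{k}$ matches the paper's "by symmetry." The real divergence is part (2): the paper keeps everything nondeterministic and uses the encoding technique --- the new npda simulates the oracle npda $M$ recognizing the $\sigmacflt{k}$ language and writes a track string $\track{\tilde{x}}{\tilde{y}}$ recording both the input $x$ and $M$'s query word $y$; the combined oracle $\{\track{\tilde{x}}{\tilde{y}}\mid \text{$x$ tests the $\picflt{k}$ set, $y$ tests the $\picflt{k-1}$ oracle}\}$ lies in $\picflt{k}\wedge\picflt{k}=\picflt{k}$ by part (1), giving $L\in\cfl_{T}(\picflt{k})=\sigmacflt{k+1}$, with a symmetric argument for $\picflt{k+1}$. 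You instead recycle the tag-and-reverse trick the paper itself uses in the proof of Proposition \ref{first-level-equal}(3): fold $A$ and $(\overline{B})^{R}$ into a single $\sigmacflt{k}$ oracle and let a deterministic two-query machine test $x\in A$ and then, via the stack-stored reversed copy, $x\notin B$. This buys something strictly stronger --- both $\sigmacflt{k}\wedge\picflt{k}$ and $\sigmacflt{k}\vee\picflt{k}$ land in $\deltacflt{k+1}$, not merely in $\sigmacflt{k+1}\cap\picflt{k+1}$ --- and it treats $k=1$ uniformly, whereas the paper's part (2) argument is stated only for $k\geq2$. The price is heavier reliance on the preceding closure lemma: you need closure of $\sigmacflt{k}$ under reversal (the most delicate item on that list) and under concatenation, while the paper's route needs only part (1) and the $\natural$-extension encoding. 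Both arguments are sound.
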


\begin{proof}
In what follows, we are focused only on the $\sigmacflt{k}$ case since the $\picflt{k}$ case is symmetric.

(1) When $k=1$, since $\cfl$ is closed under union, $\cfl\vee\cfl=\cfl$  follows immediately. Assume that $k\geq2$. Obviously,  $\sigmacfl{k}$ is contained within $\sigmacfl{k}\vee\sigmacfl{k}$. Next, let $L$ be any language in $\sigmacflt{k}\vee\sigmacflt{k}$. We then take two oracle npda's $M_0$ and $M_1$ and two languages $A,B\in\picflt{k-1}$ satisfying that $L = L(M_0,A)\cup L(M_1,B)$.  Our goal is to show that $L\in \sigmacflt{k}$.  Let us consider
another oracle npda $M$ that behaves as follows. On input $x$, $M$ guesses a bit $b$, writes it down on a query tape, and  simulates $M_b$ on $x$. Thus, when $M_b$ halts with a query word $y_b$ produced on its query tape, $N$ does the same with $by_b$. Let us define $C$ as the union  $\{0y\mid y\in A\}\cup \{1y\mid y\in B\}$.
We argue that $C$ is in $\picflt{k-1}$. To see this fact, consider the complement $\overline{C}$. Note that $\overline{C} = \{0y\mid y\in\overline{A}\}\cup \{1y\mid y\in\overline{B}\}$. Because $\overline{A},\overline{B}\in\sigmacflt{k-1}$,  by our induction hypothesis, $\overline{C}$ belongs to $\sigmacflt{k-1}$. Since $L = L(N,C)$ holds,  we conclude that $L\in\cfl_{T}^C\subseteq \sigmacflt{k}$.

(2) Assuming  $k\geq2$, let  $L$ be any language in $\sigmacflt{k}\wedge\picflt{k}$ and take an oracle npda $M$ and two languages $A\in\picflt{k-1}$ and $B\in\picflt{k}$ for which $L = L(M,A)\cap B$.
Here, we define a new oracle npda $N$ to simulate $M$ on input $x$ and generate  an encoding   $\track{\tilde{x}}{\tilde{y}}$ of  $x$ and query word $y$ along a computation of $M$ on $x$.
Next, let us define $C$ as the set $\{\track{\tilde{x}}{\tilde{y}}\mid x\in A, y\in B\}$, which belongs to $\picfl{k}\wedge \picfl{k}\subseteq \picflt{k}$ by (1) using  a fact that $\picflt{k-1}\subseteq \picflt{k}$. Since $L = L(N,C)$, $L$ belongs to $\cfl_{T}^{C}$, which is a subclass of $\sigmacflt{k+1}$. In a similar fashion, we can prove that $L\in\picflt{k+1}$. Therefore, we obtain $\sigmacfl{k}\wedge \sigmacfl{k}\subseteq \sigmacfl{k+1}\cap\picfl{k+1}$.
\end{proof}

What is missing in the list of Lemma a\ref{basic-Boolean-op} is two language families $\sigmacflt{k}\wedge \sigmacflt{k}$ and $\picflt{k}\vee\picflt{k}$. It is well-known that $\cfl\wedge \cfl =\cfl(2)\neq \cfl$. Therefore, the equality $\sigmacflt{k}\wedge \sigmacflt{k} = \sigmacflt{k}$  does not hold in the first level (\ie $k=1$). Surprisingly, we can prove that this equality actually holds for any level {\em more than}  $1$.

\begin{proposition}\label{wedge-closure}
$\sigmacflt{k}\wedge \sigmacflt{k} = \sigmacflt{k}$ and $\picflt{k}\vee\picflt{k} = \picflt{k}$ for all  levels  $k\geq2$.
\end{proposition}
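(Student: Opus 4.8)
The plan is to prove only the $\sigmacflt{k}$ statement, $\sigmacflt{k}\wedge\sigmacflt{k}=\sigmacflt{k}$; the $\picflt{k}$ statement then follows by De~Morgan's law since $\picflt{k}=\co\sigmacflt{k}$. The inclusion $\sigmacflt{k}\subseteq\sigmacflt{k}\wedge\sigmacflt{k}$ is immediate (intersect with $\Sigma^{*}\in\cfl\subseteq\sigmacflt{k}$), so everything rests on $\sigmacflt{k}\wedge\sigmacflt{k}\subseteq\sigmacflt{k}$. Fix $L_{1},L_{2}\in\sigmacflt{k}$, say $L_{i}=L(M_{i},A_{i})$ with oracle npda $M_{i}$ and $A_{i}\in\sigmacflt{k-1}$. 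The central obstacle is architectural: a single oracle npda has only one stack and a one-way input head, so it cannot simply run $M_{1}$ and $M_{2}$ side by side on the same $x$. The tempting fix of guessing all oracle answers and verifying them by one final query fails, because verifying the ``yes'' answers against $A_{i}\in\sigmacflt{k-1}$ and the ``no'' answers against $\overline{A_{i}}\in\picflt{k-1}$ produces an oracle in $\sigmacflt{k-1}\wedge\picflt{k-1}$, which by Lemma~\ref{basic-Boolean-op} only lands the construction in $\sigmacflt{k+1}$. Hence the queries must be made \emph{adaptively and for real}, and I would split the argument according to whether $k=2$ or $k\geq3$.

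For the base case $k=2$ I would exploit the stackless normal form $\sigmacflt{2}=\nfa_{m}^{\cfl_{2}}$ of Proposition~\ref{first-level-equal}. Writing $L_{i}=\nfa_{m}^{B_{i}}$ with $B_{i}\in\cfl_{2}$ and reductions carried out by \emph{nfa's} (which use no stack), I can run both reductions in parallel on one left-to-right pass, padding with $\natural$ to synchronise the two heads and emitting a two-track word $[\tilde{y}_{1},\tilde{y}_{2}]^{T}$; the two guesses are independent, so $x\in L_{1}\cap L_{2}$ iff some joint output satisfies $y_{1}\in B_{1}\wedge y_{2}\in B_{2}$. Writing $B_{i}=P_{i}\cap\overline{Q_{i}}$ with $P_{i},Q_{i}\in\cfl$, the positive requirement $y_{1}\in P_{1}\wedge y_{2}\in P_{2}$ is a $\cfl(2)$ predicate while the negative requirement is the complement of a single $\cfl$ language (an npda guesses which track to refute), hence $\co\cfl$, so the combined oracle lies in $\cfl(2)\wedge\co\cfl=\cfl\wedge\cfl_{2}$. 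It then remains to observe $\nfa_{m}^{\cfl\wedge\cfl_{2}}\subseteq\sigmacflt{2}$: a single oracle npda runs the stackless nfa front end, \emph{peels off the one $\cfl$ factor using its own stack} while copying the query word to the query tape, and delegates the remaining $\cfl_{2}$ predicate to one oracle call, placing $L_{1}\cap L_{2}$ in $\cfl_{m}^{\cfl_{2}}=\sigmacflt{2}$.

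For $k\geq3$ I would instead apply the stack-history technique directly to $M_{1}$ and $M_{2}$. Here $N$ offloads \emph{both} stacks: it guesses the top-of-stack symbols needed to drive each transition and records the two stack histories, interleaved as a two-track word and buffered on $N$'s own (now otherwise unused) stack. Whenever $M_{i}$ would query $A_{i}$, $N$ issues that query for real to a tagged combined oracle; crucially these queries use the query tape and leave the buffered history on the stack intact. At the end $N$ pops the buffer and makes one final query checking that the recorded histories are legitimate, i.e.\ that (up to reversal, handled by closure of $\cfl$ under reversal) $[\tilde{h}_{1},\tilde{h}_{2}]^{T}\in DYCK^{ext}_{2}$; as in Corollary~\ref{NFA-Dyck-equal-CFL} this predicate lies in $\cfl(2)$, and $\cfl(2)\subseteq\sigmacflt{2}\subseteq\sigmacflt{k-1}$ by Claim~\ref{CFL(2)-bound} and Lemma~\ref{basic-Turing}. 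The tagged oracle $\{1q:q\in A_{1}\}\cup\{2q:q\in A_{2}\}\cup\{3h:h\in DYCK^{ext}_{2}\}$ is then in $\sigmacflt{k-1}$ by closure under union and concatenation, so $N$ witnesses $L_{1}\cap L_{2}\in\cfl_{T}(\sigmacflt{k-1})=\sigmacflt{k}$.

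I expect the main obstacle to be exactly the point that forces the case split: the joint stack-consistency check is inherently a $\cfl(2)$ predicate, which a level-$1$ ($\cfl$) oracle cannot absorb, so the clean ``offload both stacks'' construction is unavailable at $k=2$ and must be replaced by the $\nfa_{m}^{\cfl_{2}}$ argument. A secondary delicate point is the soundness of making real adaptive queries off of guessed stack tops \emph{before} the histories are validated; I would argue this by noting that only guess-consistent computation paths survive the final $DYCK^{ext}_{2}$ query, and along such a path the simulation of each $M_{i}$ is faithful, so $N$ accepts iff both $M_{1}$ (relative to $A_{1}$) and $M_{2}$ (relative to $A_{2}$) accept $x$, i.e.\ iff $x\in L_{1}\cap L_{2}$.
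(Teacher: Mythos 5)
Your reduction of the $\picflt{k}$ claim to the $\sigmacflt{k}$ claim and your $k=2$ argument are essentially sound: converting both languages into \emph{stackless, many-one} reductions (via Proposition \ref{first-level-equal}), running the two nfa's in parallel on two $\natural$-padded tracks of one query tape, and absorbing the paired oracle is exactly the mechanism the paper uses. The genuine gap is your $k\geq3$ branch. There you keep $M_1$ and $M_2$ as Turing reduction machines and insist on making their queries ``adaptively and for real'' during a parallel simulation. But an oracle npda has a \emph{single} write-only query tape, and entering $q_{query}$ transfers the entire tape content to the oracle (the tape then becomes blank). Since $M_1$ and $M_2$ share one one-way input head, they must be advanced in lockstep, and nothing prevents $M_1$ from firing a query at a moment when $M_2$ has a partially composed query word: that partial word sits on the same query tape and is destroyed when $M_1$'s query fires. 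Nor can it be buffered on $N$'s stack, which your construction already dedicates to the deferred stack-history buffer; with one stack and write-only one-way tapes there is no way to retrieve an inner segment of the stack later without destroying everything pushed above it. So the $k\geq3$ construction is not implementable on this machine model, independently of the (otherwise reasonable) soundness argument you give for queries driven by guessed stack tops.

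This architectural obstacle is precisely why the paper does not split into cases: it first establishes Theorem \ref{Sigma-m-vs-T}, namely $\sigmacflt{k}=\sigmacflm{k}=\sigmanfa{k}(DYCK)$ for every $k\geq1$ (stated immediately before this proposition and proven without using it, so there is no circularity), which replaces Turing reductions by stackless many-one nfa reductions relative to oracles in $\pinfa{k-1}(DYCK)=\picflt{k-1}$ at \emph{every} level, not just level $2$. With each simulated machine making exactly one query at the very end, your $k=2$ parallel-track construction generalizes verbatim to all $k\geq2$: the paired oracle $\{\track{\tilde{y}}{\tilde{z}}\mid y\in A_1,\, z\in A_2\}$ remains in $\picflt{k-1}$ by Lemma \ref{basic-Boolean-op}(1), giving $L_1\cap L_2\in\nfa_{m}^{B}\subseteq\sigmanfa{k}(DYCK)=\sigmacflt{k}$ uniformly. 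Your instinct that naive guess-and-verify loses a level is correct, but the escape is not real adaptive queries (impossible here); it is the many-one, stackless characterization you already exploited at $k=2$, which Theorem \ref{Sigma-m-vs-T} makes available at all levels.
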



This proposition is not quite trivial and its proof requires two new characterizations of $\sigmacflt{k}$ in terms of two many-one reducibilities.  These characterizations are, in fact, a natural extension of Claim \ref{NFA-to-DYCK} and, for our purpose, we want to introduce two many-one hierarchies. The {\em many-one CFL hierarchy} consists of language families
$\sigmacflm{k}$ and $\picflm{k}$ ($k\in\nat^{+}$) defined as follows:  $\sigmacflm{1} = \cfl$,  $\picflm{k} = \co\sigmacflm{k}$, and   $\sigmacflm{k+1} = \cfl_{m}(\picflm{k})$ for any level $k\geq1$, where the subscript ``$m$'' stands for ``many-one'' as before.
Relative to oracle $A$, a {\em relativized many-one NFA hierarchy}, which was essentially formulated in \cite{Rei90}, is defined as follows: $\relsigmanfa{1}{A} = \nfa_{m}^{A}$, $\relpinfa{k}{A}= \co\relsigmanfa{k}{A}$,  and $\relsigmanfa{k+1}{A} = \nfa_{m}(\relpinfa{k}{A})$ for every index $k\geq1$. Given a language family $\CC$,   the notation $\relsigmanfa{k}{\CC}$ (or $\sigmanfa{k}(\CC)$) denotes the union $\bigcup_{A\in\CC} \relsigmanfa{k}{A}$.

\begin{theorem}\label{Sigma-m-vs-T}
$\sigmacflt{k} = \sigmacflm{k} = \sigmanfa{k}(DYCK)$ for every index $k\geq1$.
\end{theorem}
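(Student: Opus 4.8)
The plan is to prove the three equalities simultaneously by induction on $k$, arranging the families into a cyclic chain of inclusions that forces them to coincide. For the base case $k=1$ we have $\sigmacflt{1}=\cfl$ and $\sigmacflm{1}=\cfl$ by definition, while $\sigmanfa{1}(DYCK)=\nfa_{m}^{DYCK}=\cfl$ by Claim \ref{NFA-to-DYCK}; hence all three agree. Assume now $\sigmacflt{k}=\sigmacflm{k}=\sigmanfa{k}(DYCK)$, so that also $\picflt{k}=\picflm{k}=\pinfa{k}(DYCK)$ after complementation. Two of the inclusions at level $k+1$ are routine. Using $\nfa_{m}^{B}\subseteq\cfl_{m}^{B}$ gives $\sigmanfa{k+1}(DYCK)=\nfa_{m}(\pinfa{k}(DYCK))=\nfa_{m}(\picflm{k})\subseteq\cfl_{m}(\picflm{k})=\sigmacflm{k+1}$. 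Using $\cfl_{m}^{B}\subseteq\cfl_{T}^{B}$ (Lemma \ref{basic-Turing}) together with the earlier identity $\cfl_{T}(\sigmacflt{k})=\cfl_{T}(\picflt{k})$ gives $\sigmacflm{k+1}=\cfl_{m}(\picflm{k})=\cfl_{m}(\picflt{k})\subseteq\cfl_{T}(\picflt{k})=\cfl_{T}(\sigmacflt{k})=\sigmacflt{k+1}$. Thus it remains to close the cycle with the hard inclusion $\sigmacflt{k+1}\subseteq\sigmanfa{k+1}(DYCK)$.

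Since $\cfl_{T}(\sigmacflt{k})=\cfl_{T}(\picflt{k})$ and, by the inductive hypothesis, $\nfa_{m}(\picflt{k})=\nfa_{m}(\pinfa{k}(DYCK))=\sigmanfa{k+1}(DYCK)$, it suffices to show $\cfl_{T}(\picflt{k})\subseteq\nfa_{m}(\picflt{k})$. Fix an oracle $A\in\picflt{k}$ and a $T$-reduction npda $M$ witnessing $L\in\cfl_{T}^{A}$. I would build one many-one $\nfa$-reduction that emits a single query word encoding an entire guessed computation of $M$. The reduction machine $N$ nondeterministically guesses the sequence of oracle answers $b_{1},b_{2},\ldots$ that $M$ receives and simulates $M$ under those guesses; the stack of $M$ is never used physically but is offloaded via the stack-history technique of Claim \ref{NFA-to-DYCK}, so $N$ itself needs no stack. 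For each query $y_{i}$, $N$ emits a tagged block recording $b_{i}$ together with the data needed to check that guess, inserting the dummy symbol $\natural$ to reconcile the different head speeds exactly as in Corollary \ref{NFA-Dyck-equal-CFL}. The resulting query word is a concatenation $s_{1}\natural s_{2}\natural\cdots$ of blocks, one per query, augmented by the DYCK blocks certifying the stack histories.

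The heart of the argument is keeping the verifying oracle at level $\picflt{k}$ rather than one level higher. A \emph{yes}-guess asserts $y_{i}\in A$ with $A\in\picflt{k}$, already a $\picflt{k}$ condition that can be handed to the oracle. A \emph{no}-guess asserts $y_{i}\in\overline{A}$ with $\overline{A}\in\sigmacflt{k}$, a $\sigmacflt{k}$ condition that must \emph{not} be passed on directly. Here I exploit the level-$k$ characterization $\sigmacflt{k}=\sigmacflm{k}=\cfl_{m}(\picflt{k-1})$ (and, for $k=1$, $\cfl=\nfa_{m}^{DYCK}$): the existential content of ``$y_{i}\in\overline{A}$'' is an accepting run of the $m$-reduction machine for $\overline{A}$ producing a word $z_{i}$ with $z_{i}\in A''$ for some $A''\in\picflt{k-1}$. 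I let $N$ absorb this existential into its own nondeterminism, guessing that run (again offloading its stack to a DYCK block) and emitting $z_{i}$, so the oracle only checks the $\picflt{k-1}$ condition $z_{i}\in A''$. After this transformation every block carries a condition in $\picflt{k}$: membership in $A\in\picflt{k}$, in some $A''\in\picflt{k-1}\subseteq\picflt{k}$, or in a Dyck language, which lies in $\dcfl\subseteq\picflt{k}$. The oracle $C$ accepts exactly when \emph{every} block satisfies its designated condition, and such a sequential conjunction is itself in $\picflt{k}$: its complement ``some block fails'' is decided by nondeterministically selecting the offending block and simulating the relevant $\sigmacflt{k}$-test there, which is a single $\sigmacflt{k}$ computation. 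Hence $C\in\picflt{k}$ and $L\in\nfa_{m}^{C}\subseteq\nfa_{m}(\picflt{k})$, closing the cycle.

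The main obstacle is precisely this last level-counting. A naive guess-and-verify makes the oracle check a conjunction containing a $\sigmacflt{k}$ predicate (coming from whichever class of answers is existential), landing it in $\sigmacflt{k}\wedge\picflt{k}\subseteq\sigmacflt{k+1}\cap\picflt{k+1}$ and overshooting by a full level. The resolution is to take $A\in\picflt{k}$ so that only the no-queries are ``hard,'' then absorb their existential witnesses into the reduction's own nondeterminism using the level-$k$ many-one characterization, and rely on the sequential-conjunction principle to fuse the remaining $\picflt{k}$ conditions; this is what pins the oracle exactly at $\picflt{k}$. A secondary technical point is that several independent stacks—that of $M$ and those of the $m$-reduction machines invoked for the no-queries—must be offloaded simultaneously into separate DYCK tracks of the one query word, with $\natural$-padding to synchronize the heads, so that $N$ can remain a stack-free nfa.
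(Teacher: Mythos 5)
Your proposal is correct, and its essential ingredients coincide with the paper's: guessing the stream of oracle answers and emitting one tagged block per query (the paper's relativized Proposition~\ref{first-level-equal}, packaged as Lemma~\ref{T-reduction-recursive}), offloading stacks into Dyck-certified stack histories with $\natural$-padding (Claim~\ref{NFA-to-DYCK}, Lemma~\ref{extendible-inclusion}, Claim~\ref{Sigma-DYCK}), and---decisively---absorbing the existential ($\sigmacflt{k}$-side) part of the answer verification into the reduction's own nondeterminism via the level-$k$ many-one characterization so that the oracle stays at $\picflt{k}$, which is exactly the content of the paper's Lemma~\ref{NFA-to-CFL-Pi}. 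What differs is the decomposition. The paper factors the hard inclusion as $\sigmacflt{k+1}\subseteq \cfl_{m}(\sigmacflt{k}\wedge\picflt{k})\subseteq \nfa_{m}(\sigmacflm{k}\wedge\picflm{k})\subseteq \cfl_{m}(\picflm{k})=\sigmacflm{k+1}$ and only afterwards, in a separate induction (Claim~\ref{Sigma-DYCK}), converts $\sigmacflm{k+1}$ into $\sigmanfa{k+1}(DYCK)$; notably, in Lemma~\ref{NFA-to-CFL-Pi} the outer machine is an npda whose \emph{physical} stack simulates the absorbed $m$-reduction machine, so only one stack is ever virtual. You instead pass in one shot from $\cfl_{T}(\picflt{k})$ directly to $\nfa_{m}(\picflt{k})$, which obliges you to offload several stacks simultaneously (that of $M$ and those of the absorbed reduction machines for the no-queries) onto separate tracks, and to argue inline that an unbounded block-wise conjunction of $\picflt{k}$-conditions remains in $\picflt{k}$ via your guess-the-failing-block complementation; the paper obtains the binary case from Lemma~\ref{basic-Boolean-op}(1) and handles the unbounded, sequential case implicitly inside its machine constructions. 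Your route is shorter and makes the level-counting maximally transparent; the paper's route buys reusable intermediate lemmas (Lemmas~\ref{T-reduction-recursive} and~\ref{NFA-to-CFL-Pi} are invoked again in Lemma~\ref{upward-collapse}) and keeps each individual simulation lighter. One caveat, shared equally by both arguments: the answer-guessing step tacitly assumes the $T$-reduction machine still halts on all paths under inconsistently guessed answers, an idealization the paper also makes without comment.
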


Since the proof of Theorem \ref{Sigma-m-vs-T} is involved, prior to the proof, we intend to demonstrate how to prove  Proposition \ref{wedge-closure} using this theorem.

\vs{-2}
\begin{proofof}{Proposition \ref{wedge-closure}}
In what follows, it suffices to prove that $\sigmacflt{k}\wedge \sigmacflt{k}=\sigmacflt{k}$, since $\picflt{k}\vee\picflt{k}=\picflt{k}$ is obtained by symmetry. First, take any language $L$ in $\sigmacflt{k}\wedge \sigmacflt{k}$ and assume that $L =L_1\cap L_2$ for two languages $L_1,L_2\in\sigmacflt{k}$. Theorem \ref{Sigma-m-vs-T} implies that $L_1$ and $L_2$ are both  in $\sigmanfa{k}(DYCK)$.
Let us choose oracle nfa's $M_1$ and $M_2$ that respectively recognize $L_1$ and $L_2$ relative to oracles $A_1$ and $A_2$, where $A_1,A_2\in\pinfa{k-1}(DYCK)$. Let us consider  a new npda $N$ that works in the following fashion. In scanning  each input symbol,  say, $\sigma$, $N$ simulates in parallel one or more steps of $M_1$ and $M_2$ using two sets of inner states for $M_1$ and $M_2$.  Such a parallel simulation of two machines is possible because $M_1$ and $M_2$ use no stacks. Moreover, whenever  $M_1$ (resp., $M_2$) tries to write a symbol, $N$ writes it on the upper (resp., lower) track of its single query tape. To write two query strings $y_1$ and $y_2$ of $M_1$ and $M_2$, respectively, onto $N$'s query tape, we actually write their $\natural$-extensions. Now, let $\track{\tilde{y}}{\tilde{z}}$ denote a query string produced by $N$ so that $\track{\tilde{y}}{\tilde{z}}$ encodes query words $y$ and $z$ along two computation paths of $M_1$ and $M_2$. A new oracle $B$ is finally set to be  $\{\track{\tilde{y}}{\tilde{z}}\mid y\in A_1,z\in A_2\}$, which is in $\pinfa{k-1}(DYCK)\wedge \pinfa{k-1}(DYCK)$.
Since $\picflt{k-1}=\pinfa{k-1}(DYCK)$ by Theorem \ref{Sigma-m-vs-T},  Lemma \ref{basic-Boolean-op}(1) ensures that $B$ is also in $\pinfa{k-1}(DYCK)$.
The above definitions show that  $N$ $m$-reduces $L$ to $B$. Therefore, it immediately follows that $L\in\nfa_{m}^{B}\subseteq \sigmanfa{k}(DYCK) = \sigmacflt{k}$.
\end{proofof}

The first step toward the proof of Theorem \ref{Sigma-m-vs-T} is  to prove a key lemma given below.

\begin{lemma}\label{T-reduction-recursive}
For every index $k\geq1$, it holds that  $\sigmacflt{k+1} \subseteq  \cfl_{m}(\sigmacflt{k}\wedge \picflt{k}) \subseteq  \nfa_{m}(\sigmacflt{k}\wedge \picflt{k})$.
\end{lemma}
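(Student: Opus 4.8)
The plan is to prove the two inclusions separately, treating the first (which does the real work) by a guess-and-verify construction and the second by the stack-history technique of Claim \ref{NFA-to-DYCK} packaged in Lemma \ref{extendible-inclusion}. For the inclusion $\sigmacflt{k+1}\subseteq\cfl_{m}(\sigmacflt{k}\wedge\picflt{k})$, I would start from $L=L(M,A)$ where $M$ is a $T$-reduction machine and $A\in\sigmacflt{k}$, and build an $m$-reduction npda $N$ that simulates $M$ on input $x$, using its own stack for $M$'s stack. Whenever $M$ produces its $i$th query word $y_i$ and enters the query state, $N$ nondeterministically guesses the answer $b_i\in\{0,1\}$, forces $M$'s state to $q_{yes}$ or $q_{no}$ accordingly, and copies the pair onto its own query tape, so that along each accepting path $N$ emits a transcript $w=b_1y_1\natural b_2y_2\natural\cdots\natural b_my_m\natural$. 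The target oracle is $C=P\cap Q$, where $P$ holds the transcripts in which every segment flagged $1$ has its query word in $A$, and $Q$ holds those in which every segment flagged $0$ has its query word outside $A$. A routine argument then shows $x\in L$ iff some accepting path of $N$ emits a transcript in $C$, so $N$ $m$-reduces $L$ to $C$.

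The crux of this first inclusion is placing $P$ and $Q$ in the correct classes using only closure properties already available. Schematically $P=\bigl((0\,\Gamma^{*}\natural)\cup(1\,A\,\natural)\bigr)^{*}$ (up to boundary handling), i.e. a Kleene closure of a $\sigmacflt{k}$ language; since $\sigmacflt{k}$ is closed under union, concatenation, and Kleene closure (established earlier in this section), we get $P\in\sigmacflt{k}$. I would emphasize that this sequential, segment-by-segment test is genuinely in $\sigmacflt{k}$ and must not be conflated with the intersection $\cfl(\omega)$ of several languages on a common string. For $Q$, I would argue through its complement: ``some flagged-$0$ segment has its query word in $A$'' has the form $\Gamma^{*}(0\,A\,\natural)\Gamma^{*}$, which is in $\sigmacflt{k}$ by concatenation closure, whence $Q\in\picflt{k}$. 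Therefore $C=P\cap Q\in\sigmacflt{k}\wedge\picflt{k}$, as required.

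For the second inclusion $\cfl_{m}(\sigmacflt{k}\wedge\picflt{k})\subseteq\nfa_{m}(\sigmacflt{k}\wedge\picflt{k})$, write $\CC=\sigmacflt{k}\wedge\picflt{k}$. I would first note that $\CC$ is $\natural$-extendible, since each of $\sigmacflt{k}$ and $\picflt{k}$ is (an oracle npda can project onto a single track and skip the padding symbol $\natural$ while simulating, leaving its oracle in $\picflt{k-1}$ unchanged). Lemma \ref{extendible-inclusion} then gives $\cfl_{m}(\CC)\subseteq\nfa_{m}(\dcfl\wedge\CC)$, so it remains to absorb the extra Dyck conjunct coming from the stack history. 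A typical oracle of $\dcfl\wedge\CC$ produced by that construction has the form $D'\cap P'\cap Q'$ with $D'\in\dcfl$ (the stack-history track), $P'\in\sigmacflt{k}$, and $Q'\in\picflt{k}$; I would regroup it as $P'\cap(D'\cap Q')$. Because $\dcfl$ is closed under complement we have $\dcfl\subseteq\co\cfl=\picflt{1}\subseteq\picflt{k}$, so $D'$ is itself a $\picflt{k}$ language, and Lemma \ref{basic-Boolean-op}(1) gives $\picflt{k}\wedge\picflt{k}=\picflt{k}$, whence $D'\cap Q'\in\picflt{k}$ and the whole oracle lies in $\sigmacflt{k}\wedge\picflt{k}=\CC$. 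This yields $\nfa_{m}(\dcfl\wedge\CC)\subseteq\nfa_{m}(\CC)$ and completes the chain.

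The main obstacle I anticipate is keeping both verifiers inside the single-intersection class $\sigmacflt{k}\wedge\picflt{k}$ despite the unbounded, adaptively chosen sequence of queries: one cannot appeal to closure of $\sigmacflt{k}$ under intersection, since that is exactly Proposition \ref{wedge-closure}, which is proved later through this very lemma and Theorem \ref{Sigma-m-vs-T}. The construction must therefore be arranged so that the positive checks collapse into a single $\sigmacflt{k}$ computation (via concatenation/Kleene closure rather than intersection), and the Dyck conjunct in the second step is folded into the $\picflt{k}$ side, where intersection closure is legitimately available from Lemma \ref{basic-Boolean-op}(1). For $k=1$ the first inclusion also matches part (1) of Proposition \ref{first-level-equal}, which serves as a useful sanity check on the construction.
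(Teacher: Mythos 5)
Your proposal is correct, and its skeleton matches the paper's: the second inclusion is handled exactly as in the paper (apply Lemma \ref{extendible-inclusion} with $\CC=\sigmacflt{k}\wedge\picflt{k}$, then absorb the $\dcfl$ conjunct into $\picflt{k}$ via $\dcfl\subseteq\co\cfl=\picflt{1}\subseteq\picflt{k}$ and Lemma \ref{basic-Boolean-op}(1)), and the first inclusion rests on the same guess-and-verify transcript $b_1y_1\natural b_2y_2\natural\cdots\natural b_my_m\natural$. Where you genuinely diverge is in how the verifier languages are placed in the right classes. The paper does not construct them afresh: it observes that the proof of Proposition \ref{first-level-equal}(1) \emph{relativizes}, i.e.\ re-runs that machine construction with an oracle $A\in\sigmacflt{k-1}$ attached, so the positive checker lands in $\cfl_{T}^{A}\subseteq\sigmacflt{k}$ and the negative checker's complement in $\co\cfl_{T}^{A}\subseteq\picflt{k}$. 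You instead stay at level $k$, write the checkers as explicit expressions $P=\bigl((0\,\Gamma^{*}\natural)\cup(1\,A\,\natural)\bigr)^{*}$ and $\overline{Q}$, and invoke closure of $\sigmacflt{k}$ under union, concatenation, and Kleene closure from Section \ref{sec:Turing-reduction}. Both are sound; your route is more modular (no need to re-inspect the machine-level details of Proposition \ref{first-level-equal} under relativization), it correctly sidesteps the circular temptation to use $\sigmacflt{k}\wedge\sigmacflt{k}=\sigmacflt{k}$, and it explicitly verifies the $\natural$-extendibility hypothesis of Lemma \ref{extendible-inclusion}, which the paper applies silently; the paper's route is shorter and reuses level-one work verbatim. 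One small repair your write-up needs: the schematic $\overline{Q}=\Gamma^{*}(0\,A\,\natural)\Gamma^{*}$ must be block-structured, say $\bigl((0\cup 1)\Gamma^{*}\natural\bigr)^{*}(0\,A\,\natural)\bigl((0\cup 1)\Gamma^{*}\natural\bigr)^{*}$, or the flag bits must be fresh symbols outside $\Gamma$; otherwise, when $0\in\Gamma$, a suffix $0y\natural$ with $y\in A$ sitting inside a flagged-$1$ block would throw a perfectly consistent transcript into $\overline{Q}$. The analogous point for $P$ (that the star decomposition aligns with the intended blocks) is fine precisely because $\natural\notin\Gamma$, and is worth one sentence.
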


\begin{proof}
The proof of the lemma proceeds by induction on $k\in\nat^{+}$. Notice that the base case $k=1$ has been already proven as Proposition \ref{first-level-equal}. Therefore, in what follows, we aim at the induction step of $k\geq2$ by proving separately the following two containments: (1) $\sigmacflt{k+1}\subseteq \cfl_{m}(\sigmacflt{k}\wedge \picflt{k})$ and (2) $\cfl_{m}(\sigmacflt{k}\wedge \picflt{k}) \subseteq  \nfa_{m}(\sigmacflt{k}\wedge \picflt{k})$.

(1)  Let us recall the proof of Proposition \ref{first-level-equal}, in particular, the proof of the following inclusion: $\cfl_{T}^{\cfl} \subseteq \cfl_{m}(\cfl\wedge\co\cfl)$.   We note that this proof   is {\em relativizable} (that is, it works when we append an oracle to underlying npda's).  To be more precise, essentially the same proof proves that $\cfl_{T}(\cfl_{T}^A) \subseteq \cfl_{m}(\cfl_{T}^A\wedge \co\cfl_{T}^A)$ for any oracle $A$. If we choose an arbitrary language in $\sigmacflt{k-1}$ as $A$, then we conclude that $\sigmacflt{k+1}\subseteq \cfl_{m}(\sigmacflt{k}\wedge \picflt{k})$.

(2)  By setting $\CC = \sigmacflt{k}\wedge \picflt{k}$ in Lemma \ref{extendible-inclusion}, we obtain $\cfl_{m}(\sigmacflt{k}\wedge\picflt{k}) \subseteq \nfa_{m}(\dcfl\wedge \CC)$. Note that $\dcfl\wedge\CC = \dcfl\wedge (\sigmacflt{k}\wedge  \picflt{k}) = \sigmacflt{k}\wedge (\dcfl\wedge \picflt{k})$. Since $\dcfl\subseteq \picflt{k}$, it instantly follows that $\dcfl\wedge \picflt{k}\subseteq \picflt{k}\wedge \picflt{k} = \picflt{k}$ by Lemma \ref{basic-Boolean-op}(1).  In summary, we obtain the desired inclusion  $\nfa_{m}(\dcfl\wedge \CC) \subseteq \nfa_{m}(\sigmacflt{k}\wedge \picflt{k})$.
\end{proof}

The second step is to establish the following inclusion relationship between two language families $\nfa_{m}(\sigmacflm{k}\wedge \picflm{e})$ and $\cfl_{m}(\picflm{e})$.

\begin{lemma}\label{NFA-to-CFL-Pi}
For any two indices $k\geq1$ and $e\geq k-1$, it holds that $\nfa_{m}(\sigmacflm{k}\wedge \picflm{e}) \subseteq \cfl_{m}(\picflm{e})$.
\end{lemma}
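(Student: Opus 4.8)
The plan is to produce, for each $L$ on the left-hand side, a single $\cfl$-reduction machine with a two-track query tape whose oracle can be forced into $\picflm{e}$. Throughout I fix $A\in\sigmacflm{k}$, $B\in\picflm{e}$, and an oracle nfa $M$ that $m$-reduces $L$ to $A\cap B$; so along an accepting path $M$ on $x$ emits a query word $w$, and $x\in L$ iff $w\in A\cap B$ for some such path. First I would unfold the $\sigmacflm{k}$-oracle. For $k\ge2$ the definition $\sigmacflm{k}=\cfl_m(\picflm{k-1})$ gives an oracle npda $M_A$ and a set $A'\in\picflm{k-1}$ with $w\in A$ iff $M_A$ on $w$ produces, along some accepting path, a query in $A'$. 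For the base case $k=1$ one can argue directly ($A\in\cfl$ is recognized by a plain npda and $L$ reduces straight to $B$), or uniformly invoke Claim \ref{NFA-to-DYCK} ($\cfl=\nfa_m^{DYCK}$) to get the same picture with $A'$ a Dyck language in $\dcfl=\picflm{0}$.

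The core construction is an npda $N$ that, on input $x$, simulates the nfa $M$ to generate $w$ one symbol at a time; each produced symbol is (a) fed to $M_A$, whose stack is carried by $N$'s own stack, so that $M_A$'s query word $v$ for $A'$ is written on the upper track of the query tape, and (b) echoed onto the lower track. Idling of either head is absorbed by inserting $\natural$, exactly as in Corollary \ref{NFA-Dyck-equal-CFL}, so that the produced query is a string $\track{\tilde v}{\tilde w}$ whose tracks are $\natural$-extensions of $v$ and $w$; $N$ accepts iff both $M$ and $M_A$ accept. Putting $D=\{\track{\tilde v}{\tilde w}\mid v\in A',\ w\in B\}$, one verifies that $x\in L$ iff $N$ has an accepting path producing a query in $D$, so $L\in\cfl_m(D)$. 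Since $D$ is the intersection of the language ``upper track in $A'$'' (in $\picflm{k-1}$ by $\natural$-extendibility) with ``lower track in $B$'' (in $\picflm{e}$), this yields $L\in\cfl_m(\picflm{k-1}\wedge\picflm{e})$.

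It then remains to collapse the level, i.e. to show $\picflm{k-1}\wedge\picflm{e}\subseteq\picflm{e}$ whenever $k-1\le e$ (this is exactly where the hypothesis $e\ge k-1$ enters). I would establish this from two structural facts about the many-one CFL hierarchy, proved first and independently: monotonicity $\picflm{j}\subseteq\picflm{j+1}$, by an induction whose base is $\dcfl\subseteq\cfl$ and whose step uses that $\cfl_m(\cdot)$ is monotone in its oracle; and intersection-closure $\picflm{e}\wedge\picflm{e}=\picflm{e}$, equivalently union-closure $\sigmacflm{e}\vee\sigmacflm{e}=\sigmacflm{e}$. The latter I would prove by induction on $e$: the base $e=1$ is closure of $\cfl$ under union, and for $e\ge2$, given reductions of $L_1,L_2$ to $C_1,C_2\in\picflm{e-1}$, the machine that guesses a tag $\langle i\rangle$ and simulates the $i$th reduction $m$-reduces $L_1\cup L_2$ to $\langle1\rangle C_1\cup\langle2\rangle C_2$; this tagged oracle lies in $\picflm{e-1}$ because its complement $\langle1\rangle\overline{C_1}\cup\langle2\rangle\overline{C_2}\cup R$ is a union (inductive union-closure of $\sigmacflm{e-1}$) of prepend-images of $\sigmacflm{e-1}$-sets with a regular remainder $R$. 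With monotonicity giving $A'\in\picflm{k-1}\subseteq\picflm{e}$ and $B\in\picflm{e}$, closure yields $D\in\picflm{e}$, whence $L\in\cfl_m(\picflm{e})$.

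The hard part will be the level-collapse rather than the simulation. The two-track, $\natural$-padded construction is a routine instance of the stack-history and multi-head techniques already used in the paper, but $\picflm{k-1}\wedge\picflm{e}\subseteq\picflm{e}$ must be obtained here by a self-contained induction, since the analogous Turing-level statement (Proposition \ref{wedge-closure}) is established only later and through the very theorem (Theorem \ref{Sigma-m-vs-T}) that this lemma feeds; invoking it would be circular. Special care is also needed to check that the prepend- and tag-closures are derived by directly editing a single reduction machine, not by composing reductions, because many-one $\cfl$-reducibility is not transitive.
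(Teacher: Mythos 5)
Your proposal is correct, and its core construction is exactly the paper's: unfold the $\sigmacflm{k}$-side into a $\cfl_{m}$-reduction to an oracle in $\picflm{k-1}$, run that reduction machine on the oracle nfa's query word as it is generated (the nfa uses no stack, so the simulating npda's stack is free for this), place the two resulting query words on the two $\natural$-padded tracks of a single query tape, and reduce $L$ to $\{\track{\tilde{v}}{\tilde{w}}\mid v\in A',\, w\in B\}$, which sits in $\picflm{k-1}\wedge\picflm{e}$. Where you genuinely depart from the paper is the collapse $\picflm{k-1}\wedge\picflm{e}\subseteq\picflm{e}$: the paper dispatches it by citing Lemma \ref{basic-Boolean-op}(1), whereas you prove monotonicity and intersection-closure of the many-one hierarchy by a self-contained tag-and-complement induction. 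Your circularity concern here is well taken: Lemma \ref{basic-Boolean-op}(1) is stated and proved for the Turing classes $\picflt{k}$, and identifying those with the many-one classes $\picflm{k}$ is precisely Theorem \ref{Sigma-m-vs-T}, which the present lemma is used to prove; so the paper's citation is loose, and is best read as the implicit claim that the proof of Lemma \ref{basic-Boolean-op}(1) (guess a tag bit, simulate the tagged machine, push the complement of the tagged oracle through the induction hypothesis) transfers verbatim to $m$-reduction machines. Your induction is exactly that transfer made explicit, so your route costs more space but is airtight at this point of the development, which the paper's, read literally, is not. One small wrinkle to smooth in your closure argument: since $\sigmacflm{1}=\cfl$ is defined outright rather than as $\cfl_{m}(\picflm{0})$ (indeed $\cfl_{m}^{\dcfl}\neq\cfl$ by Lemma \ref{non-closure-many-one} and Example \ref{ex:DUP_2}), the inclusion $\sigmacflm{1}\subseteq\sigmacflm{2}$ does not follow from monotonicity of the operator $\cfl_{m}(\cdot)$ alone; it needs the separate, easy observation that every $\cfl$ language $m$-reduces to a fixed nonempty $\co\cfl$ oracle by emitting a constant query word along accepting paths, and the same observation disposes of the regular remainder $R$ and the level-one prepend-closure in your tagged-complement step.
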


\begin{proof}
Let $L$ be any language in $\nfa_{m}^A$, where $A$ is a certain language in $\sigmacflm{k}\wedge \picflm{e}$.  Here, we express $A$ as  $A_1\cap A_2$ by choosing properly two languages $A_1\in\cfl_{m}^{B_1}$  and $A_2\in\picflm{e}$, where $B_1\in\picflm{k-1}$. Let $M$ be an oracle nfa that $m$-reduces $L$ to $A$ and let $M_1$ be an oracle npda $m$-reducing $A_1$ to $B_1$.
Our goal is to construct an oracle npda $N$ and an oracle $C$ for $L$. The desired machine $N$ takes  input $x$ and  simulates $M$ on $x$. When $M$ tries to write a symbol, say, $\sigma$, $N$ writes $\sigma$ on the upper track of its query tape and also simulates, using a stack, one step of $M_1$ while reading $\sigma$ and more $\lambda$-moves of $M_1$.  When $M_1$ writes a symbol, $N$ uses the lower track to keep the symbol. Finally, $N$ produces a query string of the form $\track{\tilde{y}}{\tilde{z}}$, where $y$ is a query word of $M$ and $z$ is a query word of $M_1$. Next, we define $C$ to be $\{\track{\tilde{y}}{\tilde{z}}\mid y\in A_2, z\in B_1\}$ so that $N$ $m$-reduces $L$ to $C$. The language $C$  obviously  belongs to $\picflm{e}\wedge \picflm{k-1}$, which equals $\picflm{e}$ by Lemma \ref{basic-Boolean-op}(1), since $e\geq k-1$. Therefore, $L$ is in $\cfl_{m}^{C}\subseteq \cfl_{m}(\picflm{e})$.
\end{proof}

Finally, we are ready to give the proof of Theorem \ref{Sigma-m-vs-T}.

\vs{-2}
\begin{proofof}{Theorem \ref{Sigma-m-vs-T}}
The proof of the theorem proceeds by induction on $k\geq1$. Since Lemma \ref{non-closure-many-one} handles the base case $k=1$, it is sufficient to  assume that $k\geq2$.  First, we target the second equality  given in the theorem; namely, $\sigmacflm{k} = \sigmanfa{k}(DYCK)$.

Since  $k\geq2$, let us  assume that $L\in\cfl_{m}^{A}$ for a certain language $A$ in $\picflm{k-1}$. A proof similar to that of Claim \ref{NFA-to-DYCK} proves the existence of a certain Dyck language $D$
satisfying that $\cfl_{m}^{A} = \nfa_{m}^{B}$, where $B$ is of the form $\{ \track{\tilde{y}}{\tilde{z}} \mid y\in D, z\in A\}$ and $\tilde{y}$ and $\tilde{z}$ are appropriate $\natural$-extensions of $y$ and $z$, respectively.  This definition places $B$ into the language family $\dcfl\wedge \picflm{k-1}$, which equals $\picflm{k-1}$ by Lemma \ref{basic-Boolean-op}(1) because of $k\geq2$. By our induction hypothesis  $\picflm{k-1} = \pinfa{k-1}(DYCK)$, it follows that  $\nfa_{m}^{B} \subseteq \nfa_{m}(\picfl{m,k-1})\subseteq \nfa_{m}(\pinfa{k-1}(DYCK)) = \sigmanfa{k}(DYCK)$, and thus we obtain $L\in\sigmanfa{k}(DYCK)$.

Next, we aim at establishing the first equality $\sigmacfl{k}=\sigmacfl{m,k}$ in the theorem.  Clearly,  $\sigmacflm{k}\subseteq \sigmacflt{k}$ holds since $\cfl_{m}^A\subseteq\cfl_{T}^A$ for any oracle $A$.
Henceforth, we target
the other inclusion.
Lemma \ref{T-reduction-recursive} implies that $\sigmacflt{k} \subseteq \nfa_{m}(\sigmacflt{k-1}\wedge \picflt{k-1})$ by our induction hypothesis.  Since $\sigmacflt{k-1}=\sigmacflm{k-1}$,  we obtain $\sigmacflt{k}\subseteq \nfa_{m}(\sigmacflm{k-1}\wedge \picflm{k-1})$. By Lemma \ref{NFA-to-CFL-Pi}, it further follows that $\nfa_{m}(\sigmacflm{k-1}\wedge \picflm{k-1})\subseteq \cfl_{m}(\picflm{k-1})=\sigmacflm{k}$.  In conclusion, $\sigmacflt{k}\subseteq \sigmacflm{k}$ holds.
\end{proofof}

Lemmas \ref{T-reduction-recursive} and \ref{NFA-to-CFL-Pi} together with Theorem \ref{Sigma-m-vs-T} lead to the following characterization of $\sigmacfl{k+1}$.

\begin{corollary}\label{wedge-oracle}
For any $k\geq1$, $\sigmacfl{k+1} = \cfl_{m}(\sigmacfl{k}\wedge\picfl{k}) = \nfa_{m}(\sigmacfl{k}\wedge\picfl{k})$.
\end{corollary}

\begin{proof}
Lemma \ref{T-reduction-recursive} shows that $\sigmacfl{k+1}\subseteq \cfl_{m}(\sigmacfl{k}\wedge\picfl{k})\subseteq \nfa_{m}(\sigmacfl{k}\wedge\picfl{k})$. It thus suffices to show that $\nfa_{m}(\sigmacfl{k}\wedge\picfl{k})\subseteq \sigmacfl{k+1}$. By Theorem  \ref{Sigma-m-vs-T}, we obtain $\sigmacfl{k}=\sigmacfl{m,k}$. Hence, by Lemma \ref{NFA-to-CFL-Pi}, it follows that $\nfa_{m}(\sigmacfl{k}\wedge\picfl{k}) \subseteq \nfa_{m}(\sigmacfl{m,k}\wedge\picfl{m,k}) \subseteq \cfl_{m}(\picfl{m,k})= \cfl_{m}(\picfl{k})=\sigmacfl{k+1}$.
\end{proof}


An {\em upward collapse property} holds for the $\cfl$ hierarchy except for the first level.
Similar to the notation $\cfl_{e}$ expressing  the $e$th level of the Boolean hierarchy over $\cfl$,  a new notation  $\sigmacflt{k,e}$ is introduced to denote the $e$th level of the Boolean hierarchy over $\sigmacflt{k}$. Additionally, we set $\mathrm{BH}\sigmacflt{k} = \bigcup_{e\in\nat^{+}}\sigmacflt{k,e}$.  Notice that  $\mathrm{BH}\sigmacflt{1}$ coincides with $\bhcfl$.

\begin{lemma}\label{upward-collapse}
(upward collapse properties) Let $k$ be any integer at least $2$.
\begin{enumerate}\vs{-1}
  \setlength{\topsep}{-2mm}%
  \setlength{\itemsep}{1mm}
  \setlength{\parskip}{0cm}

\item $\sigmacflt{k} = \sigmacflt{k+1}$ if and only if $\cflh = \sigmacflt{k}$.
\item $\sigmacflt{k}=\picflt{k}$ if and only if $\mathrm{BH}\sigmacflt{k} = \sigmacflt{k}$.
\item $\sigmacflt{k}=\picflt{k}$ implies $\sigmacflt{k}=\sigmacflt{k+1}$.
\end{enumerate}\vs{-1}
\end{lemma}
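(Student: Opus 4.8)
The plan is to handle the three parts by reducing each to the structural machinery already in place: the many-one characterizations of Theorem~\ref{Sigma-m-vs-T}, the recursive inclusion of Lemma~\ref{T-reduction-recursive}, the collapse of Lemma~\ref{NFA-to-CFL-Pi}, and the Boolean-closure facts of Lemma~\ref{basic-Boolean-op} and Proposition~\ref{wedge-closure}. For part (1), the backward direction is immediate: if $\cflh=\sigmacflt{k}$, then $\sigmacflt{k+1}\subseteq\cflh=\sigmacflt{k}\subseteq\sigmacflt{k+1}$, forcing equality. For the forward direction I would prove by induction on $j\geq k$ that $\sigmacflt{j}=\sigmacflt{k}$. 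The base case $j=k$ is trivial; for the step, assuming $\sigmacflt{j}=\sigmacflt{k}$, I use that the Turing $\cfl$-closure of a class depends only on the class, so $\sigmacflt{j+1}=\cfl_{T}(\sigmacflt{j})=\cfl_{T}(\sigmacflt{k})=\sigmacflt{k+1}=\sigmacflt{k}$, the final equality being the hypothesis. Since the hierarchy is monotone, taking the union over all $j$ yields $\cflh=\sigmacflt{k}$.

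For part (2), I would first observe $\picflt{k}\subseteq\sigmacflt{k,2}=\sigmacflt{k}\wedge\picflt{k}$, obtained by intersecting a $\picflt{k}$-language with the full language $\Sigma^{*}\in\dcfl\subseteq\sigmacflt{k}$. Hence if $\mathrm{BH}\sigmacflt{k}=\sigmacflt{k}$, then $\picflt{k}\subseteq\sigmacflt{k,2}\subseteq\sigmacflt{k}$; taking complements of both sides gives the reverse inclusion, so $\sigmacflt{k}=\picflt{k}$. Conversely, if $\sigmacflt{k}=\picflt{k}$, then $\sigmacflt{k}$ is closed under complement, and it is already closed under union (Lemma~\ref{basic-Boolean-op}(1)) and, because $k\geq2$, under intersection (Proposition~\ref{wedge-closure}); thus $\sigmacflt{k}$ is a Boolean algebra. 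A routine induction on $e$ then shows $\sigmacflt{k,e}=\sigmacflt{k}$ at every level, whence $\mathrm{BH}\sigmacflt{k}=\sigmacflt{k}$.

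Part (3) is the real content. Beginning from Lemma~\ref{T-reduction-recursive}, one has $\sigmacflt{k+1}\subseteq\nfa_{m}(\sigmacflt{k}\wedge\picflt{k})$. Under the hypothesis $\sigmacflt{k}=\picflt{k}$ and the wedge-closure (Proposition~\ref{wedge-closure}, using $k\geq2$), this becomes $\sigmacflt{k+1}\subseteq\nfa_{m}(\sigmacflt{k})=\nfa_{m}(\sigmacflm{k})$ by Theorem~\ref{Sigma-m-vs-T}. The key step is to collapse $\nfa_{m}(\sigmacflm{k})$ back \emph{into} $\sigmacflm{k}$ rather than up a level. To arrange this I would rewrite $\sigmacflm{k}\subseteq\sigmacflm{k}\wedge\picflm{k-1}$ (intersecting with $\Sigma^{*}$, which lies in $\dcfl=\picflm{0}$ and hence in $\picflm{k-1}$) and then apply Lemma~\ref{NFA-to-CFL-Pi} with the parameter choice $e=k-1$, which is admissible since $k-1\geq k-1$. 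This gives
\[
\sigmacflt{k+1}\subseteq\nfa_{m}(\sigmacflm{k}\wedge\picflm{k-1})\subseteq\cfl_{m}(\picflm{k-1})=\sigmacflm{k}=\sigmacflt{k},
\]
and since $\sigmacflt{k}\subseteq\sigmacflt{k+1}$ always holds, equality follows; combined with part (1) this upgrades to $\cflh=\sigmacflt{k}$.

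I expect the last manoeuvre to be the main obstacle. Because many-one $\cfl$-reducibility is \emph{not} transitive, a naive composition of the outer $\nfa_{m}$ with $\sigmacflm{k}=\cfl_{m}(\picflm{k-1})$ would drift into $\sigmacflm{k+1}$ and defeat the collapse. The saving observation is that the hypothesis $e\geq k-1$ in Lemma~\ref{NFA-to-CFL-Pi} is met exactly at $e=k-1$, so that $\cfl_{m}(\picflm{k-1})$ is \emph{by definition} the level $\sigmacflm{k}$ itself; the whole computation is thereby pinned at level $k$. Verifying that the auxiliary inclusion $\sigmacflm{k}\subseteq\sigmacflm{k}\wedge\picflm{k-1}$ and the conversion $\sigmacflt{k}=\sigmacflm{k}$ interact correctly with the monotonicity of the many-one hierarchy is the one place where care is needed, but each ingredient is supplied by the earlier results.
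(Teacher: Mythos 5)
Your proposal is correct and follows essentially the same route as the paper's own proof: parts (1) and (2) by the same level-by-level inductions, and part (3) by combining Lemma~\ref{T-reduction-recursive}, the hypothesis-driven identification $\sigmacflt{k}\wedge\picflt{k}=\sigmacflt{k}$, the embedding $\sigmacflm{k}\subseteq\sigmacflm{k}\wedge\picflm{k-1}$, and Lemma~\ref{NFA-to-CFL-Pi} at $e=k-1$ together with Theorem~\ref{Sigma-m-vs-T}. The only cosmetic differences are that you cite Proposition~\ref{wedge-closure} where the paper instead uses Lemma~\ref{basic-Boolean-op}(1) under the hypothesis $\sigmacflt{k}=\picflt{k}$ (both are legitimate), and you spell out auxiliary inclusions (such as intersecting with $\Sigma^*$) that the paper leaves implicit.
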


\begin{proof}
(1)  It is obvious that $\cflh=\sigmacflt{k}$ implies $\sigmacflt{k}=\sigmacflt{k+1}$.
Next, assume that $\sigmacflt{k} = \sigmacflt{k+1}$. By applying the complementation operation,
we obtain $\picflt{k} = \picflt{k+1}$. Thus, it follows that $\sigmacflt{k+2}=\cfl_{T}(\picflt{k+1}) = \cfl_{T}(\picflt{k}) = \sigmacflt{k+1}$. Similarly, it is possible to prove by induction on $e\in\nat^{+}$ that $\sigmacflt{k+e} = \sigmacflt{k}$.  Therefore, $\cflh=\sigmacflt{k}$ holds.

(2) Since $\picflt{k} \subseteq \sigmacfl{k}\wedge \picfl{k}=\sigmacfl{k,2} \subseteq \mathrm{BH}\sigmacflt{k}$, obviously $\mathrm{BH}\sigmacflt{k}=\sigmacflt{k}$ implies $\sigmacflt{k}=\picflt{k}$. Next, assume that $\sigmacflt{k}=\picflt{k}$. By induction on $e\in\nat^{+}$, we wish to prove that $\sigmacflt{k,e}\subseteq \sigmacflt{k}$.
Firstly, let us consider  the language family  $\sigmacflt{k,2e+1}$ for $e\geq1$. Our induction hypothesis asserts that $\sigmacflt{k,2e}\subseteq \sigmacflt{k}$. It thus follows that $\sigmacflt{k,2e+1}=\sigmacflt{k,2e}\vee \sigmacflt{k} \subseteq \sigmacflt{k}\vee \sigmacflt{k}=\sigmacflt{k}$ by Lemma \ref{basic-Boolean-op}(1).  Secondly,  we consider the family  $\sigmacflt{k,2e+2}$ for $e\geq0$. We then obtain $\sigmacflt{k,2e+2} = \sigmacflt{k,2e+1}\wedge \picflt{k} \subseteq \sigmacflt{k}\wedge \picfl{k}$, where the last containment comes from our induction hypothesis.
Since $\picflt{k}=\sigmacflt{k}$, we obtain  $\sigmacflt{k,2e+2} \subseteq \sigmacflt{k}\wedge\picflt{k}=\picflt{k}\wedge\picflt{k}=\picflt{k}$ by Lemma \ref{basic-Boolean-op}(1). The last term obviously equals $\sigmacflt{k}$ from our assumption.  Overall, we conclude that $\mathrm{BH}\sigmacflt{k}=\sigmacflt{k}$.

(3) Assume that $\sigmacflt{k}=\picflt{k}$ and pay our attention to $\sigmacflt{k+1}$.  Since  Theorem \ref{Sigma-m-vs-T} yields   $\sigmacflt{k}=\sigmacflm{k}$,  our assumption is equivalent to $\sigmacflm{k}=\picflm{k}$. By Lemma \ref{T-reduction-recursive}, it follows that $\sigmacfl{k+1} \subseteq \nfa_{m}(\sigmacflm{k}\wedge \picflm{k}) = \nfa_{m}(\sigmacflm{k}\wedge \sigmacflm{k})$, which is included in $\nfa_{m}(\sigmacflm{k})$ by Lemma  \ref{basic-Boolean-op}(1). Since $\sigmacfl{k}\subseteq \sigmacflm{k}\wedge \picflm{k-1}$, Lemma \ref{NFA-to-CFL-Pi} implies that $\nfa_{m}(\sigmacflm{k})\subseteq \nfa_{m}(\sigmacflm{k}\wedge \picflm{k-1}) \subseteq \cfl_{m}(\picflm{k-1}) = \sigmacflm{k}$, which equals $\sigmacflt{k}$ by Theorem \ref{Sigma-m-vs-T} again.
\end{proof}

{}From Lemma \ref{upward-collapse}, if the Boolean hierarchy over $\sigmacflt{k}$  collapses to $\sigmacflt{k}$, then the $\cfl$ hierarchy also collapses. It is not clear, however, that a much weaker assumption like $\sigmacflt{k,e}=\sigmacflt{k,e+1}$ suffices to draw the collapse of the $\cfl$ hierarchy (for instance, $\sigmacflt{k+1}=\sigmacflt{k+2}$) for $k\geq2$. In addition, it is known that $\cfl\cup\co\cfl\subseteq \pcfl$ but we do not know whether $\sigmacfl{2}\subseteq \pcfl$.

We remark that Theorem \ref{Sigma-m-vs-T} provides us with a new logical characterization of
$\sigmacflt{k}$. For convenience, we define a function $Ext$ as  $Ext(\tilde{x}) = x$ for any $\natural$-extension $\tilde{x}$ of string $x$.

\begin{proposition}\label{logical-charact}
Let $k\geq1$. For any language $L\in\sigmacflt{k}$ over alphabet $\Sigma$, there exists another language $A\in\dcfl$ and a linear polynomial $p$ with $p(n)\geq n$ for all $n\in\nat$ that satisfy the following equivalence relation: for any number $n\in\nat$ and any string $x\in\Sigma^n$,
\begin{eqnarray*}\label{eqn:logical-charact}
x\in L &\text{ if and only if }&   \exists \tilde{x}(|\tilde{x}|\leq p(n))\, \exists y_1(|y_1|\leq p(n))\,
\forall y_2(|y_2|\leq p(n))\,   \\
&& \hs{15} \cdots  Q_k y_k(|y_k|\leq p(n))\, [\, x = Ext(\tilde{x})\wedge [\tilde{x},y_1,y_2,\ldots,y_k]^T\in A\,],
\end{eqnarray*}
where $Q_k$ is $\exists$ if $k$ is odd and is $\forall$ if $k$ is even. Moreover, $\tilde{x}$ expresses a $\natural$-extension of $x$.
\end{proposition}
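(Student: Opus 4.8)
The plan is to reduce the statement to the combinatorial content of Theorem~\ref{Sigma-m-vs-T} and then to read off the quantifier prefix from the recursive definition of the many-one NFA hierarchy. By Theorem~\ref{Sigma-m-vs-T}, any $L\in\sigmacflt{k}$ lies in $\sigmanfa{k}(DYCK)$, so there is a single Dyck language $D$ with $L\in\relsigmanfa{k}{D}$. First I would unfold the definitions $\relsigmanfa{k}{D}=\nfa_m(\relpinfa{k-1}{D})$ and $\relpinfa{j}{D}=\co\relsigmanfa{j}{D}$ down to the base $\relsigmanfa{1}{D}=\nfa_m^{D}$. This exhibits a fixed chain of nfa $m$-reduction machines $N_1,N_2,\ldots,N_k$: the machine $N_1$ reduces $L$ to the level-$(k-1)$ oracle, $N_2$ reduces the complement of that oracle to level $k-2$, and so on, with $N_k$ querying $D$ directly. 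Each complementation flips a quantifier and each nondeterministic $N_i$ contributes one quantifier ranging over its computation paths, which is exactly what produces the alternating prefix $\exists y_1\,\forall y_2\,\exists y_3\cdots Q_k y_k$; it begins with $\exists$ because $\sigmacflt{k}$ is the existential side, and $Q_k=\exists$ precisely when $k$ is odd. The accumulated complementations also determine whether the innermost test is membership in $D$ or in $\overline{D}$, but since $\dcfl$ is closed under complement this distinction is harmless.

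Second, I would take each quantified string $y_i$ to be an encoding of a computation path of $N_i$, in the spirit of the computation-path encoding used in the proof of Theorem~\ref{manyone-btt-equiv}: a path is written as a sequence of indexed transitions over a finite alphabet. Fixing $y_i$ renders $N_i$ deterministic, so from $y_i$ one can reconstruct both the symbols $N_i$ consumes and the query word $N_i$ emits. The leading $\exists\tilde{x}$ with $x=Ext(\tilde{x})$ supplies the $\natural$-extension needed to synchronize the input head of $N_1$ with its query-writing head, and more generally to pipeline the chain so that the output of $N_i$ is consumed symbol-by-symbol by $N_{i+1}$; this is the head-synchronization device of Corollary~\ref{NFA-Dyck-equal-CFL} and Lemma~\ref{extendible-inclusion}, and the $y_i$ will carry the corresponding $\natural$-padding. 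Because a halting npda without a stack admits only boundedly long $\lambda$-runs, each $N_i$ runs in time linear in its input length, so the query word at each of the constantly many levels stays linear in $n$; absorbing the $\natural$-padding into the constant yields a single linear polynomial $p$ with $p(n)\ge n$ that bounds $\tilde{x}$ and every $y_i$.

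The core of the argument, and the step I expect to be the main obstacle, is to package the whole verification into one language $A\in\dcfl$ over the $(k+1)$-track alphabet. Reading $[\tilde{x},y_1,\ldots,y_k]^{T}$ in one synchronized left-to-right pass, a deterministic pushdown automaton must check: (i) each $y_i$ encodes a genuine accepting path of $N_i$ whose transitions are consistent with the input symbols fed to it; (ii) at each aligned macro-step the symbol emitted along $y_i$ equals the symbol read along $y_{i+1}$, with $\natural$ marking the idling of one head against the other; and (iii) the final query word emitted along $y_k$ lies in $D$ (or in $\overline{D}$, per the parity of $k$). The delicate point is determinism. Conditions (i) and (ii) are purely finite-state checks across the tracks, the only subtlety being that the $k$-fold producer--consumer synchronization is a genuine pipeline rather than one of the two-track alignments already treated in the paper; a consistent global schedule exists because the data flow $N_1\to N_2\to\cdots\to N_k\to D$ is linear, and the $\natural$-paddings in the $y_i$ encode that schedule. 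Condition (iii) is the only genuinely context-free test, and because $D$ is deterministic context-free a single stack driven by the reconstructed output of $N_k$ suffices. Since $\dcfl$ is closed under intersection with regular sets, the conjunction of the regular conditions (i),(ii) with the deterministic Dyck test (iii) is again in $\dcfl$, giving $A$. I would finish by checking that $x\in L$ iff the displayed formula holds, matching each quantifier alternation to the corresponding complementation in the unfolded chain; the base case $k=1$ is exactly Claim~\ref{NFA-to-DYCK} together with the padding existential, and the inductive bookkeeping merely tracks the parity fixing $Q_k$.
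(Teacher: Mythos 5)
Your overall route coincides with the paper's: pass to $\sigmanfa{k}(DYCK)$ via Theorem~\ref{Sigma-m-vs-T}, quantify over computation-path encodings, and assemble $A$ from regular validity checks plus one deterministic Dyck test using closure of $\dcfl$ under complement and Boolean combination with regular sets; the paper merely packages the unfolding as an induction on $k$, extracting one nfa reduction per level, where you unroll the whole chain at once. The problem is that the point you flag as ``delicate'' and then dismiss is a genuine gap, and it is exactly where the two arguments diverge. Your predicate $A$ pins the consumption positions of $N_{i+1}$ to the emission positions recorded in $y_i$; this zero-skew alignment is indeed what makes your conditions (i) and (ii) finite-state. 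But the paddings of $\tilde{x},y_1,\ldots,y_i$ are fixed by the \emph{earlier} quantifiers, and a path of $N_{i+1}$ may make several $\lambda$-moves between two consecutive consumptions; if the already-fixed $y_i$ leaves fewer idle slots between two emissions than that path needs, the path admits \emph{no} aligned encoding at all. ``A consistent global schedule exists'' is true for any single tuple of paths chosen in advance, which serves the existential levels, but at a universal level $\forall y_{i+1}$ must range over \emph{all} accepting paths of $N_{i+1}$ on $z_i$, whereas in your construction it ranges only over the alignable ones. This breaks soundness, not just rigor: if $x\notin L$, the prover can pick $y_1$ with stingy padding so that the accepting path of $N_2$ whose query word lies in $D$ (the path refuting $z_1\in B_1$) has no aligned encoding, making $\forall y_2$ hold vacuously and the displayed formula true.

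The gap is repairable, but only by an idea your proposal does not supply. One fix stays inside your framework: since a halting stackless nfa has $\lambda$-runs bounded by a constant, the validity predicate (a regular condition, so harmless for $A\in\dcfl$) can force emissions in each $y_i$ to be at least $c_i$ positions apart, with level-dependent constants $c_0\gg c_1\gg\cdots\gg c_k$ chosen so that every path of $N_{i+1}$, together with all of its own emitting $\lambda$-moves, can be scheduled inside the gaps of any validly padded $y_i$; then every path at every level has an aligned encoding, both directions of the equivalence go through, and the constants only inflate the linear polynomial $p$. The paper sidesteps the issue by a different device: in its induction the inner ($k-1$)-level formula quantifies its \emph{own} fresh $\natural$-extension $\tilde{z}'_1$ of the query word, so no positional alignment is inherited across levels, and the levels are tied together only by the unpadded-equality condition $Ext(\tilde{z}_1)=Ext(\tilde{z}'_1)$ built into its auxiliary languages $E$ and $G$ (at the cost of having to realize that equality test inside $\dcfl$). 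One of these two devices must appear explicitly; as written, your construction of $A$ does not yield the stated equivalence.
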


\begin{proof}
We wish to prove the proposition by induction on $k\in\nat^{+}$.  First, we will target the case of $k=1$ and  we begin with the assumption that  $L\in\sigmacflt{1}$ ($=\cfl$). Moreover, we assume that $L$ is recognized by a certain npda, say, $M$ and $p$ denotes a linear polynomial that bounds the running time of $M$.  Consider the following new language $A$. This language $A$ is defined as the collection of all stings of the form  $\track{\tilde{x}}{y_1}$ that encodes an accepting computation path of $M$ on input $x$.  It is not difficult to verify that $A\in\dcfl$. The definition of $A$ indicates that, for every string $x$,  $x$ is in $L$ if and only if there exist a $\natural$-extension $\tilde{x}$ of $x$ and $\track{\tilde{x}}{y_1}$ is in $A$. The latter condition is logically equivalent to $\exists \tilde{x}\,(|\tilde{x}|\leq p(|x|))\exists y_1\,(|y_1|\leq p(|x|))\, [\,x=Ext(\tilde{x})\wedge \track{\tilde{x}}{y_1}\in A\,]$, because $M$ produces only strings of linear size.

For induction step $k\geq2$,  let us assume that $L\in\sigmacflt{k}$. Theorem \ref{Sigma-m-vs-T} implies  that $L$ is many-one $\nfa$-reducible to a certain oracle $B$ in
 $\pinfa{k-1}(DYCK)$ ($= \picflt{k-1}$)
via an oracle nfa $M$.  Note that the running time of $M$ is upper-bounded by a certain linear polynomial, say, $p$.
Since $\overline{B}$ is in $\sigmacflt{k-1}$, our induction hypothesis ensures that there are a linear polynomial $q$ and  a language $C$ in $\dcfl$ such that, for any $n'\in\nat$ and for any string $z_1$ of length $n'$, $z_1$ is in $\overline{B}$ if and only if   $\exists \tilde{z}'_1(|\tilde{z}'_1|\leq q(n'))\, \exists u_2(|u_2|\leq q(n'))\,  \cdots  Q'_k u_k(|u_k|\leq q(n'))\, [\, [\,z_1=Ext(\tilde{z}'_1)\wedge [\tilde{z}'_1,u_2,\ldots,u_k]^T\in C\,]$,  where $n'=|z_1|$ and $Q'_k$ is $\exists$ if $k$ is even, and $\forall$ if $k$ is odd.  In a similar way as in the base case of $k=1$, we can define a language $D$ in $\dcfl$ that is composed of strings $[\tilde{x},u_1,\tilde{z}_1]^T$ that encodes an accepting computation path of $M$ on input $x$ with query word $z_1$, where $u_1$ contains a piece of information on this accepting computation path.
Note that, for any given pair $(\tilde{x},u_1)$, there is at most one string $\tilde{z}_1$ such that $[\tilde{x},u_1,\tilde{z}_1]^T\in D$. {}From such a unique $\tilde{z}_1$,  a string $z_1=Ext(\tilde{z}_1)$ is also uniquely determined.  Note that $x$ is in $L$ if and only if there exist $\natural$-extensions $\tilde{x}$ of input $x$ and $\tilde{z}'_1$ of query word $z_1$ and also $u_1$ such that
$[\tilde{x},u_1,\tilde{z}_1]^T\in D \wedge z_1\in \overline{B}$.

To complete the proof, we want to combine two strings $[\tilde{x},u_1,\tilde{z}_1]^T$ and $[\tilde{z}'_1,u_2,\ldots,u_k]^T$ satisfying $Ext(\tilde{z}_1)=Ext(\tilde{z}'_1)$ into a single string by applying a technique of inserting $\natural$ so that a single tape head can read off all information from this string at once.
For convenience, we introduce three languages.
Let $D'=\{\track{\tilde{x}}{\tilde{w}_1}\mid \exists u_1,\tilde{z}_1\,[ Ext(\tilde{w}_1) =  \track{u_1}{\tilde{z}_1} \wedge  [\tilde{x},u_1,\tilde{z}_1]^T\in D \,]\}$,
$C' = \{[\tilde{w}_2,y_3,\ldots,y_k]^T\mid  \exists \tilde{z}'_1,y_2,\ldots,y_k \,[  Ext(\tilde{w}_2) = \track{\tilde{z}'_1}{y_2}  \wedge (\bigwedge_{i=3}^{k} Ext(y_i)=u_i ) \wedge [\tilde{z}'_1,u_2,\ldots,u_k]^T\in C \,]\}$, and $E=\{\track{\tilde{w}_1}{\tilde{z}'_1} \mid \exists u_1,\tilde{z}_1\,[ Ext(\tilde{w}_1) = \track{u_1}{\tilde{z}_1}\wedge Ext(\tilde{z}_1) = Ext(\tilde{z}'_1) \,]  \}$.  Finally, we define a language $G =\{ [\tilde{x},y_1,y_2,\ldots,y_k]^T \mid \track{\tilde{x}}{y_1}\in D' \wedge \track{y_1}{y_2}\in E \wedge   [y_1,y_2,\ldots,y_k]^T \not\in C'\}$.  It is not difficult to show that $G$ is in $\dcfl$.
Now, let $r(n)=q(p(n))$ for all $n\in\nat$.
With this language $G$,  it follows that, for any $n\in\nat$ and for any string $x$ of length $n$,  $x$ is in $L$ if and only if
$\exists \tilde{x}(|\tilde{x}|\leq r(n))\, \exists y_1(|y_1|\leq p(n))\, \forall y_2(|y_2|\leq r(n))\,  \ldots  Q_k y_k(|y_k|\leq r(n))\, [\, x=Ext(\tilde{x})\wedge   [\tilde{x},y_1,y_2,\ldots,y_k]^T \in G\,]$.
Therefore, we have completed the induction step.
\end{proof}

In the end of this section, we will give a simple complexity upper bound of $\cflh$.

\begin{proposition}\label{DSPACE-upper}
$\cflh\subseteq  \mathrm{DSPACE}(O(n))$.
\end{proposition}

\begin{proof}
By induction on $k\in\nat^{+}$, we intend to show that $\sigmacfl{k}\subseteq\mathrm{DSPACE}(O(n))$. When $k=1$, $\cfl$ belongs to $\mathrm{DSPACE}(O(\log^2{n}))$ \cite{LSH65}, which is obviously included in $\mathrm{DSPACE}(O(n))$. Let $k\geq2$. Here, we claim the following closure property of $\mathrm{DSPACE}(O(n))$ under many-one $\cfl$-reductions.

\begin{claim}\label{DSPACE-vs-CFL}
If $A$ is a language in $\mathrm{DSPACE}(O(n))$, then
$\cfl_{m}^{A}\subseteq \mathrm{DSPACE}(O(n))$.
\end{claim}

\begin{proof}
Let $M_A$ be a DTM that recognizes $A$ using $O(n)$ space. Take any language $L$ in $\cfl_{m}^{A}$ via an $m$-reduction npda $M$. We want to construct another $O(n)$-space DTM $N$ for $L$. Our two-way machine $N$ takes input $x$ and simulates $M$ on $x$ using a two-way read-only input tape and three read/write work tapes, two of which mimic $M$'s stack and query tape and the third one is used to simulate $M_A$.
On input $x$, $N$ simulates $M$ on $x$ using the first two work tapes.
Remark that, when $M$ makes a valid query, it must enter an accepting state. This signals $N$ to start simulating $M_A$ using the input tape and the third work tape. Since $M$'s computation paths have length at most $O(n)$, the deterministic  simulation of $M$ by $N$ requires only $O(n)$ space.
\end{proof}

By induction hypothesis, we obtain $\sigmacfl{k-1}\subseteq \mathrm{DSPACE}(O(n))$, which is equivalent to $\picfl{k-1}\subseteq \mathrm{DSPACE}(O(n))$.
It thus follows that, since $\mathrm{DSPACE}(O(n))\wedge \mathrm{DSPACE}(O(n)) \subseteq \mathrm{DSPACE}(O(n))$, $\sigmacfl{k-1}\wedge\picfl{k-1} \subseteq \mathrm{DSPACE}(O(n))$.
By Corollary \ref{wedge-oracle}, $\sigmacfl{k}=\cfl_{m}(\sigmacfl{k-1}\wedge \picfl{k-1})$ holds. Therefore,
$\sigmacfl{k} = \cfl_{m}(\sigmacfl{k-1}\wedge \picfl{k-1}) \subseteq \cfl_{m}(\mathrm{DSPACE}(O(n))) \subseteq \mathrm{DSPACE}(O(n))$, where the last containment comes from Claim \ref{DSPACE-vs-CFL}. We thus obtain the desired containment $\sigmacflt{k}\subseteq  \mathrm{DSPACE}(O(n))$.
\end{proof}

\subsection{BPCFL and  a Relativized CFL Hierarchy}\label{sec:BPCFL}

Let us consider a probabilistic version of $\cfl$. Recall the model of ppda's given in Section \ref{sec:npda-and-TM}. The {\em (two-sided) bounded-error\footnote{The notion of ``bounded error'' is in general different from the notion of ``isolated cut point'' of \cite{Rab63}.} probabilistic  language family} $\bpcfl$ consists of all languages that are recognized by ppda's whose error probability is bounded from above by an absolute constant $\varepsilon\in[0,1/2)$.  This language family is naturally contained in the {\em (two-sided) unbounded-error probabilistic language family} $\pcfl$. Notice that $\bpcfl\subseteq\pcfl$, $\bpcfl\subseteq \bpp$, and $\pcfl\subseteq \pp$. Hromkovi\v{c} and Schnitger \cite{HS10} studied properties of $\bpcfl$ and asserted that $\bpcfl$ and $\cfl$ are actually  incomparable; more accurately,  $\bpcfl\nsubseteq \cfl$ and $\cfl\nsubseteq\bpcfl$. Here, we show that the first separation can be  strengthened even in the presence of ``advice.''

\begin{proposition}\label{bpcfl-not-in-advice}
$\bpcfl\nsubseteq \cfl/n$.
\end{proposition}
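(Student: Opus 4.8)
The plan is to reuse (or minimally pad) a concrete language $L^{*}\in\bpcfl\setminus\cfl$ furnished by Hromkovi\v{c} and Schnitger \cite{HS10}, and to upgrade its non-membership from $\cfl$ to $\cfl/n$. The single engine of the argument is the swapping lemma for context-free languages recalled in \cite{Yam08} --- the same device that already yields $\co\cfl\nsubseteq\cfl/n$ in \cite{Yam08} and $\cfl(2)\nsubseteq\cfl/n$ in \cite{Yam09}. The observation that makes advice harmless is that a length-preserving advice string $h(n)$ is constant on each input length, so on inputs of one fixed length it may be viewed as a fixed second track; any surgery performed at identical position ranges on two strings carrying the \emph{same} second track leaves that track untouched, and therefore reduces cleanly to a statement about the first tracks alone.

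\textbf{Reduction to a statement about $L^{*}$.} Suppose for contradiction that $L^{*}\in\cfl/n$, witnessed by a context-free language $A$ over a track alphabet together with a length-preserving advice function $h$ satisfying $L^{*}=\{x\mid \track{x}{h(|x|)}\in A\}$. Since $L^{*}$ is infinite, $A$ is infinite as well. Fix a large length $n$ and set $S=\{\track{x}{h(n)}\mid x\in L^{*}\cap\Sigma^{n}\}\subseteq A$; every element of $S$ has length $n$ and carries the identical second track $h(n)$. Applying the swapping lemma to the infinite context-free language $A$ and the set $S$ produces two distinct members $\track{x}{h(n)}=X_1X_2X_3$ and $\track{x'}{h(n)}=Y_1Y_2Y_3$ of $S$, with $|X_2|=|Y_2|=j\in[j_0,k_0]$ and $X_2\neq Y_2$, whose cross-swaps $X_1Y_2X_3$ and $Y_1X_2Y_3$ again lie in $A$. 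Because the two strings share the same second track and the swap occurs at one common position range, both cross-swaps still carry the second track $h(n)$; writing $x=x_1x_2x_3$ and $x'=x'_1x'_2x'_3$ with the induced block lengths, this says precisely that $x_2\neq x'_2$ yet $x_1x'_2x_3\in L^{*}$ and $x'_1x_2x'_3\in L^{*}$. Thus the assumption forces a \emph{middle-block exchange} between two distinct length-$n$ members of $L^{*}$ to remain inside $L^{*}$.

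\textbf{The contradiction and the main obstacle.} It remains to choose $L^{*}$ so that such a middle-block exchange is impossible, i.e. $L^{*}$ is \emph{swap-fragile}: on each long length $n$ there is a dense, well-spread set of members for which exchanging any nontrivial middle block between two of them destroys membership. For the classical square language $\{ww\}$ this fragility is immediate (a cross-swap of two distinct squares is no longer a square), which is exactly why $\{ww\}\in\co\cfl$ fails to lie in $\cfl/n$; the difficulty here is that $\{ww\}$ is not itself in $\bpcfl$. The main obstacle is therefore to reconcile two opposing demands on $L^{*}$: bounded-error one-way ppda recognizability, which favours a stack-compatible, context-free-like structure, and swap-fragility together with the density precondition $|S_{i,u}|<|S|/\bigl(m(k_0-j_0+1)(n-j_0+1)\bigr)$ of the swapping lemma, which favours rigidity and high entropy. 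I would resolve this by taking $L^{*}$ to be the marked, positionally block-structured language underlying the Hromkovi\v{c}--Schnitger separation, in which randomization certifies membership with a constant error gap (this being the content of \cite{HS10}) while the correspondence between matched blocks is positional; a middle-block exchange then misaligns the matched blocks and, for a suitably chosen dense subset $S$, cannot be absorbed, delivering the required contradiction. Verifying the density hypothesis for this $S$ and confirming that every admissible exchange escapes $L^{*}$ is the technical crux of the proof.
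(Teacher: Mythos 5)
Your reduction step is sound as far as it goes: the observation that a length-preserving advice string is constant on each length, so that swapping middle blocks occupying identical positions in two strings $\track{x}{h(n)}$ and $\track{x'}{h(n)}$ preserves the advice track, is exactly how the swapping lemma of \cite{Yam08} is deployed against $\cfl/n$ (it is the engine behind $\co\cfl\nsubseteq\cfl/n$ and $\cfl(2)\nsubseteq\cfl/n$). But the proposal stops precisely where the proof would have to begin. Everything hinges on exhibiting a concrete $L^{*}$ that is simultaneously (i) in $\bpcfl$, (ii) dense enough per length to satisfy the $|S_{i,u}|$ hypothesis of the swapping lemma, and (iii) swap-fragile; you defer all three to ``the technical crux'' and verify none of them. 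Worse, the plan is not merely incomplete but potentially unworkable for the language you name: the Hromkovi\v{c}--Schnitger witnesses for $\bpcfl\nsubseteq\cfl$ are recognized by the random-linear-combination (Freivalds-style) counting trick, and languages of that kind which are \emph{thin} --- e.g.\ anything of the flavor $\{a^nb^nc^n\mid n\geq 1\}$, with at most one member per length --- lie in $\reg/n\subseteq\cfl/n$, since the advice can simply be the unique member of that length and track-equality is regular. For such a witness no swapping-lemma (or any other) argument can place it outside $\cfl/n$, so choosing $L^{*}$ is not a detail: it is the whole problem.

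The paper resolves this by inverting your division of labor. It takes $L^{*}=Equal_6$, the language over $\{a_1,\ldots,a_6,\#\}$ whose members contain each $a_i$ equally often; the statement $Equal_6\notin\cfl/n$ is already proven in \cite{Yam08} (by the very swapping-lemma analysis you propose to carry out, including the density bookkeeping you omit), so the new content of the paper's proof is entirely on the other side: a bounded-error ppda for $Equal_6$. That automaton draws two random numbers $x,y\in[5]$, pushes or pops $1$, $1^x$, $1^y$ symbols according to whether it reads $a_1/a_4$, $a_2/a_5$, $a_3/a_6$, and accepts iff the stack balances; a short case analysis on $(\alpha_2-\alpha_5,\alpha_3-\alpha_6)$ bounds the error probability away from $1/2$. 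None of this construction, which is the actual substance of the paper's proposition, appears in your proposal. To repair your argument, either switch to $Equal_6$ and supply the $\bpcfl$ membership proof, or identify a genuinely dense HS10-style witness and carry out the swapping-lemma verification in full --- in either case the missing half is the real work.
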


\begin{proof}
Let us consider the example language $Equal_6$ in \cite{Yam08} that is composed of all strings $w$ over the  alphabet $\Sigma_6=\{a_1,a_2,\ldots,a_6,\#\}$ such that each symbol except $\#$ appears in $w$ the same number of times. It was proven that $Equal_6$ is located outside of $\cfl/n$ \cite{Yam08} by applying the \emph{swapping lemma\footnote{[Swapping Lemma for Context-Free Languages] Let $L$ be any infinite context-free language over alphabet $\Sigma$ with $|\Sigma|\geq 2$. There is a positive number $m$ that satisfies the following. Let $n$ be any positive number at least $2$ (called a {\em swapping-lemma constant}), let $S$ be any subset of $L\cap\Sigma^{n}$, and let  $j_0,k_0\in[2,n]_{\integer}$ be any two indices with $k_0 \geq 2j_0$ and $|S_{i,u}|< |S|/m(k_0-j_0+1)(n-j_0)$ for any  $i\in[0,n-j_0]_{\integer}$ and any $u\in\Sigma^{j_0}$. There exist two indices $i\in[1,n]_{\integer}$ and $j\in[j_0,k_0]_{\integer}$ with $i+j\leq n$ and two strings $x =x_1x_2x_3$ and $y=y_1y_2y_3$ in $S$ with $|x_1|=|y_1|=i$, $|x_2|=|y_2|=j$, and $|x_3|=|y_3|$ such that (i) $x_2\neq y_2$, (ii) $x_1y_2x_3\in L$, and (iii) $y_1x_2y_3\in L$. Here, the notation $S_{i,u}$ denotes the set $\{x\in S\mid \exists y,z\,[x=yuz\wedge |y|=i\,]\}$.  (See  \cite{Yam08} for the proof.)}
for CFL}. To complete our proof, it is therefore enough to show that $Equal_6$ falls into $\bpcfl$.

Firstly, we set $N=5$ and consider the following probabilistic procedure for ppda's.  Let $w$ be any input and define $\alpha_i = \#_{a_i}(w)$ for each index $i\in[6]$.
In the case where all $\alpha_i$'s are at most $N$, we deterministically decide whether $w$ is in $Equal_6$ without using any stack.
For simplicity, we consider only the case where $\alpha_i>N$ for all $i\in[6]$. We randomly pick up two numbers $x$ and $y$ from $[N]$. We scan $w$ from left to right. Whenever we scan $a_1$ (resp., $a_2$ and $a_3$),  we push down $1$ (resp., $1^{x}$ and $1^{y}$) into a stack. On the contrary, when we scan $a_4$ (resp., $a_5$ and $a_6$), we pop $1$ (resp., $1^{x}$ and $1^{y}$) from the stack. During a series of such  pop-ups, if the stack becomes empty (except for the bottom marker $Z_0$), then we try to push  a special symbol ``$-1$'' instead of popping up $1$'s in order to indicate that there is a deficit in the stack content.
To implement this idea, if we further push down $1$'s, then we actually pops up the same number of $-1$'s.  After reading the entire $w$,  there is
$\ell = | (\alpha_1-\alpha_4) + x(\alpha_2-\alpha_5) + y(\alpha_3-\alpha_6) |$ symbols in the stack. When the stack becomes empty, we then accept the input; otherwise, we reject it.

Formally, the transition function $\delta$ is described as follows. Let $\delta(q_0,\cent,Z_0)=\{(p^{(x,y)},Z_0)\mid x,y\in[N]\}$. For simplicity, write $p$ for $p^{(x,y)}$. We give necessary transitions only for the cases of $a_1$,  $a_2$, and $a_6$. The other cases are similar in essence. For simplicity, assume that
$\tau\in\{1,Z_0\}$. Let $\delta(p,a_1,\tau) =\{(p,1\tau)\}$ and $\delta(p,a_1,-1) =\{(p,\lambda)\}$.
When we read $a_2$, let $\delta(p,a_2,\tau) = \{(p,1^x\tau)\}$. Define $q_{2,1}$ to be $p$.
Let $\delta(p,a_2,-1)=\{(q_{2,x-1},\lambda)\}$,   $\delta(q_{2,m},\lambda,\tau)=\{(q_{2,m-1},1\tau)\}$,  and
$\delta(q_{2,m},\lambda,-1)=\{(q_{2,m-1},\lambda)\}$
for $m\in[2,x-1]_{\integer}$.
In the case of $a_6$, let $\xi\in\{-1,Z_0\}$. Let $\delta(p,a_6,\xi) = \{(p,(-1)^y\xi)\}$ and define $q_{6,1}$ to be $p$. Let $\delta(p,a_6,1)=\{(q_{6,y-1},\lambda)\}$,   $\delta(q_{6,m},\lambda,1)=\{(q_{6,m-1},\lambda)\}$, and $\delta(q_{6,m},\lambda,-1)=\{(q_{6,m-1},(-1)(-1))\}$ for $m\in[2,y-1]_{\integer}$.

If $w$ is in $Equal_6$, then we obtain $\ell=0$ for any choice of $x,y\in[N]$ since $\alpha_i=\alpha_j$ for any $i,j\in[N]$.  Conversely, we assume that $w\not\in Equal_6$ and we will later argue that the error probability $\varepsilon$ (\ie the probability of obtaining $\ell=0$) is at most $1/3$.  This clearly places $Equal_6$ in $\bpcfl$.

Let us assume that $w\not\in Equal_6$ and $\ell=0$. For any two pairs $(x_1,y_1),(x_2,y_2)\in[N]\times[N]$ that force $\ell$ to be zero, we derive $(\alpha_1-\alpha_4)+x_i(\alpha_2-\alpha_5)+y_i(\alpha_3-\alpha_6)=0$ for any index $i\in\{1,2\}$.  {}From these two equations, it follows that (*)  $(x_1-x_2)(\alpha_2-\alpha_5) = (y_2-y_1)(\alpha_3-\alpha_6)$.

(1) Consider the case where $\alpha_2=\alpha_5$ but $\alpha_3\neq \alpha_6$.  By (*), we conclude that $y_1=y_2$; that is, there is a unique solution $y$ for the equation $\ell=0$.
Hence, the total number of pairs $(x,y)$ that force $\ell$ to be zero is  exactly $N$, and thus $\varepsilon$ must equal $1/N$, which is clearly smaller than $1/3$.  The case where $\alpha_2\neq \alpha_5$ and $\alpha_3=\alpha_6$ is similar.

(2) Consider the case where $\alpha_2\neq \alpha_5$ and $\alpha_3\neq \alpha_6$.
There are two cases to consider separately.

(a) If $\alpha_2-\alpha_5$ and $\alpha_3-\alpha_6$ are relatively prime, then we conclude that $x_1=x_2$ and $y_1=y_2$ from (*). This indicates that there is a unique solution pair $(x,y)$ for the equation $\ell=0$.  Thus, the error probability $\varepsilon$ is $1/N^2$, implying $\varepsilon<1/3$.

(b) If $\alpha_2-\alpha_5 = \beta(\alpha_3-\alpha_6)$ holds for a certain non-zero integer $\beta$, then we obtain $x_1-x_2=\beta(y_2-y_1)$. Since $|x_1-x_2|,|y_2-y_1|\leq 4$, there are at most $12$ cases for tuples  $(x_1,y_1,x_2,y_2)$ with $x_1>x_2$ that satisfy  $x_1-x_2=\beta'(y_2-y_1)$ for a certain non-zero integer $\beta'$. Hence, $\varepsilon$ is at most $12/25$, which is obviously smaller than $1/3$.

(3) The case where $\alpha_1\neq \alpha_4$, $\alpha_2= \alpha_5$, and $\alpha_3= \alpha_6$ never occurs because of $\ell=0$.
\end{proof}

It is not clear whether $\bpcfl$ is located inside the CFL hierarchy, because a standard argument used to prove the containment $\bpp\subseteq\sigmap{2}\cap\pip{2}$ requires an \emph{amplification property} but a ppda  cannot, in general, amplify its success probability \cite{HS10}.

Since a relationship between $\bpcfl$ and $\sigmacflt{2}\cap\picflt{2}$ is not known at this moment, we resort to an oracle separation between those two language families. For this purpose, we want to introduce
a {\em relativization} of the target language families.
Firstly, we introduce a relativization of $\bpcfl$.
Similar to $\cfl_{m}^A$, $\bpcfl_{m}^A$ is  defined simply by providing underlying oracle ppda's with extra query tapes.  Secondly, we define a relativized CFL hierarchy.
Relative to oracle $A$, a {\em relativized CFL hierarchy} $\{\reldeltacflt{k}{A}, \relsigmacflt{k}{A},\relpicflt{k}{A}\mid k\in\nat^{+} \}$ consists of the following language families:
$\reldeltacflt{1}{A}= \dcfl_{T}^A$, $\relsigmacflt{1}{A} = \cfl_{T}^{A}$, $\relpicflt{k}{A} = \co\relsigmacfl{k}{A}$, $\reldeltacflt{k+1}{A} = \dcfl_{T}(\relsigmacflt{k}{A})$,  and $\relsigmacflt{k+1}{A} = \cfl_{T}(\relpicflt{k}{A})$ for all indices $k\geq1$.
As this definition hints, any oracle-dependent language $L^A$ in $\relsigmacflt{k}{A}$ can be recognized by a chain of $k$ $T$-reduction npda's whose last machine makes queries directly to $A$.   As for later reference, we refer to this chain of $k$ machines as  a {\em defining machine set} for $L^A$.

\begin{theorem}\label{BPCFL-oracle-sep}
There exists a recursive oracle $A$ such that $\bpcfl_{m}^{A}\nsubseteq\relsigmacflt{2}{A}$.
\end{theorem}

For the proof of Theorem \ref{BPCFL-oracle-sep}, we need to consider certain non-uniform families of levelable Boolean circuits, where a circuit is {\em levelable}  if (i) all nodes are partitioned into levels, (ii) edges exist only between adjacent levels, (iii) the first level of a circuit has the output node, (iv) the gates at the same level are of the same type, (v) two gates at two adjacent levels are of the same type, and (vi) all input nodes (labeled by literals) are at the same level (see, \eg \cite{DK00}).  The notation  $CIR_{k}(n,m)$ denotes the collection of all levelable Boolean circuits $C$ that satisfy the following conditions:
(1) $C$ has {\em depth $k$} (\ie $k$ levels of gates), (2)  the {\em top gate} (\ie the root node) of $C$ is $OR$, (3) $C$ has alternating levels of $OR$ and $AND$, (4) the {\em bottom fan-in} (\ie the maximum fan-in of any bottom gate) of $C$  is at most $m$, (5)  the fan-in of any gate except for the bottom gates is at most $n$, and (6) there are at most $n$ input variables.

The following lemma states how to translate any oracle-dependent language in $\relsigmacflt{k}{A}$ into a family of Boolean circuits in $CIR_{k+1}(2^{O(n)},O(k))$.

\begin{lemma}\label{depth-circuit}
Let $L^A$ be any oracle-dependent language over alphabet $\Sigma$ in $\relsigmacflt{k}{A}$, where $A$ is any oracle over alphabet $\Theta$.  Let $(M_1,M_2,\ldots,M_k)$ be a defining machine set for $L^A$.  Let $a$ and $c$ be two positive constants such that the running time of each $M_i$ is bounded from above by $c|x|$ and  any query word $y$ produced by $M_k$
has length at most $a|x|$, where $x$ is an input string to $L^A$.   Let $\hat{\Theta}=\Theta\cup\{0,1,\natural\}$. For every length $n\in\nat^{+}$ and every input $x\in\Sigma^n$, there exist a Boolean circuit $C$ in $CIR_{k+1}(\|\\hat{Theta}\|^{an},cn)$ such that (1)  all variables $x_1,x_2,\ldots,x_{\ell(n)}$ of $C$ are strings (and their  negations) included in  $\hat{\Theta}^{\leq an}$  and  (2) for any oracle $A$, it holds that  $x$ is in $L^A$ if and only if $C$ outputs $1$ on inputs $(\chi^A(x_1),\chi^A(x_2),\ldots,\chi^A(x_{\ell(n)}))$.
\end{lemma}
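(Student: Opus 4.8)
The plan is to prove the statement by induction on the number of levels $k$, unfolding the defining machine set $(M_1,M_2,\ldots,M_k)$ into a bounded-depth alternating circuit. The guiding principle is the standard correspondence between machine type and gate type: a $\cfl_{T}$-machine (nondeterministic, from the $\relsigmacflt{}{A}$ levels) contributes an $OR$ gate ranging over its accepting computations, whereas a $\co\cfl_{T}$-machine (co-nondeterministic, from the $\relpicflt{}{A}$ levels) contributes an $AND$ gate. Since $\relsigmacflt{k}{A}$ begins with a $\cfl_{T}$-machine and strictly alternates $\Sigma$/$\Pi$ down the chain, the top gate is $OR$ and the levels alternate, matching requirements (2) and (3) in the definition of $CIR_{k+1}$. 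The $k$ machines account for $k$ alternating levels, and the direct queries of the innermost machine $M_k$ to $A$ supply the final $(k+1)$st level, whose inputs are exactly the literals $\chi^A(w)$ and $\neg\chi^A(w)$ for query words $w\in\Theta^{\leq an}$; all negations are thereby confined to the input level, and bounding the number of such words by $\parallel\!\Theta^{\leq an}\!\parallel$ yields condition~(6).

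First I would treat the base case $k=1$, i.e. $L^A\in\cfl_{T}^{A}$ via a single $T$-reduction machine $M_1$. Here I would invoke the stack-history technique from the proof of Claim~\ref{NFA-to-DYCK}: rather than running the stack, I guess a complete computation transcript of $M_1$ on $x$ (recording its nondeterministic choices, the successive query words $w_1,\ldots,w_t$, and their answers $b_1,\ldots,b_t$) and verify local consistency of the transitions by finite control. Syntactic validity of the transcript is oracle-independent and so merely decides whether a given $OR$-branch is present; the sole oracle-dependent requirement is that each recorded answer be correct, namely $\bigwedge_{j}[\chi^A(w_j)=b_j]$, a conjunction of literals forming the bottom $AND$ gate. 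Thus membership is computed by an $OR$ over valid accepting transcripts of an $AND$ of literals, a circuit in $CIR_2$ with top gate $OR$.

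For the induction step I would replace each query ``$w\in B$'' to an intermediate oracle $B\in\relpicflt{k-1}{A}$ by the circuit that the induction hypothesis produces for $B$ on input $w$; because $B$ is a $\Pi$-level set, that circuit has top gate $AND$, so it merges with the $AND$ gate collecting $M_1$'s query verifications, adding no spurious level and preserving the strict alternation that forces the depth to be exactly $k+1$. To keep the circuit levelable while confining negations to the inputs, I would use Lemma~\ref{basic-Turing} ($\cfl_{T}^{A}=\cfl_{T}^{\overline{A}}$) and the identity $\relpicflt{k}{A}=\co\relsigmacflt{k}{A}$ to treat the branches on which a guessed answer is $0$ (where the polarity of a sub-circuit would otherwise flip), substituting the appropriately complemented defining machine instead of negating an internal gate. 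Each internal gate ranges over transcripts or query words of length at most $an$, of which there are at most $2^{an}$, which gives the internal fan-in bound.

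The main obstacle I anticipate is the bottom fan-in bound $ck$. The naive transcript verification produces a bottom gate whose arity equals the number of oracle queries issued along a single computation, which the linear running-time hypothesis only limits by $c|x|$; driving this down to the constant $ck$ requires reorganizing the verification so that each bottom gate inspects only boundedly many oracle literals. I would achieve this by passing to the relativized many-one form afforded by Theorem~\ref{Sigma-m-vs-T} ($\sigmacflt{k}=\sigmacflm{k}$), under which each machine in the chain may be taken to issue essentially one query per computation, so that together with the linearity of the time bound the literals consulted by any fixed bottom gate number at most $ck$, in batches of at most $c$ per level. Confirming that this reorganization simultaneously preserves the alternation depth $k+1$ and the levelability of every gate is the delicate point; once it is in place, the correctness equivalence in (2)—that for every oracle $A$ the circuit outputs $1$ on $(\chi^A(x_1),\ldots,\chi^A(x_{\ell(n)}))$ exactly when $x\in L^A$—follows by a routine induction matching, at each level, the machine's (co-)nondeterministic acceptance with the value of the corresponding $OR$/$AND$ gate.
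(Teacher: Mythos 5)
Your overall strategy coincides with the paper's own proof: induction on $k$, the guess-and-verify device in which the machine records a guessed answer $b_i$ together with each query word (the paper's transcripts $b_1y_1\natural b_2y_2\natural\cdots$), a top $OR$ ranging over accepting transcripts, an $AND$ collecting the answer-correctness conditions, and, at the induction step, dual (AND-top) circuits obtained from the induction hypothesis for the $\Pi$-level inner oracle, merged so as to keep depth $k+1$. You even isolate the two delicate points that the paper's proof passes over in silence. The problem is that neither of your proposed repairs actually works.

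First, the polarity issue for branches with guessed answer $0$ is real, but Lemma \ref{basic-Turing} does not resolve it. What is needed there is that the condition ``$y_i\in\overline{B^A}$'' (with $B^A\in\relpicflt{k-1}{A}$) be computed by an \emph{AND-top} depth-$k$ circuit; the induction hypothesis applied to $\overline{B^A}\in\relsigmacflt{k-1}{A}$ yields an \emph{OR-top} circuit, and swapping which machine in the defining chain carries the complementation, or replacing $A$ by $\overline{A}$ at the bottom, merely relabels input literals without changing that polarity. An OR-top depth-$k$ subcircuit hanging under the verification $AND$ forces depth $k+2$. The only effective cure is to eliminate negative intermediate queries altogether by putting the intermediate levels into many-one form (where acceptance is ``some query word lands in the oracle'' and no answers are guessed) — which is in fact what your second repair attempts.

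Second, and this is where your plan genuinely breaks: Theorem \ref{Sigma-m-vs-T} is proved only for the unrelativized hierarchy, whereas Lemma \ref{depth-circuit} concerns $\relsigmacflt{k}{A}$ for an arbitrary external oracle $A$. The intermediate levels can indeed be converted to many-one form (the paper notes that Lemma \ref{T-reduction-recursive} relativizes), but the innermost machine's Turing access to $A$ cannot: $\cfl_{m}^{A}\subsetneq\cfl_{T}^{A}$ for general $A$, and a $T$-reduction machine running in time $cn$ may issue $\Theta(n)$ queries along a single path. Concretely, $L^A=\{y_1\# y_2\#\cdots\# y_m \mid \forall i\,[y_i\in A]\}$ lies in $\cfl_{T}^{A}$, yet for an input with $m=\Theta(n)$ distinct blocks its oracle-dependence is a monotone conjunction of $m$ distinct variables, and every OR-of-ANDs computing such a conjunction has a bottom gate of fan-in at least $m$. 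Hence the constant bottom fan-in $ck$ is unattainable by \emph{any} construction; what your argument (and equally the paper's, whose bottom gates likewise collect all the literals along one computation path of $M_k$) actually delivers is bottom fan-in $O(n)$. This is a defect of the lemma's statement as much as of the proofs — note that the switching-lemma application in Proposition \ref{BPCFL-oracle-sep} takes $r=c$ and would have to be reparametrized with $r=O(n)$ — but as written your proposal does not, and cannot, close this gap.
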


\begin{proof}
We will prove this lemma by induction on $k\geq1$.   We begin with the base case $k=1$. Let $L^A$ be any oracle-dependent language in $\cfl_{T}^{A}$, where $A$ is an oracle over alphabet $\Theta$. Let $M, a,c$ satisfy  the premise of the lemma. Fix $n$ and $x\in\Sigma^n$ arbitrarily.
By an argument similar to the proof of Proposition \ref{first-level-equal}, we can modify $M$ so that, before starting writing the $i$th query word $y_i\in\{0,1\}$, it must guess its oracle answer $b_i$ and produce $b_iy_i\natural$ on a query tape and, instead of making an actual query, it assumes that $A$ returns $b_i$. After this modification, a string produced on a query tape must be of the form $b_1y_1\natural b_2y_2\natural \cdots \natural b_{\ell}y_{\ell}\natural$. Let $V_x$ be composed of all such query strings produced along accepting computation paths. Note that $\parallel\! V_x \!\parallel \leq \|\hat{\Theta}\|^{an}$ since $M$ halts within time $an$, where $\hat{\Theta}=\Theta\cup\{0,1,\natural\}$.

Next, we will define a circuit $C$, which is an OR of ANDs, as follows. The top OR gate has edges labeled by strings in $V_x$. For each $y\in V_x$, an associated subcircuit $D_y$, consisting of an AND gate, has input nodes labeled by literals of the form $y_1^{(b_1)},y_2^{(b_2)},\ldots,y_{\ell}^{(b_{\ell})}$, where $y_i^{(0)} =y_i$ and $y_i^{(1)}=\overline{y_i}$. Let $x_1,x_2,\ldots,x_{m(n)}$ be all distinct variables (of the positive form) appearing in $C$.  It is not difficult to verify that, for any oracle $A$,  $x\in L^A$ if and only if $C(\chi^{A}(x_1),\chi^{A}(x_2),\ldots,\chi^{A}(x_{m(n)}))=1$.

Let us consider induction step $k\geq2$. Assume that $L^A$ is in $\cfl_{T}(B^A)$ for a certain oracle $B^A\in\relpicflt{k-1}{A}$. Let $M$ be a $T$-reduction npda reducing $L^A$ to $B^A$. By a similar argument as in the base case, we can modify $M$ so that it generates query words of the form $b_1y_1\natural\cdots \natural b_{\ell}y_{\ell}\natural$ without making actual queries. We set $V_x$ to be the collection of all such strings produced along accepting computation paths. Since $\overline{B^A}\in\relsigmacflt{k-1}{A}$, by our induction hypothesis, for each string $y\in V_x$, there exists a circuit $D_y$ satisfying the lemma. Instead of $D_y$, we consider its {\em dual circuit} $\overline{D_y}$.  Here, we define $C$ to be an OR of all $\overline{D_y}$'s for any $y\in V_x$. A similar reasoning as in the base case shows that, for any oracle $A$,  $x$ is in $L^A$ if and only if $C(\chi^{A}(x_1),\chi^{A}(x_2),\ldots,\chi^{A}(x_{m(n)}))=1$.
\end{proof}

Let us prove Theorem \ref{BPCFL-oracle-sep} using
Lemma \ref{depth-circuit}.
For the desired separation between $\bpcfl^{A}$ and $\relsigmacfl{k}{A}\cap\relpicfl{k}{A}$, we use the following example language:
$L^A =\{0^n\mid \, \parallel\! A\cap\Sigma^n \!\parallel >2^{n-1}\}$ over a binary alphabet $\Sigma=\{0,1\}$.  To guarantee that $L^A\in\bpcfl_{m}^A$, we will aim at constructing $A$ that satisfies
either  $\parallel\! A\cap\Sigma^n \!\parallel \leq 2^n/3$ or $\parallel\! \overline{A}\cap\Sigma^n \!\parallel \leq 2^n/3$ for every length $n\in\nat^{+}$. This will be done by choosing recursively a pair of $T$-reduction npda's that witnesses a language $B^A$ in $\relsigmacflt{2}{A}$ and by defining a sufficiently large number $n\in\nat$ and a set $A_n$ ($=A\cap\Sigma^n$) such that $0^n\in L^A$ $\not\!\leftrightarrow$ $0^n\in B^A$.

\vs{-2}
\begin{proofof}{Theorem \ref{BPCFL-oracle-sep}}
Let $\Sigma=\{0,1\}$ and consider the aforementioned oracle-dependent  language $L^A =\{0^n\mid \parallel\! A\cap\Sigma^n \!\parallel >2^{n-1}\}$. To guarantee that $L^A\in\bpcfl_{m}^A$, we consider only oracles $A$ that satisfy the following condition: (*) either  $\parallel\! A\cap\Sigma^n \!\parallel \leq 2^n/3$ or $\parallel\! \overline{A}\cap\Sigma^n \!\parallel \leq 2^n/3$ for every length $n\in\nat^{+}$.

Next, let us construct an appropriate oracle $A$ satisfying that $L^A\not\in\relsigmacflt{2}{A}$. For this purpose, we use Lemma \ref{depth-circuit}. First, we enumerate all  oracle-dependent languages in $\relsigmacflt{2}{A}$ and consider their corresponding depth-$3$ Boolean circuit families that satisfy all the conditions stated in Lemma \ref{depth-circuit}.

Recursively, we choose such a circuit family and define a large enough length $n$ and a set $A_n$ ($=A\cap\Sigma^n$). Initially, we set $n_0=0$ and $A_0=\setempty$. Assume that, at Stage $i-1$,  we have already defined $n_{i-1}$ and $A_{i-1}$. Let us consider Stage $i$. Take the $i$th circuit family $\{C_n\}_{n\in\nat}$ and two constants $a,c>0$ given by Lemma \ref{depth-circuit} so that $C_n$ belongs to $CIR_{3}(2^{an},cn)$. First, we set $n_i = \max\{n_{i-1}+1,2^{a'n_{i-1}}+1,c'+1\}$, where $a'$ and $c'$ are constants taken at Stage $i-1$. The choice of $n_i$ guarantees that $A_{n_i}$ is not affected by the behaviors of  the circuits considered at Stage $i-1$.

In the rest of the proof,  we will examine two cases.

(1) Consider the base case where the bottom fan-in is exactly $1$.
For each label $y\in\{0,1\}^{an}$, let $Q(y)$ be the set of all input variables that appear in subcircuits connected to the top OR gate by a wire labeled $y$. In particular, $Q^{+}(y)$ (resp., $Q^{-}(y)$) consists of variables in $Q(y)$ that appear in positive form (resp., negative form).
 Let us consider two cases.

(a) Assume that there exists a string $y_0$ such that $\parallel\! Q^{+}(y_0) \!\parallel \leq 2^n/3$. In this case, we set $A_n$ to be $Q^{+}(y_0)$. It is obvious that $0^n\not\in L^{A}$ and $C_n(\chi^A(x_1),\ldots,\chi^A(x_{2^{an}}))=1$.

(b) Assume that, for all $y$, $\parallel\! Q^{+}(y) \!\parallel >2^{an}/3$. Recursively, we will choose at most $an/\log(3/2)+1$ strings. At the first step, let $B_0=\{0,1\}^{an}$. Assume that $B_{i-1}$ has been defined. We will define $B_{i}$ as follows. Choose the lexicographically smallest string $w$ for which that the set $\{y\in B_{i-1}\mid w\in Q^{+}(y)\}$  has the largest cardinality. Finally, we define $w_i$ to be this string $w$ and we set $B_i = \{y\in B_{i-1}\mid w_i\not\in Q^{+}(y)\}$.
In what follows, we show that $\parallel\! B_i \!\parallel \leq (2/3) \parallel\! B_{i-1} \!\parallel$.

\begin{claim}
$\parallel\! B_i \!\parallel \leq (2/3) \parallel\! B_{i-1} \!\parallel$.
\end{claim}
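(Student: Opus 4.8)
The plan is to prove the Claim by a straightforward double-counting (averaging) argument on the incidence relation between input variables and the labels remaining in $B_{i-1}$. First I would record what the case hypothesis gives us. We are inside case (1)(b), where every label $y\in\{0,1\}^{an}$ satisfies $\parallel\! Q^{+}(y) \!\parallel > 2^{an}/3$; since $B_{i-1}\subseteq\{0,1\}^{an}$, this lower bound holds in particular for every $y\in B_{i-1}$. I would also fix the universe $U$ of distinct input variables occurring in $C_{n}$ and note that $\parallel\! U \!\parallel \leq 2^{an}$, because by Lemma \ref{depth-circuit} each such variable is (the positive literal of) a query word of length at most $an$, and the collection of admissible query words of that length has cardinality bounded by $2^{an}$.

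Next I would set up the count. Consider the incidence set
\[
 I = \{(w,y) : y\in B_{i-1},\ w\in Q^{+}(y)\}.
\]
Summing over labels first gives $\parallel\! I \!\parallel = \sum_{y\in B_{i-1}} \parallel\! Q^{+}(y) \!\parallel > \parallel\! B_{i-1} \!\parallel \cdot (2^{an}/3)$, by the case hypothesis applied to each $y\in B_{i-1}$. Summing over variables instead gives $\parallel\! I \!\parallel = \sum_{w\in U} \parallel\! \{y\in B_{i-1} : w\in Q^{+}(y)\} \!\parallel$. Combining these two expressions for $\parallel\! I \!\parallel$ with the bound $\parallel\! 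U \!\parallel \leq 2^{an}$ and applying the averaging principle, I would conclude that some variable $w\in U$ is incident to more than $\parallel\! B_{i-1} \!\parallel/3$ labels of $B_{i-1}$.

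Finally, because $w_i$ is \emph{chosen} to maximize $\parallel\! \{y\in B_{i-1} : w_i\in Q^{+}(y)\} \!\parallel$, the maximizer $w_i$ inherits the same bound, so $\parallel\! \{y\in B_{i-1} : w_i\in Q^{+}(y)\} \!\parallel > \parallel\! B_{i-1} \!\parallel/3$. Since $B_i = \{y\in B_{i-1} : w_i\notin Q^{+}(y)\}$ removes from $B_{i-1}$ exactly the labels incident to $w_i$, I obtain $\parallel\! B_i \!\parallel < \parallel\! B_{i-1} \!\parallel - \parallel\! B_{i-1} \!\parallel/3 = (2/3)\parallel\! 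B_{i-1} \!\parallel$, which is even slightly stronger than the stated inequality. The one delicate point to get right is the universe bound $\parallel\! U \!\parallel \leq 2^{an}$: the averaging step is only strong enough if the number of distinct variables is matched to the quantity $2^{an}$ appearing in the case hypothesis, so I would take care to justify this bound from the length-$an$ restriction on query words supplied by Lemma \ref{depth-circuit}. Everything else is routine arithmetic.
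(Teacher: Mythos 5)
Your proof is correct and is essentially the paper's own argument: the paper runs the same double count, setting $d$ by $\parallel\! \overline{B_i} \!\parallel = d\cdot \parallel\! B_{i-1} \!\parallel$, bounding the total variable--label incidence over $B_{i-1}$ from below by the case hypothesis and from above using the maximality of $w_i$ over at most $2^{an}$ variables, and concluding $d\geq 1/3$, hence $\parallel\! B_i \!\parallel \leq (2/3)\parallel\! B_{i-1} \!\parallel$. The one repair needed is to your flagged ``delicate point'': the universe bound $\parallel\! U \!\parallel \leq 2^{an}$ should be drawn from condition (6) in the definition of the class $CIR_{k}(n,m)$ (Lemma \ref{depth-circuit} places $C_n$ in a class whose circuits have at most $2^{an}$ input variables), not from counting query words of length at most $an$, since binary strings of length at most $an$ number $2^{an+1}-1 > 2^{an}$ and that weaker count would only yield a factor $5/6$ rather than $2/3$.
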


\begin{proof}
Let $d$ satisfy $\parallel\! \overline{B_i}  \!\parallel = d \parallel\! B_{i-1} \!\parallel$. For each index $i\in[2^{an}]$, let $X_i =\{y\mid x_i\in Q^{+}(y)\}$. Since $\parallel\! Q^{+}(y) \!\parallel >2^{an}/3$ for all $y$'s, it holds that $\sum_{i=1}^{2^{an}} d \parallel\! X_i \!\parallel  \geq (2^{an}/3) \parallel\! B_{i-1} \!\parallel$. Note that $\sum_{i=1}^{2^{an}} \parallel\! X_i \!\parallel  = 2^{an}$. Thus, we obtain $d\geq 1/3$. Since $\parallel\! B_i \!\parallel  = \parallel\! B_{i-1} \!\parallel - \parallel\! \overline{B_i} \!\parallel$, it follows that $\parallel\! B_i \!\parallel \leq (2/3) \parallel\! B_{i-1} \!\parallel$.
\end{proof}

{}From the above claim, it follows that $\parallel\! B_i \!\parallel \leq (2/3)^i \parallel\! B_0 \!\parallel  = (2/3)^i2^{an}$.  Let $i_0$ denote the minimal number such that $\parallel\! B_i \!\parallel  =0$.  Since $i> an/\log(3/2)$ implies $\parallel\! B_i \!\parallel <1$, we conclude that $i_0\leq an/\log(3/2)+1$. Now, we write $W$ for the collection of all $w_i$'s ($1\leq i \leq i_0$) defined in the above procedure.  The desired $A_n$ is defined to be  $(\Sigma^{an}-W)\cup (\bigcup_{y}Q^{-}(y))$.

(2) Second, we will consider the case where the bottom fan-in is  more than $1$. To handle this case, we will use a special form of the so-called {\em switching lemma} to reduce this case to the base case.

A {\em restriction}   is a map $\rho$ from a set of $n$ Boolean variables  to $\{0,1,*\}$.  We define $\RR^{\ell,q}_{n}$ to be the collection of {\em restrictions} $\rho$ on a domain of $n$ variables that have exactly $\ell$ unset variables and a $q$-fraction of the variables are set to be $1$.
 For any circuit $C$, $bf(C)$ denotes the bottom fan-in of $C$.

\begin{claim}\label{switching-lemma}{\rm \cite{Bea90}}
Let $C$ be a circuit of OR of ANDs with bottom fan-in at most $r$. Let $n>0$, $s\geq0$, $\ell=pn$, and $p\leq 1/7$. It holds that $\parallel\! \{\rho\in \RR^{\ell,q}_{n}\mid  \exists D:\,\text{AND of ORs}\, [\, bf(D)\geq s\,] \} \!\parallel < (2pr/q^2)^s \parallel\! \RR^{\ell,q}_{n} \!\parallel$.
\end{claim}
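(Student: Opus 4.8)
The statement is Beame's counting form of Håstad's switching lemma, so the plan is to reproduce Razborov's encoding (injection) argument, adapted to the biased restriction class $\RR^{\ell,q}_{n}$. The first step is to replace the awkward existential ``$\exists D$ (AND of ORs) with $bf(D)\geq s$'' by the cleaner equivalent phrased through the \emph{canonical decision tree} $T(C\restriction_{\rho})$. Fixing once and for all an ordering of the terms (the bottom $AND$ gates) of the $OR$-of-$AND$s circuit $C$, I would build $T(C\restriction_{\rho})$ greedily: locate the first term not already falsified by $\rho$, branch on all of its free variables (there are at most $r$ of them since $bf(C)\leq r$), and recurse on each leaf. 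The key structural fact, which I would verify first, is that $C\restriction_{\rho}$ can be rewritten as an $AND$ of $OR$s of bottom fan-in strictly less than $s$ exactly when $\mathrm{depth}(T(C\restriction_{\rho}))<s$. Hence the set to be bounded, call it $S$, equals $\{\rho\in\RR^{\ell,q}_{n}\mid \mathrm{depth}(T(C\restriction_{\rho}))\geq s\}$, and it suffices to produce $s$ branchings along some root-to-leaf path.

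The heart of the proof is an injection from $S$ into a product of a smaller restriction class and a small ``code'' set. Given $\rho\in S$, I would follow a path of length $\geq s$ in $T(C\restriction_{\rho})$, truncate it to the first $s$ queried variables, and let $\sigma$ be the partial assignment the canonical tree dictates along that path. The image restriction is $\rho\sigma$, which lies in $\RR^{\ell-s,q}_{n}$ since $\sigma$ fixes exactly $s$ of $\rho$'s stars. To make the map invertible I would attach a code that records, block by block (one block per term visited along the path), which of the $\leq r$ variables of that term were assigned by $\sigma$ and in what direction. Because each term has size at most $r$ and the canonical tree structure lets us relocate the block boundaries when reading $\rho\sigma$ back, recovering $\rho$ from the pair $(\rho\sigma,\text{code})$ is possible; this is the step I would write out in full detail, since injectivity is where such arguments usually break. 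The number of admissible codes is bounded by $(2r)^s$ at most, one factor of $r$ per star for the position inside its term and a bounded factor for the sign/role information.

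Finally I would convert the injection into the ratio bound. Summing over codes gives $\parallel\! S\!\parallel \leq (\text{code count})\cdot \parallel\!\RR^{\ell-s,q}_{n}\!\parallel$, so everything reduces to estimating $\parallel\!\RR^{\ell-s,q}_{n}\!\parallel/\parallel\!\RR^{\ell,q}_{n}\!\parallel$ together with the per-star code factor, and checking that their product is below $(2pr/q^{2})^{s}$ with $p=\ell/n$. The ordinary unbiased argument yields the familiar $(O(pr))^{s}$; the extra $q^{-2}$ per switched variable is exactly the penalty incurred because the $q$-biased class weights $0$- and $1$-settings unequally, so that refilling a star and re-reading its recorded value each cost a factor controlled by $1/q$. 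I expect this biased counting—pinning down that the size ratio of $\RR^{\ell-s,q}_{n}$ to $\RR^{\ell,q}_{n}$ times the $(2r)^s$ code bound collapses to precisely $(2pr/q^{2})^{s}$, rather than some larger constant in the base—to be the main obstacle, and it is the only place where the parameter $q$ (and the hypothesis $p\leq 1/7$, which keeps the relevant binomial ratios under control) genuinely enters.
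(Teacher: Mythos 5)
First, note that the paper itself contains no proof of this claim: it is quoted from Beame \cite{Bea90} and used as a black box inside the proof of Proposition \ref{BPCFL-oracle-sep}, so there is no in-paper argument to compare yours against. Your outline does follow the right family of techniques --- the Razborov-style encoding argument is essentially how \cite{Bea90} and all modern treatments prove statements of this shape --- but as written it has two genuine gaps, one structural and one quantitative.

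The structural gap: your assertion that the image restriction $\rho\sigma$ ``lies in $\RR^{\ell-s,q}_{n}$ since $\sigma$ fixes exactly $s$ of $\rho$'s stars'' is false for this restriction class. Membership in $\RR^{\ell-s,q}_{n}$ requires not only $\ell-s$ unset variables but also that exactly a $q$-fraction of the assigned variables be set to $1$; the $s$ values that $\sigma$ assigns are dictated by the canonical-tree path and will in general destroy that exact fraction, so the injection as you describe it does not land in the claimed target set at all. Repairing this --- mapping into a union of classes indexed by how many $1$'s $\sigma$ introduces, and then doing the binomial bookkeeping over that union --- is precisely where the per-variable factor $1/q^{2}$ arises, and it is not a routine afterthought. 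This feeds directly into the quantitative gap: the inequality you defer as ``the main obstacle,'' namely that the code count times $\parallel\!\RR^{\ell-s,q}_{n}\!\parallel/\parallel\!\RR^{\ell,q}_{n}\!\parallel$ stays below $(2pr/q^{2})^{s}$, \emph{is} the lemma; nothing short of carrying it out (including locating where the hypothesis $p\leq 1/7$ enters, which your sketch never uses) constitutes a proof. A smaller point: your ``exactly when'' equivalence between shallow canonical decision trees and small AND-of-ORs representations holds only in one direction (shallow tree implies small representation); fortunately that is the direction the argument needs, since it yields containment of the bad event in the deep-tree event.
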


Consider any subcircuit $D$, an AND of ORs, attached to the top OR-gate. By setting $q=1/3$ and $r=c$, we apply Claim \ref{switching-lemma} to $D$. The probability that $D$ is written as an OR of ANDs with bottom fan-in at most $an$ is upper-bounded by $1-(18pc)^{an}$.  Moreover, the probability that all such subcircuits $D$ are simultaneously written as circuits, each of which is an AND of ORs, is at most $[1-(18pc)^{an}]^{2^{an}}\geq 1-2^{an}(18pc)^{an}=1-(36pc)^{an}$. If we choose $p=1/72c$, then the success probability is at least $1-(36pc)^{an} \geq 1-(1/2)^{an}$, which is larger than $1/2$ for any integer $n\geq 2/a$. Since every subcircuit $D$ is written as an AND of ORs, the original circuit $C$ can be written as an OR of ANDs with bottom fan-in at most $an$. Finally, we apply the base case to this new circuit.
\end{proofof}

There also exists an obvious oracle for which $\bpcfl$ equals $\sigmacflt{2}$ since the following equalities hold.

\begin{proposition}
$\bpcfl_{T}^{\pspace} = \relsigmacflt{2}{\pspace} = \pspace$.
\end{proposition}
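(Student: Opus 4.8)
The plan is to prove both equalities through the short cycle of inclusions
$\pspace\subseteq\relsigmacflt{2}{\pspace}\subseteq\pspace$ and
$\pspace\subseteq\bpcfl_{T}^{\pspace}\subseteq\pspace$, all resting on one simulation
principle: any one-way pushdown machine (deterministic, nondeterministic, or probabilistic)
making Turing-style adaptive queries to an oracle lying in $\pspace$ can itself be simulated in
$\pspace$. The structural fact that makes this work is that an oracle npda or ppda, being required
to halt on every computation path, runs in linear time $O(|x|)$ (the same linear running-time bound
already invoked for npda's earlier in the paper). Consequently, along any path the stack height,
each query word, and the number of queries are all $O(|x|)$, so a complete configuration (inner
state, input-head position, stack contents, and the query-tape contents written so far) occupies
only $O(|x|)$ cells.

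The lower bounds are immediate. For any $L\in\pspace$, a trivial $T$-reduction machine copies its
input $x$ onto the query tape, makes a single query to $L$ itself, and accepts iff the answer is
$1$; this uses no stack, so $L\in\cfl_{T}^{L}\subseteq\cfl_{T}^{\pspace}=\relsigmacflt{1}{\pspace}$,
and likewise $L\in\bpcfl_{T}^{L}\subseteq\bpcfl_{T}^{\pspace}$ (a deterministic reduction being a
degenerate bounded-error one). Since $\pspace$ is closed under complementation, from
$\pspace\subseteq\relsigmacflt{1}{\pspace}$ we get $\pspace=\co\pspace\subseteq\co\cfl_{T}^{\pspace}=\relpicflt{1}{\pspace}$,
so each $L\in\pspace$ satisfies $L\in\cfl_{T}^{L}\subseteq\cfl_{T}(\relpicflt{1}{\pspace})=\relsigmacflt{2}{\pspace}$.
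This yields $\pspace\subseteq\relsigmacflt{2}{\pspace}$ and $\pspace\subseteq\bpcfl_{T}^{\pspace}$.

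For the nondeterministic upper bound I would first show $\cfl_{T}^{A}\subseteq\pspace$ for every
$A\in\pspace$. A nondeterministic polynomial-space procedure simulates the oracle npda step by step,
storing only the current $O(|x|)$-size configuration; whenever the machine enters $q_{query}$ it
evaluates $\chi^{A}$ on the stored query word (of length $O(|x|)$) by inlining the $\poly$-space
algorithm for $A$, then branches to $q_{yes}$ or $q_{no}$, accepting iff an accepting state is
reached. This is an $\mathrm{NPSPACE}$ computation, so by Savitch's theorem it lies in $\pspace$.
Closure of $\pspace$ under complementation then gives $\relpicflt{1}{\pspace}=\co\cfl_{T}^{\pspace}\subseteq\pspace$,
and applying the same simulation one level up yields
$\relsigmacflt{2}{\pspace}=\cfl_{T}(\relpicflt{1}{\pspace})\subseteq\bigcup_{A\in\pspace}\cfl_{T}^{A}\subseteq\pspace$,
whence $\relsigmacflt{2}{\pspace}=\pspace$.

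The probabilistic upper bound $\bpcfl_{T}^{A}\subseteq\pspace$ for $A\in\pspace$ is the main obstacle,
since here one cannot merely guess a path but must compute the acceptance probability. I would
evaluate it exactly by a depth-first recursion $\mathrm{AccProb}(\text{config})$ over the computation
tree: return $1$ or $0$ at accepting or rejecting halting configurations, resolve a query state
deterministically by inlining the $\pspace$ computation of $\chi^{A}$, and at a probabilistic branch
return the weighted sum $\sum_i p_i\cdot\mathrm{AccProb}(c_i)$ over the successors, computed
sequentially so that space is reused. Because the runtime is $O(|x|)$, the recursion depth is
$O(|x|)$ and each stacked frame costs $O(|x|)$ cells, so the recursion stack fits in $O(|x|^2)$
space; moreover the acceptance probability is a rational whose denominator divides $D^{O(|x|)}$,
where $D$ is a fixed common denominator of the finitely many transition probabilities of the ppda,
hence it is representable in $O(|x|)$ bits, as is every partial sum. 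Thus the whole evaluation runs
in $\poly(|x|)$ space. Since the machine has bounded error, deciding $x\in L$ reduces to testing
whether this computed probability exceeds $1/2$, completing $\bpcfl_{T}^{A}\subseteq\pspace$ and hence
$\bpcfl_{T}^{\pspace}=\relsigmacflt{2}{\pspace}=\pspace$.
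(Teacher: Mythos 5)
Your proof is correct and takes essentially the same route as the paper's: the paper simply observes that $\bpcfl_{T}^{B}\subseteq\pspace_{T}^{B}$ holds relativizably for every oracle $B$ (hence $\bpcfl_{T}^{\pspace}\subseteq\pspace$), notes that the converse inclusion $\pspace\subseteq\bpcfl_{T}^{\pspace}$ is trivial, and declares the $\relsigmacflt{2}{\pspace}$ case analogous. Your Savitch-based simulation, the inline evaluation of the $\pspace$ oracle, and the exact depth-first computation of the acceptance probability are just detailed implementations of what the paper dismisses as obvious, resting (like the paper itself does, e.g.\ in its circuit lemma) on the linear running-time bound for oracle pushdown machines that halt on all paths.
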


\begin{proof}
It is obvious that $\bpcfl_{T}^{B}\subseteq\pspace_{T}^{B}$ for every oracle $B$, where $\pspace_{T}^{B}$ is a Turing relativization of $\pspace$ relative to $B$.  Hence, it follows that $\bpcfl_{T}^{\pspace} \subseteq \pspace_{T}^{\pspace} = \pspace$. Conversely, since $A\subseteq \bpcfl_{T}^{A}$ for any oracle $A$, in particular, we obtain $\pspace \subseteq \bpcfl_{T}^{\pspace}$. The case of $\relsigmacflt{2}{\pspace}$ is similar.
\end{proof}

We have just seen an oracle that supports the containment $\bpcfl\subseteq\sigmacflt{2}$ and another oracle that does not. As this example showcases, some relativization results are  quite  counterintuitive. Before closing this subsection, we will present another plausible example regarding the {\em parity NFA language family} $\paritynfa$ whose elements are languages of the form $\{x\mid \parallel\! ACC_{M}(x) \!\parallel =1\;(\mathrm{mod}\;2)\}$ for arbitrary nfa's $M$.  In  the  unrelativized world,  it is known that $\paritynfa \subseteq \tc{1}\subseteq \ph$; however, there exists an oracle that defies this fact. As we have done for $\nfa_{m}^{A}$, we define a many-one relativization $\paritynfa_{M}^{A}$.

\begin{lemma}\label{parity-nfa-vs-ph}
There exists an oracle $A$ such that $\paritynfa_{m}^{A}\nsubseteq \ph^{A}$.
\end{lemma}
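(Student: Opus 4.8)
The plan is to pinpoint a single oracle-dependent language that lies in $\paritynfa_{m}^{A}$ for \emph{every} $A$, yet on inputs of the form $0^{n}$ forces any $\ph^{A}$-machine to compute the parity of the $2^{n}$ oracle bits at length $n$---a task provably beyond bounded-depth circuits. Concretely, set $\Sigma=\{0,1\}$ and take $L^{A} = \{0^{n}\mid \parallel\! A\cap\Sigma^{n} \!\parallel \text{ is odd}\}$. To see $L^{A}\in\paritynfa_{m}^{A}$, I would design an oracle nfa $M$ that, on input $0^{n}$, reads the $n$ input symbols one by one and, on reading the $i$th symbol, nondeterministically writes a bit $b_{i}\in\{0,1\}$ onto its query tape before entering an accepting state at the right endmarker. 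The nondeterministic choices are in bijection with the strings $y=b_{1}\cdots b_{n}\in\Sigma^{n}$, so exactly those paths whose query word lands in $A$ are accepting relative to $A$, and their number equals $\parallel\! A\cap\Sigma^{n} \!\parallel$. Hence the parity of accepting paths of $M^{A}$ on $0^{n}$ agrees with membership of $0^{n}$ in $L^{A}$, giving $L^{A}\in\paritynfa_{m}^{A}$ for every oracle $A$.

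Next I would supply the circuit characterization of $\ph^{A}$ that plays the role of Lemma \ref{depth-circuit}. Fix $k$ and suppose an oracle-dependent language lies in $\sigmap{k}^{A}$, say via $\exists y_{1}\forall y_{2}\cdots Q_{k}y_{k}\,R^{A}(x,y_{1},\ldots,y_{k})$ with $R$ a polynomial-time oracle predicate. On inputs $0^{n}$ the quantifiers range over strings of length $n^{O(1)}$ and become alternating $OR$/$AND$ gates of fan-in $2^{n^{O(1)}}$, while the base predicate $R^{A}$, making at most $n^{O(1)}$ adaptive queries of length at most $n^{O(1)}$, unfolds into a decision tree of depth $n^{O(1)}$, hence into a depth-two circuit of \emph{bottom fan-in} $n^{O(1)}$ over the oracle-bit variables $\chi^{A}(y)$. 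Splicing these together yields, for each $n$, a circuit $C_{n}$ of depth $O(k)$, size $2^{n^{O(1)}}$, and bottom fan-in $n^{O(1)}$ whose variables are the bits $\chi^{A}(y)$ with $|y|\leq n^{O(1)}$ and which decides whether $0^{n}\in L^{A}$ for every oracle $A$ consistent with those variables. Restricting attention to the variables $\{\chi^{A}(y)\mid y\in\Sigma^{n}\}$, the circuit $C_{n}$ must then compute the \emph{parity} of these $2^{n}$ bits.

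The final step is the lower bound together with a stagewise diagonalization modeled on the proof of Proposition \ref{BPCFL-oracle-sep}. Iterating the switching lemma (Claim \ref{switching-lemma}) in the standard H{\aa}stad fashion, one shows that any depth-$d$ circuit of bottom fan-in $n^{O(1)}$ computing the parity of $N$ variables must have size at least $2^{\Omega(N^{1/(d-1)})}$; with $N=2^{n}$ and $d=O(k)$ this is $2^{2^{\Omega(n)}}$, doubly exponential in $n$ and hence strictly larger than the size $2^{n^{O(1)}}$ of $C_{n}$ once $n$ is large. Therefore, for each fixed $\ph^{A}$-machine there is a sufficiently large length $n$ and a choice of $A\cap\Sigma^{n}$ on which the associated circuit $C_{n}$ disagrees with the true parity, so that $0^{n}\in L^{A}\ \not\!\leftrightarrow\ 0^{n}\in B^{A}$ for the language $B^{A}$ that the machine defines. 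Enumerating all such machines and committing to $A\cap\Sigma^{n}$ at rapidly increasing, mutually non-interfering lengths $n_{1}<n_{2}<\cdots$ (exactly as the stages $n_{i}=\max\{n_{i-1}+1,\ldots\}$ are chosen in Proposition \ref{BPCFL-oracle-sep}) produces a recursive oracle $A$ with $L^{A}\notin\ph^{A}$, while $L^{A}\in\paritynfa_{m}^{A}$ holds unconditionally, yielding the claimed separation.

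I expect the main obstacle to be the circuit characterization of $\ph^{A}$ rather than the diagonalization: one must verify that the polynomial-time, adaptively querying base machine really collapses to a bottom layer of \emph{bounded} (polynomial) fan-in, and that the number of alternations stays a fixed constant $d=O(k)$ for each machine, so that the iterated switching lemma applies uniformly. Once the depth is constant and the bottom fan-in is polynomial, the doubly-exponential PARITY lower bound dwarfs the singly-exponential circuit size and the diagonalization goes through routinely, paralleling the analysis already carried out for Proposition \ref{BPCFL-oracle-sep}.
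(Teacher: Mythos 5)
Your proposal is correct, and it agrees with the paper on the first half: the paper uses exactly the same witness language $L^A=\{0^n\mid \bigoplus_{x\in\Sigma^n}\chi^A(x)=1\;(\mathrm{mod}\,2)\}$ and the same one-line membership argument (guess $y\in\Sigma^n$ nondeterministically, write it on the query tape, so the parity of accepting paths relative to $A$ equals the parity of $\parallel\! A\cap\Sigma^n\!\parallel$). Where you diverge is the separation half: the paper simply cites Cai \cite{Cai86}, who proved that $L^A\not\in\ph^A$ holds for a \emph{random} oracle $A$, and stops there; you instead reconstruct the classical Furst--Saxe--Sipser/H{\aa}stad route, namely (i) a circuit characterization of relativized $\sigmap{k}$ computations on $0^n$ as depth-$O(k)$, size-$2^{n^{O(1)}}$ circuits with polynomial bottom fan-in over the oracle bits (the $\ph$-analogue of Lemma \ref{depth-circuit}), (ii) the iterated switching lemma giving the $2^{\Omega(N^{1/(d-1)})}$ lower bound for parity of $N=2^n$ bits, which dwarfs $2^{n^{O(1)}}$, and (iii) a stagewise diagonalization at non-interfering lengths as in Proposition \ref{BPCFL-oracle-sep}. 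Both are sound; the trade-off is that the paper's citation yields a two-sentence proof and even the stronger measure-one statement, whereas your argument is self-contained within the paper's own toolkit (Claim \ref{switching-lemma} plus the stage construction already set up for Proposition \ref{BPCFL-oracle-sep}) and explicitly produces a \emph{recursive} oracle, which the random-oracle citation does not directly provide. The technical burdens you flag are real but standard: the adaptive polynomial-time base machine does unfold into a depth-$n^{O(1)}$ decision tree and hence a bottom layer of polynomial fan-in, and the depth stays the fixed constant attached to each enumerated machine, so the iterated switching-lemma bound applies uniformly.
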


\begin{proof}
Let us consider a special language $L^A=\{0^n\mid \bigoplus_{x\in\Sigma^n}\chi^A(x)=1\;(\mathrm{mod}\,2)\}$ relative to oracle $A$. It is easy to show that, for any oracle $A$,  $L^A$ is in $\paritynfa_{m}^{A}$ by guessing a string $x$ in $\Sigma^n$ and asking $A$ to decide that $x\in A$. Since it is shown in \cite{Cai86} that $L^A\not\in\ph^A$ for a {\em random oracle} $A$, we immediately obtain the desired oracle separation.
\end{proof}

\section{A Close Relation to the Polynomial Hierarchy}\label{sec:close-relation}

In Section \ref{sec:CFL-hierarchy}, the $\cfl$ hierarchy has proven to be viable in classifying certain languages and it has three natural characterizations, as shown in Theorem \ref{Sigma-m-vs-T} and Proposition \ref{logical-charact}.
Moreover, we know that the first two levels of the $\cfl$ hierarchy are different (namely, $\sigmacflt{1}\neq\sigmacflt{2}$ stated in Proposition \ref{first-second-gap}); however, the separation of the rest of the hierarchy still remains unknown at this moment. In this section, we will discuss under what conditions the separation is possible.

\subsection{Logarithmic-Space Many-One Reductions}

Recall the space-bounded complexity class $\dl$. We hereafter consider its natural many-one relativization $\dl_{m}^{A}$ relative to oracle $A$.
Given a language $A$, a language $L$ is in $\dl_{m}^{A}$ if there exists a logarithmic-space (or log-space) oracle DTM $M$ with an extra write-only query tape (other than a two-way read-only input tape and a two-way read/write work tape) such that, for every string $x$, $x$ is in $L$ if and only if $M$ on the input $x$ uniquely produces a certain string in $A$.  Recall that any tape head on a write-only tape moves only in one direction.
More importantly, we explicitly demand that all computation paths of $M$ on any  input of length $n$ terminate within $n^{O(1)}$ steps.
As a consequence, any query word produced by $M$ must have length $n^{O(1)}$ as well. For any language family $\CC$, we denote by $\dl_{m}^{\CC}$ the union $\bigcup_{A\in\CC}\dl_{m}^{A}$. Occasionally, we also write $\dl_{m}(\CC)$ to mean $\dl_{m}^{\CC}$.  In particular, when $\CC=\cfl$, the language family $\dl_{m}^{\cfl}$ has been known as $\mathrm{LOGCFL}$ ($\mathrm{LogCFL}$ or $\mathrm{LOG(CFL)}$) in the literature.  Since $\sigmacflt{1}=\cfl$, it follows from \cite{Ven91} that $\dl_{m}(\sigmacflt{1}) = \sac{1}$.  As for the language family $\pcfl$,  for instance, Macarie and Ogihara \cite{MO98} demonstrated that $\dl_{m}^{\pcfl} \subseteq \tc{1}$.
Concerning $\bpcfl$,  the  containment
$\dl_{m}^{\bpcfl}\subseteq \dl_{m}(\sigmacflt{2})$ holds. This fact can be proven as follows. {}From $\bpcfl\subseteq\pcfl$  by their definitions, it follows  that $\dl_{m}^{\bpcfl}\subseteq \dl_{m}^{\pcfl} \subseteq \tc{1}$. However, by Claim \ref{CFLH-charact-PH}, $\dl_{m}(\sigmacfl{2})= \np$. Thus,  we conclude that $\dl_{m}^{\bpcfl}\subseteq \tc{1}\subseteq \np =\dl_{m}(\sigmacfl{2})$.

It is obvious that $\CC_1=\CC_2$ implies $\dl_{m}^{\CC_1} = \dl_{m}^{\CC_2}$; however, the converse does not always hold. Here is a simple example. Although $\cfl(k)\neq \cfl$  holds for  $k\geq2$, the following equalities  hold.

\begin{lemma}\label{CFL(k)-equal-CFL}
$\dl_{m}^{\cfl(\omega)} = \dl_{m}^{\cfl} = \sac{1}$.
\end{lemma}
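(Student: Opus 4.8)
The plan is to reduce everything to the known identity $\dl_{m}^{\cfl}=\sac{1}$ of Venkateswaran \cite{Ven91}, together with two standard facts: that $\sac{1}$ is closed under intersection of any fixed number of languages, and that log-space many-one reductions compose, so that $\dl_{m}(\dl_{m}^{\CC})=\dl_{m}^{\CC}$ for every family $\CC$. The inclusions $\dl_{m}^{\cfl}\subseteq\dl_{m}^{\cfl(\omega)}$ and $\dl_{m}^{\cfl}=\sac{1}$ are immediate---the former because $\cfl=\cfl(1)\subseteq\cfl(\omega)$, the latter by Venkateswaran's theorem---so the entire content of the lemma lies in proving $\dl_{m}^{\cfl(\omega)}\subseteq\dl_{m}^{\cfl}$.

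First I would show that $\cfl(k)\subseteq\sac{1}$ for every $k\in\nat^{+}$. Since $\cfl\subseteq\dl_{m}^{\cfl}=\sac{1}$ via the identity reduction, each factor $A_i$ of a language $B=A_1\cap\cdots\cap A_k\in\cfl(k)$ is recognized by a log-space uniform $\sac{1}$ circuit family. Combining the $k$ circuits for a fixed input length by a balanced binary tree of fan-in-$2$ $AND$ gates adds only $O(\log k)=O(1)$ depth and a constant factor to the size, introduces no new negations, and leaves the unbounded fan-in $OR$ gates untouched; the resulting family is therefore again $\sac{1}$ and recognizes $B$. Taking the union over all $k$ yields $\cfl(\omega)\subseteq\sac{1}=\dl_{m}^{\cfl}$.

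Next I would invoke transitivity. Let $L\in\dl_{m}^{\cfl(\omega)}$, witnessed by some $B\in\cfl(\omega)$ with $L\in\dl_{m}^{B}$. By the previous step $B\in\dl_{m}^{\cfl}$, so there is a context-free language $A$ with $B\in\dl_{m}^{A}$. Because the composition of two log-space computable functions is again log-space computable (and a write-only query tape is compatible with this composition), we obtain $L\in\dl_{m}^{B}\subseteq\dl_{m}(\dl_{m}^{A})=\dl_{m}^{A}\subseteq\dl_{m}^{\cfl}$. This gives $\dl_{m}^{\cfl(\omega)}\subseteq\dl_{m}^{\cfl}$ and completes the chain of equalities.

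The only delicate point, and hence the main obstacle, is the closure of $\sac{1}$ under intersection together with the transitivity of the $\dl_m$-reduction; both are routine but must be stated carefully. In particular one must stress that the number $k$ of conjuncts is a constant depending on $B$ (not on the input length), so that the extra depth contributed in the first step is genuinely $O(1)$ and the circuit family stays in $\sac{1}$. It is worth noting that a direct attempt to $m$-reduce $L$ to a single context-free language---say by concatenating several marked copies of the query word---fails, precisely because a single npda cannot verify membership in an intersection of context-free languages using its one stack; routing the argument through $\sac{1}$, which unlike $\cfl$ \emph{is} closed under intersection, is exactly what circumvents this non-closure.
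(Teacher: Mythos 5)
Your proposal is correct, but it takes a genuinely different route from the paper's. The paper never leaves the automata world: given a log-space reduction $M$ of $L$ to $B=\bigcap_{i\in[k]}B_i$ with each $B_i\in\cfl$, it builds a new log-space reduction that outputs $k$ marked copies $y\natural y\natural\cdots\natural y$ of the query word $y=M(x)$, and reduces $L$ to the language $C=\{y_1\natural y_2\natural\cdots\natural y_k\mid \forall i\in[k]\,[y_i\in B_i]\}$. The crucial observation is that $C$ is itself context-free: an npda can check $y_i\in B_i$ sequentially, segment by segment, emptying its stack after each segment so that the checks do not interfere. Thus $L\in\dl_{m}^{C}\subseteq\dl_{m}^{\cfl}$ directly. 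Your route instead passes through circuits: $\cfl(k)\subseteq\sac{1}$ by combining $k$ circuits with a constant-depth tree of fan-in-$2$ $AND$ gates (correct, since $k$ is a constant depending only on $B$), followed by transitivity of log-space many-one reductions. Both are valid; your version buys a cleaner conceptual picture (everything is absorbed into the closure properties of $\sac{1}$), while the paper's construction is more elementary, avoids invoking uniform-circuit closure properties, and stays within the reducibility framework the paper is building.

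One point in your write-up is factually wrong and worth correcting, even though it does not affect the validity of your proof: the ``direct attempt'' you dismiss in your last paragraph---concatenating several marked copies of the query word---is precisely the paper's proof, and it succeeds. The non-closure of $\cfl$ under intersection obstructs checking $k$ context-free conditions on the \emph{same} string; but after duplication, the $i$th condition is checked only on the $i$th segment of the one-way input, and these segments are disjoint, so a single stack cleared between segments suffices. So the detour through $\sac{1}$ is not ``exactly what circumvents this non-closure''; it is merely one of two workable options, and the obstruction you cite as fatal is not actually present in the duplicated-query construction.
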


\begin{proof}
Recall that $\dl_{m}^{\cfl}=\sac{1}$ \cite{Ven91}. Therefore, our goal is now set to prove that  $\dl_{m}^{\cfl(k)} = \dl_{m}^{\cfl}$  for every index $k\geq2$.
Note that, for all indices  $k\in\nat^{+}$, $\dl_{m}^{\cfl} = \dl_{m}^{\cfl(1)} \subseteq \dl_{m}^{\cfl(k)}$ because $\cfl(1)\subseteq\cfl(k)$.  The remaining task is to show that $\dl_{m}^{\cfl(k)}\subseteq \dl_{m}^{\cfl}$. Let $k\geq2$ and assume that $A\in \dl_{m}^{B}$ for a certain language $B\in\cfl(k)$. There are $k$ languages $B_1,B_2,\ldots,B_k\in\cfl$ satisfying that $B = \bigcap_{i\in[k]}B_i$. Take any log-space oracle DTM $M$ that $m$-reduces  $A$ to $B$.

We define a new oracle DTM $N_1$ so that, on input $x$, it produces  $y\natural y\natural \cdots \natural y$ ($k$ $y$'s) on its query tape if $M$ taking $x$ produces $y$. Since $k$ is a constant, $N_1$ needs only $O(\log{n})$ space to execute.  Our new oracle $C$ is the set $\{y_1\natural y_2\natural\cdots \natural y_k\mid \forall i\in[k]\,[ y_i\in B_i]\}$. It is clear that, for any $x$, $N_1(x)\in C$ if and only if $M(x)\in B$. From this equivalence, $N_1$ $m$-reduces $A$ to $C$; thus, $A\in\dl_{m}^{C}$.
It remains to show that $C$ is in $\cfl$.
For each $i\in[k]$, let $M_i$ denote an npda that recognizes $B_i$. Here, let us consider a new npda $N_2$ that behaves as follows. On input $w$, $N_2$ checks if $w$ is of the form $y_1\natural y_2\natural\cdots \natural y_k$.
At the same time, $N_2$ sequentially simulates $M_i$ on input $y_i$, starting with $i=1$. After each simulation of $M_i$ on input $y_i$, $N_2$ always empties its own stack so that each simulation does not affect the next one. Moreover, as soon as $M_i$ rejects $y_i$, $N_2$ enters a rejecting state and halts. It is obvious that $C$ is recognized by the npda $N_2$. In conclusion, $C$ is indeed a context-free language.
\end{proof}


The $\cfl$ hierarchy turns out to be a quite useful tool because it is closely related to the polynomial hierarchy $\{\deltap{k},\sigmap{k},\pip{k}\mid k\in\nat\}$. Reinhardt \cite{Rei90} first established a close  connection between his alternating hierarchy over $\cfl$ and the polynomial hierarchy.
Similar to $\sigmacflt{k,e}$, the notation $\sigmap{k,e}$ stands for the $e$th level of the \emph{Boolean hierarchies over $\sigmap{k}$}.
We want to demonstrate the following intimate relationship between $\sigmacflt{k+1,e}$ and $\sigmap{k,e}$.

\begin{theorem}\label{Sigmap-k-Boolean-eq}
For every index $e,k\in\nat^{+}$, $\dl_{m}(\sigmacflt{k+1,e}) = \sigmap{k,e}$ holds. In particular, $\dl_{m}(\sigmacfl{k+1}) = \sigmap{k}$ holds.
\end{theorem}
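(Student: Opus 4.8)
The plan is to prove the two inclusions $\dl_{m}(\sigmacflt{k+1,e}) \subseteq \sigmap{k,e}$ and $\sigmap{k,e} \subseteq \dl_{m}(\sigmacflt{k+1,e})$ by induction on the Boolean level $e$, relying crucially on the logical characterization of $\sigmacflt{k}$ established in Proposition \ref{logical-charact}. The key observation is that a log-space many-one reduction composed with a quantified $\dcfl$-predicate yields a log-space-verifiable alternating predicate of the same quantifier depth, and conversely. First, I would treat the base case $e=1$, \ie $\dl_{m}(\sigmacflt{k+1}) = \sigmap{k}$, which is the heart of the matter; the Boolean-hierarchy bookkeeping for general $e$ then follows by combining the base case with the class operations $\wedge$, $\vee$, and $-$ in a routine manner, using the fact that log-space many-one reductions distribute over the Boolean combinations appearing in $\sigmacflt{k+1,e}$ and $\sigmap{k,e}$.

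For the base case, consider the direction $\dl_{m}(\sigmacflt{k+1}) \subseteq \sigmap{k}$. Given $L \in \dl_{m}^{B}$ with $B \in \sigmacflt{k+1}$, I would apply Proposition \ref{logical-charact} to $B$: there is a linear polynomial $p$ and a language $A \in \dcfl$ so that membership $y \in B$ is expressed by a $(k+1)$-fold alternating quantifier block $\exists \tilde{y} \exists y_1 \forall y_2 \cdots Q_{k+1} y_{k+1}$ applied to a $\dcfl$-predicate on the track-string $[\tilde{y}, y_1, \ldots, y_{k+1}]^{T}$. Since a log-space DTM computes the reduction $x \mapsto y$, and since $\dcfl \subseteq \dl$ (indeed $\cfl$ is log-space recognizable by \cite{You67}, and the inner matrix predicate is decidable in deterministic log space once all quantified strings are supplied), the overall predicate ``$x \in L$'' becomes an alternating log-space computation with $k+1$ quantifier alternations over polynomially-bounded witnesses, whose innermost matrix is deterministic polynomial-time (in fact log-space) checkable. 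Because the leading quantifier is existential and the innermost deterministic predicate can be folded into the last quantifier, this collapses to exactly $k$ essential alternations at the polynomial level, placing $L$ in $\sigmap{k}$. The bookkeeping of how the ``$\exists \tilde{y} \exists y_1$'' double existential and the first-level $\cfl$-verification merge into a single $\Sigma$-block (so that depth $k+1$ in the CFL hierarchy matches depth $k$ in the polynomial hierarchy) is the delicate accounting step, and mirrors Reinhardt's treatment \cite{Rei90}.

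For the converse $\sigmap{k} \subseteq \dl_{m}(\sigmacflt{k+1})$, I would start from a language $L \in \sigmap{k}$, written in its standard quantifier form $\exists w_1 \forall w_2 \cdots Q_k w_k\, R(x, w_1, \ldots, w_k)$ with $R$ polynomial-time decidable and polynomially-bounded witnesses. The plan is to encode the entire quantified predicate into a single query word produced by a log-space reduction, and to let a $\sigmacflt{k+1}$-oracle perform the alternating verification. The essential device is that a deterministic polynomial-time computation of $R$, when its input (the tuple $(x, w_1, \ldots, w_k)$) is suitably padded/encoded on a single track, can be verified by a machine low in the CFL hierarchy once the witnesses are guessed; the extra quantifier level (hence $k+1$ rather than $k$) is consumed by the need to verify the deterministic matrix $R$ via an npda-style reduction to $\cfl$ (as in Claim \ref{NFA-to-DYCK}), which costs one level of the CFL hierarchy. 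I would build the log-space reduction to output a track-encoding $[\tilde{x}, \ldots]^{T}$ whose membership in a fixed $\sigmacflt{k+1}$ language is equivalent to $x \in L$, inserting $\natural$-symbols as in Corollary \ref{NFA-Dyck-equal-CFL} to synchronize the deterministic simulation with the guessed witnesses.

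The main obstacle I anticipate is precisely the \emph{off-by-one alignment} between the two hierarchies: making rigorous why $k$ polynomial-time alternations correspond to $k+1$ levels of the CFL hierarchy, rather than $k$. This stems from the fact that the innermost deterministic matrix is ``free'' in the polynomial setting (it is absorbed into $\p$) but costs a genuine level in the CFL setting because even recognizing a single context-free condition already constitutes the first $\sigmacflt{1}=\cfl$ level. Establishing that the log-space reductions do not secretly add or remove alternations — that they faithfully transport the quantifier structure while the deterministic inner predicate maps to exactly one CFL level via the stack-history technique — will require careful tracking of which quantifier the $\dcfl$-matrix of Proposition \ref{logical-charact} gets merged into. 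I expect to lean heavily on the alignment already implicit in Reinhardt's characterization and on the closure properties of each $\sigmacflt{k}$ (substitution, union, Boolean operations from Lemmas \ref{substitution-close} and \ref{basic-Boolean-op}) to push the Boolean-level induction through for general $e$ without re-deriving the alternation structure at each step.
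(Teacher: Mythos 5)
Your overall skeleton (induction on $e$ with the base case $e=1$ carrying the weight) matches the paper, and your first inclusion $\dl_{m}(\sigmacflt{k+1})\subseteq \sigmap{k}$ via Proposition \ref{logical-charact} is a legitimate alternative to the paper's route (the paper instead proves $\sigmacflt{k+1}\subseteq\sigmap{k}$ by induction on $k$ from $\cfl_{T}^{A}\subseteq \np^{A}$ and $\np^{\p}=\np$; see Claim \ref{sigma_CFL-to-sigma_P}). However, your accounting of the off-by-one in that direction is stated backwards: the collapse from $k+1$ quantifier blocks to $k$ does not come from the leading ``$\exists\tilde{y}\exists y_1$'' merge (two adjacent existentials are one block anyway), but from the \emph{innermost} end, where the block $Q_{k+1}y_{k+1}$ applied to the $\dcfl$ matrix defines a $\cfl$ or $\co\cfl$ predicate of the remaining variables (projection of a $\dcfl$ set is context-free), hence a $\p$ predicate that is absorbed into the matrix. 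Your closing paragraph shows you have the right intuition, but the proof step as written is wrong. Also, ``log-space reductions distribute over Boolean combinations'' is not automatic; the paper needs the input-redundancy device (Claim \ref{input-redundancy}) to merge two query words into one, though this part is indeed routine.

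The genuine gap is in the converse direction $\sigmap{k}\subseteq \dl_{m}(\sigmacflt{k+1})$. You propose that the log-space reduction hand the quantified predicate to a $\sigmacflt{k+1}$ oracle and that the polynomial-time matrix $R$ ``can be verified by a machine low in the CFL hierarchy once the witnesses are guessed,'' citing Claim \ref{NFA-to-DYCK}. This step fails as stated: a log-space transducer cannot run $R$ (so nothing about $R$'s computation appears on the query tape), and one-way npda's at any fixed level of the CFL hierarchy cannot verify an arbitrary polynomial-time computation from $(x,w_1,\ldots,w_k)$ alone---note $\cflh\subseteq \mathrm{DSPACE}(O(n))$ relative to its own input, and Claim \ref{NFA-to-DYCK} is a stack-elimination statement about Dyck oracles, not a tool for checking Turing machine computations. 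The indispensable missing idea is the Cook--Levin/Stockmeyer completeness step: the paper defines a specially formatted language $QBF_k$ that is log-space many-one complete for $\sigmap{k}$, so the log-space reduction outputs a quantified \emph{formula} whose local clause structure encodes $R$'s computation without ever executing it, and then shows $QBF_k\in\sigmacflt{k+1}$ by a chain of $m$-reduction machines in which the extra $(k+1)$st level performs the clause checking (the one-way-readable formatting with positional $\{+,-,0\}$ clause strings is exactly what makes this npda-feasible). Equivalently, you could fold $R$'s computation tableau into the innermost quantifier and spend the extra CFL level on tableau-consistency checking; but some version of this ``make the matrix locally checkable'' step must be supplied, and your proposal omits it entirely.
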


\begin{proof}
Fixing $k$ arbitrarily, we will show the theorem by induction on $e\in\nat^{+}$.  Our starting point is  the base case of $e=1$.  Notice that $\sigmacfl{k+1,1}=\sigmacfl{k}$ and $\sigmap{k,1}=\sigmap{k}$.

\begin{claim}\label{CFLH-charact-PH}
$\dl_{m}(\sigmacflt{k+1}) = \sigmap{k}$ holds for every index $k\in\nat^{+}$.
\end{claim}

\begin{proof}
In what follows, we will demonstrate separately that, for every  index $k\in\nat^{+}$, (1) $\dl_{m}(\sigmacflt{k+1})\subseteq \sigmap{k}$  and (2) $\sigmap{k} \subseteq \dl_{m}(\sigmacflt{k+1})$.

(1)  To prove that $\dl_{m}(\sigmacflt{k+1})\subseteq \sigmap{k}$, we start with the following useful  relationship between $\sigmacflt{k+1}$ and $\sigmap{k}$.

\begin{claim}\label{sigma_CFL-to-sigma_P}
$\sigmacflt{k+1}\subseteq \sigmap{k}$ holds for every index $k\in\nat^{+}$.
\end{claim}

\begin{proof}
This claim is proven by induction on $k\geq1$. A key to the following proof is the fact that  $\cfl_{T}^{A}\subseteq\np^{A}$ holds for every oracle $A$. When $k=1$, it holds that $\sigmacflt{2} = \cfl_{T}^{\cfl} \subseteq \np^{\cfl}$. Since $\cfl\subseteq \p$, we obtain $\np^{\cfl}\subseteq \np^{\p} =\np$, yielding the desired containment $\sigmacflt{2}\subseteq\np$.
When  $k\geq2$,  we assume by induction hypothesis that $\sigmacflt{k}\subseteq \sigmap{k-1}$. It therefore  follows that
$\sigmacflt{k+1}=\cfl_{T}(\picflt{k}) \subseteq \np(\picflt{k}) \subseteq \np(\pip{k-1}) = \sigmap{k}$.
\end{proof}

The containment  $\dl_{m}(\sigmacflt{k+1})\subseteq \dl_{m}(\sigmap{k})$ follows immediately from Claim \ref{sigma_CFL-to-sigma_P}. Hence, using the fact that $\dl_{m}(\sigmap{k})\subseteq \sigmap{k}$, we  conclude that $\dl_{m}(\sigmacflt{k+1})\subseteq \sigmap{k}$.

(2) Next, we plan to show that $\sigmap{k} \subseteq \dl_{m}(\sigmacflt{k+1})$.
An underlying idea of the following argument comes from \cite{Rei90}.
Our plan is to define a set of $k$ quantified Boolean formulas, denoted  $QBF_k$, which is slightly different from a standard one, and to prove that (a) $QBF_k$ is log-space complete for $\sigmap{k}$ and (b)  $QBF_k$ indeed belongs to $\sigmacflt{k+1}$.  Combining (a) and (b) implies that $\sigmap{k}\subseteq \dl_{m}^{QBF_k}\subseteq \dl_{m}(\sigmacflt{k+1})$.

Here, we will discuss the case where $k$ is odd.
The  language $QBF_k$ must be of a specific form so that an input-tape head of an oracle npda can read through a given instance of $QBF_k$ from left to right without back-tracking. First, we prepare the following alphabet  of distinct input symbols: $\Sigma_k= \{\exists,\forall,\wedge,\natural,+,-,0,a_1,a_2,\ldots,a_k\}$.
A \emph{string} $\phi$ over $\Sigma_{k}$ belongs to  $QBF_k$ exactly when $\phi$ is of the form
$\exists a_1^{m_1}\forall a_2^{m_2}\cdots Q_k  a_k^{m_k}\natural c_1\wedge c_2\wedge \cdots \wedge c_m$, which satisfies the following conditions:   each $m_i$ and $m$ are in $\nat^{+}$, $Q_k$ is $\exists$, each $c_i$ is a string $c_{i,\ell_{i}}c_{i,\ell_{i}-1}\cdots c_{i,2} c_{i,1}$ in $\{+,-,0\}^{\ell_{i}}$ for a certain number $\ell_i$ satisfying $\ell_i\geq \overline{m}=\sum_{j=1}^{k}m_j$, and, moreover,  the corresponding quantified Boolean formula
\[
\tilde{\phi}\equiv \exists x_{1},\ldots,x_{m_1}\forall x_{m_1+1},\ldots,x_{m_1+m_2}\cdots Q_k x_{m'+1},\ldots,x_{m'+m_k}[C_1\wedge C_2\wedge \cdots \wedge C_m]
\]
is \emph{satisfiable}, where $m'=\overline{m}-m_k$, each $C_i$ is a propositional formula of the form $(\bigvee_{j\in S_{+}(i)}x_j) \vee (\bigvee_{j\in S_{-}(i)}\overline{x_j})$ for $S_{+}(i) = \{j\in[\ell]\mid j\leq \overline{m},c_{i,\ell_{i}-j+1}=+\}$, and $S_{-}(i) = \{j\in[\ell]\mid j\leq \overline{m},c_{i,\ell_{i}-j+1}=-\}$.

When $k$ is even, we define $BQF_k$ by exchanging the roles of $\wedge$ and $\vee$ and by setting $Q_k=\forall$ in the above definition.
As is shown in \cite{Sto77}, it is possible to demonstrate that $QBF_{k}$ is log-space many-one complete for $\sigmap{k}$; that is, every language in $\sigmap{k}$ belongs to $\dl_{m}^{QBF_k}$.  If $QBF_k$ is in $\sigmacflt{k+1}$, then we obtain $\sigmap{k}\subseteq \dl_{m}^{QBF_k}\subseteq \dl_{m}(\sigmacflt{k+1})$, as requested.

Therefore, what remains undone is to prove that $QBF_k$ is indeed in $\sigmacflt{k+1}$ by constructing a nice chain of $m$-reduction npda's for $QBF_k$. As have done in Section \ref{sec:BPCFL}, we also call such a chain of $m$-reduction npda's computing $QBF_k$ by a {\em defining machine set} for $QBF_k$.
Given an index $i\in[k]$, we define  a string $\phi_i$ to be  $Q_{i} a_{i}^{m_{i}}\cdots Q_k  a_k^{m_k}\natural c_1\wedge c_2\wedge \cdots \wedge c_m$, seen as a series of symbols.
Initially, let $\psi_1$ denote $\phi_1$. For any index $i\in[k]$, the $i+1$st $m$-reduction npda $M_{i+1}$ works as follows. Assume that an input string  $\psi_i$  to the machine has  the form $\track{a_1^{m_1}}{s_1}\cdots\track{a_{i-1}^{m_{i-1}}}{s_{i-1}}\phi_{i}$. While reading the first $m_i+1$ symbols, $Q_i a_i^{m_i}$, until the next symbol $Q_{i+1}$,  $M_{i+1}$ generates all strings  $s_{i}=s_{i1}s_{i2}\cdots s_{im_i}$ in $\{0,1\}^{m_i}$ and then  produces corresponding strings  $\track{a_1^{m_1}}{s_1}\cdots\track{a_{i}^{m_{i}}}{s_{i}}\phi_{i+1}$ on its  query tape.
Note that, at any moment, if  $M_{i+1}$ discovers that  the input does not have a valid form,  it immediately halts by entering an appropriate  rejecting state.
The last machine $M_k$ works in the following manner, provided that  $\psi_{k}$ is given as its input. Firstly, $M_k$ stores the string  $\track{a_1^{m_1}}{s_1}\cdots\track{a_{k-1}^{m_{k-1}}}{s_{k-1}}$ in its stack, guesses a binary string $s_k$ of length $m_k$, and stores $\track{a_k^{m_k}}{s_k}$ also in the stack.

Secondly, $M_k$ guesses $j$, locates the block of $c_j$,  and checks whether its corresponding Boolean formula $C_j$ is satisfied by the assignment specified by $s_1s_2\cdots s_k$. This checking process can be easily done by comparing the two strings $s_1s_2\cdots s_k$ and $c_j$, symbol by symbol,  as follows: for every $i\in[k]$, if $s_{ij}$ corresponds to $c_{i,\ell_i-j+1}$, then $M_k$ accepts the input exactly when $(s_{ij}=1\wedge c_{i,\ell_i-j+1}=+)$ or $(s_{ij}=0\wedge c_{i,\ell_i-j+1}=-)$.  The existence of a defining machine set for $QBF_k$ proves that $QBF_k$ indeed belongs to $\sigmacflt{k+1}$.
\end{proof}

We have just proven the case of $e=1$. Next, let us consider the case where $e\geq2$.  A key  to the proof of this case is the following simple fact, Claim \ref{input-redundancy}.  For convenience, we say that a language family $\CC$ {\em admits input redundancy} if, for every language $L$ in $\CC$, two languages $L'=\{x\natural y\mid x\in L\}$ and $L''=\{x\natural y\mid y\in L\}$ are both in $\CC$, provided that $\natural$ is a fresh symbol that never appears in $x$ as well as $y$.

\begin{claim}\label{input-redundancy}
If two language families $\CC_1$ and $\CC_2$ admit input redundancy, then
$\dl_{m}^{\CC_1}\wedge \dl_{m}^{\CC_2} \subseteq \dl_{m}^{\CC_1\wedge \CC_2}$ and $\dl_{m}^{\CC_1}\vee \dl_{m}^{\CC_2} \subseteq \dl_{m}^{\CC_1\vee \CC_2}$.
\end{claim}

\begin{proof}
We will show only the first assertion of the lemma, because the second assertion follows similarly.
Take any language $L$ and assume that $L\in \dl_{m}^{A_1}\wedge \dl_{m}^{A_2}$ for certain  two languages $A_1\in\CC_1$ and $A_2\in\CC_2$. There are two log-space $m$-reduction DTMs $M_1$ and $M_2$ such that, for  any index $i\in\{1,2\}$ and for every string $x$, $x$ is in $L$ if and only if $M_i^{A}(x)$ outputs $y_i$ and $y_i\in A_i$. Now, we want to define another machine $M$ as follows. On input $x$,  $M$ first simulates $M_1$ on $x$ and produces $y_1\natural$ on its query tape. Subsequently, $M$ simulates $M_2$ on $x$ and produces $y_2$ also on the query tape following the string $y_1\natural$.   The language $C=\{y_1\natural y_2\mid y_1\in A_1,y_2\in A_2\}$ clearly  belongs to $\CC_1\wedge \CC_2$. It is obvious that $L$ belongs to $\dl_{m}^{C}$, which is included in $\dl_{m}^{\CC_1\wedge\CC_2}$.
\end{proof}


From Claim \ref{input-redundancy}, it follows that $\dl_{m}(\sigmacflt{k+1,2e+1}) = \dl_{m}(\sigmacflt{k+1,2e}\vee \sigmacflt{k+1}) = \dl_{m}(\sigmacflt{k+1,2e}) \vee \dl_{m}(\sigmacflt{k+1})$. Since $\dl_{m}(\sigmacflt{k+1})=\sigmap{k}$ and $\dl_{m}(\sigmacflt{k+1,2e})=\sigmap{k,2e}$ by induction hypothesis, we obtain $\dl_{m}(\sigmacflt{k+1,2e+1}) = \sigmap{k,2e}\vee\sigmap{k} =\sigmap{k,2e+1}$.
Similarly, we conclude that   $\dl_{m}(\sigmacflt{k+1,2e+2})=\sigmap{k,2e+2}$.
\end{proof}

Unfortunately, the proof presented above does not apply to derive, for instance, $\dl_{m}(\sigmacflt{k+1}\cap\picflt{k+1}) = \sigmap{k}\cap\pip{k}$, simply because no many-one complete languages are known for $\sigmap{k}\cap\pip{k}$. This remains as a challenging question.

Theorem \ref{Sigmap-k-Boolean-eq} yields the following immediate consequence.

\begin{corollary}
If the polynomial hierarchy is infinite, then so is the Boolean hierarchy over $\sigmacflt{k}$ at every level $k\geq2$.
\end{corollary}

\begin{proof}
Fix $k\geq2$ arbitrarily.  Theorem \ref{Sigmap-k-Boolean-eq} implies that, if the Boolean hierarchy over $\sigmap{k}$ is infinite,
then the Boolean hierarchy over $\sigmacflt{k+1}$ is also infinite.
Earlier, Kadin \cite{Kad88} showed that, if the polynomial hierarchy is infinite, then the Boolean hierarchy over $\sigmap{i}$ is infinite for every level $i\geq1$. By combining those two statements, we instantly obtain the desired consequence.
\end{proof}

Theorem \ref{Sigmap-k-Boolean-eq}  also yields the following consequence.

\begin{corollary}\label{PH-CFLH-collapse}
If the polynomial hierarchy is infinite, then  so is the $\cfl$ hierarchy. More specifically, for every $k\geq2$, $\ph\neq \sigmap{k}$ implies  $\sigmacflt{k}\neq\sigmacflt{k+1}$.
\end{corollary}

This corollary does not exclude a chance that both  $\sigmacfl{k}\neq\sigmacfl{k+1}$ and $\ph=\sigmap{k}$ may occur.

In terms of the inclusion relationship, as shown in Claim \ref{sigma_CFL-to-sigma_P}, $\sigmacflt{k+1}$ is upper-bounded by $\sigmap{k}$. One may wonder if $\sigmacfl{k+2}$ is also included in $\sigmap{k}$. Nonetheless, it is possible to assert that this complexity bound is tight, under the assumption that $\deltap{k+1}$ is different from $\sigmap{k+1}$.

\begin{corollary}
For any index $k\geq2$, if $\deltap{k+1}\neq \sigmap{k+1}$, then $\sigmacflt{k+2}\nsubseteq \sigmap{k}$.
\end{corollary}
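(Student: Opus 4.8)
The plan is to argue by contraposition: I will assume $\sigmacflt{k+2}\subseteq\sigmap{k}$ and derive $\deltap{k+1}=\sigmap{k+1}$, contradicting the hypothesis $\deltap{k+1}\neq\sigmap{k+1}$. The whole point is to promote the assumed inclusion through log-space many-one reductions and then invoke Claim~\ref{CFLH-charact-PH} one level higher, namely at index $k+1$ (which is legitimate since $k+1\in\nat^{+}$).

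Concretely, I would first apply the monotone operator $\dl_{m}(\cdot)$ to both sides of $\sigmacflt{k+2}\subseteq\sigmap{k}$, obtaining $\dl_{m}(\sigmacflt{k+2})\subseteq\dl_{m}(\sigmap{k})$. For the left-hand side, Claim~\ref{CFLH-charact-PH} used at index $k+1$ gives $\dl_{m}(\sigmacflt{k+2})=\sigmap{k+1}$. For the right-hand side, I would use the standard closure of $\sigmap{k}$ under log-space many-one reductions (indeed under polynomial-time many-one reductions), so that $\dl_{m}(\sigmap{k})=\sigmap{k}$. Combining these two facts yields $\sigmap{k+1}\subseteq\sigmap{k}$, and since the reverse inclusion always holds, $\sigmap{k}=\sigmap{k+1}$.

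It then remains to convert the collapse $\sigmap{k}=\sigmap{k+1}$ into the required equality $\deltap{k+1}=\sigmap{k+1}$. From $\sigmap{k}=\sigmap{k+1}$ I would observe $\pip{k}=\co\sigmap{k}\subseteq\sigmap{k+1}=\sigmap{k}$, hence $\sigmap{k}=\pip{k}$, so the polynomial hierarchy collapses to its $k$th level. In particular, from $\sigmap{k}\subseteq\deltap{k+1}\subseteq\sigmap{k+1}=\sigmap{k}$ I conclude $\deltap{k+1}=\sigmap{k}=\sigmap{k+1}$, which is exactly the negation of the hypothesis, completing the contrapositive.

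I expect no serious obstacle; this is a short ``lift-and-invoke'' argument resting entirely on Claim~\ref{CFLH-charact-PH}. The only point requiring a word of care is the justification that $\dl_{m}(\sigmap{k})=\sigmap{k}$, i.e., that a single log-space many-one query to a $\sigmap{k}$ set keeps one inside $\sigmap{k}$; this is immediate from the closure of $\sigmap{k}$ under $\leq_{m}^{\mathrm{p}}$ reductions. I would also note that the same argument in fact establishes the formally stronger implication ``$\sigmap{k}\neq\sigmap{k+1}\Rightarrow\sigmacflt{k+2}\nsubseteq\sigmap{k}$'', of which the stated corollary is the special case obtained from the (stronger) hypothesis $\deltap{k+1}\neq\sigmap{k+1}$, since $\deltap{k+1}\neq\sigmap{k+1}$ forces $\sigmap{k}\neq\sigmap{k+1}$.
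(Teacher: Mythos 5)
Your proof is correct, and its skeleton is exactly the paper's: argue the contrapositive, apply the monotone operator $\dl_{m}(\cdot)$ to the assumed inclusion $\sigmacflt{k+2}\subseteq\sigmap{k}$, and identify the left-hand side as $\sigmap{k+1}$ via the characterization $\dl_{m}(\sigmacflt{k+2})=\sigmap{k+1}$ (the paper cites Theorem \ref{Sigmap-k-Boolean-eq}, whose $e=1$ case is Claim \ref{CFLH-charact-PH}; the two references are interchangeable here). The one place you diverge is the right-hand side: the paper bounds $\dl_{m}(\sigmap{k})\subseteq\p^{\sigmap{k}}=\deltap{k+1}$ and so reads off $\sigmap{k+1}\subseteq\deltap{k+1}$, from which $\deltap{k+1}=\sigmap{k+1}$ is immediate; you instead use the (also standard, and correct) closure of $\sigmap{k}$ under log-space many-one reductions to get $\dl_{m}(\sigmap{k})=\sigmap{k}$, obtaining the stronger collapse $\sigmap{k+1}=\sigmap{k}$, and then spend an extra sandwich argument ($\sigmap{k}\subseteq\deltap{k+1}\subseteq\sigmap{k+1}$) to recover the stated conclusion. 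What your variant buys is the formally stronger implication ``$\sigmap{k}\neq\sigmap{k+1}\Rightarrow\sigmacflt{k+2}\nsubseteq\sigmap{k}$''; what the paper's bound buys is economy, since its weaker estimate lands precisely on the class $\deltap{k+1}$ appearing in the hypothesis, with no conversion step needed. Both arguments are sound and rest on the same key theorem.
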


\begin{proof}
We want to show the contrapositive of this corollary.  We start with the assumption  that $\sigmacflt{k+2}\subseteq \sigmap{k}$.  {}From this inclusion,   $\dl_{m}(\sigmacflt{k+2})\subseteq \dl_{m}(\sigmap{k})$ follows.   Since  $\dl_{m}(\sigmacflt{k+2})=\sigmap{k+1}$ holds by Theorem \ref{Sigmap-k-Boolean-eq},  we derive $\sigmap{k+1}\subseteq \dl_{m}(\sigmap{k})$.  On the contrary, it holds that $\dl_{m}(\sigmap{k})\subseteq \p^{\sigmap{k}} = \deltap{k+1}$ since $\dl_{m}^{A}\subseteq \p^{A}$ for any $A$. As a result, $\sigmap{k+1}$ is included in $\deltap{k+1}$. Since $\deltap{k+1}\subseteq\sigmap{k+1}$ is obvious,   $\deltap{k+1}=\sigmap{k+1}$ follows immediately.
 \end{proof}

Let us recall that $\cfl(\omega)\subseteq \sigmacflt{2}\cap \picflt{2}$ by Proposition \ref{CFL(omega)-in-Pi2}. Lemma \ref{CFL(k)-equal-CFL} implies that $\cfl(\omega)\subseteq\dl_{m}^{\cfl(\omega)} \subseteq \sac{1} \subseteq \nc{2}$. Is it also true that $\sigmacfl{2}\subseteq\nc{2}$?  If $\sigmacflt{2}\subseteq \nc{2}$ holds, then Claim \ref{sigma_CFL-to-sigma_P} implies that $\np=\dl_{m}(\sigmacflt{2}) \subseteq \dl_{m}(\nc{2})=\nc{2}$, and we obtain $\np=\nc{2}$ since $\nc{2}\subseteq \np$ is obvious. In the end, we draw the following conclusion.

\begin{corollary}\label{NP-vs-NC2-second}
If $\np\neq \nc{2}$, then $\sigmacflt{2}\nsubseteq \nc{2}$.
\end{corollary}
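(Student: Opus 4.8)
The plan is to prove the contrapositive: assuming $\sigmacflt{2}\subseteq\nc{2}$, I would derive $\np=\nc{2}$. The entire argument rests on two facts, namely the characterization $\dl_{m}(\sigmacflt{2})=\np$ (the $k=1$ instance of Claim~\ref{CFLH-charact-PH}, since $\sigmap{1}=\np$) together with the collapse $\dl_{m}(\nc{2})=\nc{2}$.

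First I would apply the log-space many-one closure operator $\dl_{m}(\cdot)$ to both sides of the hypothesized inclusion. Since $\dl_{m}(\cdot)$ is monotone with respect to inclusion of the oracle family---enlarging the pool of potential oracles can only enlarge the reduced class---we obtain $\dl_{m}(\sigmacflt{2})\subseteq\dl_{m}(\nc{2})$. Rewriting the left-hand side via Claim~\ref{CFLH-charact-PH} turns this into $\np\subseteq\dl_{m}(\nc{2})$.

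Next I would establish $\dl_{m}(\nc{2})=\nc{2}$. The inclusion $\nc{2}\subseteq\dl_{m}(\nc{2})$ is immediate via the identity reduction. For the reverse inclusion, let $L\in\dl_{m}^{A}$ with $A\in\nc{2}$, witnessed by a log-space transducer producing a query $f(x)$. Because $\fl\subseteq\nc{2}$ (a log-space computable function is evaluated by a log-space-uniform family of polynomial-size, $O(\log^{2}n)$-depth circuits) and $\nc{2}$ is closed under composition, the map $x\mapsto \chi^{A}(f(x))$ is itself computed by an $\nc{2}$ family, so $L\in\nc{2}$. The point requiring care is uniformity: I would verify that composing the log-space-uniform circuit family for $f$ with the log-space-uniform circuit family for $A$ again yields a log-space-uniform family, since the combined connection language is decidable in log space. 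This is the standard closure-under-composition argument for $\nc{2}$.

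Combining the two steps gives $\np\subseteq\nc{2}$, and since $\nc{2}\subseteq\p\subseteq\np$ holds trivially, we conclude $\np=\nc{2}$, which is exactly the contrapositive of the desired implication. The only genuinely nontrivial ingredient is the closure $\dl_{m}(\nc{2})=\nc{2}$; everything else is either a direct invocation of Claim~\ref{CFLH-charact-PH} or a triviality, so I expect the uniformity bookkeeping in that composition step to be the main (albeit routine) obstacle.
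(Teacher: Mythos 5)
Your proof is correct and follows essentially the same route as the paper: take the contrapositive, use the characterization $\dl_{m}(\sigmacflt{2})=\sigmap{1}=\np$ from Claim \ref{CFLH-charact-PH}, apply monotonicity of $\dl_{m}(\cdot)$, and invoke the collapse $\dl_{m}(\nc{2})=\nc{2}$. The only difference is that you spell out the closure argument for $\dl_{m}(\nc{2})=\nc{2}$ (composition with $\fl$ plus the uniformity check), which the paper simply asserts.
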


\subsection{Logarithmic-Space Truth-Table Reductions}

We have already paid our attention to the computational complexity of languages that are log-space many-one reducible to certain languages in $\sigmacflt{k}$.  In the rest of this section, we will turn our attention to languages that are \emph{log-space truth-table reducible} to $\sigmacflt{k}$. Henceforth,  we will use the notation $\dl_{tt}^A$ (or $\dl_{tt}(A)$) to mean a family of all languages that are log-space truth-table reducible to oracle $A$.
It is not difficult to prove that a truth-table reduction can simulate Boolean operations that define each level of the Boolean hierarchy over $\cfl$. Hence,
the Boolean hierarchy $\bhcfl$ is ``equivalent'' to $\cfl$ under the log-space truth-table reducibility.

\begin{lemma}\label{DL_tt-CFL_k-vs-CFL}
$\dl_{tt}^{\bhcfl} = \dl_{tt}^{\cfl}$.
\end{lemma}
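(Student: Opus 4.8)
The plan is to prove the two inclusions separately, the left-to-right one being immediate and the right-to-left one carrying all the content. Since $\cfl = \cfl_{1} \subseteq \bhcfl$, every oracle witnessing a log-space truth-table reduction into $\cfl$ already lies in $\bhcfl$, so $\dl_{tt}^{\cfl} \subseteq \dl_{tt}^{\bhcfl}$ holds trivially. For the reverse inclusion $\dl_{tt}^{\bhcfl} \subseteq \dl_{tt}^{\cfl}$, the guiding idea is that a single context-free oracle, queried non-adaptively a constant number of times, can reveal all the information encoded in any fixed Boolean combination of context-free languages; the fixed Boolean structure is then folded into the truth-table evaluator. This is precisely the informal assertion preceding the lemma that a truth-table reduction can simulate the Boolean operations defining the levels of $\bhcfl$.

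Concretely, I would begin from a language $L \in \dl_{tt}^{A}$ with $A \in \bhcfl$, so that $A \in \cfl_{k}$ for some index $k$. Using Claim \ref{CFL-decomposition} together with $\cfl_{2} = \cfl \wedge \co\cfl$, I would express $A$ as a \emph{uniform} fixed Boolean combination, namely $\chi_A(y) = g(\chi_{E_1}(y),\ldots,\chi_{E_r}(y))$ for every string $y$, where the arity $r$ and the Boolean function $g$ depend only on $k$, the languages $E_1,\ldots,E_r$ are context-free, and all complementations (the $\co\cfl$ parts) are carried by negations \emph{inside} $g$ rather than by the oracle. I would then form the marked union $E = \bigcup_{i\in[r]} \tau_i E_i$ over fresh distinct markers $\tau_1,\ldots,\tau_r$; being a finite union of single-symbol-prefixed context-free languages, $E$ is again context-free, and $\chi_{E_i}(y) = \chi_E(\tau_i y)$ holds for every $y$.

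Next I would assemble the new reduction. Let the given log-space truth-table reduction of $L$ to $A$ produce query words $y_1,\ldots,y_t$ together with an evaluator $\alpha$ satisfying $x \in L \iff \alpha(\chi_A(y_1),\ldots,\chi_A(y_t)) = 1$. The replacement reduction, on input $x$, runs the same log-space query generator but emits, for each $y_j$, the $r$ tagged queries $\tau_1 y_j,\ldots,\tau_r y_j$ to $E$; since $r$ is a constant, this transducer remains log-space and non-adaptive, producing $rt$ queries in total. Its evaluator first recovers $\chi_A(y_j) = g(\chi_E(\tau_1 y_j),\ldots,\chi_E(\tau_r y_j))$ for each $j$ and then applies $\alpha$. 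Because $g$ and $\alpha$ are fixed Boolean objects of size independent of $x$, their composition is a legitimate truth-table evaluator of the same resource bound. Hence $L \in \dl_{tt}^{E} \subseteq \dl_{tt}^{\cfl}$, which yields $\dl_{tt}^{\bhcfl} \subseteq \dl_{tt}^{\cfl}$ and completes the proof.

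I expect the main obstacle to be the bookkeeping that makes the decomposition of $A$ genuinely uniform: one must verify that a single function $g$ and a single tuple $(E_1,\ldots,E_r)$ serve all queries $y$ simultaneously, and that the negations arising at the $\co\cfl$ levels are routed into $g$ rather than forcing a co-context-free oracle. Once this uniform Boolean representation is secured, the marked-union construction and the composition of $g$ with the evaluator $\alpha$ are routine, and preservation of the log-space bound follows because $r$ is a constant that does not grow with the input length.
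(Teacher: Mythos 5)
Your proof is correct, but it takes a genuinely different route from the paper's. The paper proves $\dl_{tt}^{\cfl_{k}}\subseteq\dl_{tt}^{\cfl}$ by induction on $k$: at each step it writes $A\in\cfl_{k}$ as $B\cap C$ ($k$ even) or $B\cup C$ ($k$ odd) with $B\in\cfl_{k-1}$ and $C$ in $\co\cfl$ or $\cfl$, doubles the queries by prefixing $0$ and $1$, merges $B$ and (the complement of) $C$ into a single marked-union oracle that it argues still lies in $\cfl_{k-1}$, and lets the truth table absorb one Boolean operation per level. You instead flatten in one shot: Claim \ref{CFL-decomposition} yields a uniform representation $\chi_A(y)=g(\chi_{E_1}(y),\ldots,\chi_{E_r}(y))$ with every $E_i\in\cfl$ and all negations pushed inside the fixed function $g$, so the tagged union $E=\bigcup_{i\in[r]}\tau_i E_i$ is a plain context-free language and the entire Boolean structure is absorbed by the evaluator at once. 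What your route buys: you need only the trivial fact that a finite marked union of context-free languages is context-free, whereas the paper's induction leans on closure of the intermediate levels $\cfl_{k-1}$ under marked union with $\cfl$ (asserted via $\cfl_{k-1}\vee\cfl=\cfl_{k-1}$, which is parity-sensitive and requires a small distribution argument in the even-level case that the paper leaves implicit); your query blow-up is also a single factor $r=O(k)$ rather than the paper's $2^{k-1}$, though this is immaterial for truth-table reductions. What the paper's route buys: it works directly from the inductive definition of the Boolean hierarchy without invoking the normal-form claim. One detail both arguments elide equally: a log-space transducer cannot buffer a long query $y_j$ to emit its $r$ tagged copies consecutively, so one should group the output by tag and re-run the original query generator $r$ times; this is routine and does not affect correctness.
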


\begin{proof}
For this lemma, we need to show  the equality  $\dl_{tt}^{\cfl_{k}} = \dl_{tt}^{\cfl}$ for every index $k\geq1$.
Since $\cfl\subseteq\cfl_{k}$, it follows that $\dl_{tt}^{\cfl}\subseteq \dl_{tt}^{\cfl_{k}}$.  Conversely,  we will show that $\dl_{tt}^{\cfl_{k}}\subseteq \dl_{tt}^{\cfl}$ by induction on $k\geq1$.  Since the base case $k=1$ is trivial, we hereafter assume that $k\geq2$. Let $L$ be any language in $\dl_{tt}^{\cfl_{k}}$. Moreover, let $M$ be a log-space oracle DTM $M$ and let $A$ be an oracle in $\cfl_{k}$ for which  that $M$ $tt$-reduces $L$ to $A$. Here, we consider the case where  $k$ is even.
Let us construct a new oracle npda $N$ as follows. On input $x$, when $M$ produces $m$ query words $y_1,y_2,\ldots,y_m$, $N$ produces $2m$ query words $0y_1,1y_1,0y_2,1y_2,\ldots,0y_m,1y_m$.
Since $A\in\cfl_{k}$, take two appropriate languages $B\in\cfl_{k-1}$ and $C\in\cfl$ satisfying $A = B\cap \overline{C}$.
We define $B'=\{0y\mid y\in B\}$ and $C'=\{1y\mid y\in C\}$ and we set $A'$ to be $B'\cup C'$, which is in $\cfl_{k-1}\vee \cfl=\cfl_{k-1}$ since $\cfl_{k-1}$ is closed under union with $\cfl$. Note that $y_i$ is in $A$ iff  $0y_i\in B'$ and $1y_i\notin C'$ iff $0y_i\in A'$ and $1y_i\notin A'$ and $1y_i\notin A'$.
We use these equivalence relations as a truth table to decide the membership of $x$ to $L$. Therefore, $L$ must be in $\dl_{tt}^{\cfl_{k-1}}$. Finally, we  apply our induction hypothesis $\dl_{tt}^{\cfl_{k-1}} = \dl_{tt}^{\cfl}$ to obtain $L\in\dl_{tt}^{\cfl}$.

When $k$ is odd, since $A=B\cup C$ for certain languages $B\in\cfl_{k-1}$ and $C\in\cfl$,  it suffices to transform $B$ and $C$ to $B'=\{0y\mid y\in B\}$ and $C'=\{1y\mid y\in C\}$. The rest of the argument is similar to the previous case.
\end{proof}


Wagner \cite{Wag90} introduced a convenient notation $\thetap{k+1}$ as an abbreviation of $\p_{T}({\sigmap{k}[O(\log{n})]})$ for each level $k\geq1$, where the script ``$[O(\log{n})]$'' means that the total number of queries made in an entire {\em computation tree}
on each fixed input of size $n$ using an appropriate oracle in $\sigmap{k}$ is bounded from above by $c\log{n}+d$ for two absolute constants $c,d\geq0$.

\begin{theorem}\label{Theta-level-charact}
For all levels $k\geq1$, $\dl_{tt}(\sigmacflt{k+1}) = \thetap{k+1}$ holds.
\end{theorem}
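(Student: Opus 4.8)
The plan is to route the argument through the polynomial hierarchy, establishing the cycle of inclusions
$$\dl_{tt}(\sigmacflt{k+1}) \subseteq \dl_{tt}(\sigmap{k}) = \thetap{k+1} = \dl_{tt}(\dl_{m}(\sigmacflt{k+1})) \subseteq \dl_{tt}(\sigmacflt{k+1}),$$
which forces every family in the chain to be equal and in particular yields the asserted identity. The external ingredients I would import are Wagner's truth-table characterizations of $\thetap{k+1}$ from \cite{Wag90}, namely the easy half $\dl_{tt}(\sigmap{k}) \subseteq \p_{tt}(\sigmap{k}) = \thetap{k+1}$ and the harder log-space half $\thetap{k+1} \subseteq \dl_{tt}(\sigmap{k})$, valid for every $k\geq1$. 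All the context-free content comes from two results already available inside the proof of Theorem \ref{Sigmap-k-Boolean-eq}: Claim \ref{sigma_CFL-to-sigma_P}, giving $\sigmacflt{k+1}\subseteq\sigmap{k}$, and Claim \ref{CFLH-charact-PH}, giving $\sigmap{k}=\dl_{m}(\sigmacflt{k+1})$.

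For the inclusion $\dl_{tt}(\sigmacflt{k+1}) \subseteq \thetap{k+1}$ I would argue directly: by Claim \ref{sigma_CFL-to-sigma_P} any oracle drawn from $\sigmacflt{k+1}$ is in particular a $\sigmap{k}$ oracle, and a log-space truth-table reduction is a fortiori a polynomial-time truth-table reduction; hence, by monotonicity of the $\dl_{tt}$-operator together with the easy half of Wagner's identity, $\dl_{tt}(\sigmacflt{k+1}) \subseteq \p_{tt}(\sigmap{k}) = \thetap{k+1}$. This half needs nothing more.

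The reverse inclusion $\thetap{k+1} \subseteq \dl_{tt}(\sigmacflt{k+1})$ is the substantive one. Starting from the log-space characterization $\thetap{k+1} = \dl_{tt}(\sigmap{k})$, I would apply Claim \ref{CFLH-charact-PH}: for any $A\in\sigmap{k}$ there is a language $B\in\sigmacflt{k+1}$ and a log-space many-one reduction $g$ with $w\in A$ iff $g(w)\in B$. Given $L \in \dl_{tt}^{A}$, I would transform the reduction by composition. On input $x$ I run the $\dl_{tt}$-machine for $L$ to produce its query words $y_1,\dots,y_m$ together with its truth table, and then emit $g(y_1),\dots,g(y_m)$ as the new query words while retaining the same truth table. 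Since $\chi^{A}(y_i)=\chi^{B}(g(y_i))$, the evaluation is unchanged, so this is a truth-table reduction of $L$ to $B\in\sigmacflt{k+1}$, witnessing $\dl_{tt}(\dl_{m}(\sigmacflt{k+1}))\subseteq\dl_{tt}(\sigmacflt{k+1})$.

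The delicate point — the step I expect to be the main obstacle — is that the composed reduction must remain computable in \emph{logarithmic} space. The outer machine writes each $y_i$ on a write-only query tape and cannot store a full $y_i$ (of possibly polynomial length), yet computing $g(y_i)$ requires two-way access to $y_i$. The remedy is the standard recomputation technique for composing log-space reductions: whenever the simulation of $g$ on $y_i$ requests the $j$-th symbol of $y_i$, I restart the outer machine on $x$ and regenerate $y_i$, counting emitted symbols until position $j$ is reached; only a position counter and the two machine configurations, all of logarithmic size, are retained. Carrying this out carefully (so that the interleaving of the truth-table production and the re-simulations is bounded in space) is the technical heart of the forward proof, and besides it the only genuinely external input is Wagner's log-space identity $\thetap{k+1}=\dl_{tt}(\sigmap{k})$; a fully self-contained version would additionally have to re-derive that identity via the usual interchange between $O(\log n)$ adaptive and polynomially many nonadaptive $\sigmap{k}$-queries.
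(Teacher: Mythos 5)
Your proof is correct and follows essentially the same route as the paper: the forward inclusion via Claim \ref{sigma_CFL-to-sigma_P} together with the identity $\thetap{k+1}=\p_{tt}(\sigmap{k})=\dl_{tt}(\sigmap{k})$ (which the paper establishes as Claim \ref{theta-level-vs-log-space} by relativizing Buss--Hay rather than citing it wholesale), and the reverse inclusion by converting $\sigmap{k}$-oracles into $\sigmacflt{k+1}$-oracles. The only cosmetic difference is that the paper routes the reverse step through the explicit complete language $QBF_k$ (using $QBF_k\in\sigmacflt{k+1}$ and $\dl_{tt}(\dl_{tt}^{QBF_k})=\dl_{tt}^{QBF_k}$), whereas you invoke Claim \ref{CFLH-charact-PH} as a black box and compose the $\dl_{tt}$-reduction with a $\dl_{m}$-reduction via the standard log-space recomputation technique---the same underlying content, since Claim \ref{CFLH-charact-PH} is itself proved through $QBF_k$.
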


\begin{proof}
Let $k\geq1$. First, we will give a  useful characterization of $\thetap{k+1}$ in terms of $\sigmap{k}$ using two different truth-table reductions.  In  a way similar to $\dl_{tt}^{A}$, the notation $\p_{tt}^{A}$ (or $\p_{tt}(A)$) is introduced  using polynomial-time DTMs instead of log-space DTMs as underlying reduction machines.

\begin{claim}\label{theta-level-vs-log-space}
For every index $k\in\nat^{+}$, it holds that $\thetap{k+1} = \p_{tt}(\sigmap{k}) = \dl_{tt}(\sigmap{k})$.
\end{claim}

\begin{proof}
It suffices to show that $\p_{tt}(\sigmap{k}) \subseteq \dl_{tt}(\sigmap{k})$ and $\thetap{k+1}=\p_{tt}(\sigmap{k})$, since $\dl_{tt}(\sigmap{k})\subseteq\p_{tt}(\sigmap{k})$ is obvious. Note that the proof of $\p_{tt}^{\np} \subseteq \dl_{tt}^{\np}$ by Buss and Hay \cite{BH91} \emph{relativizes};  namely, $\p_{tt}(\np^A)\subseteq \dl_{tt}(\np^A)$ for any oracle $A$. Recall the language $QBF_k$  defined in the proof of Claim \ref{sigma_CFL-to-sigma_P}.
By choosing $QBF_{k-1}$ for $A$, we obtain $\p_{tt}(\sigmap{k}) = \p_{tt}(\np^{QBF_{k-1}})\subseteq \dl_{tt}(\np^{QBF_{k-1}}) = \dl_{tt}(\sigmap{k})$.   Moreover, the proof of $\p_{T}^{\np[O(\log{n})]}=\p_{tt}^{\np}$ given in, \eg \cite{BH91}   also \emph{relativizes}.
By a similar argument as above, we can conclude that $\p_{T}(\sigmap{k}[O(\log{n})]) = \p_{tt}(\sigmap{k})$.
\end{proof}

Since we have earlier shown that $\sigmacflt{k+1}\subseteq \sigmap{k}$, it follows that $\dl_{tt}(\sigmacflt{k+1}) \subseteq \dl_{tt}(\sigmap{k}) = \thetap{k+1}$, where the last equality comes from Claim \ref{theta-level-vs-log-space}.
In what follows, we intend to argue that $\thetap{k+1} \subseteq \dl_{tt}(\sigmacflt{k+1})$.
Assume that $L\in\thetap{k+1}$; thus, $L$ is in $\dl_{tt}(\sigmap{k})$ by Claim \ref{theta-level-vs-log-space}.
Since $QBF_k$ is log-space many-one complete for $\sigmap{k}$, we obtain $\sigmap{k}\subseteq \dl_{tt}^{QBF_k}$. It thus follows that  $L\in\dl_{tt}(\sigmap{k}) \subseteq \dl_{tt}(\dl_{tt}^{QBF_k})=\dl_{tt}^{QBF_k}$. Since $QBF_k$ belongs to $\sigmacflt{k+1}$ by the proof of Claim \ref{sigma_CFL-to-sigma_P}, we conclude that $L\in \dl_{tt}(\sigmacflt{k+1})$.
\end{proof}

\section{Challenging Open Problems}\label{sec:open-problem}

As a lengthy, challenging task, we have tried throughout this paper to develop a coherent theory to discuss structural properties of formal languages associated with context-free languages by way of introducing various notions of CFL-reducibility and then conducting an initial comprehensive study on their roles and characteristics.
Our study has made it clear that, despite the use of stacks---restricted memory devices---hampering the behaviors of npda's, an oracle mechanism used for  CFL-reducibilities in fact endows enormous power in language recognition to the npda's.
In particular, we have employed Turing CFL-reductions to build the CFL hierarchy over $\cfl$ in analogy with the Meyer-Stockmeyer polynomial (time) hierarchy.

Unfortunately, numerous fundamental questions have left unsolved and a thorough study is required to answer those questions. remember that this is merely the first step toward the full understandings of the nature of the CFL hierarchy. To promote further research on this topic, this section will provide a short list of open questions for the avid reader.

\begin{enumerate}
  \setlength{\topsep}{-2mm}%
  \setlength{\itemsep}{0mm}
  \setlength{\parskip}{0cm}%

\item \emph{Comparison among different CFL-reducibilities.}
We have discussed only main stream reducibilities. Many more reducibilities were already discussed in, \eg \cite{BLS84,LLS75}. Explore those reducibilities founded on dpda's and npda's and discuss relationships among the reducibilities.

\item \emph{Relativization.}
In Section \ref{sec:BPCFL}, we have constructed oracles, which present us interesting relativized worlds. It is of great importance to develop a full theory of relativization founded  on various types of $\cfl$-reducibility. Prove that the $\cfl$ hierarchy is an infinite hierarchy relative to  random oracles and generic oracles.

\item \emph{Separation of the CFL hierarchy.}
Corollary \ref{PH-CFLH-collapse} hints that proving $\sigmacfl{k}=\sigmacfl{k+1}$, where $k\geq2$, is quite difficult, because such a collapse leads to $\ph=\sigmap{k}$. However, it may be much more approachable to tackle the separation $\sigmacfl{k}\neq\sigmacfl{k+1}$. Is it true that $\sigmacfl{k}\neq\sigmacfl{k+1}$ implies $\sigmap{k}\neq\sigmap{k+1}$ for each $k\geq2$? Is it true that  $\sigmacfl{k+2}\nsubseteq \sigmap{k}$ for any $k\geq2$?

\item \emph{Upper and lower bounds of $\sigmacfl{k}$.}
Proposition \ref{DSPACE-upper} gives a simple upper bound on the computational complexity of $\cflh$. By Claim \ref{sigma_CFL-to-sigma_P}, $\sigmap{k+1}$ is an upper bound of $\sigmacfl{k}$ for each
level $k\geq1$.
Find much tighter upper and lower bounds of $\cflh$ and each $\sigmacfl{k}$ in terms of well-known complexity classes.

\item \emph{Separation of the Boolean CFL hierarchy.}
Concerning the Boolean hierarchy over $\sigmacfl{k}$, for $k\geq2$, is it true that $\sigmacfl{k,e}=\sigmacfl{k,e+1}$ implies $\sigmacfl{k+1}=\sigmacfl{k+2}$? Is it true that $\bhcfl \neq \sigmacfl{2}\cap\picfl{2}$, $\cfl(\omega)\neq \bhcfl$, and  $\bhcfl\neq\cfl_{m}^{\bhcfl}$?

\item \emph{Lowness property.}
In Section \ref{sec:low-set}, we have briefly discussed various lowness properties of $\cfl$. Prove or disprove that the inclusions in Lemma \ref{lowness-result} are proper. Discuss the lowness property of $\sigmacfl{k}$ for any $k\geq2$.

\item \emph{Properties of $\bpcfl$ and $\pcfl$.}
Two probabilistic language families $\bpcfl$ and $\pcfl$ can be regarded as natural analogues of $\bpp$ and $\pp$ in the polynomial-time setting. Numerous properties have been known for $\bpp$ and $\pp$. Prove similar properties for $\bpcfl$ and $\pcfl$. For example, prove or disprove that $\bpcfl$ is included in $\cflh$. Is it true that $\sigmacfl{2}\subseteq \dl_{m}^{\pcfl}$?

\item \emph{Relations to circuit complexity classes.}
In Corollary \ref{NP-vs-NC2-second}, we have shown a relationship between the CFL hierarchy and the circuit complexity class $\nc{2}$. This result suggests that $\sigmacfl{2}\subseteq\nc{2}$ is unlikely, because such a containment  implies $\np=\nc{2}$.  Prove or disprove, for instance, that $\deltacfl{2}\subseteq \ac{1}$, as well as $\sigmacfl{2}\cap\picfl{2}\subseteq \ac{1}$. What is the computational complexity of, \eg $\ac{0}(\sigmacfl{k})$?

\item \emph{Logarithmic-space reductions.}
In Section \ref{sec:close-relation}, we have used logarithmic-space (or log-space) reductions to link between the $\cfl$ hierarchy and the polynomial hierarchy. Explore more relationships between the $\cfl$ hierarchy and other well-known complexity classes. Determine the complexity of, \eg $\dl_{m}^{\pcfl}$.

\item \emph{More structural properties.}
Recently, structural properties, such as simple and immune sets \cite{Yam11}, pseudorandom sets \cite{Yam09,Yam11}, and dissectible sets \cite{YK13}, have been studied for low-complexity languages. Discuss those properties for languages in the CFL hierarchy. Find other structural properties for the CFL  hierarchy.
\end{enumerate}


\let\oldbibliography\thebibliography
\renewcommand{\thebibliography}[1]{%
  \oldbibliography{#1}%
  \setlength{\itemsep}{0pt}%
}
\bibliographystyle{plain}

\end{document}